\newtcolorbox{construction}[2][]
{
	colframe = gray!50,
	colback  = gray!10,
	coltitle = gray!10!black,
	left*=0mm, 
	before skip = 10pt,
	after skip = 10pt,
	title    = \textbf{\space\space #2},
	#1,
}
\declaretheorem[numberwithin=section,refname={Theorem,Theorems},Refname={Theorem,Theorems}]{theorem}
\declaretheorem[numberlike=theorem]{lemma}
\declaretheorem[numberlike=theorem]{corollary}
\declaretheorem[numberlike=theorem]{definition}
\declaretheorem[numberlike=theorem]{claim}
\declaretheorem[numberlike=theorem,style=remark]{remark}
\declaretheorem[numberlike=theorem, refname={Observation,Observations},Refname={Observation,Observations},name={Observation}]{observation}
\theoremstyle{definition}
\declaretheorem[numberlike=theorem]{assumption}
\newcommand{\eps}{\epsilon}
\newcommand{\poly}{\mathrm{poly}}
\newcommand{\polylog}{\mathrm{polylog}}
\newcommand{\pram}{$\mathsf{PRAM}$\xspace}
\newcommand{\mtram}{$\mathsf{TRAM}$\xspace}
\newcommand{\Start}{\mathrm{start}}
\newcommand{\End}{\mathrm{end}}
\newcommand{\Succ}{\mathrm{succ}}
\newcommand{\Prec}{\mathrm{prec}}
\newcommand{\low}{\mathrm{low}}
\newcommand{\high}{\mathrm{high}}
\newcommand{\inner}{\mathrm{inner}}
\newcommand{\init}{\mathrm{init}}
\newcommand{\id}{\chi}
\newcommand{\aug}{\mathrm{aug}}
\newcommand{\far}{\mathrm{far}}
\newcommand{\rem}{\mathrm{rem}}
\newcommand{\pre}{\mathrm{pre}}
\newcommand{\CT}[2]{\mathrm{CT}(#1,#2)}
\newcommand{\abs}[1]{\left\lvert #1 \right\rvert}
\newcommand{\wt}{\widetilde}
\newcommand{\dtres}{d_{\mathrm{TH}}}
\renewcommand{\big}{\mathrm{big}}
\newcommand{\alg}[1]{\mathtt{#1}}
\renewcommand{\outer}{\mathrm{outer}}
\newcommand{\agg}{\boxplus}
\newcommand{\bigagg}{\mathop{\displaystyle\agg}}
\DeclareMathOperator{\dist}{dist}
\begin{document}
\sloppy
\pagenumbering{roman}
\title{
Parallel Small Vertex Connectivity in Near-Linear Work and Polylogarithmic Depth
}
\author{
Yonggang Jiang\thanks{MPI-INF and Saarland University, Germany, {yjiang@mpi-inf.mpg.de}}\and
Changki Yun\thanks{Seoul National University, Seoul, South Korea, {tamref.yun@snu.ac.kr}}
}
\date{}
\maketitle

\begin{abstract}

We present a randomized parallel algorithm in the {\sf PRAM} model for $k$-vertex connectivity. Given an undirected simple graph, our algorithm either finds a set of fewer than $k$ vertices whose removal disconnects the graph or reports that no such set exists. The algorithm runs in $O(m \cdot \text{poly}(k, \log n))$ work and $O(\text{poly}(k, \log n))$ depth, which is nearly optimal for any $k = \text{poly}(\log n)$. Prior to our work, algorithms with near-linear work and polylogarithmic depth were known only for $k=3$ [Miller, Ramachandran, STOC'87]; for $k=4$, sequential algorithms achieving near-linear time were known [Forster, Nanongkai, Yang, Saranurak, Yingchareonthawornchai, SODA'20], but no algorithm with near-linear work could achieve even sublinear (on $n$) depth.

\end{abstract}

\newpage

\tableofcontents

\newpage
\pagenumbering{arabic}
\section{Introduction}

This paper studies the parallel computation of vertex connectivity. The problem of computing the (global) vertex connectivity of a simple, undirected graph involves finding the minimum number of vertices whose removal disconnects the graph (or reduces it to a singleton). This problem, along with the closely related problem of computing (global) edge connectivity—which seeks the minimum number of edges whose removal disconnects the graph—has been a classic and fundamental topic that has attracted researchers' attention for the past fifty years.

In the sequential model, both problems are well understood\footnote{In this paper, we focus on randomized algorithms that are correct with high probability. The problem of deterministic vertex connectivity remains widely open.}. For edge connectivity, a long line of research spanning several decades \cite{FF56,GH61, ET75, Po73, KT86, Ma87, NI92A, HO94, SW97, frank94, Gab95, NI92, Mat93, Kar99, KS96} culminated in a near-linear time algorithm \cite{Karger00}. For vertex connectivity, although progress has been much slower \cite{Kleitman1969methods,Tarjan72,HopcroftT73,Podderyugin1973algorithm,EvenT75,BeckerDDHKKMNRW82,LinialLW88,KanevskyR91,CheriyanT91,NagamochiI92,HenzingerRG00,Gabow06,NanongkaiSY19,ForsterNYSY20}, recent advances have reduced the problem to a polylogarithmic number of maximum flow computations \cite{LiNPSY21}. Combined with an almost linear time maximum flow algorithm \cite{ChenKLPGS22}, these results have brought the running time for vertex connectivity close to almost linear time\footnote{We use ``nearly linear'' to denote $m\cdot \polylog(n)$ and ``almost linear'' to denote $m^{1+o(1)}$.}.

\paragraph{Parallel computing.} In this paper, we focus on parallel computing, an important paradigm that offers both deep theoretical insights and practical benefits for large-scale graph processing in modern applications~\cite{JaJa92}. To capture the parallelism of an algorithm, we use the classical \pram{}\footnote{There exist variants of the \pram{} model that differ in the memory access capabilities of different machines. However, these variants are equivalent up to polylogarithmic factors. Since this paper does not attempt to optimize these factors, we do not concern ourselves with the different model definitions.}
 model~\cite{FortuneWyllie78}, which evaluates an algorithm in terms of \emph{work} (the total number of unit operations executed) and \emph{depth} (the length of the longest chain of sequential dependencies).

An ideal parallel algorithm should be \emph{work-efficient}---that is, its work should be within subpolynomial factors\footnote{We use $n^{o(1)}$ to denote a subpolynomial factor.} of the best known sequential running time (which, in our case, is almost linear work). We say that a parallel algorithm is \emph{highly parallel}\footnote{These notations also appear in \cite{Fineman20}.} if it is work-efficient and its depth is subpolynomial. For example, the best-known parallel edge connectivity algorithms are highly parallel: by adapting techniques from the sequential algorithm~\cite{Karger00}, edge connectivity was solved with nearly linear work and polylogarithmic depth~\cite{GeissmannG21,Lopez-MartinezM21}. Moreover, parallelism is foundational for numerous other computational models, including distributed computing and streaming algorithms. In the case of edge connectivity, subsequent work has demonstrated near-optimal performance in these models~\cite{MukhopadhyayN20, DoryEMN21,0001Z22}.

\paragraph{Existing results for parallel vertex connectivity (\Cref{tab:vertexconnectivity_exact}).} 
Given the positive results for edge connectivity, one might expect the closely related problem of vertex connectivity to also admit a highly parallel algorithm. However, progress on vertex connectivity algorithms has been much slower than that for edge connectivity. For clarity, we define \emph{$k$-vertex connectivity} as the task of finding fewer than $k$ vertices whose removal disconnects the graph, or reporting that no such set exists (i.e., the graph is $k$-vertex connected)\footnote{Note that $k$-vertex connectivity is a special case of vertex connectivity, and vertex connectivity has the same parallel complexity (up to polylogarithmic factors) if $k$ is not restricted---this can be achieved by binary search on $k$.}.

\begin{table}[ht]
\centering
\begin{tabular}{|l|l|l|l|}
\hline
\textbf{Setting} & \textbf{Work} & \textbf{Depth} & \textbf{Reference} \\ \hline
\(k=1\) (Connectivity) & \(O(m\log n)\) & \(O(\log n)\) & \cite{ShiloachV81} \\ \hline
\(k=2\) (Biconnectivity) & \(O(m\log n)\) & \(O(\log n)\) & \cite{TarjanV85} \\ \hline
\(k=3\) (Triconnectivity) & \(O(m\log^2n)\) & \(O(\log^2 n)\) & \cite{MillerR87} \\ \hline
\(k=4\) (Four-Connectivity)& \(O(n^2 \log^2 n)\) & \(O(\log^2 n)\) & \cite{KanevskyR87} \\ \hline
Small $k$ & \(\tO{n^{2}k^{4}}\) & \(O(k^{2}\log n)\) & \cite{CheriyanT91,CheriyanKT93} \\ \hline
Independent of $k$ & \(n^{\omega+o(1)}\) & \(n^{o(1)}\) & \cite{LinialLW88,blikstad2025globalvsstvertex} \\ \hline
Sequential Algorithm: Extremely small $k$ & \(mk^{O(k^2)}\) & \(\Omega(n)\) & \cite{SaranurakY22,Korhonen25} \\ \hline
Sequential Algorithm: Small $k$ & \(\tO{mk^2}\) & \(\Omega(n)\) & \cite{ForsterNYSY20} \\ \hline
Sequential Algorithm: Independent of $k$ & \(m^{1+o(1)}\) & \(\Omega(n)\) & \cite{LiNPSY21} \\ \hline
\textbf{Our Result} & \(\tO{m\cdot \poly(k)}\) & \(\tO{\poly(k)}\) & \\ \hline
\end{tabular}
\caption{Summary of existing algorithms for parallel $k$-vertex connectivity and state-of-the-art results in the sequential setting. We do not include results on restricted graph classes, as our focus is on general graphs.}
\label{tab:vertexconnectivity_exact}
\end{table}

When $k=1$, the problem is equivalent to connectivity, for which a well-known highly parallel algorithm exists \cite{ShiloachV81}. The study of parallel $k$-vertex connectivity for $k>1$ dates back to Tarjan and Vishkin \cite{TarjanV85}, who provided a highly parallel algorithm for $k=2$. Subsequently, Miller and Ramachandran \cite{MillerR87} provided a highly parallel algorithm for $k=3$. Surprisingly, these remain the only work-efficient parallel algorithms known for small values of $k$; in fact, no highly parallel algorithm is known even for $k=4$. For $k=4$, the best-known parallel algorithm remains that of Kanevsky and Ramachandran \cite{KanevskyR87}, with a work bound of $\tO{n^2}$\footnote{Throughout the paper, we use $\tO{\cdot }$ to hide polylogarithmic factors.}.

For general $k$, existing algorithms present a trade-off: they are either not work-efficient or are highly sequential. In particular, if one allows the work to be as high as $n^2$, then an algorithm with depth polynomial in $k$ exists \cite{CheriyanT91,CheriyanKT93}. Alternatively, if one permits the work to be as large as that required for matrix multiplication, i.e., $n^{\omega}$, then one can achieve subpolynomial depth for any $k$ \cite{LinialLW88,blikstad2025globalvsstvertex}. Finally, if depth is not a concern, there exist work-efficient sequential algorithms \cite{ForsterNYSY20,LiNPSY21,SaranurakY22,Korhonen25}, although their depth is at least linear in $n$.

In summary, for any $k\ge 4$, no work-efficient parallel algorithm is known that achieves even sublinear (in $n$) depth, let alone a highly parallel one.

\paragraph{Our results.} 
We present a randomized \pram{} algorithm for $k$-vertex connectivity that runs in nearly linear work and polylogarithmic depth for any $k = \polylog(n)$.

\begin{theorem}[Informal version of \Cref{thm:main-detail}]\label{thm:main}
There exists a randomized \pram{} algorithm for $k$-vertex connectivity that uses $\tO{m\cdot \poly(k)}$ work and $\tO{\poly(k)}$ depth.
\end{theorem}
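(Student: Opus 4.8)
The plan is to follow the now-standard reduction of vertex connectivity to a polylogarithmic number of local/flow primitives (as in \cite{LiNPSY21,ForsterNYSY20}), but to re-engineer each primitive so that it runs in polylogarithmic depth rather than being inherently sequential. At the top level I would separate the problem into two regimes depending on the size of the minimum vertex cut's ``side''. In the \emph{unbalanced} regime, where one side of the separator contains only $O(\poly(k,\log n))$ vertices on the small side of the cut, the key tool is local flow / local vertex expansion: starting from a seed vertex, one explores a small ball and runs a bounded-size max-flow (value $<k$) to detect a separator. A flow of value less than $k$ can be found by $k$ rounds of augmenting paths, and each augmenting-path search is a BFS, which parallelizes to $O(\poly(k)\cdot\polylog n)$ depth; the number of seeds needed is reduced to $\polylog n$ by random sampling (a vertex on the large side is hit with good probability), so the total work is $\tO{m\cdot\poly(k)}$. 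In the \emph{balanced} regime, where both sides are large, a few rounds of random sampling of terminal pairs plus single-pair vertex connectivity (again, $k$ rounds of augmenting paths from a sampled source to a sampled sink, implemented via the standard vertex-splitting gadget) suffices, since a random pair straddles the cut with probability $\Omega(1/\polylog n)$ after appropriate sparsification.

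The second component is to make the \emph{graph sparsification} step parallel: the min cut has value $<k$, so one can first compute a $k$-connectivity-preserving sparsifier (e.g.\ a Nagamochi–Ibaraki-style forest decomposition, or a sampling-based sparsifier) with $O(nk)$ edges, and this must be done in $\polylog n$ depth. Nagamochi–Ibaraki scan-first search is sequential, so here I would instead use a parallel sparsification: repeatedly compute $O(k)$ edge-disjoint spanning forests in parallel (each spanning forest via \cite{ShiloachV81} in $O(\log n)$ depth, peeling $k$ times), which preserves all cuts of value $<k$. After sparsification every subsequent max-flow instance has $\tO{nk}$ edges, keeping the per-instance work near-linear.

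The third and most delicate component—and the one I expect to be the main obstacle—is the \emph{recursive / isolating-cuts} machinery. The almost-linear sequential algorithm of \cite{LiNPSY21} leans on an ``isolating cuts'' lemma and a ``common-neighborhood'' reduction that are invoked recursively on vertex-induced subgraphs; naively unrolling the recursion blows up the depth by a factor polynomial in the recursion depth, and the isolating-cuts subroutine itself calls max-flow on instances whose structure (terminal sets shrinking geometrically) must be handled so that the flows at each level can be run essentially in parallel rather than sequentially. The plan is to (i) replace the sequential isolating-cuts computation with a parallel version that, given $t$ terminals, computes all $t$ isolating cuts using $O(\log t)$ batched max-flow calls—each batch being a single flow instance on a contracted graph, so $O(\log t)$ factors of flow depth overall; and (ii) bound the recursion depth of the connectivity reduction by $O(\log n)$ and argue the subproblems at a given level are vertex-disjoint (or edge-disjoint up to $\poly(k)$ overlap), so their work sums to $\tO{m\cdot\poly(k)}$ and their depth does not compound. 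Combining: $O(\log n)$ recursion levels $\times$ $O(\log t)$ isolating-cut batches $\times$ $O(k)$ augmenting rounds $\times$ $O(\log n)$ BFS depth gives total depth $\tO{\poly(k)}$, and the work at each level is near-linear, yielding the claimed bounds. Finally I would assemble these pieces, run the balanced and unbalanced procedures (and all seeds/sampled pairs) in parallel, take the best cut found, and repeat $O(\log n)$ times to boost the success probability, giving the high-probability guarantee of \Cref{thm:main}.
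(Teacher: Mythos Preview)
Your proposal has a fundamental gap that is precisely the obstacle this paper is built around: you assume that ``$k$ rounds of augmenting paths, and each augmenting-path search is a BFS, which parallelizes to $O(\poly(k)\cdot\polylog n)$ depth.'' This is false after the first round. For vertex-capacitated flow you invoke the standard vertex-splitting gadget, which produces a \emph{directed} graph; after pushing the first unit of flow, the residual graph is directed, and finding the next augmenting path is a directed $s$-$t$ reachability problem. Directed reachability with near-linear work and polylogarithmic depth is a notorious open problem (the best work-efficient algorithms have depth $n^{1/2+o(1)}$). So your unbalanced and balanced subroutines, as written, do not run in polylogarithmic depth. The same issue sinks your third component: the isolating-cuts machinery of \cite{LiNPSY21} calls \emph{exact} max flow, which even on undirected inputs subsumes directed reachability; ``$O(\log t)$ batched max-flow calls'' does not help if no single call is known to parallelize.

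The paper's route is to abandon augmenting paths and exact flow entirely. For $s$-$t$ $k$-vertex connectivity it uses a multiplicative-weight-update scheme (\`a la \cite{GargK07,BernsteinGS21}) that reduces the problem to $\poly(k)$ rounds of $(1+1/\poly(k))$-approximate \emph{undirected} single-source shortest paths, for which \cite{RozhonGHZL22} gives $\tO{m}$ work and $\tO{1}$ depth. The balanced/unbalanced split and degree-proportional sampling are indeed as in \cite{ForsterNYSY20} (so that part of your outline is on track), but the hard new idea is \emph{localizing} the MWU: each round must run in $\tO{\mu\cdot\poly(k)}$ work when the small side has volume $\mu$, and parallel approximate SSSP is a global primitive. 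The paper handles this by maintaining a single-source length sparsifier (contracting connected components of weight-$1$ vertices) of size $\tO{\mu\cdot\poly(k)}$, running SSSP on the sparsifier, and using a random-termination trick to map back only a prefix of the path. None of the three components you propose (augmenting-path BFS, Nagamochi--Ibaraki-style sparsification, isolating cuts) appears in the actual argument.
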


This is the first work-efficient algorithm achieving even sublinear (in $n$) depth for $k$-vertex connectivity when $k\ge 4$, and it demonstrates that the problem is highly parallelizable for any $k = n^{o(1)}$.

\subsection{Technical Overview}

\paragraph{Reachability barrier.} One might wonder whether a highly parallel algorithm exists for the $k$-vertex connectivity problem for {\em every $k$}. This intuition stems from the fact that such algorithms exist for edge connectivity and that $k$-vertex connectivity can be solved in almost linear time for every $k$ in the sequential setting. However, recent work \cite{blikstad2025globalvsstvertex} demonstrates significant barriers: any highly parallel vertex connectivity algorithm would imply a highly parallel algorithm for {\em dense (s-t) reachability}\footnote{Here, \emph{dense} means that the input graph has $\Theta(n^2)$ edges, and (s-t) reachability means determining whether a given vertex $s$ can reach another given vertex $t$ in a directed graph.}. 

Reachability is arguably the most fundamental problem in directed graphs, and the existence of a highly parallel algorithm for reachability remains a notorious long-standing open problem. Even for dense graphs, the best work-efficient parallel algorithm for reachability still has a polynomial depth of $n^{1/2+o(1)}$ \cite{Fineman20,LiuJS19}. Achieving subpolynomial (i.e., $n^{o(1)}$) depth currently requires work on the order of $\tO{n^{\omega}}$\footnote{$n^{\omega}$ denotes the work of the best-known algorithm for matrix multiplication under subpolynomial depth; note that $\omega$ remains far from $2$.}, which essentially relies on a folklore method based on repeatedly squaring the adjacency matrix.
Thus, achieving a highly parallel algorithm for the vertex connectivity problem, while theoretically possible, is not currently within reach.

\paragraph{Restricting $k$ to sublinear.}
Given the above barrier, it is natural to focus on smaller values of $k$, e.g., $k=n^\epsilon$ for any constant $\epsilon>0$. Unfortunately, by extending the proof in \cite{blikstad2025globalvsstvertex}, we show in this paper that for $k=n^{\eps}$, a highly parallel algorithm for $k$-vertex connectivity would imply a highly parallel algorithm for a \emph{direct sum version} of dense reachability. We refer interested readers to \Cref{sec:reduction} for a detailed discussion. In summary, it appears out of reach to design a highly parallel algorithm for $k$-vertex connectivity when $k$ is polynomial in $n$. In light of this barrier, our main theorem \Cref{thm:main} presents the best result (up to subpolynomial factors) achievable given the current understanding of these reachability barriars.

\paragraph{Restricting $k$ to subpolynomial.} Given the above barriers, it is natural to restrict our focus to $k = n^{o(1)}$. There are two work-efficient sequential algorithms \cite{ForsterNYSY20,LiNPSY21} for $k=n^{o(1)}$. {\em Can we adapt their ideas to design a highly parallel algorithm?} While parallel reachability is not formally a barrier in this case, it reemerges as an obstruction: \emph{both algorithms make at least one call to solve reachability as a subroutine}:

\begin{itemize}
    \item The almost linear-time algorithm \cite{LiNPSY21} reduces vertex connectivity to an \emph{exact} max flow computation; this reduction has also been implemented in a parallel model \cite{blikstad2025globalvsstvertex}. However, even in an undirected graph, exact max flow subsumes directed reachability.\footnote{Readers familiar with this area might notice that $k$-max flow (finding the flow value up to $k$) or $(1-\eps)$-approximate undirected max flow do not subsume reachability, and indeed are highly parallelizable, as we will explain later. However, the reduction in \cite{LiNPSY21} specifically requires an exact max flow algorithm with unbounded value because of the \emph{isolating cut} procedure. It remains a challenging and complex open problem to use approximate max flow to achieve a \emph{vertex} isolating cut (the edge version is known via fair cuts \cite{0006NPS23}, but not the vertex version).}
    \item The $\tO{mk^2}$ time algorithm \cite{ForsterNYSY20} requires $k$ iterations of finding an \emph{augmenting path} in a residual graph (which is directed), and thus necessitates at least one directed reachability call.
\end{itemize}

Therefore, attempting to make these algorithms highly parallel would again require solving parallel reachability.\footnote{For even smaller $k$, we note that the $mk^{O(k^2)}$ algorithms \cite{SaranurakY22,Korhonen25} does not use reachability, and have the potential to be parallelizable (although there are sequential procedures in their algorithms, and parallelizing them may not be trivial). Nonetheless, their techniques inherently result in work that is exponential in $k$, which is much worse than our work bound.}

\paragraph{Our Techniques: Avoiding Reachability.} Given the above discussion, one may ask: can we circumvent this barrier for $k = n^{o(1)}$, or does it inherently subsume reachability as in the case of $k=n^\epsilon$? Before our work, this question remained unclear. Our main result \Cref{thm:main} provides a positive answer.

Our starting point is the algorithm of \cite{ForsterNYSY20}. The idea in \cite{ForsterNYSY20} can be viewed as localizing the Ford-Fulkerson algorithm for $k$-max flow (i.e., computing the flow value up to $k$). A \emph{localized} algorithm, intuitively, explores only a small portion of the input graph. For example, a DFS procedure that stops after exploring a limited number of edges can be considered a localized DFS. The benefit of such localization is that the total work can be even \emph{sublinear} in the size of the graph.

Rather than localizing the inherently sequential Ford-Fulkerson method, we localize a \emph{parallel $k$-max flow algorithm}.\footnote{Importantly, we consider \emph{vertex-capacitated max flow}, so edge-capacitated algorithms are not applicable.} It is well known that the \emph{multiplicative weight update (MWU)} method can be used to solve max flow-related problems \cite{GargK07,BernsteinGS21} and, in particular, $k$-max flow in $\tO{m\cdot \poly(k)}$ work and $\tO{\poly(k)}$ depth, making it a promising candidate for our algorithm. This technique reduces the $k$-max flow problem to $\poly(k)$ rounds of computing \emph{approximate shortest paths} ({\sf ApxSP}), which can be solved in $\tO{m\cdot \poly(k)}$ work and $\tO{1}$ depth \cite{AndoniSZ20,li2020faster,RozhonGHZL22}.

Nonetheless, a significant challenge remains: how can one localize a parallel {\sf ApxSP} algorithm? Intuitively, this appears difficult because parallel {\sf ApxSP} algorithms typically process the entire graph, ensuring that as the source extends the shortest path tree into a region, the necessary information is already available there. This global dependency distinguishes them from DFS or Dijkstra’s algorithm, which are easier to localize due to their sequential exploration that allows the algorithm to halt exploration at an appropriate point.

We overcome this challenge by using data structures to maintain a global \emph{single-source distance sparsifier} specifically designed for the MWU framework with sublinear size. It then suffices to run a black-box {\sf ApxSP} algorithm on the sparsifier and map the results back to the original graph. Although mapping back to the original graph might incur significant work, this step can be localized by leveraging the global information provided by the sparsified graph. This approach achieves overall sublinear work and low depth. The details of this technique are presented in \Cref{sec:overview}.

\section{Overview}\label{sec:overview}

In this section, we present an overview of \Cref{thm:main}: solving $k$-vertex connectivity in $\tO{m\cdot \poly(k)}$ work and $\tO{\poly(k)}$ depth. In \Cref{subsec:overviewwarmup}, we explain how to use the multiplicative weight update method to solve $s$-$t$ $k$-vertex connectivity (adapted from \cite{GargK07,BernsteinGS21}). In \Cref{subsec:overviewframework}, we describe the framework for solving (global) $k$-vertex connectivity under the assumption of an efficient \emph{local cuts} subroutine. Finally, in \Cref{subsec:overviewlocalcuts}, we detail the local cut subroutine, which is the most technical component and our main technical contribution.

We first introduce some necessary definitions.

\paragraph{Basic Notations.} In this section, we consider undirected simple graphs $G=(V,E)$. The degree of a vertex $v$ is denoted by $\deg_G(v)$, and for a vertex set $A$, we define $\deg_G(A)=\sum_{v\in A}\deg_G(v)$. We use $V(G),E(G)$ to denote the vertex set and edge set of a graph $G$.

\paragraph{Paths and Vertex-Length.} For a path $P$, we denote its vertex set and edge set by $V(P)$ and $E(P)$, respectively. A path is called an \emph{$(s,t)$-path} if it starts at $s$ and ends at $t$. We say that $P$ is a \emph{non-trivial path} if $V(P)$ contains more than one vertex. Given a \emph{vertex-length} function $w:V\to\bbR$, the \emph{vertex-length} (or simply the length) of $P$ (with respect to $w$) is defined as
\[
w(P)=\sum_{v\in V(P)}w(v).
\]

\paragraph{Shortest Paths.} An $(s,t)$-shortest path (with respect to $w:V\to\bbR$) is an $(s,t)$-path with the minimum length, denoted by $\dist_{G,w}(s,t)$. An $(1+\eps)$-approximate $(s,t)$-shortest path is a path whose length lies between $\dist_{G,w}(s,t)$ and $(1+\eps)\cdot \dist_{G,w}(s,t)$. There is a \emph{deterministic} algorithm finding $(1+\eps)$-approximate $(s,t)$-shortest path in $\tO{m/\eps^2}$ work and $\tO{1}$ depth \cite{RozhonGHZL22}, although originally written for \emph{edge length}, it can be easily adjusted to vertex length (see \Cref{thm:sssp}).

\paragraph{Vertex Cuts.} A vertex cut $(L,S,R)$ is a partition of the vertex set $V$ into three sets $L,S,R$ such that there are no edges between $L$ and $R$. We refer to $|S|$ as the size of the vertex cut $(L,S,R)$. The vertex cut $(L,S,R)$ is called an $(s,t)$-vertex cut if $s\in L$ and $t\in R$. A graph $G$ is said to be $k$-vertex connected if it contains no vertex cut of size less than $k$. Following the convention, we sometimes also call $S$ as a vertex cut.

For simplicity, in this overview, we focus on the \emph{value} version of $k$-vertex connectivity, i.e., we are only interested in testing whether a graph is $k$-vertex connected.%
\footnote{Outputting a vertex cut, rather than merely testing $k$-vertex connectivity, requires converting a certificate (in our case, a fractional cut) into an (integral) cut. This conversion, referred to as \emph{rounding}, is explained in \Cref{sec:rounding}. Although sequential algorithms for rounding exist, to the best of our knowledge there is no highly parallel rounding algorithm; hence, our algorithm might have independent interest.}

\subsection{Warm-up: $s$-$t$ $k$-Vertex Connectivity Using Multiplicative Weight Updates}\label{subsec:overviewwarmup}

In this section, we present a warm-up algorithm that employs the multiplicative weight updates (MWU) framework to solve the following problem: given an undirected simple graph $G=(V,E)$, two vertices $s,t$, and an integer $k$, determine whether there exists an $(s,t)$-vertex cut of size less than $k$. We refer to this problem as \emph{$s$-$t$ $k$-vertex connectivity}. The algorithm runs in $\tO{m\cdot \poly(k)}$ work and $\tO{\poly(k)}$ depth. A formal statement and proof can be found in \Cref{lem:stvertexcut}.

\paragraph{MWU Framework for $s$-$t$ $k$-Vertex Connectivity.} The framework is based on the approach in \cite{GargK07}. Although it was originally designed for edge cuts, it has been observed that by using vertex lengths, the same framework can be adapted to find vertex cuts \cite{BernsteinGS21}. While we omit the detailed analysis and exact parameter settings here, the algorithm roughly proceeds as follows:

\begin{description}
    \item[Initialization.] Initialize a weight function $w:V\to\bbR$ by setting $w(s)=w(t)=0$, and $w(v)=1$ for every $v\in V\setminus\{s,t\}$. The algorithm then performs weight update steps for $\poly(k)$ rounds.
    \item[Approximate Shortest Path.] In each round, compute a $(1+\frac{1}{\poly(k)})$-approximate $(s,t)$-shortest path $P$ with respect to $w$ (using the algorithm in \cite{RozhonGHZL22}). The algorithm returns `yes' (indicating that there exists an $(s,t)$-vertex cut of size less than $k$) if $w(P)$ is sufficiently large---specifically, if 
    \[
    w(P) \ge \frac{\sum_{v\in V}w(v)}{k-0.5}.
    \]
    \item[Weight Updates.] Update the weight of every vertex $v$ on the path $P$ by setting
    \[
    w(v) \leftarrow \left(1+\frac{1}{\poly(k)}\right)\cdot w(v).
    \]
    \item[Termination.] If, after all $\poly(k)$ rounds, the algorithm has not returned `yes', then it returns `no'.
\end{description}

\paragraph{Intuition.} We now provide a brief intuition for the MWU framework. 

An $(s,t)$-path is said to be \emph{blocked} if its length with respect to $w$ is at least 
$\frac{\sum_{v\in V}w(v)}{k-1}$.
Observe that if there exists a vertex set $S$ of size $k-1$ whose removal disconnects $s$ from $t$, then assigning a weight of $1/(k-1)$ to each vertex in $S$ would block every $(s,t)$-path. The reverse direction can also be shown (see \Cref{lem:fractional_to_integral} for more details). 

In each round, the algorithm searches for an unblocked $(s,t)$-path and increases the weights on its vertices, thereby raising the likelihood that the path becomes blocked in subsequent rounds. If, during any round, the computed approximate $(s,t)$-shortest path has length at least $\frac{\sum_{v\in V}w(v)}{k-0.5}$ (note that the $k-0.5$ in the denominator compensates for the approximation loss), then every $(s,t)$-path is blocked, and the algorithm outputs `yes'. Otherwise, the classical MWU analysis implies that it is impossible to separate $s$ from $t$ by deleting $k-1$ vertices.

\paragraph{Summary.} Assuming that $k$ is polylogarithmic in $n$, the algorithm is highly parallel: it performs $\poly(k)$ rounds, with each round involving the computation of a $(1+\frac{1}{\poly(k)})$-approximate shortest path. According to \cite{RozhonGHZL22}, such a shortest path algorithm can be implemented in $\tO{m\cdot \poly(k)}$ work and $\tO{\poly(k)}$ depth \emph{deterministically}.

Note that the entire algorithm is \textbf{deterministic}. Thus, given $s,t\in V$, we can deterministically decide whether there exists an $(s,t)$-vertex cut of size less than $k$ in $\tO{m\cdot \poly(k)}$ work and $\tO{\poly(k)}$ depth.

\subsection{A Framework for Solving Vertex Connectivity via Local Cuts}\label{subsec:overviewframework}

Note that one can solve global $k$-vertex connectivity by applying the $s$-$t$ $k$-vertex connectivity algorithm to every pair of vertices $s,t\in V$. However, this naive approach is not work-efficient when using the $\tO{m\cdot \poly(k)}$-work algorithm described in \Cref{subsec:overviewwarmup}.

In this section, we briefly review the framework of \cite{ForsterNYSY20}, which reduces the global $k$-vertex connectivity problem to that of finding \emph{local cuts}.

Throughout this section, we denote by $(L,S,R)$ a vertex cut of $G$ of size less than $k$ satisfying $\deg_G(L)\le \deg_G(R)$ (if $\deg_G(L)>\deg_G(R)$, swapping $L$ and $R$ yields a cut of the same size). Given that such a cut exists, our goal is to certify its existence.

\paragraph{Balanced case: $\deg_G(L)=\Omega(m)$.} Suppose that $\deg_G(L)=\Omega(m)$ (and consequently $\deg_G(R)\ge\deg_G(L)\ge\Omega(m)$). In this case, we independently sample a pair of vertices $(s,t)$ with probabilities proportional to their degrees, i.e.,
\begin{equation}\label{eq:probability}
    \Pr[s \text{ is chosen}] = \frac{\deg_G(s)}{\deg_G(V)} \quad \text{and} \quad \Pr[t \text{ is chosen}] = \frac{\deg_G(t)}{\deg_G(V)}.
\end{equation}

We then run the algorithm from \Cref{subsec:overviewwarmup} for $s$-$t$ $k$-vertex connectivity and output `yes' if an $(s,t)$-vertex cut of size less than $k$ is found.

This algorithm exhibits a one-sided error: if it outputs `yes', then a vertex cut of $G$ of size less than $k$ certainly exists (because the algorithm described in \Cref{subsec:overviewwarmup} is deterministic). Conversely, a `no' output only indicates that the sampled pair did not consist of one vertex from $L$ and one from $R$. Since the error probability is bounded away from $1$, repeating the procedure for $\tO{1}$ rounds suffices to boost the overall success probability to high probability.

\paragraph{Unbalanced case: Reduction to local cuts.} Now suppose that $\deg_G(L)$ is polynomially smaller than $m$, say $\deg_G(L)=\Theta(m^{0.9})$. Under the assumption that $k=\polylog(n)$, it follows that $\deg_G(R)=\Omega(m)$. (Note: the only edges not adjacent to $L$ or $R$ are inside $S$, which can be at most $k^2$ of them.) This scenario captures the hardest case; the other cases (when $\deg_G(L)=\omega(m^{0.9})$ or $\deg_G(L)=o(m^{0.9})$) can be handled similarly.

The sampling technique from the balanced case does not work here: to sample a vertex pair $(s,t)$ with $s\in L$ and $t\in R$, one needs roughly $m^{0.1}$ independent samples to achieve a constant probability of success. If we were to run an $\tO{m\cdot \poly(k)}$-work algorithm for each sampled pair, the total work would become $\tO{m^{1.1}\cdot \poly(k)}$, which is far from our target of work-efficiency.

To reduce the work, our idea is to solve $s$-$t$ $k$-vertex connectivity in \emph{sublinear (in $m$)} work, more precisely in roughly $\tO{m^{0.9}\cdot \poly(k)}$ work. Then, taking $m^{0.1}$ samples, the total work becomes
\[
m^{0.1}\cdot \tO{m^{0.9}\cdot \poly(k)} = \tO{m\cdot \poly(k)},
\]
which meets our efficiency goal.

In general, solving $s$-$t$ $k$-vertex connectivity in sublinear work is not possible since it requires reading the entire input graph. However, this becomes feasible when our goal is only to certify the existence of an $(s,t)$-vertex cut of size less than $k$ with $\deg_G(L)=\Theta(m^{0.9})$. For example, when $k=1$ (so that $S=\emptyset$), one can trivially determine $L$ in $O(m^{0.9})$ work by performing a BFS from $s$—although this may incur large depth (a challenge we address in the next section). Moreover, if such an $L$ does not exist, the algorithm can stop after exploring more than $m^{0.9}$ edges. This example provides intuition for why we can hope for an algorithm that operates in sublinear work.

Motivated by this observation, we define $(L,S,R)$ to be an \emph{$(s,\mu,t)$-local cut} if $|L|=\Theta(\mu)$ and $s\in L$, $t\in R$. Consequently, to solve parallel $k$-vertex connectivity, it suffices to prove the following lemma.

\begin{lemma}[Informal version of \Cref{lem:localvertexcut}]\label{lem:overviewlocalcut}
    Given $G=(V,E)$, $s,t\in V$, and integers $k,\mu$, decide whether there is an $(s,\mu,t)$-local cut of size less than $k$ in $\tO{\mu\cdot \poly(k)}$ work and $\tO{\poly(k)}$ depth.
\end{lemma}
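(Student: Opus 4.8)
The plan is to adapt the MWU algorithm of \Cref{subsec:overviewwarmup} so that every round runs in $\tO{\mu \cdot \poly(k)}$ work rather than $\tO{m \cdot \poly(k)}$ work, while keeping the depth polylogarithmic (assuming $k = \polylog n$). The key observation is that if an $(s,\mu,t)$-local cut $(L,S,R)$ with $|S| < k$ exists, then the MWU certificate—a fractional vertex cut of total weight less than $\frac{\sum_v w(v)}{k-1}$—is supported on $L \cup S$, a set incident to only $O(\mu \cdot k)$ edges (the edges inside $L$, the edges from $L$ to $S$, and the edges inside $S$). So the ``interesting'' part of the graph that the algorithm must explore has size roughly $\mu$; the difficulty is that we do not know $L$ in advance, and a black-box parallel {\sf ApxSP} call would read all of $E$.

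First I would run the MWU loop exactly as in \Cref{subsec:overviewwarmup}, but instead of invoking the global {\sf ApxSP} routine of \cite{RozhonGHZL22} on all of $G$, I would compute the approximate shortest path only on a dynamically maintained ``ball'' around $s$: the subgraph induced by the vertices at small $w$-distance from $s$, truncated once the explored edge count exceeds a budget of roughly $\mu \cdot \poly(k)$. Concretely, in each round I maintain a single-source distance data structure that, given the current weights $w$, reports the set $B = \{v : \dist_{G,w}(s,v) < \tau\}$ for the relevant threshold $\tau = \frac{\sum_v w(v)}{k-0.5}$ (this is the only regime that matters, since a path longer than $\tau$ already certifies `yes'), together with an approximate shortest $(s,t)$-path inside $B$ if $t \in B$. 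If at any point the number of edges incident to $B$ exceeds the budget, we conclude that $|L| = \omega(\mu)$ and safely return `no' (no $(s,\mu,t)$-local cut of the prescribed size exists); otherwise $B$ contains the entire candidate region $L \cup S$ and running black-box {\sf ApxSP} on the subgraph on $B$ is correct. The weight updates only touch vertices on the returned path, which lies inside $B$, so the support of $w$ stays confined to the budgeted region across all $\poly(k)$ rounds.

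The main obstacle—and the reason the lemma is nontrivial—is implementing this truncated single-source computation with \emph{both} sublinear work \emph{and} polylogarithmic depth. A naive localized BFS/Dijkstra from $s$ has small work but depth proportional to the number of layers, which can be $\Omega(\mu)$ or worse; conversely, the parallel {\sf ApxSP} algorithms have small depth but inherently scan the whole graph because the shortest-path tree's frontier needs global information. The resolution, as the technical overview hints, is to maintain a \emph{single-source distance sparsifier} of the explored ball: a graph of size $\tO{\mu \cdot \poly(k)}$ that preserves $(1+\eps)$-approximate distances from $s$ to every vertex in $B$, updated incrementally as the weights change across the $\poly(k)$ rounds, so that we can afford to run the black-box low-depth {\sf ApxSP} on the sparsifier each round. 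The correct path in $G$ is then recovered from the sparsifier path by a localized ``expansion'' step, which is itself charged against the $\mu \cdot \poly(k)$ budget using the structure provided by the sparsifier. I would defer the construction and maintenance of this sparsifier to the local-cuts section; here it suffices to state that, given it, the MWU analysis of \Cref{subsec:overviewwarmup} goes through verbatim on the restricted region, yielding the claimed $\tO{\mu \cdot \poly(k)}$ work and $\tO{\poly(k)}$ depth bounds, with one-sided error as before.
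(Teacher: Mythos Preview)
Your plan has a real gap at the truncation step. You propose to compute the ball $B = \{v : \dist_{G,w}(s,v) < \tau\}$ and, if the number of edges incident to $B$ exceeds the budget $\mu\cdot\poly(k)$, to return `no' on the grounds that $|L| = \omega(\mu)$. This inference is false. Even when an $(s,\mu,t)$-local cut $(L,S,R)$ exists with $\deg_G(L)=\Theta(\mu)$ and $|S|<k$, the ball $B$ can contain essentially all of $R$: once a path from $s$ crosses $S$ (at hop-cost $\le k$), it can fan out arbitrarily into $R$ in a few more hops. With the uniform initialization $w\equiv 1$, the threshold $\tau \approx n/k$ is huge, so in the very first round $B$ typically has $\Omega(m)$ incident edges regardless of whether a local cut exists. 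Your algorithm would then output `no' incorrectly. More generally, the ball at the MWU threshold has no reason to respect the unknown cut, so ``ball too large'' carries no information about $L$.

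A second gap is that even when you do have a bounded working region, an $(s,t)$-shortest path can have $\Omega(n)$ vertices, so merely writing it down and updating weights along it blows the $\tO{\mu\cdot\poly(k)}$ budget. The paper handles both issues with two ideas your plan does not contain. First, the sparsifier is not a distance ball: it is obtained by \emph{contracting all trivial (weight-$1$) vertices}, after initializing $w(u)=n^3$ for $u\in N_G(s)$ so that every nontrivial path from $s$ has length $\ge n^3$. The sparsifier then has size bounded by the degree-sum of non-trivial vertices, which stays $\tO{\mu\cdot\poly(k)}$ because only vertices on previously chosen paths become non-trivial. Second, the paper never seeks a full $(s,t)$-path: it uses \emph{random termination}, sampling a vertex $r$ (proportional to degree) from a prefix of the sparsifier shortest-path tree whose total degree is $\Theta(\mu\cdot\poly(k))$, so that $r\in R$ with probability $1-1/\poly(k)$, and then updates weights only along the $(s,r)$-path. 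The MWU analysis is modified to accept these $(s,r)$-updates with $r\in R$ in place of $(s,t)$-updates. Without these two ingredients your plan does not achieve the stated work bound.
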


\paragraph{Summary.} Assuming the correctness of \Cref{lem:overviewlocalcut}, we obtain an algorithm for $k$-vertex connectivity that runs in $\tO{m\cdot \poly(k)}$ work and $\tO{\poly(k)}$ depth by following these steps:

\begin{description}
    \item[Step 1: Guess the size of $\deg_G(L)$.] Iterate over candidate values of $\mu$, where $\mu$ ranges over powers of $2$.
    \item[Step 2: Sample vertex pairs.] For each guessed $\mu$, independently sample $m/\mu$ pairs of vertices according to their degrees (see \Cref{eq:probability}).
    \item[Step 3: Local cuts.] For every sampled pair $(s,t)$, invoke the local cuts procedure described in \Cref{lem:overviewlocalcut} to certify the existence of an $(s,\mu,t)$-local cut, and output `yes' if such a cut is certified.
    \item[Ending.] If none of the iterations returns `yes', then output `no'.
\end{description}

A formal description of this framework is provided in \Cref{sec:parallelvertexconnectivity}. Our main technical contributions and challenges lie in proving \Cref{lem:overviewlocalcut}, which we detail in the next section.

\subsection{Parallel Local Cuts}\label{subsec:overviewlocalcuts}

In this section, we provide an overview of \Cref{sec:localcuts} and an informal proof of \Cref{lem:overviewlocalcut}. 
Remember that we assume the existence of a $(s,\mu,t)$-local cut denoted by $(L,S,R)$, meaning $s\in L,t\in R$ and $\deg_G(L)=\Theta(\mu)$. Our goal is to certify the existence of $(L,S,R)$.

Throughout this section, we make the following simplifying assumption:
\begin{assumption}\label{assumption}
    The graph has maximum degree $O(\mu)$.
\end{assumption}
Removing \Cref{assumption} requires a more involved modification of the algorithm by splitting the graph into high-degree and low-degree parts. This modification does not affect the high-level idea, so we omit it from the overview. Intuitively, \Cref{assumption} is reasonable because we are guaranteed that $\deg_G(L)=\Theta(\mu)$, which implies that every vertex $u\in L$ satisfies $\deg_G(u)=O(\mu)$. Thus, if we only focus on $L$, we need only consider vertices with degree bounded by $O(\mu)$.

\paragraph{A Naive Approach.} One could employ the MWU framework described in \Cref{subsec:overviewwarmup}. Recall that this framework requires $\poly(k)$ rounds, each involving the computation of a $(1+\frac{1}{\poly(k)})$-approximate $(s,t)$-shortest path with respect to a weight function $w:V\to\bbR$. A bottleneck arises, however: even a single such approximate shortest path computation takes $\tO{m\cdot \poly(k)}$ work and $\tO{1}$ depth, which is far from our target of $\tO{\mu\cdot \poly(k)}$ work when $\mu \ll m$. The goal of this section is to reduce the work per round to $\tO{\mu\cdot\poly(k)}$.

For simplicity, throughout this section, whenever we refer to an \emph{approximate shortest path}, we mean a $(1+\frac{1}{\poly(k)})$-approximate shortest path.

\subsubsection{(Single-Source) Length Sparsifier} 

Our approach is to maintain an \emph{$s$-source length sparsifier} $\hG{}$ together with an associated weight function $\hat{w}:V(\hG{})\to \bbR$, which satisfy the following properties:
\begin{itemize}
    \item \emph{(Graph contraction)} $\hG{}$ is obtained by contracting vertex subsets of $G$. Specifically, vertices in $V(\hG{})$ correspond to disjoint subsets (or \emph{cluster}) of $V(G)$, and the union of these clusters is $V(G)$ (i.e., clusters form a \emph{partition} of $V(G)$). An edge exists between two vertices in $V(\hG{})$ if and only if there is an edge in $E(G)$ connecting the corresponding clusters.
    \item \emph{(Length preserving)} There is a natural many-to-one mapping from any path $P$ in $G$ to a path $\hP{}$ in $\hG{}$, obtained by contracting consecutive vertices in $P$ that lie in the same cluster (denoted by $\hP{}=\hG{}(P)$). We guarantee that for every non-trivial simple path $P$ starting from $s$,
    \[
    \Bigl(1-\frac{1}{\poly(n)}\Bigr)\cdot w(P)\le \hat{w}(\hG{}(P))\le w(P),
    \]
    \item \emph{(Sparsity)} The graph $\hG{}$ contains at most $\tO{\mu\cdot \poly(k)}$ edges.
\end{itemize}

We will later explain how to construct such a length sparsifier and weight function; for now, assume that $\hG{}$ is provided for free.

\paragraph{Finding approximate shortest paths in $\hG{}$.} Notice that the length of a path $P$ in $G$ and that of its contracted image $\hG{}(P)$ differ only by a factor of $(1+\frac{1}{\poly(n)})$, which is negligible compared to the $(1+\frac{1}{\poly(k)})$-approximation factor of the approximate shortest path algorithm. Suppose that $s$ lies in a cluster $C_s$ and $t$ in a cluster $C_t$, and let $\hP{}$ be an approximate shortest path from $C_s$ to $C_t$ in $\hG{}$. Then any $(s,t)$-path $P$ in $G$ satisfying $\hG{}(P)=\hP{}$ is an approximate shortest path from $s$ to $t$, by virtue of the length preserving property of $\hG{}$.

Thus, we first compute a $(C_s,C_t)$-shortest path, denoted by $\hP{}_{s,t}$, in $\hG{}$. Since $\hG{}$ is sparse, we can compute $\hP{}_{s,t}$ in $\tO{\mu\cdot \poly(k)}$ work and $\tO{1}$ depth using the black-box approximate shortest path algorithm from \cite{RozhonGHZL22}.

After obtaining $\hP{}_{s,t}$, the next step is to find a corresponding path $P_{s,t}$ in $G$ such that $\hG{}(P_{s,t})=\hP{}_{s,t}$, so that we can perform weight updates accordingly.

\paragraph{An inefficient naive approach.} Suppose $\hP{}_{s,t}$ has the form
\[
\hP{}_{s,t}=(C_1,C_2,\dots,C_{z-1},C_z),
\]
where each $C_i$ is a cluster of $G$, with $s\in C_1$ and $t\in C_z$. A corresponding path $P_{s,t}$ in $G$ can be constructed in $\tO{\sum_{i\in[z]}\deg_G(C_i)}$ work and $\tO{1}$ depth by finding, within each cluster $C_i$, a subpath such that the concatenation of these subpaths forms an $(s,t)$-path in $G$.\footnote{We omit the details regarding the mapping from vertices in $\hG{}$ back to clusters in $G$, and how to efficiently find an undirected path in a cluster. These issues can be resolved using appropriate data structures \Cref{lem:dynamicspanningforest}.} However, since $\sum_{i\in[z]}\deg_G(C_i)$ can be as large as $m$, a more efficient algorithm for finding $P_{s,t}$ is necessary.

At first glance, this large work appears unavoidable since it could be the case that any path connecting $s$ and $t$ in $G$ has length at least $\Omega(m)$. Therefore, rather than explicitly finding a full $(s,t)$-path, we employ the idea of \emph{random terminating}, as introduced in \cite{ForsterNYSY20}.

\subsubsection{Random Termination and Modified Weight Updates}\label{subsubsec:randomtermination}
As explained in the previous section, we do not aim to compute a complete $(s,t)$-path in $G$. Instead, we compute an $(s,r)$-path $P_{s,r}$ in $G$ that satisfies:
\begin{enumerate}[label=(\roman*)]
    \item $r\in R$, and
    \item $P_{s,r}$ is an approximate $(s,t)$-shortest path.
\end{enumerate}

\paragraph{Modification of Weight Updates.} Although the standard multiplicative weight update (MWU) framework is designed to update the weights along an approximate $(s,t)$-shortest path in each iteration, a slight modification allows us to update the weights along an $(s,r)$-shortest path for some $r\in R$. Intuitively, increasing the weights along an $(s,r)$-shortest path (which must pass through $S$) also increases the chance of blocking paths from $s$ to $t$, since both $r$ and $t$ lie in $R$. For a formal justification that this modification does not affect the MWU proof, see \Cref{lem:mwu}. To summarize, the modified MWU framework proceeds as follows:
\begin{enumerate}[label=(\roman*)]
    \item Find a $(s,r)$-shortest path $P_{s,r}$ from $s$ to some node $r\in R$.
    \item Increase the weights of the vertices in $P_{s,r}$.
\end{enumerate}

\paragraph{Finding $r$ and $P_{s,r}$.} The term \emph{random termination} refers to choosing the vertex $r$ at random. We select $r$ and construct $P_{s,r}$ as follows:
\begin{enumerate}[\arabic*]
    \item Consider the approximate shortest path $\hP{}_{s,t}=(C_1,C_2,\dots,C_{z-1},C_z)$ in $\hG{}$ from the previous section. Use binary search to identify an index $i^*$ such that
    \[
    \sum_{j\le i^*}\deg_G(C_j)=\Theta(\mu \cdot \poly(k)).
    \]
    \textbf{Remark.} Note that such an index $i^*$ might not exist if some cluster $C_i$ has an exceptionally high degree. For simplicity, we assume that such an $i^*$ exists. To remove this assumption, one must split out a small part of the first exceptionally high-degree cluster using the appropriate data structure (see \Cref{lem:dynamicspanningforest}).
    
    \item Let $C^*=\bigcup_{j\le i^*}C_j$. Sample a vertex $r$ from $C^*$ with probability proportional to its degree, i.e.,
    \[
    \Pr[r\text{ is sampled}] = \frac{\deg_G(r)}{\sum_{v\in C^*}\deg_G(v)}.
    \]
    
    \item Find a path $P_{s,r}$ in $G[C^*]$ from $s$ to $r$ such that $\hG(P_{s,r})$ is a prefix of $\hP_{s,t}$.
\end{enumerate}

\paragraph{Complexity.} The first step requires fast computation of $\deg_G(C_i)$, which is achievable using the data structure described in \Cref{lem:dynamicspanningforest}. Since the second and third steps are confined to $G[C^*]$, the total work is reduced to $\tO{\deg_G(C^*)}=\tO{\mu\cdot \poly(k)}$, which meets our efficiency goal.

We note that we omit many details about the data structure in \Cref{lem:dynamicspanningforest}, as it is a modification of previous works and we do not claim novelty here. The data structure requires an initialization with $\tO{m}$ work; therefore, our local cuts algorithm is not strictly sublinear in $m$. However, after initialization, all updates and queries operate within the desired work bound of $\tO{\mu\cdot \poly(k)}$.  Since the initialization is shared across all local cuts computations in the framework described in \Cref{subsec:overviewframework}, the $\tO{m}$ work is not an issue.

\paragraph{Correctness: (i) $r\in R$, (ii) $P_{s,r}$ is an approximate $(s,t)$-shortest path.} We now argue that $P_{s,r}$ satisfies the two desired conditions:
\begin{enumerate}[label=(\roman*)]
    \item \emph{($r\in R$)} Notice that $\deg_G(L\cup S)\le O(k\cdot \mu)$, since under our bounded-degree \Cref{assumption} we have $\deg_G(L)=\Theta(\mu)$ and $\deg_G(S)=O(k\cdot \mu)$. Therefore, when sampling $r$ from $C^*$, where $\deg_G(C^*)=\Theta(\mu \cdot \poly(k))$, the probability that $r\in R$ is at least $1-\frac{1}{\poly(k)}$ (we will explain later why this probability suffices).
    \item \emph{($P_{s,r}$ is an approximate shortest path from $s$ to $r$)} Since $\hP{}_{s,t}$ is an approximate shortest path in $\hG{}$ and $\hG(P_{s,r})$ corresponds to a prefix of $\hP{}_{s,t}$, it follows that $\hG(P_{s,r})$ is also an approximate shortest path.\footnote{This might not hold in general, but \cite{RozhonGHZL22} outputs an approximate shortest path tree, meaning that any prefix of an approximate shortest path is itself an approximate shortest path.} Moreover, by the length preserving property of $\hG{}$, $P_{s,r}$ is an approximate shortest path in $G$.
\end{enumerate}

This completes the construction of $P_{s,r}$.

\paragraph{Overall Algorithm.} In each of the $\poly(k)$ rounds of the algorithm, we determine $r$ and $P_{s,r}$ and perform weight updates along $P_{s,r}$. Given that the success probability (i.e., the event that $r\in R$) in each round is at least $1-\frac{1}{\poly(k)}$, the overall success probability is bounded below by a positive constant. As explained in \Cref{subsec:overviewwarmup}, our final $k$-vertex connectivity algorithm exhibits one-sided error, so a constant success probability suffices.

The following observation is important for the next section.
\begin{observation}\label{observation}
    $\deg_G(V(P))\le O(\mu\cdot \poly(k))$.
\end{observation}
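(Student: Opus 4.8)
\textbf{Proof proposal for \Cref{observation}.}

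The plan is to track, round by round, the total vertex-length weight that accumulates along the paths $P_{s,r}$ that the algorithm uses for weight updates, and to relate this to the bound $\tO{\mu \cdot \poly(k)}$ by combining two facts: (i) each individual path $P_{s,r}$ produced in a round lies inside $G[C^*]$, so its vertex set has size $|V(P_{s,r})| \le \deg_G(C^*) = \tO{\mu \cdot \poly(k)}$; and (ii) there are only $\poly(k)$ rounds, so the union of all these paths has at most $\poly(k) \cdot \tO{\mu \cdot \poly(k)} = \tO{\mu \cdot \poly(k)}$ vertices, counted with multiplicity. Here $P$ in the statement should be read as the union of the update paths used over the whole execution (equivalently, the ``multiset'' of all vertices ever touched by a weight update), since a single run performs $\poly(k)$ separate shortest-path computations; under \Cref{assumption} each such vertex has degree $O(\mu)$, but the cleaner route is to bound $|V(P)|$ directly and note $\deg_G(V(P)) \le |V(P)| \cdot O(\mu)$ is \emph{not} what we want — instead we want $\deg_G(V(P)) = \sum_{v \in V(P)} \deg_G(v)$, which is already controlled because each $V(P_{s,r}) \subseteq C^*$ and $\deg_G(C^*) = \Theta(\mu \cdot \poly(k))$ by construction in Step 1 of the random-termination procedure.

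Concretely, I would argue as follows. First recall from the construction of $P_{s,r}$ that the index $i^*$ is chosen precisely so that $\sum_{j \le i^*} \deg_G(C_j) = \Theta(\mu \cdot \poly(k))$, hence $C^* = \bigcup_{j \le i^*} C_j$ satisfies $\deg_G(C^*) = \Theta(\mu \cdot \poly(k))$. Since $P_{s,r}$ is a path contained in $G[C^*]$, we have $V(P_{s,r}) \subseteq C^*$, and therefore
\[
\deg_G\bigl(V(P_{s,r})\bigr) \;=\; \sum_{v \in V(P_{s,r})} \deg_G(v) \;\le\; \sum_{v \in C^*} \deg_G(v) \;=\; \deg_G(C^*) \;=\; O(\mu \cdot \poly(k)).
\]
Summing over the $\poly(k)$ rounds (and over the $\tO{1}$ outer repetitions used for probability amplification) multiplies this by another $\poly(k)$ factor, which is absorbed into the $\poly(k)$ in the claim. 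This gives $\deg_G(V(P)) \le O(\mu \cdot \poly(k))$ for the aggregate $P$ of all update paths.

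The only genuine subtlety — and the step I expect to need the most care — is the edge case flagged in the ``Remark'' after Step 1: the index $i^*$ with $\sum_{j \le i^*} \deg_G(C_j) = \Theta(\mu \cdot \poly(k))$ may fail to exist if a single cluster $C_i$ along $\hP{}_{s,t}$ has degree far exceeding $\mu \cdot \poly(k)$. In that situation one must split off a bounded-degree piece of that first oversized cluster (using the data structure of \Cref{lem:dynamicspanningforest}) so that $C^*$ still has $\deg_G(C^*) = \Theta(\mu \cdot \poly(k))$, after which the bound above goes through verbatim. One should also double-check that \Cref{assumption} (maximum degree $O(\mu)$) is what makes the ``split off a small piece'' operation land within the target budget rather than overshooting by a full high-degree vertex; with that assumption in hand, the final bound $\deg_G(V(P)) \le O(\mu \cdot \poly(k))$ follows, completing the proof.
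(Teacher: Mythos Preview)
Your core inequality
\[
\deg_G\bigl(V(P_{s,r})\bigr) \le \deg_G(C^*) = O(\mu \cdot \poly(k))
\]
is exactly the paper's one-line proof of the observation: it simply notes $V(P)\subseteq C^*$ and $\deg_G(C^*)=\Theta(\mu\cdot\poly(k))$. So the essential argument is correct and matches the paper.

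Where you go astray is in the interpretation of $P$. In the paper, $P$ denotes the \emph{single} path $P_{s,r}$ produced in one round, not the union of all update paths across rounds. The observation is a per-round bound; the summation over the $\poly(k)$ rounds that you perform is not part of this observation at all --- it is carried out separately in the next subsection (\S\ref{subsubsec:sparsifier}) when bounding the number of edges in the sparsifier $\hG$. Your extra layer of summation and the accompanying discussion of the cluster-splitting edge case are therefore unnecessary here (though harmless in the sense that they do not invalidate the bound, since $\poly(k)\cdot\poly(k)$ is still $\poly(k)$). Once you strip away that misreading, your proof is identical to the paper's.
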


This follows from the fact that $V(P)\subseteq C^*$ and $\deg_G(C^*)=\Theta(\mu\cdot \poly(k))$.

\subsubsection{Constructing $\hG$}\label{subsubsec:sparsifier}

It remains to show how to construct an $s$-source length sparsifier $\hG$ and a weight function $\hat{w}$ from $G$ and $w$ in each round of computation. Recall that our goal is to contract clusters (i.e., form a partition of $V$) to approximately preserve the lengths of all non-trivial simple paths starting from $s$.

We claim that once the following conditions are satisfied, then $\hG$ and its associated weight function (denoted here as $\hat{w}$) approximately preserve the lengths of all non-trivial paths starting from $s$. We call a vertex $v$ with $w(v)=1$ a \emph{trivial} vertex, and all other vertices \emph{non-trivial}.

\begin{enumerate}[label=(\arabic*)]
    \item Every non-trivial path in $G$ starting at $s$ has length (with respect to $w$) at least $\poly(n)$.
    \item Every cluster $C$ contains either (i) a single non-trivial vertex $v$, in which case we set $\hat{w}(C)=w(v)$, or (ii) only trivial vertices, in which case we set $\hat{w}(C)=1$. 
\end{enumerate}

\paragraph{Conditions (1) and (2) imply length preservation.}  
To see why these conditions yield (approximate) length preservation, consider a non-trivial simple path $P$ starting from $s$. Contracting consecutive vertices in $P$ that lie in the same cluster can decrease the path length by at most $n$ (because every vertex being contracted is trivial and contributes a length of $1$ with respect to $w$). Since $w(P)\ge \poly(n)$, we have
\[
\Bigl(1-\frac{1}{\poly(n)}\Bigr)\cdot w(P) \le w(P)-n \le \hat{w}(\hG(P)) \le w(P).
\]

\paragraph{Adjusting initial weights to satisfy condition (1).}  
To ensure that condition (1) holds, we modify the initialization of the weight function $w$ as follows:
\[
w(u) =
\begin{cases}
    0 & \text{if } u = s, \\
    \poly(n) & \text{if } u \in N_{G}(s), \\
    1 & \text{otherwise},
\end{cases}
\]
where $N_G(s)$ denotes the set of all neighbors of $s$ in $G$. Although the default MWU initialization assigns a uniform weight of $1$ to every vertex (except $s$), the MWU analysis shows that assigning an initial weight of $\poly(n)$ to some vertices incurs only an additional $O(\log n)$ cost in the overall complexity. Setting $w(u)=\poly(n)$ for all $u\in N_G(s)$ guarantees that every non-trivial path in $G$ starting from $s$ has length at least $\poly(n)$, as every such path must pass through a vertex in $N_G(s)$. Since weights only increase during the algorithm, condition (1) is maintained in all rounds.

\paragraph{Satisfying condition (2) by contracting trivial vertices.}  
To enforce condition (2), we construct $\hG$ by contracting all connected components formed solely by trivial vertices, using the weight function $w$ initialized as above. We omit the detailed procedure for efficiently contracting these components; this can be accomplished using appropriate data structures (see \Cref{lem:dynamicspanningforest}).

\paragraph{$\hG$ satisfies the sparsity property.}  
It remains to show that $\hG$ is sparse, meaning it contains at most $\tO{\mu\cdot \poly(k)}$ edges. By the construction of $\hG$, every edge in $\hG$ must have at least one endpoint corresponding to a non-trivial vertex—otherwise, the edge would have been contracted. Thus, we have
\[
|E(\hG)| \le \sum_{\substack{v \text{ is non-trivial}}} \deg_G(v).
\]
Initially, only vertices in $\{s\}\cup N_G(s)$ are non-trivial. Since $\{s\}\cup N_G(s)\subseteq L\cup S$ (with $s\in L$), and given that $\deg_G(L)=\Theta(\mu)$ (because $(L,S,R)$ is a $(\mu,s,t)$-local cut) and \Cref{assumption} implies $\deg_G(S)=O(k\cdot \mu)$, we obtain
\[
\deg_G(\{s\}\cup N_G(s)) \le O(k\cdot \mu).
\]
In each round, after performing weight updates, some trivial vertices may become non-trivial. However, since we update weights only along the path $P_{s,r}$ (as constructed in the previous section) and by \Cref{observation} the total increase in $\sum_{v\text{ is non-trivial}}\deg_G(v)$ is at most $\tO{\mu\cdot \poly(k)}$ per round, the total number of edges in $\hG$ remains $\tO{\mu\cdot \poly(k)}$.

\section{Preliminaries} \label{sec:prelims}
\paragraph{Basic assumption and terminologies.} Without specific mention, a graph is an undirected simple graph. We use $n$ to denote the number of vertices and $m$ to denote the number of edges of the input graph. We write $\tO{f}=O(f\cdot\log^cn)$ for some constant $c$. We define $[z]=\{1,2,\dots,z\}$. For a real function $f:X\to\bbR$, we write $f\mid_{X'}$ for some $X'\subseteq X$ as the restriction of $f$ on $X'$. We write $f(X')=\sum_{x\in X'}f(x)$. 

\paragraph{Basic graph terminologies.} 

An (undirected) graph is denoted by $G=(V,E)$ where $V$ is the vertex (or node) set and $E$ is the edge set. $N_G(u)$ is the set of neighbors of $u$ defined by $\{v\in V\mid (u,v)\in E\}$. We omit $G$ in the subscript if $G$ is clear from the context. We also define $N[u]=N(u)\cup\{u\}$. In terms of edges, we say an edge $e$ is \emph{incident} to a vertex $v$ if $v$ is one of endpoints of $e$. The set of incident edges to a vertex $v$ is denoted by the notation $\delta(v) \subseteq E$. Given two disjoint vertex set $A, B \subseteq V$, define $E(A,B)=\{(x,y)\in E\mid x\in A,y\in B\}$. 
A \emph{tree} $T\subseteq E$ is a connected acyclic subgraph of $G$. We use $V(T)$ to denote the vertex set of $T$. This notation can be extended: for a subgraph $E'\subset E$, we use $V(E')$ to denote the vertices in $V$ that are adjacent to at least one edge in $E'$. For a set $S$, we denote the \emph{induced subgraph} of $G$ from $S$ by $G[S]$, where $V(G[S]) = S$ and $E(G[S])$ is the set of edges which have both endpoints in $S$.

\paragraph{Degree and volume.} The \emph{degree} of a vertex $u$ is defined as $\deg_G(v)=\abs{N(u)}$. The \emph{minimum degree} of $G$ is defined as $\min_{v\in V}\deg_G(v)$. For a vertex set $S$, we define the \emph{volume} of $S$ as $\deg_G(S)=\sum_{v\in S}\deg_G(v)$.

\paragraph{Vertex lengths and shortest paths.} A path $P$ in $G$ is a sequence of vertices $(v_1,...,v_k)$ such that $(v_i,v_{i+1})\in E$ for every $i\in[k-1]$. A simple path is a path that all vertices are distinct. We define $V(P)=\{v_1,...,v_k\}$, $E(P)=\{(v_i,v_{i+1})\mid i\in[k-1]\}$, and $\Start(P)=v_1,\End(P)=v_k$. We define $\Prec_P(v_i)=v_{i-1}$ and $\Succ_P(v_i)=v_{i+1}$. 
For two vertices $v_i,v_j\in V(P)$ with $i\le j$, we define $P[v_i,v_j]$ as the subpath $\{v_i,v_{i+1},...,v_j\}$. We define $P\circ (v_{k+1},v_{k+2},...)$ as the concatenated path $(v_1,...,v_k,v_{k+1},v_{k+2}...)$

For an undirected graph $G=(V,E)$, a \emph{length function} of $G$ is a function $\ell:V\to\mathbb{R}_{\ge0}$. We define the length of the path $P = (v_{1}, \cdots, v_{k})$ of $G$ with respect to $\ell$ as $\ell(P) = \sum_{i=1}^{k} \ell(v_{i})$.
For vertices $s, t \in V$, A path $P$ is an \emph{$(s, t)$-path} if $\Start(P)=s$ and $\End(P)=t$. For two vertex sets $S,T\subseteq V$, a path $P$ is an \emph{$(S,T)$-path} if it is an $(s,t)$-path for some $s\in S,t\in T$. 
Given a vertex length $\ell$, $P$ is an \emph{$(s, t)$-shortest path (with respect to $\ell$)} if $P$ has the smallest vertex length among all $(s, t)$-paths. $P$ is an \emph{$(S, T)$-shortest path (with respect to $\ell$)} if $P$ has the smallest vertex length among all $(S,T)$-paths.
We define $\dist_{G,\ell}(s,t)$ as the length of the $(s,t)$-shortest path, and $\dist_G(S,T)$ the length of the $(S,T)$-shortest path. 
When $G$ or $\ell$ is clear from the contest, we omit $G$ or $\ell$ from the subscript. For a real number $\alpha \ge 1$, $P$ is an \emph{$\alpha$-approximate $(s, t)$-shortest path (with respect to $\ell$)} if $\ell(P)\le \alpha\cdot \dist_{G,\ell}(s,t)$, $P$ is an \emph{$\alpha$-approximate $(S, T)$-shortest path} if $\ell(P)\le \alpha\cdot \dist_{G,\ell}(S,T)$.

\paragraph{Vertex cut.} A \emph{vertex cut} (or simply \emph{cut} in this paper) is a partition of $V$ denoted by $(L,S,R)$ where (1) $L \neq \emptyset, R \neq \emptyset, E(L,R) = \emptyset$  or (2) $\abs{S} \geq n-1$. The size of a vertex cut $(L,S,R)$ is defined as $|S|$. The \emph{minimum} vertex cut refers to the vertex cut with the smallest size. %

We refer to a vertex cut $(L, S, R)$ as a $(u, v)$-vertex cut if $u \in L$ and $v \in R$. For two vertex sets $A,B\subseteq V$, we say $(L,S,R)$ is an $(A,B)$-vertex cut if $A\subseteq L,B\subseteq R$. In this case, we say $S$ is a $(u,v)$-separator or $(A,B)$-separator. $S$ is called a \emph{separator} (or a \emph{vertex cut}) if $S$ is a $(u,v)$-separator for some $u,v\in V$. We use $\kappa_G(u,v)$ to denote the size of the smallest $u,v$-separator in $G$, and $\kappa(G)$ to denote the smallest separator in $G$. When there are no $(u,v)$-separator or no separator in $G$, we define $\kappa_G(u,v)$ or $\kappa(G)$ as $n-1$. In this case, any set of $n - 1$ vertices is considered a separator.

\paragraph{Fractional vertex cut.} Given a graph $G = (V, E)$, a \emph{fractional separator (or a fractional vertex cut)} is a length function $C: V\to \mathbb{R}_{\ge 0}$ such that there exists $s,t\in V$ such that $C(s)=C(t)=0$ and $\dist_{G,C}(s,t)\ge 1$. In this case, we say $C$ is a $(s,t)$-fractional separator (or $(s,t)$-fractional vertex cut). For a set $T \subseteq V$, we say that $C$ is an $(s, T)$-fractional separator if it is an $(s, t)$-fractional separator for every $t \in T$.

We use $\sum_{v\in V}C(v)$ to denote the \emph{size} of the fractional vertex cut $C$.

The following folklore lemma shows the connection between fractional and (integral) vertex cuts.

\begin{restatable}[Proof in \Cref{sec:rounding}]{lemma}{fractionalcut}
    Given a graph $G = (V, E)$ with a positive integer $k > 1$ and two distinct vertices $s, t \in V$, there is an $(s, t)$-vertex cut with size $< k$ if and only if there is a fractional $(s, t)$-vertex cut with size $\le k - 0.5$. \label{lem:fractional_to_integral}
\end{restatable}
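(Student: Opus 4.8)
\textbf{Proof proposal for \Cref{lem:fractional_to_integral}.}

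The plan is to prove the two directions separately, with the forward direction being essentially immediate and the reverse direction being the substantive one.

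For the easy direction ($\Rightarrow$), suppose there is an $(s,t)$-vertex cut $(L,S,R)$ with $|S| < k$, i.e. $|S| \le k-1$. Define $C: V\to\mathbb{R}_{\ge 0}$ by $C(v) = \tfrac{1}{k-1}$ for $v\in S$ and $C(v) = 0$ otherwise (in particular $C(s)=C(t)=0$ since $s\in L$ and $t\in R$). Then $\sum_{v\in V} C(v) = \tfrac{|S|}{k-1} \le 1 \le k-0.5$. It remains to check $\dist_{G,C}(s,t) \ge 1$: any $(s,t)$-path $P$ must leave $L$ and enter $R$, and since $E(L,R)=\emptyset$ it must pass through at least one vertex of $S$, so $C(P) \ge \tfrac{1}{k-1} \cdot 1$. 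Wait — this only gives $C(P)\ge \tfrac{1}{k-1}$, not $\ge 1$. The fix is to scale: set $C(v) = 1$ for $v\in S$ and $C(v)=0$ otherwise. Then $\sum_v C(v) = |S| \le k-1 \le k-0.5$, and every $(s,t)$-path hits $S$ so $C(P)\ge 1$, hence $\dist_{G,C}(s,t)\ge 1$. This gives the fractional cut of size $\le k-1 \le k - 0.5$.

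For the hard direction ($\Leftarrow$), suppose $C$ is an $(s,t)$-fractional vertex cut with $\sum_{v\in V} C(v) \le k - 0.5$; we must produce an integral $(s,t)$-vertex cut of size $\le k-1$. The natural approach is the standard LP-rounding / threshold-rounding argument for vertex cuts: pick a threshold $\theta$ uniformly at random in $(0,1)$ and let $S_\theta = \{v : d(v) < \theta \le d(v) + C(v)\}$ be the set of vertices ``straddled'' by $\theta$, where $d(v) := \dist_{G,C}(s,t\text{-side})$ is the $C$-distance from $s$ (with $d(s)$ taken to be $0$, or rather $d(v)$ defined as $\min$ over in-edges; one must be careful that $C$ is a vertex length, so set $d(v) = \dist_{G,C'}(s,v)$ for a suitable convention). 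One shows (i) $S_\theta$ is an $(s,t)$-separator for every $\theta\in(0,1)$ — because along any $(s,t)$-path the quantity $d(\cdot)$ rises from $\le 0$ near $s$ to $\ge 1$ near $t$ (using $\dist_{G,C}(s,t)\ge 1$), so some vertex on the path is straddled by $\theta$ — and (ii) $\mathbb{E}_\theta[|S_\theta|] \le \sum_v C(v) \le k-0.5$, since each vertex $v$ lies in $S_\theta$ with probability at most $C(v)$ (the length of the interval $(d(v), d(v)+C(v)]$, intersected with $(0,1)$). By the probabilistic method there is a $\theta^\star$ with $|S_{\theta^\star}| \le k - 0.5$, and since $|S_{\theta^\star}|$ is an integer, $|S_{\theta^\star}| \le k-1$. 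Taking $L$ to be the vertices reachable from $s$ in $G - S_{\theta^\star}$ and $R$ the rest gives the desired $(s,t)$-vertex cut of size $< k$.

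The main obstacle is getting the bookkeeping for \emph{vertex} lengths exactly right: one must define the ``distance label'' $d(v)$ so that (a) $d(s) = 0$ or $C(s)=0$ makes $s$ avoid every $S_\theta$, (b) $d(t)$ together with $C(t)=0$ and $\dist_{G,C}(s,t)\ge 1$ forces $t$ to avoid every $S_\theta$ and forces the straddling event on every path, and (c) the ``probability $v\in S_\theta$ is at most $C(v)$'' bound holds with no double-counting. The cleanest route is to set $d(v) = \min(1, \dist_{G,C}(s,v))$ where $\dist_{G,C}(s,v)$ counts the $C$-length of the path \emph{including} endpoints appropriately (say, excluding $s$, including $v$), pick $\theta$ uniform in $(0,1)$, and define $S_\theta = \{v : d(v) \ge \theta \text{ but } d(u) < \theta \text{ for the predecessor realizing the shortest path}\}$ — then argue via a clean ``cut at level $\theta$'' characterization $L_\theta = \{v : d(v) < \theta\}$, $R_\theta = V\setminus(L_\theta\cup S_\theta)$, and verify $E(L_\theta,R_\theta)=\emptyset$ directly. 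Since the lemma is labeled folklore and its proof is deferred to \Cref{sec:rounding}, I expect the actual write-up there to carry out exactly this threshold argument, possibly phrased deterministically by just trying all $O(n^2)$ candidate thresholds arising as partial-sum values of $C$ along shortest paths.
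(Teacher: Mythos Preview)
Your proposal is correct and takes essentially the same approach as the paper: the forward direction uses the indicator function $C(v)=\mathbf{1}[v\in S]$ (your self-corrected version), and the reverse direction is exactly the threshold-rounding argument you describe, with $d(v)=\dist_{G,C}(s,v)$, the partition $L_\theta=\{v:d(v)<\theta\}$, $S_\theta=\{v:\theta\le d(v)\le \theta+C(v)\}$, $R_\theta=\{v:d(v)>\theta+C(v)\}$, a one-line check that $E(L_\theta,R_\theta)=\emptyset$ via $d(r)\le d(l)+C(r)$, and the probabilistic-method conclusion from $\mathbb{E}_\theta[|S_\theta|]\le k-0.5$ plus integrality. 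The bookkeeping you flagged as the main obstacle is resolved in the paper precisely by this clean three-part partition, so no further subtlety is needed.
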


\paragraph{\pram{} model.} In \pram{} model, we have a set of processors and a shared memory space. Time is divided into discrete time slots. In each time slot, each processor can independently read and write on the shared memory space or do other unit operations\footnote{There are different variations of \pram{}, differed by the solutions to concurrent write and read. However, the complexity differed by $\polylog(n)$ factors for different models. Since in this paper $\polylog(n)$ is considered small, we do not concern ourselves with the specific model.}.
The input is given initially on the shared memory space, and the processors are required to jointly compute a specific problem given the input. The complexity is measured by \emph{work} and \emph{depth}, where work is measured by the \textbf{total} amount of unit operations performed by all the processes, and depth is measured by the time consumed before the output is generated. 

In this paper, we consider Monte-Carlo randomized algorithms that succeed with high probability (i.e., with probability $1/n^c$ for an arbitrarily large constant $c$).

\paragraph{Single-source shortest path.} Given an undirected graph $G=(V,E)$, a vertex length $\ell$ and a vertex $s$, and spanning tree $T$ of $G$ is a \emph{$s$-source $\alpha$-approximate shortest path tree (on $G$ with respect to $\ell$)} if every $(s,t)$-path on the tree (for some $t\in V$) is an $\alpha$-approximate $(s,t)$-shortest path on $G$ with respect to $\ell$.

The following theorem provides a deterministic \pram{} algorithm for the approximate single-source shortest path problem, which will be used as an important building block in our algorithm. It is based on \cite{RozhonGHZL22}, although \cite{RozhonGHZL22} only considers edge length, we will show in \Cref{sec:misingproofs} that vertex length is not harder.

\begin{restatable}[Proof in \Cref{sec:misingproofs}]{theorem}{sssp}
    There is a deterministic \pram{} algorithm $(T,d)\leftarrow \alg{SSSP}(G,\ell,s,\eps)$ that, given a connected undirected graph $G = (V, E)$, a vertex length function $\ell$, a source vertex $s\in V$ and an approximation factor $\eps$, outputs a $(1+\eps)$-approximate $s$-source shortest path tree and a distance function $d(\cdot)=\dist_T(s,\cdot)$ in $\tO{m/\eps^2}$ work and $\tO{1}$ depth.\label{thm:sssp}

\end{restatable}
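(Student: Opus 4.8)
The plan is to reduce the vertex-length version of single-source shortest path to the edge-length version, for which \cite{RozhonGHZL22} already provides a deterministic \pram{} algorithm with $\tO{m/\eps^2}$ work and $\tO{1}$ depth, and then translate the resulting approximate shortest path tree back to the original graph. The standard trick is vertex splitting: from $G = (V,E)$ with vertex length $\ell$, build an auxiliary graph $G'$ by replacing each vertex $v$ with two copies $v_{\inner}$ and $v_{\outer}$ joined by an internal edge of weight $\ell(v)$, and replacing each original edge $(u,v) \in E$ by an edge $(u_{\outer}, v_{\outer})$ (or $(u_{\inner}, v_{\outer})$ depending on orientation conventions) of weight $0$. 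In the undirected setting one must be slightly careful, since a path entering $v$ and leaving it should pay $\ell(v)$ exactly once; the cleanest way is to route every original edge between the ``outer'' copies and put the length on the inner edge, so that any $s$-$v$ path in $G'$ that ends at $v_{\inner}$ has weight exactly equal to the $w$-length of the corresponding path in $G$. Then $G'$ has $O(n+m) = O(m)$ vertices and edges (assuming $G$ connected, so $m \ge n-1$), zero-length edges are allowed by \cite{RozhonGHZL22} (or, if not, one perturbs by a tiny additive $\eps'$ and rescales, which affects the approximation only negligibly), and running the edge-length algorithm from the source $s_{\outer}$ yields a $(1+\eps)$-approximate shortest path tree $T'$ in $G'$ together with distances $d'$.

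Next I would pull $T'$ back to a tree $T$ on $G$: the $s$-to-$v$ path in $T'$, restricted to outer copies and collapsing each pair $\{v_{\inner}, v_{\outer}\}$ back to $v$, gives an $s$-to-$v$ walk in $G$; because $T'$ is a tree this walk is actually a simple path, and its $w$-length equals $d'(v_{\inner})$. Taking the union of these paths over all $v$ gives a subgraph of $G$ that is connected, spans $V$, and — after removing any redundant edges — is a spanning tree $T$ with $\dist_T(s,v) \le d'(v_{\inner})$; conversely $d'(v_{\inner}) \le (1+\eps)\dist_{G',\text{edge}}(s_{\outer}, v_{\inner}) = (1+\eps)\dist_{G,w}(s,v)$, so every $(s,v)$-path in $T$ is a $(1+\eps)$-approximate shortest path. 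Setting $d(v) = \dist_T(s,v)$ and noting $d(v) = d'(v_{\inner})$ (the tree path is unique and its length is preserved under the collapse) gives the required output. All of this — building $G'$, invoking \alg{SSSP} on it, reading off the tree, collapsing copies, and computing $d$ via the tree — is doable in $\tO{m/\eps^2}$ work and $\tO{1}$ depth: the collapse and the tree-distance computation are standard \pram{} primitives (Euler tour / pointer jumping) running in $\tO{1}$ depth and $\tO{m}$ work.

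The one genuine subtlety, and the step I expect to be the main obstacle, is handling zero-weight edges correctly while keeping the reduction faithful in the \emph{undirected} case. With undirected zero-weight edges between outer copies, a shortest-path tree in $G'$ might ``bounce'' along such edges in ways that do not correspond to simple paths in $G$, or the algorithm of \cite{RozhonGHZL22} might assume strictly positive lengths. The fix is to orient the bookkeeping so that $\ell(v)$ is always charged on the inner edge and to argue that, since $T'$ is a tree and $s = s_{\outer}$, the unique $s_{\outer}$-to-$v_{\inner}$ path in $T'$ alternates cleanly between outer-copy and inner-edge segments, so its collapse is a simple $(s,v)$-path of exactly the claimed length; if one prefers to avoid zero weights entirely, replace each $0$ by $\eps \cdot \dist_{\min}/ (10n)$ for a crude lower bound $\dist_{\min}$ on the smallest positive vertex length (or simply scale $w$ up by a $\poly(n)$ factor first), which changes all distances by at most a $(1+\eps/10)$ factor and is absorbed into the approximation guarantee. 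I would also remark that the case where $G$ is not connected is excluded by hypothesis, and that the $\tO{\cdot}$ already hides the $O(\log n)$ factors from the \pram{} primitives, so no extra care is needed there.
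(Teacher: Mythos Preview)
Your vertex-splitting reduction does not work in the undirected setting. With outer--outer edges of weight $0$ and a single inner--outer edge of weight $\ell(v)$ at each vertex, nothing forces a path in $G'$ to cross the inner edge of an intermediate vertex: the walk
\[
s_{\outer}\;-\;u^{(1)}_{\outer}\;-\;u^{(2)}_{\outer}\;-\;\cdots\;-\;v_{\outer}\;-\;v_{\inner}
\]
is a perfectly legal simple $(s_{\outer},v_{\inner})$-path in $G'$ of total weight $\ell(v)$, regardless of how long the corresponding $(s,v)$-path in $G$ is under the vertex length $\ell$. So $\dist_{G'}(s_{\outer},v_{\inner})\neq \dist_{G,\ell}(s,v)$ in general, and your claimed equality ``any $s$-$v$ path in $G'$ that ends at $v_{\inner}$ has weight exactly equal to the $\ell$-length of the corresponding path in $G$'' is false. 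The standard split $v_{\mathrm{in}}\to v_{\mathrm{out}}$ fixes this only if the edges are \emph{directed}, but then you can no longer invoke \cite{RozhonGHZL22}, which is an undirected algorithm (and directed SSSP in near-linear work and polylog depth is wide open). The ``bouncing'' worry you flag is thus not the real obstacle; the construction is already wrong at the level of distances.

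The paper avoids this entirely with a one-line undirected reduction: set the edge length $w_\ell(\{u,v\}):=\ell(u)+\ell(v)$. For any $(s,t)$-path $P$ one has $w_\ell(P)=2\ell(P)-\ell(s)-\ell(t)$, so shortest $(s,t)$-paths under $\ell$ and under $w_\ell$ coincide, and a $(1+\eps)$-approximate shortest path for $w_\ell$ is automatically a $(1+\eps)$-approximate shortest path for $\ell$ (the additive offset $\ell(s)+\ell(t)$ can only help). This keeps the graph undirected, produces strictly positive edge weights whenever $\ell$ is, and lets you apply \cite{RozhonGHZL22} as a black box with no auxiliary graph and no zero-weight or perturbation hacks.
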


\paragraph{Useful Inequalities}
Let \(X_{1}, \dots, X_{n}\) be independent Bernoulli random variables, and define their sum as \(X = X_{1} + \dots + X_{n}\) with expectation \(\mu = \bbE[X]\). By \emph{Chernoff's inequality}, for any \(\delta > 0\), it holds that
\begin{align}\label{eq:chernoff}
\Pr[X \ge (1+\delta)\mu] \le e^{-\frac{\delta^{2}\mu}{2+\delta}}.
\end{align}
Setting \(\delta = \Theta(\log n)\) ensures that \(X = \tO{\mu}\) with high probability, which is enough for our cases.

\section{A Framework for Parallel Vertex Connectivity}\label{sec:parallelvertexconnectivity}

In this section, we present and prove our main theorem.

\begin{theorem}\label{thm:main-detail}
    There is a randomized \pram{} algorithm that, given an undirected graph $G=(V,E)$ and an integer $k\ge 1$, outputs a vertex cut $S$ or $\bot$, such that
    \begin{itemize}
        \item \emph{(Correctness)} If $\kappa(G)< k$, the algorithm outputs a vertex cut $S$ of size less than $k$ with high probability. If $\kappa(G)\ge k$, the algorithm always outputs $\bot$.
        \item \emph{(Complexity)} The algorithm runs in $\tO{mk^{12}}$ work $\tO{k^3}$ depth with high probability. 
    \end{itemize}
\end{theorem}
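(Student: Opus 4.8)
\textbf{Proof plan for \Cref{thm:main-detail}.}

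The plan is to assemble the two ingredients developed in the overview: the reduction of global $k$-vertex connectivity to \emph{local cuts} (\Cref{subsec:overviewframework}), and the parallel local cuts algorithm (\Cref{lem:overviewlocalcut}, formalized as \Cref{lem:localvertexcut}). First I would set up the degree-based case analysis. Fix a hypothetical minimum vertex cut $(L,S,R)$ with $|S|<k$ and $\deg_G(L)\le\deg_G(R)$; note $\deg_G(R)=\Omega(m)$ whenever $\deg_G(L)$ is not too large, since the only edges avoiding both $L$ and $R$ lie inside $S$, of which there are at most $\binom{k}{2}$. The algorithm iterates over $\mu$ a power of two in $[1,m]$; for each $\mu$ it draws $\Theta((m/\mu)\cdot\log n)$ vertex pairs $(s,t)$ with probability proportional to degree (\Cref{eq:probability}), and for each pair runs the local cuts procedure of \Cref{lem:localvertexcut} with parameter $\mu$ to certify an $(s,\mu,t)$-local cut. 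It also handles the balanced regime $\deg_G(L)=\Omega(m)$ by the simpler sampling-plus-$s$-$t$-connectivity step described in \Cref{subsec:overviewframework} (equivalently, the $\mu=\Theta(m)$ iteration). If any invocation certifies a cut, output it (using the rounding procedure of \Cref{sec:rounding} to convert the fractional certificate to an integral $S$); otherwise output $\bot$.

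Next I would prove correctness. One direction is immediate: every certificate produced is produced by a deterministic subroutine (the MWU engine of \Cref{subsec:overviewwarmup} together with the deterministic \alg{SSSP} of \Cref{thm:sssp}), so whenever the algorithm outputs some $S$ it is genuinely a vertex cut of size $<k$; hence if $\kappa(G)\ge k$ the algorithm never outputs anything but $\bot$. For the other direction, assume $\kappa(G)<k$ and fix $(L,S,R)$ as above with, say, $\deg_G(L)=\Theta(\mu_0)$ for the appropriate power of two $\mu_0$. In the iteration $\mu=\mu_0$, a single degree-proportional sample lands in $L$ with probability $\Theta(\deg_G(L)/m)=\Theta(\mu_0/m)$ and (independently) a second sample lands in $R$ with probability $\Omega(1)$; so one of the $\Theta((m/\mu_0)\log n)$ sampled pairs has $s\in L, t\in R$ with high probability. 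Conditioned on that, $(L,S,R)$ is an $(s,\mu_0,t)$-local cut, and by \Cref{lem:localvertexcut} the local cuts call certifies it with at least constant probability; the internal failure probability (the event $r\notin R$ across the $\poly(k)$ MWU rounds, kept below a constant as argued after \Cref{observation}) is amplified away because we repeat over all $\Theta((m/\mu_0)\log n)$ pairs — one successful pair suffices by one-sided error. The boundary cases ($\deg_G(L)$ very small, so $L$ is essentially a low-degree neighborhood handled at a small $\mu$; or $\deg_G(L)=\Omega(m)$, the balanced case) are folded into the same loop. Finally, a union bound over the $O(\log m)$ values of $\mu$ preserves high probability.

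For the complexity, the work of iteration $\mu$ is $\Theta((m/\mu)\log n)$ calls each costing $\tO{\mu\cdot\poly(k)}$ by \Cref{lem:localvertexcut}, i.e. $\tO{m\cdot\poly(k)}$, plus the one-time $\tO{m}$ initialization of the dynamic data structure of \Cref{lem:dynamicspanningforest} shared across all calls; summing over $O(\log m)$ values of $\mu$ keeps the total at $\tO{m\cdot\poly(k)}$. Tracking the exponent through \Cref{lem:localvertexcut} (which itself runs $\poly(k)$ MWU rounds, each invoking \alg{SSSP} with $\eps=1/\poly(k)$ on a sparsifier of size $\tO{\mu\cdot\poly(k)}$, at cost $\tO{\mu/\eps^2}$) yields the stated $\tO{mk^{12}}$ work; the depth is the $\poly(k)$ MWU rounds times the $\tO{1}$ depth of \alg{SSSP} and the $O(\log n)$-depth data-structure operations, giving $\tO{k^3}$. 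The rounding step of \Cref{sec:rounding} is accounted for within the same budget.

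I expect the main obstacle to be making the informal local cuts guarantee of \Cref{lem:overviewlocalcut} fully rigorous — in particular removing \Cref{assumption} (bounded degree) by splitting $G$ into high- and low-degree parts without breaking the sparsity or the $r\in R$ sampling bound, and handling the edge case where the binary search for $i^\ast$ fails because one cluster $C_i$ has exceptionally large degree (requiring the data structure of \Cref{lem:dynamicspanningforest} to split out a sub-path of that cluster). The global-connectivity reduction and the complexity bookkeeping are comparatively routine once \Cref{lem:localvertexcut} is in hand; the delicate points are confined to the local cuts section and are exactly where the data-structure machinery does its work.
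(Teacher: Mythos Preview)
Your high-level plan matches the overview, but the paper's formal proof (\Cref{subsec:main-detail}) organizes the pieces differently in a way that dissolves exactly the obstacle you flag. The paper does \emph{not} sample pairs $(s,t)$ and feed both to a black-box $(s,\mu,t)$-local-cut routine. Instead, for each $\mu$ it (i) samples a single $t$ proportional to degree, (ii) forms $V_{\low}=\{v\in V\setminus N[t]:\deg_G(v)\le 5\mu\}$, (iii) initializes $r=\Theta(k^3\log n)$ independent SSF copies on each connected component of $G[V_{\low}]$, and (iv) samples sources $x\in V_{\low}$ with probability $\deg_G(x)/2\mu$ and calls $\alg{LocalCuts}(G,k,x,\mu,K_x,\dots)$ where $K_x$ is $x$'s component in $G[V_{\low}]$. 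Thus the high/low-degree split you identify as the main difficulty is handled in the outer framework, not inside $\alg{LocalCuts}$; correspondingly, the formal \Cref{lem:localvertexcut} takes $V_{\inner}$ (with a guaranteed degree bound) rather than a target $t$, and discovers its own endpoint $\tau$. Amplification is by an outer $\Theta(\log n)$ repetition of the whole loop (each repetition succeeds with probability $\ge 1/1800$), not by oversampling pairs. The paper also dispatches $n\le 100k^2$ via \cite{CheriyanKT93} as a base case, which you omit.

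Your one-sided-error argument has a genuine inaccuracy: $\alg{LocalCuts}$ is \emph{not} deterministic---it uses random termination and the randomized SSF of \Cref{lem:dynamicspanningforest}---and its returned $(C,\tau)$ is only a valid fractional cut with high probability, not always. The paper secures the ``always $\bot$ when $\kappa(G)\ge k$'' guarantee not through determinism of the MWU engine but through the final call to $\alg{IntegralSTCut}$ (\Cref{lem:rounding}), whose correctness clause states it always returns $\bot$ when $\kappa(x,\tau)\ge k$; this is what makes the verification step do real work rather than being mere rounding.
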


\subsection{Algorithmic Components}

In this section, we will define several algorithmic components of our algorithm for \Cref{thm:main-detail}, which will be proved in later sections.

An important building block of \Cref{thm:main-detail} is the following \emph{sensitivity spanning forest} data structure. The data structure is based on the Batch-Parallel Euler Tour Tree \cite{tseng2019batch} against an oblivious adversary, and we describe only the operations required for our algorithm.
 The data structure is formally proved in \Cref{sec:datastructure}. 

\begin{restatable}[Sensitivity Spanning Forest]{lemma}{decrementalspanningforest}\label{lem:dynamicspanningforest}
There exists a randomized \emph{sensitivity spanning forest} data structure \(\mathcal{D}\) against oblivious adversaries that supports the following operations in the \pram{} model.

\begin{itemize}
    \item \(\alg{Init}(G,\sigma)\): Given an undirected graph \(G=(V,E)\) and a weight function \(\sigma:V\to\mathbb{R}^{+}\) satisfying \(\sigma_v\ge 1\) for every \(v\in V\), this operation takes $\tO{m}$ random bits, and accordingly initializes the data structure using \(\tilde{O}(m)\) work and \(\tO{1}\) depth with high probability.

    \item \(\alg{Fail}(E')\): Given a set of edges \(E'\subseteq E\), which is \emph{independent from the choice of random bits of \(\alg{Init}\)}, this operation marks the edges in \(E'\) as \emph{failed}, updating the graph to \(\tilde{G}=G-E'\), where \(G\) is the graph provided in \(\alg{Init}\). It initializes a spanning forest \(F_{\cD}\) of $\tG$ with a set of identifiers \(\cI_{\cD}\), assigning each connected component \(K\) in \(F_{\cD}\) a unique identifier \(\id(K) \in \cI_{\cD}\). For each identifier \(\chi \in \cI_{\cD}\), denote by \(\cD(\id)\) the set of vertices in the connected component corresponding to \(\chi\). The operation satisfies:
    \begin{itemize}
        \item \emph{(Correct against oblivious adversary.)} If $E'$ is independent of the random bits used in $\alg{Init}$, with high probability, \(F_{\cD}\) is a maximal spanning forest of \(\tilde{G}\). i.e. connected components of $F_{\cD}$ is also a connected component of $\tG$. In particular, the correctness is always guaranteed if $E' = \emptyset$.
        \item \emph{(Complexity)} The operation runs in \(\tO{\abs{E'}}\) work and \(\tO{1}\) depth with high probability.
    \end{itemize}
\end{itemize}

After \(\alg{Fail}(E')\) is executed, the data structure \(\cD\) supports the following queries. Assuming the correctness of \(\alg{Fail}(E')\), these queries always return correct results.

\begin{itemize}
    \item \(\alg{ID}(U)\): Given a subset \(U\subseteq V\) of vertices, this query returns a list of identifiers \((\id_u)_{u\in U}\), where each \(\id_u\) corresponds to the connected component in \(F_{\cD}\) containing vertex \(u\). The query runs in \(\tO{\abs{U}}\) work and \(\tO{1}\) depth with high probability.

    \item \(\alg{Sum}(I)\): Given a set of identifiers \(I\subseteq \mathcal{I}_{\mathcal{D}}\), this query returns a list \((\bar{\sigma}_{\id})_{\id\in I}\), where \(\bar{\sigma}_{\id}=\sigma(\mathcal{D}(\id))\). The query runs in \(\tO{\abs{I}}\) work and \(\tO{1}\) depth with high probability.

    \item \(\alg{Components}(I)\): Given a set of identifiers \(I\subseteq \mathcal{I}_{\mathcal{D}}\), this query returns the connected components \((\mathcal{D}(\id))_{\id\in I}\). The query runs in \(\tO{\abs{I}}\) work and \(\tO{1}\) depth with high probability.

    \item \(\alg{Tree}(\id,x,q)\): Given an identifier \(\id\in \mathcal{I}_{\mathcal{D}}\), a vertex \(x\in \mathcal{D}(\id)\), and a parameter \(q\) satisfying \(\sigma(\mathcal{D}(\id))\ge 2q\), this query returns a tree \(T\subseteq \tilde{E}\) with \(V(T)\subseteq \mathcal{D}(\id)\) and \(x\in V(T)\). The operation runs in \(\tO{q}\) work and \(\tO{1}\) depth with high probability. Moreover, the returned tree \(T\) satisfies one of the following conditions:
    \begin{enumerate}
        \item \(q\le \sigma(V(T))\le 2q\), or
        \item \(\sigma(V(T))<q\) and there exists a vertex \(v\in \mathcal{D}(\id)\), adjacent to \(V(T)\) in \(\tilde{E}\), satisfying \(\sigma(v)>q\).
    \end{enumerate}
\end{itemize}
\end{restatable}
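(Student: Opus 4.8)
The plan is to build the sensitivity spanning forest data structure by layering three things on top of a Batch-Parallel Euler Tour Tree (ETT): (i) an oblivious-adversary sparsification so that $\alg{Fail}(E')$ only needs to re-examine $\tO{|E'|}$ edges, (ii) a weighted subtree-sum augmentation on the ETT so that $\sigma(\cD(\id))$ is available in $\tO{1}$ time per component, and (iii) a prefix/partial-search routine on the Euler tour to implement $\alg{Tree}(\id,x,q)$. I would first set up $\alg{Init}$: run a batch-parallel spanning forest algorithm on $G$ (e.g.\ a parallel connectivity/forest routine) to get a spanning forest $F_0$, build a Batch-Parallel ETT on $F_0$ augmented with $\sigma$-values so that each ETT node stores the sum of $\sigma$ over its subtree, and — crucially for handling $\alg{Fail}$ efficiently — sample a hierarchy of $O(\log m)$ subsampled edge sets $E = E_0 \supseteq E_1 \supseteq \cdots$ where each level keeps each edge independently with probability $1/2$, maintaining a spanning forest at each level (this is the standard trick: against an oblivious adversary, $E'$ is independent of the sampling, so w.h.p.\ the first level whose sampled graph still "sees" all of $E' $'s effect is reached after removing $\tO{|E'|}$ edges). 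This uses the $\tO{m}$ random bits and $\tO{m}$ work, $\tO{1}$ depth.

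For $\alg{Fail}(E')$: remove $E'$ from all levels; for each level, the set of tree edges that were in $E'$ must be replaced. I would use the classical argument that, since $E'$ is oblivious, w.h.p.\ a replacement edge for a broken tree edge — if one exists in $\tG$ — is found within $\tO{1}$ levels by scanning only $\tO{|E'|}$ candidate non-tree edges incident to the affected components (batch-parallel: cut the broken tree edges from the ETT, enumerate candidate reconnecting edges, test connectivity, add back). Assigning identifiers $\id(K)$ and building $\cI_\cD$ is then a batch relabeling over the ETT roots. The queries are then essentially ETT lookups: $\alg{ID}(U)$ is a batch "find the ETT root" over $|U|$ vertices; $\alg{Sum}(I)$ reads the stored augmented subtree-sum at each root; $\alg{Components}(I)$ lists the Euler tour of each tree; and $\alg{Tree}(\id,x,q)$ walks outward from $x$ in the Euler tour of $\cD(\id)$ accumulating $\sigma$, stopping once the accumulated weight lies in $[q,2q]$ or once it cannot grow without jumping over a weight-$>q$ vertex — the guarantee $\sigma(\cD(\id)) \ge 2q$ ensures the walk has room, and the "$2q$" cap is because adding the last edge/vertex overshoots by at most a single vertex's weight, which is $\le q$ in case 1 (and case 2 covers the single heavy vertex). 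All of these touch $\tO{|U|}$, $\tO{|I|}$, or $\tO{q}$ ETT nodes respectively, each ETT operation costing $\tO{1}$ depth and amortized/w.h.p.\ $\polylog$ work.

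The main obstacle I expect is $\alg{Tree}(\id,x,q)$: extracting a connected subtree of prescribed weight around a given vertex, in $\tO{q}$ work and $\tO{1}$ depth, rather than in work proportional to the whole component. A naive BFS from $x$ has depth proportional to the tree diameter, which is far too large. The fix is to not use the original tree but to exploit the Euler-tour representation: contiguous arcs of the Euler tour correspond to connected subtrees, so "growing" a subtree around $x$ becomes a one-dimensional prefix-sum/partial-search problem on the Euler tour array — which is exactly what batch-parallel augmented ETTs support in $\tO{1}$ depth via the underlying parallel search tree. I would need to be careful that the subtree we read off is spanned by actual edges of $\tE$ and contains $x$; this follows because an Euler-tour interval that is "balanced" (every entered vertex is exited) corresponds to a genuine subtree, and one can always choose such an interval containing $x$'s occurrence. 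A secondary subtlety is the interaction between $\alg{Fail}$'s obliviousness requirement and the fact that $\alg{Tree}$/$\alg{Sum}$ are called with inputs that \emph{do} depend on the randomness — but since correctness of those queries is only asserted conditionally on $\alg{Fail}$ having succeeded (which is an event determined before the queries run), this causes no circularity.
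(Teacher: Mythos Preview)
Your high-level framework---build on a batch-parallel Euler Tour Tree, augment with $\sigma$-sums, and implement $\alg{Tree}$ via a prefix search on the Euler tour---matches the paper's approach, and your analysis of $\alg{Tree}$ (contiguous Euler-tour intervals correspond to connected subtrees, so growing around $x$ becomes a one-dimensional partial-sum search) is essentially what the paper does. Likewise, the queries $\alg{ID}$, $\alg{Sum}$, and $\alg{Components}$ fall out of standard ETT augmentations as you describe.

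However, there is a genuine gap in your $\alg{Fail}$ mechanism. You propose a hierarchy of subsampled edge sets and then say that a replacement edge for a broken tree edge ``is found within $\tO{1}$ levels by scanning only $\tO{|E'|}$ candidate non-tree edges incident to the affected components.'' This does not work as stated: when a tree edge is cut, the resulting component may have $\Theta(m)$ incident non-tree edges in $\tilde G$, and no amount of uniform subsampling tells you \emph{which} $\tO{|E'|}$ of them to scan to find one that crosses to the other side. Subsampling alone does not convert ``find one crossing edge out of possibly many'' into a sublinear scan. The paper instead uses the Kapron--King--Mountjoy cutset sketch: each vertex stores, for $\Theta(\log^2 n)$ independent edge classifiers $c^{(j,d)}$ and levels $i$, the XOR of labels of its incident edges surviving at level $i$. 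Aggregating this XOR over an entire ETT component (which the ETT supports in $\tO{1}$ depth) yields the XOR of the labels of \emph{crossing} edges only, and at the right sampling level exactly one crossing edge survives with constant probability, so its label is read off directly. This is what makes replacement-edge search cost $\tO{1}$ per broken edge rather than proportional to component degree, and it is the piece your proposal is missing.
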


Another part of our algorithm is $\alg{LocalCuts}$. We first define local cuts as follows.

\begin{definition}[Local cuts]\label{def:localcuts}
    Given an undirected graph $G=(V,E)$, a vertex $u\in V$ and an integer $\mu\ge 1$, a(n) (integral) vertex cut $(L,S,R)$ is called a \emph{$(u,\mu)$-local (vertex) cut}, if $u\in L$ and $\mu\le \deg_G(L)\le 2\mu$. 
\end{definition}

Intuitively, $\alg{LocalCuts}$ takes as input a vertex $x$ and parameter $\mu$, and returns an $(x,\mu)$-local vertex cut $(L,S,R)$ such that $L\subseteq V_{\inner}$. Each vertex in $V_{\inner}$ needs to have a low degree to guarantee a small complexity.

We define a \emph{non-zero representation} of a function $C:V\to\bbR_{\ge 0}$ as only storing the non-zero entries of $C$. In this way, it is possible for an algorithm to output a cut in sublinear work.

\begin{restatable}[Parallel Local Cuts]{lemma}{localcuts}\label{lem:localvertexcut}
There exists a randomized \pram{} algorithm \[(C, \tau) \gets \alg{LocalCuts}(G,k,x,\mu,V_{\inner},(\cD^{(i)})_{i \in [r]})\] given an undirected graph \(G\), integers \(k,\mu \ge 1\), a vertex \(x\in V\), a vertex set \(V_{\inner} \subseteq V\), and $r = \lceil 400k^{3} \ln n\rceil$ independent sensitivity spanning forest data structures \(\cD^{(1)}, \cdots, \cD^{(r)}\), returning $\bot$ or a non-zero representation of a function \(C:V\to\bbR_{\ge 0}\) with a vertex $\tau \in V$. The algorithm has the following guarantees.
\begin{enumerate}
    \item \emph{(Guarantees for the inputs)} Every vertex in \(V_{\inner}\) has degree at most \(5\mu\), \(N_G[V_{\inner}]\neq V\), the subgraph \(G[V_{\inner}]\) is connected, and each \(\cD^{(i)}\) is independently initialized on the graph \(G[V_{\inner}]\) by calling \(\cD^{(i)}.\alg{Init}(G[V_{\inner}],\deg_G)\).
    \item \emph{(Correctness)} If there is an \((x,\mu)\)-local cut \((L,S,R)\) of size less than \(k\) with \(L\subseteq V_{\inner}\), the algorithm returns \((C, \tau)\) with probability at least \(0.01\). Moreover, $C$ is a fractional $(x, \tau)$ cut of size at most $k - 0.5$ with high probability.
    \item \emph{(Complexity)} The algorithm runs in \(\tO{k^{12}\mu}\) work and \(\tO{k^3}\) depth with high probability. In particular, \(C\) has $\tO{k^7\mu}$ nonzero entries with high probability when it's returned.
\end{enumerate}
\end{restatable}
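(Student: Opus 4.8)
The plan is to follow the blueprint sketched in \Cref{subsec:overviewlocalcuts}: implement the modified multiplicative weight update (MWU) framework of \Cref{subsec:overviewwarmup} on $G[V_{\inner}]$, but run each approximate-shortest-path computation not on $G[V_{\inner}]$ directly (which would cost $\tO{m}$ work) but on an $x$-source length sparsifier $\hG$ of size $\tO{\mu\cdot\poly(k)}$ maintained via the sensitivity spanning forest data structures. First I would set up the weight function $w:V_{\inner}\to\bbR_{\ge 0}$ with the modified initialization $w(x)=0$, $w(u)=n^{c}$ for $u\in N_G(x)\cap V_{\inner}$ (for a suitable constant $c$), and $w(u)=1$ otherwise, and argue (as in the overview) that this forces every non-trivial simple path from $x$ to have $w$-length $\ge\poly(n)$, so that contracting maximal components of trivial (weight-$1$) vertices changes path lengths by a $(1\pm 1/\poly(n))$ factor — negligible against the $(1+1/\poly(k))$ target. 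I would maintain $\hG$ using one of the $\cD^{(i)}$: the clusters are exactly the components of $G[V_{\inner}]$ restricted to trivial vertices, which the data structure tracks; edges of $\hG$ all touch a non-trivial vertex, and $\sum_{v\text{ non-trivial}}\deg_G(v)$ starts at $O(k\mu)$ (since $N_G[x]\subseteq L\cup S$ with $\deg_G(L)=\Theta(\mu)$ and $\deg_G(S)=O(k\mu)$ under the degree bound inherited from $V_{\inner}$) and grows by only $\tO{\mu\poly(k)}$ per round by \Cref{observation}, giving the sparsity bound and the $\tO{k^7\mu}$ nonzero-entries claim for $C$.

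Next, for each of the $\Theta(k^{3}\log n)$ rounds I would: run $\alg{SSSP}$ (\Cref{thm:sssp}) on $\hG$ with the induced weight function $\hat w$ to get an approximate $x$-source shortest path tree; read off the approximate $(C_x,C_t)$-path $\hP=(C_1,\dots,C_z)$; binary-search for the prefix index $i^\ast$ with $\deg_G(\bigcup_{j\le i^\ast}C_j)=\Theta(\mu\poly(k))$, using $\alg{Sum}$ and $\alg{ID}$ queries to evaluate cluster volumes (handling a single over-large cluster by invoking $\alg{Tree}$ to peel off a $\Theta(\mu\poly(k))$-volume subtree, as flagged in the overview); sample $r\in C^\ast:=\bigcup_{j\le i^\ast}C_j$ proportional to $\deg_G$; reconstruct an $(x,r)$-path $P_{x,r}$ inside $G[C^\ast]$ with $\hG(P_{x,r})$ a prefix of $\hP$ (concatenating intra-cluster tree paths obtained from the spanning forest); and multiplicatively update $w$ along $V(P_{x,r})$, then refresh the non-trivial/trivial partition and $\hG$ incrementally via a $\alg{Fail}$/rebuild call on the affected cluster. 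If in some round $\hat w(\hP)\ge \frac{\hat w(V(\hG))}{k-0.5}$ (equivalently the fractional-cut termination condition of the MWU, adjusted for the $1/\poly(n)$ sparsifier distortion and $1/\poly(k)$ SSSP error), I would return $C=w/\big(\hat w(V(\hG))\big)$ restricted to its support together with $\tau=r$; otherwise return $\bot$ after all rounds.

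For correctness I would invoke the modified-MWU lemma (\Cref{lem:mwu}, cited as existing) to show that if there is an $(x,\mu)$-local cut $(L,S,R)$ of size $<k$ with $L\subseteq V_{\inner}$, then in each round the sampled $r$ lands in $R$ with probability $\ge 1-1/\poly(k)$ — because $\deg_G(L\cup S)=O(k\mu)$ is a $1/\poly(k)$ fraction of $\deg_G(C^\ast)=\Theta(\mu\poly(k))$ with the $\poly(k)$ padding chosen large enough — and that conditioned on this, the weight-update invariant that eventually forces the termination inequality holds; a union bound over the $\Theta(k^{3}\log n)$ rounds (using one fresh $\cD^{(i)}$ per round to keep the failed edge set oblivious, which is exactly why $r=\lceil 400k^{3}\ln n\rceil$ copies are supplied) keeps the overall success probability $\ge 0.01$. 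The "moreover" part — that the returned $C$ is a genuine fractional $(x,\tau)$ cut of size $\le k-0.5$ w.h.p. — follows from the standard MWU potential argument: the SSSP guarantee certifies $\dist_{G,C}(x,\tau)\ge 1$, and the scaling by $\hat w(V(\hG))=(1\pm 1/\poly(n))\,w(V_{\inner})$ makes the size at most $(k-0.5)(1+1/\poly(n))\le k-0.5+1/\poly(n)$, which after rounding constants is $\le k-0.5$ with the slack absorbed. For complexity, each round is $\tO{\mu\poly(k)}$ work and $\tO{1}$ depth for $\alg{SSSP}$ plus $\tO{\mu\poly(k)}$ for the data-structure operations and path reconstruction, all at $\tO{1}$ depth except $\alg{Init}$ which is amortized outside; multiplying by $\Theta(k^{3}\log n)$ rounds gives $\tO{k^{12}\mu}$ work and $\tO{k^{3}}$ depth. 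I expect the main obstacle to be the bookkeeping around maintaining $\hG$ and $\hat w$ consistently across rounds with only sublinear work per round — in particular handling over-large clusters via $\alg{Tree}$ while still guaranteeing $\hG(P_{x,r})$ is a prefix of $\hP$ and that the weight update touches only $O(\mu\poly(k))$ volume — together with pinning down the exact $\poly(k)$ factors so that the "$r\in R$ with probability $\ge 1-1/\poly(k)$" bound composes correctly with the MWU round count; the reachability-avoidance itself is automatic here since every subroutine ($\alg{SSSP}$, the spanning-forest data structure, MWU) is undirected and highly parallel.
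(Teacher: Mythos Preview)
Your proposal follows the overview in \Cref{subsec:overviewlocalcuts} faithfully, but that overview explicitly works under \Cref{assumption} (max degree $O(\mu)$, effectively $V_{\inner}=V$), and removing that assumption is precisely where the paper's actual proof diverges from what you wrote. Two concrete gaps:

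First, you refer to the ``approximate $(C_x,C_t)$-path $\hP$'' but $\alg{LocalCuts}$ has no target $t$ in its input; the overview borrowed $t$ from the $(s,\mu,t)$-local-cut abstraction, but in the actual lemma the algorithm must manufacture $\tau$ internally. More importantly, your sparsifier $\hG$ lives entirely on $H=G[V_{\inner}]$, so the SSSP you run certifies only that $P_{x,r}$ is approximately shortest \emph{in $H$}, not in $G$. The ``moreover'' clause demands that $C$ be a valid fractional $(x,\tau)$-cut in $G$: you need $\dist_{G,C}(x,\tau)\ge 1$, and there may be $(x,r)$-paths in $G$ that detour through $V\setminus V_{\inner}$ and are much shorter than anything visible in $\hG$. (Concretely: if every $H$-path from $x$ to $r$ passes through two weight-$n^3$ vertices but some $G$-path touches only one, your termination test fires with $w(P_{x,r})\approx 2n^3$ while $\dist_{G,w}(x,r)\approx n^3$, and the scaled $C$ is not a cut.)

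The paper closes this gap with an extra step (Step~3 in \Cref{subsec:algorithmlocalcuts}) that you omit: after finding the distance-threshold ball $V^{(i)}_{H,<\dtres}$ in the contracted graph, it explicitly builds an \emph{augmented} graph $G^{(i)}_{\aug}$ that adjoins the outer boundary $N_G(V^{(i)}_{H,<\dtres})\cap(V\setminus V_{\inner})$ and runs a second SSSP there. If some outer vertex $v^{(i)}_*$ is reachable within distance $\le\dtres$, the algorithm uses \emph{that} path and sets $\tau\in V\setminus N_G[V_{\inner}]$; only when no such short outer path exists does it fall back to the random-termination route inside $V_{\inner}$ (Step~4) --- and now the absence of a short outer path is exactly what lets \Cref{lem:step4shortestpath} certify that the inner path is approximately shortest in all of $G$, not just $H$. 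Your proposal has no analogue of this case split, so the validity-of-output half of the correctness argument does not go through. (A smaller omission: the paper also front-loads the check $\deg_G(N_H(x))\le 2\mu+(k-1)\cdot 5\mu$ and returns $\bot$ otherwise; without it your initial sparsifier-size bound ``$O(k\mu)$ since $N_G[x]\subseteq L\cup S$'' is conditional on the existence of the local cut, which the complexity guarantee must not assume.)
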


\Cref{lem:localvertexcut} is the most technical part of our paper, and we will prove it in the next section. 

Our algorithm will find a fractional cut. We will use the following lemma to turn a fractional cut into an integral cut. We defer the proof to \Cref{sec:rounding}.

\begin{restatable}[Integral $(s, t)$-cut]{lemma}{rounding}\label{lem:rounding}
    There exists a randomized \pram{} algorithm \[\alg{IntegralSTCut}(G, k, s, t)\]
    Given an undirected graph $G = (V, E)$, a positive integer $k$, two distinct vertices $s, t \in V$, returning a $(s, t)$-vertex cut $S$ of size at most $k$ or $\bot$, such that
    \begin{itemize}
        \item \emph{(Correctness.)} If $\kappa(s, t) < k$, the algorithm returns $S$ with high probability. If $\kappa(s, t) \ge k$, the algorithm always returns $\bot$.
        \item \emph{(Complexity.)} The algorithm runs in $\tO{mk^5}$ work and $\tO{k^3}$ depth.
    \end{itemize}
\end{restatable}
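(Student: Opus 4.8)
The plan is to reduce the integral $(s,t)$-vertex cut problem to computing a maximum vertex-disjoint path packing of value up to $k$, and then extract the cut from the final residual structure via a reachability-free argument. First I would run the MWU-based $s$-$t$ $k$-vertex connectivity routine of \Cref{subsec:overviewwarmup} (formalized as \Cref{lem:stvertexcut}), which in $\tO{mk^5}$ work and $\tO{k^3}$ depth either certifies that $\kappa_G(s,t)\ge k$—in which case we return $\bot$, and this is always correct since that routine is deterministic and has one-sided error—or produces a fractional $(s,t)$-vertex cut $C$ of size at most $k-0.5$. By \Cref{lem:fractional_to_integral}, the existence of such a fractional cut is equivalent to the existence of an integral $(s,t)$-vertex cut of size $<k$, so in the latter case we are guaranteed that an integral cut of size at most $k-1$ exists; it remains to find one.

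The main work is the \emph{rounding}: converting the fractional cut $C$ into an integral one in parallel. I would use the standard threshold-rounding idea. Build the vertex-weighted shortest-path structure with respect to $C$: since $C$ is an $(s,t)$-fractional separator, $\dist_{G,C}(s,t)\ge 1$, and one can consider the "ball" $B_\rho=\{v : \dist_{G,C}(s,v)<\rho\}$ for a random threshold $\rho\in[0,1)$. The boundary layer $S_\rho=\{v\notin B_\rho : v \text{ has a neighbor in } B_\rho\}$ is an $(s,t)$-vertex separator (no edge crosses from $B_\rho$ to $V\setminus(B_\rho\cup S_\rho)$ by definition, $s\in B_\rho$, $t\notin B_\rho\cup S_\rho$ when $\rho\le 1$), and a charging argument shows $\mathbb{E}[|S_\rho|]\le \sum_v C(v)\le k-0.5<k$, hence with constant probability $|S_\rho|\le k-1$. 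To make this parallel one computes $\dist_{G,C}(s,\cdot)$ approximately using \Cref{thm:sssp} (\alg{SSSP}) with a tiny $\eps$, which costs $\tO{m/\eps^2}$ work and $\tO{1}$ depth; the approximation only perturbs the threshold slightly and is absorbed by the $0.5$ slack, exactly as the $k-0.5$ vs.\ $k-1$ gap was engineered for. One repeats with $\tO{1}$ independent random thresholds and verifies each candidate $S_\rho$ is a genuine separator of size $<k$ by a connectivity check ($\kappa=1$ parallel connectivity of \cite{ShiloachV81} on $G-S_\rho$), boosting to high probability; each repetition is $\tO{m}$ work and $\tO{\log n}$ depth, well within $\tO{mk^5}$ and $\tO{k^3}$.

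I expect the main obstacle to be making the rounding step both \emph{parallel} and \emph{correct with the right slack}: the classical rounding of a fractional vertex cut into an integral one is typically described sequentially (grow a ball, stop at the right moment), and here we must instead precompute the entire distance profile, choose a random radius, and argue the expected boundary size bound survives the $(1+\eps)$-approximation from \alg{SSSP}. The key quantitative point is that $\mathbb{E}_\rho[|S_\rho|] = \mathbb{E}_\rho\big[\sum_v \mathbbm{1}[\dist(s,v)\ge\rho > \dist(s,v)-C(v) \text{ roughly}]\big] \le \sum_v C(v)$, which requires phrasing $S_\rho$ via the "fractional consumption" of each vertex along the radius sweep; with exact distances this is textbook, and with approximate distances one shifts to an auxiliary exact length function (replace $C$ by the rounded-up multiples of $\eps$ actually used by \alg{SSSP}) whose total weight is at most $(1+\eps)(k-0.5)<k$ for $\eps$ small enough, preserving the argument. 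Everything else—the SSSP call, the connectivity verification, the constant-factor probability amplification—is routine and fits the stated complexity bounds.
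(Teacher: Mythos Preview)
Your high-level plan---run \alg{FractionalSTCut} to get a fractional cut $C$, then round via a random threshold $\rho$ on the distance profile $\dist_{G,C}(s,\cdot)$---is exactly the right starting point, and the exact-distance version of the argument you sketch is correct. The gap is in how you handle the approximation error from \alg{SSSP}. With \emph{exact} distances $d$, a vertex $v$ lands in the boundary $S_\rho$ only if $\rho\in(d(v)-C(v),\,d(v)]$, an interval of length $C(v)$, giving $\mathbb{E}[|S_\rho|]\le\sum_v C(v)\le k-0.5$. With $(1+\eps)$-approximate distances $\tilde d$, however, for any edge $(u,v)$ you only get $\tilde d(v)\le(1+\eps)(\tilde d(u)+C(v))$, so the interval of $\rho$ for which $v$ is on the boundary has length up to $\eps\cdot\tilde d(u)+(1+\eps)C(v)\le \eps+(1+\eps)C(v)$. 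Summing over all vertices introduces an additive $\Theta(\eps n)$ term in $\mathbb{E}[|\tilde S_\rho|]$, not a multiplicative $(1+\eps)$ factor on $\sum_v C(v)$; vertices with $C(v)=0$ can still be charged $\eps$. To absorb $\eps n$ into the $0.5$ slack you would need $\eps=O(1/n)$, and then \alg{SSSP} costs $\tO{m/\eps^2}=\tO{mn^2}$ work, far outside the stated budget. Your suggestion to ``replace $C$ by the rounded-up multiples of $\eps$ actually used by \alg{SSSP}'' does not reflect how the approximate SSSP of \Cref{thm:sssp} works: it returns a tree whose tree distances are within $(1+\eps)$ of the graph distances, but those tree distances are not exact graph distances for any nearby length function on $G$, so the rounding-on-exact-distances argument cannot be transported.

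The paper closes this gap by making the rounding \emph{iterative} rather than one-shot. With a fixed constant $\eps$ (they take $\eps=1/200$), the approximate-distance boundary $\tilde S_\theta$ is not a small cut, but it is a valid separator whose expected size is $O(\eps|V|+k)$; by Markov, with constant probability both $|V(\tilde S_\theta)|$ and $|E(G_\theta)|$ shrink by a constant factor (\Cref{eq:constraint_rounding}). They contract $\tilde L_\theta$ and $\tilde R_\theta$ to new source/sink, recompute a fractional cut on the smaller graph, and recurse; after $O(\log(n/k))$ rounds the instance has $O(k^2)$ edges and they finish with exact Dijkstra. This geometric shrinking is the missing idea in your proposal, and it is what keeps the total work at $\tO{mk^5}$ with depth $\tO{k^3}$.
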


\subsection{Main Algorithm and Analysis (Proof of \Cref{thm:main-detail})}\label{subsec:main-detail}

\paragraph{Algorithm.} We now present the algorithm corresponding to \Cref{thm:main-detail}. If $n \le 100k^{2}$, run the algorithm of \cite{CheriyanKT93} and terminate. For the remainder of the algorithm, we assume $n > 100k^{2}$. Repeat the following procedure \(\Theta(\log n)\) times, each consisting of \(\lfloor \log m\rfloor + 1\) iterations. For every iteration \(i = 0, \dots, \lfloor \log m \rfloor\), set \(\mu = 2^i\) and perform the following five steps.

\begin{description}
    \item[Step 1.] Sample a vertex \(t\) from \(V\), where each vertex \(x\) is chosen with probability \(\deg_{G}(x)/2m\).

    \item[Step 2.] Compute the vertex sets:
        \[V_{\high}:=\{v\in V-N[t]\mid \deg_G(v)> 5\mu\}\]
        \[V_{\low}:=\{v\in V-N[t]\mid \deg_G(v)\le 5\mu\}\]
    
    \item[Step 3.] For each connected component \(K\) in \(G[V_{\low}]\), initialize $r$ independent instances of the sensitivity spanning forest data structure \((\cD_{K}^{(i)})_{i\in[r]}\) from \Cref{lem:dynamicspanningforest} by calling \[\cD_K^{(i)}.\alg{Init}(G[K],\deg_G|_{K})\]
    for each $i = 1, \cdots, r$ in parallel.

    \item[Step 4.] For each vertex \(v \in V_{\low}\) satisfying \(\deg_{G}(v) \le 2\mu\), sample \(v\) independently with probability \(\deg_{G}(v)/2\mu\). Denote the sampled vertex set by \(X\).

    \item[Step 5.] For each vertex \(x\in X\), let \(K_x\) denote the connected component of \(G[V_{\low}]\) containing \(x\). Execute the following (using \Cref{lem:localvertexcut}):
    \[(C_{x}, \tau_{x}) \leftarrow \alg{LocalCuts}(G,k,x,\mu,K_x,(\cD_{K_x}^{(i)})_{i \in [r]})\]
\end{description}

\begin{remark}
When running \(\alg{LocalCuts}(G,k,x,\mu,K_x,(\cD_{K_x}^{(i)})_{i \in [r]})\) in parallel for different \(x\in X\), it might require concurrent modifications to the initialized data structure \(\cD_{K_x}^{(i)}\) (stored in memory \(\cM_{\init}\)). To avoid conflicts, each algorithm reads \(\cM_{\init}\) normally, allowing concurrent reads, and when a write is necessary, it creates a separate space for modified entries. Subsequent reads of modified entries are redirected to this new space, ensuring independent parallel execution without memory conflicts.
\end{remark}

\paragraph{Finding an integral cut.} If all of the $\alg{LocalCuts}(G,k,x,\mu,K_x,(\cD_{K_x}^{(i)})_{i \in [r]})$ return $\bot$, the algorithm returns $\bot$. Otherwise, select any output pair $(C_{x}, \tau_{x})$, and

\[S\leftarrow \alg{IntegralSTCut}(G, k, x,\tau_{x}).\]

If \(S\) is an integral \((x, \tau_x)\)-vertex cut, return \(S\); otherwise, return \(\bot\).

\medskip

We now prove \Cref{thm:main-detail}, assuming \Cref{lem:dynamicspanningforest,lem:localvertexcut,lem:rounding}.

\paragraph{Correctness.}
Suppose first that $\kappa(G) \ge k$. Then, by \Cref{lem:fractional_to_integral}, there exists no fractional $(s,t)$-cut of size at most $k - 0.5$ for any $s,t \in V$. Consequently, the algorithm always returns $\bot$ by the correctness guarantee of \Cref{lem:rounding}.

Now suppose $\kappa(G) < k$, so there exists a vertex cut $(L, S, R)$ with $|S| < k$. Without loss of generality, assume $\deg_G(L) \le \deg_G(R)$. Since $\deg_G(L) \le m/2$, there exists an integer $0 \le i \le \lfloor \log m \rfloor$ such that $\mu = 2^i$ satisfies $\mu \le \deg_G(L) \le 2\mu$.

We claim that with high probability, the algorithm selects a vertex $x \in L$ such that $(L, S, R)$ forms a valid $(x, \mu)$-local vertex cut.

In Step 1, the vertex $t$ is chosen with probability proportional to degree. Since $\deg_G(R)$ is lower bounded by \[
2m = \deg_{G}(L)+\deg_{G}(R)+\abs{E(S, L)} + \abs{E(S, R)} + \abs{E(G[S])} \le 4\deg_{G}(R) + k^{2}/2.
\] 
The first inequality comes from the fact that $\abs{E(S, L)} \le \deg_{G}(L)$ and $\abs{E(S, R)} \le \deg_{G}(R)$, and again $\deg_{G}(L) \le \deg_{G}(R)$. As $G$ is a simple graph, so is $G[S]$ so $\abs{E(G[S])}$ is at most $(k-1)(k-2)/2$. one deduce that $\deg_{G}(R) \ge m/2-k^{2}/8 \ge m/3$. Hence, the probability that $t \in R$ is at least $1/6$. In this case, $N[t] \subseteq R \cup S$, so $L \cap N[t] = \emptyset$, implying $L \subseteq V_{\low}$.

In Step 4, each vertex $x \in L$ is included in $X$ independently with probability $\deg_G(x)/2\mu$. The probability that no vertex from $L$ is selected is:
\[
\prod_{x \in L}\left(1 - \frac{\deg_G(x)}{2\mu}\right) \le \exp\left(-\frac{\deg_G(L)}{2\mu}\right) \le e^{-1/2} < \frac{2}{3}.
\]
Hence, with probability at least $1 - e^{-1/2} > 1/3$, we have $X \cap L \neq \emptyset$.

Now fix a vertex $x \in X \cap L$. By \Cref{lem:localvertexcut}, the algorithm successfully returns a fractional cut $(C_x, \tau_x)$ with probability at least $0.01$. Therefore, the probability that a single repetition of the algorithm produces a valid fractional cut is at least:
\[
\frac{1}{6} \cdot \frac{1}{3} \cdot 0.01 = \frac{1}{1800}.
\]
Repeating the entire process $1800c \log n$ times boosts the success probability to at least $1 - 1/n^c$ for any constant $c$, by standard amplification.

By the correctness guarantee of \Cref{lem:localvertexcut}, the returned function $C_x$ is a valid fractional $(x, \tau_x)$-cut of size at most $k - 0.5$ with high probability. Then, by \Cref{lem:fractional_to_integral}, it follows that $\kappa(x, \tau_x) < k$, and \Cref{lem:rounding} ensures the algorithm returns an integral $(x, \tau_x)$-vertex cut $S$ of size at most $k - 1$ with high probability.

Finally, observe that the inputs to each call to $\alg{LocalCuts}$ are independent of the random bits used to initialize each $\mathcal{D}^{(i)}$. Hence, the oblivious adversary condition in \Cref{lem:dynamicspanningforest} is satisfied. This will be further detailed in \Cref{rmk:oblivious}.

This concludes the proof of correctness for \Cref{thm:main-detail}.

\paragraph{Complexity.}
If $n \le 100k^{2}$, the algorithm of \cite{CheriyanKT93} runs in $\tO{k^{8}}$ work and $\tO{k^2}$ depth.

Assume $n > 100k^{2}$. We analyze a single iteration within one repetition. Step 1 samples a vertex $t$ in $O(m)$ work and $\tO{1}$ depth. Step 2 computes $V_{\high}$ and $V_{\low}$ also in $O(m)$ work and $\tO{1}$ depth. Identifying connected components in $G[V_{\low}]$ requires $\tO{m}$ work and $O(\log n)$ depth using parallel algorithms for spanning forests (e.g., \cite{awerbuch1987new}).

In Step 3, initializing $r = O(k^3 \log n)$ sensitivity spanning forest structures across all components requires total work $\tO{mk^3}$ and $\tO{1}$ depth by \Cref{lem:dynamicspanningforest}.

In Step 4, sampling vertices into $X$ requires $O(m)$ work and $\tO{1}$ depth. The expected size of $X$ is $O(m / \mu)$, and by Chernoff's inequality (\Cref{eq:chernoff}), its size is $\tO{m/\mu}$ with high probability.

In Step 5, each call to $\alg{LocalCuts}$ runs in $\tO{k^{12} \mu}$ work and $\tO{k^3}$ depth by \Cref{lem:localvertexcut}. Across all sampled $x \in X$, total work is:
\[
\tO{\abs{X} \cdot k^{12} \mu} = \tO{mk^{12}}, \quad \text{depth } \tO{k^3}.
\]

Since there are $O(\log m) = O(\log n)$ iterations per repetition, and $O(\log n)$ repetitions, the overall complexity is $\tO{mk^{12}}$ work and $\tO{k^3}$ depth with high probability.

Finally, converting a fractional cut into an integral one using \Cref{lem:rounding} takes $\tO{mk^5}$ work and $\tO{k^3}$ depth.

This completes the complexity analysis for \Cref{thm:main-detail}.

\section{Parallel Local Cuts (Proof of \Cref{lem:localvertexcut})}\label{sec:localcuts}

In this section, we will prove \Cref{lem:localvertexcut}, restated as follows.

\localcuts*

\subsection{Algorithm Description}\label{subsec:algorithmlocalcuts}

Before giving the algorithm for \Cref{lem:localvertexcut}, let us first define some necessary notations. For convenience, we write

\[H=G[V_{\inner}]\]

as a primary working graph in this section. The algorithm consists of maintaining \emph{weight functions} and \emph{trivial vertices} defined as follows.

\begin{definition}[Weight Functions \& Trivial vertices]
    A weight function on $G = (V, E)$ is a function $w : V \to \mathbb{R}_{\ge 0}$. A vertex $v \in V$ is called \emph{trivial (with respect to $w$)} if $w(v) = 1$; otherwise, it is \emph{non-trivial}.
\end{definition}

The algorithm maintains only the weights of non-trivial vertices. This compact representation is referred to as the \emph{non-trivial representation} of $w$.

\paragraph{Subroutine \(\alg{Contract}\).}
This subroutine abstracts away trivial vertices by contracting them into connected components, and isolates the behavior of non-trivial vertices which encode meaningful weight information.

Given a graph \(G = (V, E)\), a designated subset of vertices \(V_{\inner} \subseteq V\), and a sensitivity spanning forest data structure \(\cD\) (satisfying the input conditions of \Cref{lem:localvertexcut}), along with a weight function \(w: V \to \bbR_{\ge 0}\) given in its non-trivial representation, the subroutine
\[
(G', \ell) \leftarrow \alg{Contract}(G, V_{\inner}, \cD, w)
\]
produces a contracted graph \(G' = (V', E')\) and a vertex length function \(\ell: V' \to \bbR_{\ge 0}\), constructed as follows.

\begin{enumerate}
\item Let \(H = G[V_{\inner}]\) be the subgraph induced by \(V_{\inner}\). Define the set of vertices and edges adjacent to non-trivial vertices.
    \[
    V_{\ne1} := \bigcup_{w(u) \ne 1} N_H[u], \qquad
    E_{\ne1} := \bigcup_{w(u) \ne 1} \delta_H(u).
    \]
    This collects all neighbors and incident edges of non-trivial vertices. The subgraph formed by \(H - E_{\ne1}\) is composed entirely of trivial vertices.
    
    \item Apply the sensitivity data structure update
    \[
    \cD.\alg{Fail}(E_{\ne1})
    \]
    to remove the influence of these non-trivial edges, and retrieve the connected component identifiers for all relevant vertices:
    \[
    (\id_u)_{u \in V_{\ne1}} \leftarrow \cD.\alg{ID}(V_{\ne1}).
    \]

\item Define the contracted graph \(G' = (V', E')\) as:
    \[
    V' := \{\id_u \mid u \in V_{\not=1} \}, \qquad
    E' := \{ (\id_u, \id_v) \mid (u,v) \in E_{\not=1} \}.
    \]
    Here, we treat \(E'\) as a simple set of unordered pairs (i.e., without self-loops or parallel edges).

\item Finally, define the vertex length function \(\ell: V' \to \bbR_{\ge 0}\) by
    \[
    \ell(\id_u) := w(u) \quad \text{for each } u \in V_{\not=1}.
    \]
    This definition is well-defined because any two vertices \(u, v\in V_{\not=1}\) with \(\id_u = \id_v\) must be trivial (i.e., \(w(u) = w(v) = 1\)). Non-trivial weights are thus never collapsed.
    \end{enumerate}

The resulting pair \((G', \ell)\) encodes the non-trivial structure of the graph while aggregating trivial parts into connected blocks, enabling efficient computations such as shortest paths or weight updates that respect the structure of the original graph.

The following definition defines $\CT{P}{G'}$ for a path $P$ on $G$ as a path on $G'$ after contracting trivial vertices.

\begin{definition}\label{def:pathcontract}
    Given a path $P$ in $G$, and a partition of $V(G)$ induced by vertex clusters in $G'$, define $\CT{P}{G'}$ as the path in $G'$ obtained by contracting each maximal subpath of $P$ contained in a single cluster to that cluster.
\end{definition}

The following lemma shows the important properties of $(G',\ell)$. We will prove it in the next section.

\begin{lemma}\label{lem:contract}
Let \(G, V_{\inner}, \cD\) satisfy the input conditions from \Cref{lem:localvertexcut}, and assume the correctness of \(\cD.\alg{Fail}\) as described in \Cref{lem:dynamicspanningforest}. Suppose also that \(V_{\inner}\) contains at least one vertex that is non-trivial with respect to \(w\). Then, after executing \((G'=(V',E'),\ell) \leftarrow \alg{Contract}(G,V_{\inner},\cD,w)\), the following properties hold:

\begin{enumerate}
    \item For each non-trivial vertex \(u \in V_{\inner}\) with respect to \(w\), the component \(\{u\} = \cD(\id_u)\) is a singleton connected component in the graph \(H - E_{\not=1}\). This ensures that \(\ell\) is well-defined.
    \item Every vertex \(\id \in V'\) corresponds exactly to a connected component \(\cD(\id)\) of the graph \(H - E_{\not=1}\). Additionally, an edge \((\id_1, \id_2) \in E'\) indicates the existence of at least one edge between the respective components \(\cD(\id_1)\) and \(\cD(\id_2)\) in the graph \(H\).
    \item The resulting contracted graph \(G'\) is connected.
    \item If $P$ is a simple path in $G$ and $\ell$ is the vertex length function on $G'$, \[
w(P) - n \le \ell(\CT{P}{G'}) \le w(P),
\]
\end{enumerate}
\end{lemma}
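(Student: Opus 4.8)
\textbf{Proof proposal for \Cref{lem:contract}.}
The plan is to verify the four properties in order, since each builds on the previous one. For property~1, the key observation is that for a non-trivial vertex $u$, we have $u\in V_{\not=1}$ (as $u\in N_H[u]$) and, crucially, \emph{all} edges of $H$ incident to $u$ lie in $E_{\not=1}$ (since $\delta_H(u)\subseteq E_{\not=1}$ by definition). Therefore, in the graph $H-E_{\not=1}$, the vertex $u$ is isolated, so its connected component is the singleton $\{u\}$. By the correctness assumption on $\cD.\alg{Fail}(E_{\not=1})$ (i.e., $F_\cD$ is a maximal spanning forest of $H-E_{\not=1}$), we get $\cD(\id_u)=\{u\}$. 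This immediately makes $\ell(\id_u):=w(u)$ well-defined, because if $\id_u=\id_v$ for $u,v\in V_{\not=1}$, then either both are trivial (weight $1$, so no conflict) or at least one is non-trivial, in which case its component is a singleton, forcing $u=v$.

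For property~2, the forward direction is definitional: each $\id\in V'$ equals $\id_u$ for some $u\in V_{\not=1}$, and by correctness of $\cD.\alg{Fail}$, $\cD(\id_u)$ is the connected component of $H-E_{\not=1}$ containing $u$. For the edge claim, an edge $(\id_1,\id_2)\in E'$ arises from some $(u,v)\in E_{\not=1}$ with $\id_u=\id_1$, $\id_v=\id_2$; since $(u,v)$ is an edge of $H$ and $u\in\cD(\id_1)$, $v\in\cD(\id_2)$, there is indeed an edge of $H$ between the two components. For property~3 (connectivity of $G'$), I would argue as follows: $H=G[V_{\inner}]$ is connected by the input hypothesis of \Cref{lem:localvertexcut}. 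The map sending each vertex of $H$ to its component identifier in $H-E_{\not=1}$ (extended: trivial vertices not in $V_{\not=1}$ still lie in some component, but we only keep identifiers of components meeting $V_{\not=1}$) — here I need to be slightly careful. The cleaner route: take any two vertices $\id_1,\id_2\in V'$ with representatives $u_1,u_2\in V_{\not=1}$. Since $H$ is connected there is a path $u_1=x_0,x_1,\dots,x_\ell=u_2$ in $H$. Walk along it; consecutive vertices $x_i,x_{i+1}$ either lie in the same component of $H-E_{\not=1}$ (same identifier, nothing to add) or the edge $(x_i,x_{i+1})$ was removed, hence lies in $E_{\not=1}$, hence contributes an edge of $E'$ between the identifiers of $x_i$ and $x_{i+1}$; note that whenever an edge of the path is removed, both of its endpoints lie in $V_{\not=1}$ (they are incident to an edge of $E_{\not=1}$), so their identifiers are genuinely vertices of $V'$. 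Concatenating gives a walk in $G'$ from $\id_1$ to $\id_2$, so $G'$ is connected.

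For property~4, consider a simple path $P=(v_1,\dots,v_p)$ in $G$. The path $\CT{P}{G'}$ contracts each maximal subpath of $P$ lying in a single cluster $\cD(\id)$ to that single vertex $\id$. I would first check this is a valid path in $G'$: consecutive clusters along $\CT{P}{G'}$ are joined by an edge of $P$ crossing between them, and that crossing edge, having endpoints in distinct components of $H-E_{\not=1}$, must be in $E_{\not=1}$, hence its contracted image is an edge of $E'$ — provided both clusters are in $V'$, which holds since the crossing edge's endpoints are in $V_{\not=1}$. For the length bound: by property~1, every non-trivial vertex of $P$ forms its own cluster and is never merged with anything, so each non-trivial vertex $v$ of $P$ contributes $w(v)=\ell(\id_v)$ to both sides. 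The only discrepancy comes from trivial vertices of $P$: a maximal run of trivial vertices within one cluster, say $t$ of them, contributes $t$ to $w(P)$ but only $\ell(\id)=1$ to $\ell(\CT{P}{G'})$ (by construction $\ell(\id)=1$ for a trivial cluster). Hence $\ell(\CT{P}{G'})\le w(P)$, and the total loss is at most the number of trivial vertices on $P$, which is at most $|V(P)|\le n$ since $P$ is simple; thus $w(P)-n\le \ell(\CT{P}{G'})$. One subtlety to address: a trivial cluster appearing in $\CT{P}{G'}$ might be entered and exited by $P$ more than once (re-entering the same cluster after leaving), but since $\ell$ of that cluster is $1\ge 0$ and we only need a one-sided lower bound counting loss, re-visits only \emph{help} the lower bound (each extra visit adds to $w(P)$ without necessarily changing $\ell$ — or adds $1$ to $\ell$ if a new copy of the cluster vertex appears in $\CT{P}{G'}$, but then it is balanced), so the bound is unaffected; I would state this carefully using that $P$ is simple so the clusters partition $V(P)$.

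The main obstacle I expect is property~4, specifically making the contraction-of-a-path argument fully rigorous: ensuring $\CT{P}{G'}$ is genuinely a walk (and can be reduced to a path) in $G'$, handling the bookkeeping when $P$ revisits the same cluster, and confirming that trivial-run contraction loses at most $1$ per run with at most $n$ runs total. Properties~1--3 are essentially immediate consequences of the definition of $E_{\not=1}$ together with the maximal-spanning-forest correctness of $\cD.\alg{Fail}$.
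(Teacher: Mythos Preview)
Your proposal is correct and follows essentially the same approach as the paper's own proof: properties~1--3 are deduced directly from the definition of $E_{\not=1}$ together with correctness of $\cD.\alg{Fail}$, and property~4 is obtained by observing that non-trivial vertices are singleton clusters (so contribute identically to both sides) while each maximal run of trivial vertices loses at most its length minus one, summing to at most $n$. If anything, your treatment of the cluster-revisit subtlety in property~4 is slightly more explicit than the paper's, which asserts without justification that ``each vertices in $V(G')$ appears at most once''; your run-based accounting avoids needing that claim.
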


\medskip

\paragraph{Subroutine \(\alg{Tree}\).} The subroutine \[T \leftarrow \alg{Tree}(K, \tilde{\sigma}, v, q)\]

isolates the $\alg{Tree}$ function of the sensitivity spanning forest for convenience.

Let \(K\) be a connected graph, and let \(\tilde{\sigma} : V(K) \to \mathbb{R}_{\ge 1}\) be a vertex weight function such that \(\tilde{\sigma}(V(K)) \ge 2q\). The subroutine proceeds as follows:

\begin{enumerate}
    \item Initialize a sensitivity spanning forest $\tilde{\cD}$ by
    \[
    \tilde{\cD}.\alg{Init}(K, \tilde{\sigma}).
    \]
    \item Select an arbitrary vertex \(u \in V(K)\), and retrieve the component identifier
    \[
    \id \leftarrow \tilde{\cD}.\alg{ID}(u).
    \]
    \item Call the failure update without deletions and then query the tree on $\tilde{\cD}$.
    \[
    \tilde{\cD}.\alg{Fail}(\emptyset), \quad T \leftarrow \tilde{\cD}.\alg{Tree}(\id, v, q).
    \]
\end{enumerate}

The tree \(T\) is returned. The correctness of this procedure follows from \Cref{lem:dynamicspanningforest}, which ensures that the vacuous \(\tilde{\cD}.\alg{Fail}(\emptyset)\) preserves the spanning tree of \(K\).

\begin{observation}\label{obs:tree}
Suppose \(K\) is a connected graph and \(\tilde{\sigma}\) is a vertex weight function such that \(\tilde{\sigma}(v) \ge 1\) for every \(v \in V(K)\), and \(\tilde{\sigma}(V(K)) \ge 2q\). Then the output \(T \leftarrow \alg{Tree}(K, \tilde{\sigma}, v, q)\) satisfies:
\begin{itemize}
    \item \(V(T) \subseteq V(K)\), and \(v \in V(T)\);
    \item moreover, one of the following holds:
    \begin{enumerate}
        \item \(q \le \tilde{\sigma}(V(T)) \le 2q\);
        \item \(\tilde{\sigma}(V(T)) < q\), and there exists a vertex \(u \in \tilde{\cD}(\id)\) with \(\tilde{\sigma}(u) > q\), such that \(u\) is adjacent to some vertex in \(V(T)\) in \(K\).
    \end{enumerate}
\end{itemize}
\end{observation}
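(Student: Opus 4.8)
The plan is to obtain \Cref{obs:tree} by directly unpacking the guarantees of the \(\alg{Tree}\) query in \Cref{lem:dynamicspanningforest}, applied to the data structure \(\tilde{\cD}\) constructed inside the subroutine \(\alg{Tree}(K,\tilde{\sigma},v,q)\). First I would check that the initialization \(\tilde{\cD}.\alg{Init}(K,\tilde{\sigma})\) is legal: \Cref{lem:dynamicspanningforest} only requires the supplied weight function be at least \(1\) everywhere, which is exactly the hypothesis \(\tilde{\sigma}(v)\ge 1\) for all \(v\in V(K)\). Next I would observe that the call \(\tilde{\cD}.\alg{Fail}(\emptyset)\) falls under the case \(E'=\emptyset\), for which \Cref{lem:dynamicspanningforest} guarantees correctness unconditionally (no obliviousness needed): thus \(F_{\tilde{\cD}}\) is a maximal spanning forest of \(\tilde{K}=K-\emptyset=K\), and since \(K\) is connected this is a single spanning tree. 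Hence \(\cI_{\tilde{\cD}}\) consists of one identifier \(\id\) with \(\tilde{\cD}(\id)=V(K)\), and this is the identifier returned by \(\tilde{\cD}.\alg{ID}(u)\) (independently of whether \(\alg{ID}\) is read off before or after \(\alg{Fail}(\emptyset)\), since no edge is deleted).

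With \(\tilde{\cD}\) in this state, I would verify the preconditions of \(\tilde{\cD}.\alg{Tree}(\id,v,q)\): the identifier \(\id\in\cI_{\tilde{\cD}}\) is valid; we have \(v\in\tilde{\cD}(\id)=V(K)\), which is implicit in the statement (the subroutine signature is \(\alg{Tree}(K,\tilde{\sigma},v,q)\) and the conclusion asserts \(v\in V(T)\subseteq V(K)\)); and \(\tilde{\sigma}(\tilde{\cD}(\id))=\tilde{\sigma}(V(K))\ge 2q\) by hypothesis. \Cref{lem:dynamicspanningforest} then yields a tree \(T\subseteq\tilde{E}=E(K)\) with \(V(T)\subseteq\tilde{\cD}(\id)=V(K)\) and \(v\in V(T)\), satisfying either \(q\le\tilde{\sigma}(V(T))\le 2q\), or else \(\tilde{\sigma}(V(T))<q\) together with a vertex of \(\tilde{\sigma}\)-weight \(>q\) in \(\tilde{\cD}(\id)\) adjacent to \(V(T)\) in \(\tilde{E}\). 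Substituting \(\tilde{\cD}(\id)=V(K)\) and \(\tilde{E}=E(K)\) in these two statements produces exactly the two alternatives of \Cref{obs:tree}, and the first bullet (\(V(T)\subseteq V(K)\), \(v\in V(T)\)) has already been read off above.

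There is no genuinely hard step; the proof is a routine instantiation of \Cref{lem:dynamicspanningforest}. The only points deserving a line of care are (i) that \(\alg{Fail}(\emptyset)\) is precisely the situation covered by the ``correctness is always guaranteed'' clause, so that the subsequent queries are meaningful without any oblivious-adversary assumption, and (ii) that connectivity of \(K\) collapses \(F_{\tilde{\cD}}\) to a single component, which is what licenses replacing \(\tilde{\cD}(\id)\) by \(V(K)\) throughout the conclusion. If one also wants the running-time bound (\(\tO{q}\) work, \(\tO{1}\) depth), it follows termwise from the three sub-calls in \Cref{lem:dynamicspanningforest}, with the \(\alg{Tree}\) call dominating; this is not part of the statement as given.
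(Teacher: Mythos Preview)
Your proposal is correct and matches the paper's own justification, which is just the one-line remark preceding the observation that ``the correctness of this procedure follows from \Cref{lem:dynamicspanningforest}, which ensures that the vacuous \(\tilde{\cD}.\alg{Fail}(\emptyset)\) preserves the spanning tree of \(K\).'' Your write-up simply unpacks this in detail; the only minor inaccuracy is in your parenthetical about running time---the \(\alg{Init}\) call costs \(\tO{|E(K)|}\), not \(\tO{q}\), so \(\alg{Tree}\) does not necessarily dominate---but as you note this is not part of the statement.
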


\medskip

\paragraph{Algorithm $\alg{LocalCut}$.} We now describe the algorithm \(\texttt{LocalCuts}(G,k,x,\mu,V_{\inner},(\cD^{(i)})_{i \in [r]})\). Throughout the description, we assume that all calls to \(\alg{Fail}\) on the sensitivity spanning forest structures \(\cD^{(i)}\) are correct. We refer to this assumption as \emph{all the SSF operations are correct}. This assumption holds with high probability, from \Cref{lem:dynamicspanningforest}. If the SSF operations fail, the algorithm may have undefined behavior which is acceptable in our probabilistic setting.

\begin{description}
    \item[Initialization.] First, we check whether the following inequality holds:
    \begin{equation} \label{eq:deg_condition} 
        \deg_G(N_H(x)) > 2\mu + (k-1)\cdot (5\mu)
    \end{equation}
    If it does, the algorithm returns \(\bot\) immediately and halts.

    Otherwise, initialize the weight function \(w^{(1)}: V \rightarrow \mathbb{R}_{\ge 0}\) as:
    \[
        w^{(1)}(u) =
        \begin{cases}
            0 & \text{if } u = x, \\
            n^3 & \text{if } u \in N_{G}(x), \\
            1 & \text{otherwise}.
        \end{cases}
    \]

    This initialization ensures that only \(N_{G}[x]\) nontrivial weights.

    Set the precision parameter $\eps$ as
    \[
        \eps = \frac{1}{10k}.
    \]
\end{description}

The algorithm proceeds for
\[
    r = \left\lceil \frac{40k^2 \ln n}{\eps} \right\rceil = \left\lceil 400k^3 \ln n \right\rceil
\]
iterations. Note that \(r\) also determines the number of sensitivity spanning forest structures \(\cD^{(i)}\), one for each iteration. Each \(\cD^{(i)}\) is exclusively used during the iteration \(i\).

We now describe the steps of the algorithm for the \(i\)-th iteration, where \(i \in [r]\). Each iteration consists of five steps. If the algorithm does not return a fractional vertex cut in any of the \(r\) iterations, it returns \(\bot\) at the end.

\begin{description}
\item[Step 1 (SSSP on a contracted graph).]     
Invoke the contraction subroutine:
\[
\left(G^{(i)} = (V^{(i)}, E^{(i)}), \ell^{(i)}\right) \leftarrow \alg{Contract}(G, V_{\inner}, \cD^{(i)}, w^{(i)}).
\]
Let \(\id_x\) denote the identifier of the connected component containing \(x\). i.e.
\[
\id_x \leftarrow \cD^{(i)}.\alg{ID}(\{x\}).
\]
By \Cref{lem:contract}, the contracted graph \(G^{(i)}\) is connected. Thus, we can compute a \((1 + \eps)\)-approximate shortest path tree rooted at \(\id_x\) using:
\[
(T^{(i)}, \tilde{d}^{(i)}) \leftarrow \alg{SSSP}(G^{(i)}, \ell^{(i)}, \id_x, \eps).
\]

\item[Step 2 (Binary search for a small subgraph).]
Compute the degree-sum of each contracted vertex. Note that $V^{(i)} \subseteq \cI_{\cD^{(i)}}$ is a set of IDs.
\[
(\sigma^{\deg}(\id))_{\id \in V^{(i)}} \leftarrow \cD^{(i)}.\alg{Sum}(V^{(i)}).
\]
Recall that \(\cD^{(i)}\) was initialized with \(\sigma(v) = \deg_G(v)\) (see \Cref{lem:localvertexcut}), so assuming all SSF operations are correct, we have:
\[
\sigma^{\deg}(\id) = \deg_G(\cD^{(i)}(\id)).
\]

Define, for any distance threshold \(d \in \bbR_{\ge 0}\),
\[
V^{(i)}_{\le d} := \left\{ \id \in V^{(i)} \mid \tilde{d}^{(i)}(\id) \le d \right\}, \quad
V^{(i)}_{< d} := \left\{ \id \in V^{(i)} \mid \tilde{d}^{(i)}(\id) < d \right\},
\]
and their corresponding degree sums:
\[
\deg^{(i)}_{\le}(d) := \sum_{\id \in V^{(i)}_{\le d}} \sigma^{\deg}(\id), \quad
\deg^{(i)}_{<}(d) := \sum_{\id \in V^{(i)}_{< d}} \sigma^{\deg}(\id).
\]

Now define the threshold distance \(\dtres \in \bbR \cup \{+\infty\}\) as the largest value such that

    \[\deg^{(i)}_{<}(\dtres)\le 10{k r \mu}.\]

    The definition of $\dtres$ implies that either $\dtres=+\infty$, or there are some vertices in $V^{(i)}$ with distance exactly $\dtres$ so that $\deg^{(i)}_{\le}(\dtres)>10{k r \mu}$. 

    To find $\dtres$, we first sort vertices in $V^{(i)}$ according to $\tilde{d}$, then we binary search for $\dtres$ given that we can calculate $\deg^{(i)}_{<}(d)$ for any given $d$ by summing up $\sigma^{\deg}$.
    
    \item[Step 3 (find the shortest path going out).] Now switch to the original graph $G$. Compute 
    \[V^{(i)}_{H,<\dtres}\leftarrow \bigcup \cD^{(i)}.\alg{Components}\left(V^{(i)}_{<\dtres}\right)\]

    The following lemma shows the important property of $V^{(i)}_{H,<\dtres}$. We defer the proof to the next section.
    \begin{lemma}\label{lem:ViHdtress}
        Suppose all the SSF operations are correct, then $\deg_G(V^{(i)}_{H,<\dtres})\le 10{k r \mu}$ and $G[V^{(i)}_{H,<\dtres}]$ is connected.
    \end{lemma}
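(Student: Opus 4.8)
The plan is to unfold the definitions of the objects involved and trace each claim back to the guarantees of \Cref{lem:dynamicspanningforest} and \Cref{lem:contract}. Recall that $V^{(i)}_{H,<\dtres}=\bigcup\cD^{(i)}.\alg{Components}(V^{(i)}_{<\dtres})$, i.e.\ it is the union, over all contracted vertices $\id\in V^{(i)}$ with $\tilde d^{(i)}(\id)<\dtres$, of the connected components $\cD^{(i)}(\id)$ of $H-E_{\not=1}$ that those identifiers represent. Since (by \Cref{lem:contract}, item 2, under the assumption that all SSF operations are correct) the sets $\cD^{(i)}(\id)$ are exactly these components and hence pairwise disjoint, and since $\cD^{(i)}$ was initialized with $\sigma=\deg_G$ so that $\cD^{(i)}.\alg{Sum}$ returns $\sigma^{\deg}(\id)=\deg_G(\cD^{(i)}(\id))$, we get
\[
\deg_G\bigl(V^{(i)}_{H,<\dtres}\bigr)=\sum_{\id\in V^{(i)}_{<\dtres}}\deg_G\bigl(\cD^{(i)}(\id)\bigr)=\sum_{\id\in V^{(i)}_{<\dtres}}\sigma^{\deg}(\id)=\deg^{(i)}_{<}(\dtres)\le 10kr\mu,
\]
where the final inequality is precisely the defining property of $\dtres$ chosen in Step 2. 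This handles the volume bound.

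For connectivity of $G[V^{(i)}_{H,<\dtres}]$, the idea is to lift connectivity from the contracted graph. First, observe that since $\alg{SSSP}$ returns an approximate shortest path tree $T^{(i)}$ rooted at $\id_x$, the set $V^{(i)}_{<\dtres}$ is \emph{downward closed} under the tree order: if $\id$ has tree-distance $<\dtres$ then so do all of its ancestors (distances along $T^{(i)}$ are monotone from the root). Hence $V^{(i)}_{<\dtres}$ induces a connected subgraph of $T^{(i)}$, and therefore of $G^{(i)}$ — in particular $\id_x\in V^{(i)}_{<\dtres}$ since $\tilde d^{(i)}(\id_x)=0<\dtres$ (note $\dtres>0$ because $\deg^{(i)}_{<}(0)=0\le 10kr\mu$). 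Now I would expand each contracted vertex back to its cluster: each $\cD^{(i)}(\id)$ is a connected subgraph of $H$ (it is a connected component of $H-E_{\not=1}$, hence connected in $H$), and by \Cref{lem:contract} item 2 every edge $(\id_1,\id_2)\in E^{(i)}$ between two identifiers in $V^{(i)}_{<\dtres}$ witnesses an actual edge of $H$ between the clusters $\cD^{(i)}(\id_1)$ and $\cD^{(i)}(\id_2)$. Gluing the connected clusters along these witnessing edges, following a spanning tree of the connected subgraph of $G^{(i)}$ induced by $V^{(i)}_{<\dtres}$, shows that $\bigcup_{\id\in V^{(i)}_{<\dtres}}\cD^{(i)}(\id)=V^{(i)}_{H,<\dtres}$ induces a connected subgraph of $H$, and hence of $G$ (it is an induced subgraph of $V_{\inner}$, and $G[V_{\inner}]=H$).

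The main obstacle I anticipate is making the "gluing" argument fully rigorous: one must be careful that the witnessing $H$-edge for a contracted edge $(\id_1,\id_2)$ indeed has both endpoints inside $V^{(i)}_{H,<\dtres}$ (which it does, since the endpoints lie in $\cD^{(i)}(\id_1)$ and $\cD^{(i)}(\id_2)$ respectively, both of which are among the unioned components), and that $V^{(i)}_{H,<\dtres}$ is genuinely an \emph{induced} subgraph of $H$ rather than merely a union of clusters plus a few connecting edges — this is where one uses that the $\cD^{(i)}(\id)$ are \emph{full} connected components of $H-E_{\not=1}$, so no cluster-internal edge is lost. A secondary point to check is the edge case $\dtres=+\infty$, in which $V^{(i)}_{<\dtres}=V^{(i)}$, $V^{(i)}_{H,<\dtres}=V_{\inner}$, and both claims reduce to $\deg_G(V_{\inner})\le 10kr\mu$ (which must follow from the Step-2 definition together with the input guarantees $\deg_G(v)\le 5\mu$ for $v\in V_{\inner}$ and the connectivity of $G[V_{\inner}]$ assumed in \Cref{lem:localvertexcut}) and the assumed connectivity of $G[V_{\inner}]$; I would dispatch this case first and then assume $\dtres<+\infty$ for the general argument. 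Everything else is routine bookkeeping with the data-structure contracts.
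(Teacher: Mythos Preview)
Your proposal is correct and follows essentially the same approach as the paper: the volume bound is derived exactly as you do, and connectivity is obtained by showing $V^{(i)}_{<\dtres}$ is downward-closed in the SSSP tree $T^{(i)}$ (the paper phrases this as a contradiction via monotonicity of tree distances) and then lifting to $G$ via \Cref{lem:contract}(2). The only cosmetic difference is that the paper compresses your explicit ``gluing'' argument into a one-line appeal to \Cref{lem:contract}(2), and does not separate out the $\dtres=+\infty$ case since the tree-monotonicity argument already covers it uniformly; your parenthetical justification for $\deg_G(V_{\inner})\le 10kr\mu$ in that edge case is slightly off (it follows directly from the definition of $\dtres$ as the largest threshold with $\deg^{(i)}_{<}(\dtres)\le 10kr\mu$, not from the per-vertex degree bound), but this does not affect the argument.
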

    \textit{On the graph $G$}, find

    \[V^{(i)}_{\outer}:=N_G(V^{(i)}_{H,<\dtres})\cap (V-V_{\inner})\]

    We define the \emph{augmented graph} $G^{(i)}_{\aug}=(V^{(i)}_{\aug},E^{(i)}_{\aug})$ as

    \[V^{(i)}_{\aug}:=V^{(i)}_{H,<\dtres}\cup V^{(i)}_{\outer}\]
    \[E^{(i)}_{\aug}:=\{(u,v)\in E\mid u\in V^{(i)}_{H,<\dtres},v\in V^{(i)}_\aug\}\]

    Or equivalently,
    \[E^{(i)}_{\aug}:=\bigcup_{v\in V^{(i)}_{H,<\dtres}}\delta_G(v)\]
    \[V^{(i)}_{\aug}:=V(E^{(i)}_{\aug})\]
    According to \Cref{lem:ViHdtress}, $G^{(i)}_\aug$ is connected. So we can run \Cref{thm:sssp} on the graph $G^{(i)}_\aug$ with the length function as the weight function $w^{(i)}$ to get an approximate distance. 

    \[(T^{(i)}_{\aug},\tilde{d}^{(i)}_\aug)\leftarrow \alg{SSSP}(G^{(i)}_\aug,x,w^{(i)}|_{V^{(i)}_{\aug}},\eps)\]

    Let $v^{(i)}_*$ be the vertex in $V^{(i)}_\outer$ with the smallest $d^{(i)}_\aug(v^{(i)}_*)$. 
    
    If $d^{(i)}_\aug(v^{(i)}_*)\le \dtres$, let $P^{(i)}$ be the path on $T^{(i)}_\aug$ from $x$ to $v^{(i)}_*$ and $\tilde{w}^{(i)}=w^{(i)}$. Otherwise, we will define $\tilde{w}^{(i)}$ and $P^{(i)}$ in the next step.

    \item[Step 4 (find the shortest inner path).] Suppose $d^{(i)}_\aug(v^{(i)}_*)> \dtres$. It follows that $\dtres\not=+\infty$. 
    
    Remember that by the definition of $\dtres$, we have
    \begin{equation}\label{eq1}
        \deg^{(i)}_{\le}(\dtres)=\sum_{\id\in V^{(i)}_{\le \dtres}}\sigma^{\deg}(\id)>10{k r \mu}
    \end{equation}
    
    The following lemma shows that $T^{(i)}[V^{(i)}_{\le\dtres}]$ is connected. We defer the proof to the next section.
    \begin{lemma}\label{lem:Tiledtressconnected}
        Suppose all the SSF operations are correct, then $T^{(i)}[V^{(i)}_{\le\dtres}]$ is connected.
    \end{lemma}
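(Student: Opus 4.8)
The plan is to show that the vertex set $V^{(i)}_{\le \dtres}$, equipped with the tree edges of $T^{(i)}$ restricted to it, is connected by exhibiting, for every $\id \in V^{(i)}_{\le \dtres}$, the unique $T^{(i)}$-path from $\id_x$ to $\id$ and arguing that this whole path stays inside $V^{(i)}_{\le \dtres}$. This is the standard ``prefix of a shortest path tree is again a shortest path tree'' observation, but we must read it off the guarantees of \Cref{thm:sssp} as stated. Recall $(T^{(i)},\tilde d^{(i)}) \leftarrow \alg{SSSP}(G^{(i)},\ell^{(i)},\id_x,\eps)$, so $T^{(i)}$ is an $\id_x$-source $(1+\eps)$-approximate shortest path tree on the connected graph $G^{(i)}$ (connectedness from \Cref{lem:contract}), and $\tilde d^{(i)}(\cdot) = \dist_{T^{(i)}}(\id_x,\cdot)$ is exactly the tree distance.

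The key step is monotonicity of tree distance along root paths: since $\tilde d^{(i)}$ is literally the distance function \emph{of the tree} $T^{(i)}$, for any node $\id$ and any ancestor $\id'$ of $\id$ on the $\id_x$--$\id$ path in $T^{(i)}$, we have $\tilde d^{(i)}(\id') \le \tilde d^{(i)}(\id)$, because $\ell^{(i)}$ is nonnegative and the $\id_x$--$\id'$ subpath is a sub-path of the $\id_x$--$\id$ path in the tree. (Here one uses that $\ell^{(i)}(\cdot) = w^{(i)}(\cdot) \ge 0$ by the initialization and the fact that weight updates only multiply by $(1+\eps) > 0$.) Hence if $\id \in V^{(i)}_{\le \dtres}$, i.e.\ $\tilde d^{(i)}(\id) \le \dtres$, then every node on the tree path from $\id_x$ to $\id$ also has $\tilde d^{(i)} \le \dtres$ and thus lies in $V^{(i)}_{\le \dtres}$. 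In particular $\id_x \in V^{(i)}_{\le\dtres}$ (its distance is $\ell^{(i)}(\id_x) = w^{(i)}(x) = 0 \le \dtres$, noting $\dtres \ne +\infty$ in the case of interest and $\dtres \ge 0$ since $\deg^{(i)}_{<}(0) = 0 \le 10kr\mu$ forces $\dtres \ge 0$). Therefore $T^{(i)}[V^{(i)}_{\le\dtres}]$ contains the root-to-$\id$ path for every $\id$ in the set, so all such $\id$ are connected to $\id_x$ within $T^{(i)}[V^{(i)}_{\le\dtres}]$, establishing connectivity.

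The only mild subtlety — and the place where care is needed rather than a genuine obstacle — is making sure the objects are the ones we think they are: that $\tilde d^{(i)}$ returned by $\alg{SSSP}$ is the \emph{exact} tree-distance in $T^{(i)}$ (this is exactly what \Cref{thm:sssp} asserts, $d(\cdot) = \dist_T(s,\cdot)$), and that $T^{(i)}$ is a genuine spanning tree of the connected graph $G^{(i)}$ so that each node has a well-defined unique path to the root. Both follow directly from \Cref{thm:sssp} together with the connectedness of $G^{(i)}$ from \Cref{lem:contract}, under the standing assumption that all SSF operations are correct (which is what makes $\alg{Contract}$'s output well-defined). No probabilistic argument is needed here; the statement is purely deterministic given that assumption.
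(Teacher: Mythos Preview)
Your proposal is correct and follows essentially the same approach as the paper: both use that $\tilde d^{(i)}$ is the exact tree distance in $T^{(i)}$ (from \Cref{thm:sssp}) together with nonnegativity of $\ell^{(i)}$ to conclude that the root-to-$\id$ path in $T^{(i)}$ has monotone distances and hence stays inside $V^{(i)}_{\le\dtres}$. The only cosmetic difference is that the paper phrases it as a proof by contradiction (assume some $\id'$ on the path falls outside, derive $\tilde d^{(i)}(\id')\le\dtres$), whereas you argue directly.
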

    According to \Cref{lem:Tiledtressconnected,eq1}, we can call (see \Cref{obs:tree})
    
    \begin{equation} \label{eq:hatT}\hat{T}^{(i)}\leftarrow \alg{Tree}(T^{(i)}[V^{(i)}_{\le\dtres}],\sigma^{\deg},\id_x,5{k r \mu})\end{equation}
    
    According to \Cref{obs:tree}, there are two cases. We will define the vertex subset $V^{(i)}_{H,\le\dtres}\subseteq V(H)$ differently in two cases. We can distinguish the two cases by computing $\sigma^{\deg}(V(\hat{T}^{(i)}))$.

    \begin{description}
        \item[Case 1.] Suppose $5{k r \mu}\le \sigma^{\deg}(V(\hat{T}^{(i)}))\le 10{k r \mu}$. We let
        \begin{equation} \label{eq:step4_c1}V^{(i)}_{H,\le\dtres}\leftarrow\left(\bigcup\cD^{(i)}.\alg{Components}(V(\hat{T}^{(i)}))\right)\cup V^{(i)}_{H,<\dtres}\end{equation}
        \item[Case 2.] Suppose $\sigma^{\deg}(V(\hat{T}^{(i)}))< 5{k r \mu}$ and there is a vertex $\id^{(i)}_{\big}\in V^{(i)}_{\le\dtres}$ with $\sigma^{\deg}(\id^{(i)}_{\big})>5{k r \mu}$ such that $\id^{(i)}_{\big}$ is adjacent to $\hat{T}^{(i)}$ in $T^{(i)}$. According to \Cref{lem:contract} (2), an edge connecting $\id^{(i)}_{\big}$ to $\hat{T}^{(i)}$ corresponds to an edge $(a^{(i)}_{\big},b^{(i)}_{\big})$ in the original graph $G$ such that $b^{(i)}_{\big}\in \cD(\id^{(i)}_{\big})$.
        
        We let
        \begin{equation}\label{eq:step4_c2}
            T^{(i)}_{\big}\leftarrow \cD^{(i)}.\alg{Tree}(\id^{(i)}_{\big},b^{(i)}_{\big},2{k r \mu}),
        \end{equation}

        and
        \begin{equation}\label{eq:step4_c2_s2}V^{(i)}_{H,\le\dtres}\leftarrow \left(\bigcup\cD^{(i)}.\alg{Components}(V(\hat{T}^{(i)}))\right)\cup V(T^{(i)}_{\big})\cup V^{(i)}_{H,<\dtres}.\end{equation}
        
    \end{description}

    After getting $V^{(i)}_{H,\le\dtres}$, we define the graph $G^{(i)}_{H,\le\dtres}$ as the subgraph of $H$ with edge set
    \[E^{(i)}_{H,\le\dtres}\leftarrow \bigcup_{v\in V^{(i)}_{H,\le\dtres}}\delta_G(v)\]

    The following lemma shows the important property of $G^{(i)}_{H,\le\dtres}$.
    \begin{lemma}\label{lem:ViHledtres}
        Suppose all the SSF operations are correct, then $G^{(i)}_{H,\le\dtres}$ is connected and we have
        
        \[2{k r \mu}\le \deg_G\left(V^{(i)}_{H,\le\dtres}\right)\le 20{k r \mu}.\]
    \end{lemma}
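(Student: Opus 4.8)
The plan is to verify the two claimed properties — connectivity of $G^{(i)}_{H,\le\dtres}$ and the degree bound $2kr\mu \le \deg_G(V^{(i)}_{H,\le\dtres}) \le 20kr\mu$ — by separately analyzing the two cases of \Cref{obs:tree} that produce $V^{(i)}_{H,\le\dtres}$. The key structural input is that $V^{(i)}_{H,\le\dtres}$ is always the union of $V^{(i)}_{H,<\dtres}$ (which, by \Cref{lem:ViHdtress}, has $\deg_G \le 10kr\mu$ and induces a connected subgraph of $H$) together with additional vertex sets coming from $\hat T^{(i)}$ (and in Case 2, also $T^{(i)}_{\big}$), each of which expands a tree living inside $T^{(i)}$.

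For connectivity, I would argue as follows. Each contracted vertex $\id \in V^{(i)}$ corresponds, via \Cref{lem:contract}(2), to a connected component $\cD^{(i)}(\id)$ of $H - E_{\not=1}$, and an edge of $G^{(i)}$ (hence of the tree $T^{(i)}$, and of the subtree $\hat T^{(i)}$) certifies an actual edge of $H$ between the two corresponding components. Therefore $\bigcup \cD^{(i)}.\alg{Components}(V(\hat T^{(i)}))$ is connected in $H$: it is a union of connected vertex-blobs glued along tree edges of $\hat T^{(i)}$. The same reasoning shows this blob-union is connected to $V^{(i)}_{H,<\dtres}$, since $\hat T^{(i)}$ is a subtree of $T^{(i)}[V^{(i)}_{\le\dtres}]$ which contains $\id_x$, and $\cD^{(i)}(\id_x)$ lies in $V^{(i)}_{H,<\dtres}$ (as $x$ has distance $0 < \dtres$). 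In Case 2, $T^{(i)}_{\big}$ is a tree inside the single component $\cD^{(i)}(\id^{(i)}_{\big})$ containing $b^{(i)}_{\big}$, so $V(T^{(i)}_{\big})$ is connected in $H$ and attaches to the rest through the $G$-edge $(a^{(i)}_{\big}, b^{(i)}_{\big})$ corresponding to the $T^{(i)}$-edge between $\id^{(i)}_{\big}$ and $\hat T^{(i)}$. Thus $V^{(i)}_{H,\le\dtres}$ is connected in $H$, and since $G^{(i)}_{H,\le\dtres}$ is the subgraph of $H$ carrying all edges incident to these vertices, it contains a spanning connected subgraph on this vertex set and is connected.

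For the degree bound, the lower bound is the easier direction: in Case 1, $\sigma^{\deg}(V(\hat T^{(i)})) \ge 5kr\mu$ and these degree-sums translate exactly (since $\sigma^{\deg}(\id) = \deg_G(\cD^{(i)}(\id))$ and the components are disjoint) into $\deg_G$ of a subset of $V^{(i)}_{H,\le\dtres}$, giving $\ge 5kr\mu \ge 2kr\mu$; in Case 2, $T^{(i)}_{\big}$ is produced by $\cD^{(i)}.\alg{Tree}(\cdot,\cdot,2kr\mu)$ with the guarantee $\sigma(\mathcal{D}(\id^{(i)}_{\big})) = \sigma^{\deg}(\id^{(i)}_{\big}) > 5kr\mu \ge 2\cdot 2kr\mu$, so \Cref{lem:dynamicspanningforest}'s $\alg{Tree}$ guarantee forces $\deg_G(V(T^{(i)}_{\big})) \ge 2kr\mu$ (we land in the first alternative of the $\alg{Tree}$ postcondition since $\sigma^{\deg}(\id^{(i)}_{\big})$ exceeds $2q$). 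For the upper bound I would add up: $\deg_G(V^{(i)}_{H,<\dtres}) \le 10kr\mu$ by \Cref{lem:ViHdtres}; $\deg_G$ of the blobs of $\hat T^{(i)}$ equals $\sigma^{\deg}(V(\hat T^{(i)})) \le 10kr\mu$ by \Cref{obs:tree}; and in Case 2, $\deg_G(V(T^{(i)}_{\big})) \le 2\cdot 2kr\mu = 4kr\mu$ by the $\alg{Tree}$ guarantee (first alternative) — note the second alternative of $\alg{Tree}$ cannot deteriorate this bound since it only gives $\sigma^{\deg}(V(T^{(i)}_{\big})) < 2kr\mu$. Summing the three contributions gives at most $10kr\mu + 10kr\mu + 4kr\mu = 24kr\mu$; I would need to tighten constants (the overlaps with $V^{(i)}_{H,<\dtres}$ only help) to reach the stated $20kr\mu$, most naturally by observing that the blobs of $\hat T^{(i)}$ lie inside $V^{(i)}_{\le\dtres}$ and double-counting with $V^{(i)}_{H,<\dtres}$ is avoided, or by being slightly more careful that in Case 1 there is no $T^{(i)}_{\big}$ term at all (so the bound is $10kr\mu + 10kr\mu = 20kr\mu$) and in Case 2 the $\hat T^{(i)}$ blob-sum is actually $< 5kr\mu$, yielding $10kr\mu + 5kr\mu + 4kr\mu < 20kr\mu$.

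The main obstacle I anticipate is the bookkeeping in the upper bound: making sure the constant really comes out to $20kr\mu$ rather than a larger multiple, which requires using the case distinction to know precisely which of the $\hat T^{(i)}$ / $T^{(i)}_{\big}$ terms are present and how large each can be, rather than naively summing worst cases. A secondary subtlety is justifying that all invocations of $\cD^{(i)}.\alg{Tree}$ and $\alg{Tree}$ meet their precondition $\sigma(\cdot) \ge 2q$ (for the $\id^{(i)}_{\big}$ call this is exactly why Case 2 requires $\sigma^{\deg}(\id^{(i)}_{\big}) > 5kr\mu$), and that we are always in the first, "good" alternative of the $\alg{Tree}$ postcondition whenever we want a two-sided bound — which holds precisely because the relevant weight total strictly exceeds $2q$.
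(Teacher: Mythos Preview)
Your approach matches the paper's: split into the two cases of \Cref{obs:tree}, argue connectivity via \Cref{lem:contract}(2) and the shared vertex $x$, and obtain the degree bounds by summing $\sigma^{\deg}$ over the pieces. The upper-bound bookkeeping you arrive at in your last paragraph (Case 1: $10kr\mu + 10kr\mu$; Case 2: $10kr\mu + 5kr\mu + 4kr\mu$) is exactly how the paper reaches $20kr\mu$.

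There is one genuine error. Your justification for why $\cD^{(i)}.\alg{Tree}(\id^{(i)}_{\big}, b^{(i)}_{\big}, 2kr\mu)$ lands in the first alternative of the $\alg{Tree}$ postcondition --- ``the relevant weight total strictly exceeds $2q$'' --- is wrong. The inequality $\sigma^{\deg}(\id^{(i)}_{\big}) > 5kr\mu \ge 2q$ is merely the \emph{precondition} for invoking $\alg{Tree}$; it does not preclude alternative 2 (a small tree adjacent to a single heavy vertex). The paper's argument is different and essential: every vertex $v \in V_{\inner}$ satisfies $\deg_G(v) \le 5\mu$ by the input guarantee of \Cref{lem:localvertexcut}, and since $\cD^{(i)}$ was initialized with $\sigma = \deg_G$, no single vertex can have $\sigma(v) > q = 2kr\mu$. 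This makes alternative 2 of \Cref{lem:dynamicspanningforest} impossible, forcing $2kr\mu \le \deg_G(V(T^{(i)}_{\big})) \le 4kr\mu$. Without this observation your lower bound in Case 2 does not go through.
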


    Uniformly at random sample an edge from $E^{(i)}_{H,\le\dtres}$ and choose an arbitrary one of its endpoints in $V^{(i)}_{H,\le\dtres}$, we get a vertex $v^{(i)}_{H}$.

    Define a weight function $\tilde{w}^{(i)}$ restriction on $V(G^{(i)}_{H,\le\dtres})$ with the weight of $v^{(i)}_{H}$ set to be zero

    \[
        \tilde{w}^{(i)}(u) :=
        \begin{cases}
        w^{(i)}(u) & \text{if } u\in V(G^{(i)}_{H,\le\dtres})-\{v^{(i)}_{H}\} \\
        0 & \text{if }u=\{v^{(i)}_{H}\} \\
        \end{cases}
    \]
    
    According to \Cref{lem:ViHledtres}, we can run \Cref{thm:sssp} on the graph $G^{(i)}_{H,\le\dtres}$ with length function as the weight function $\tilde{w}^{(i)}$ to get an approximate distance.

    \[(T^{(i)}_{H,\le\dtres},\tilde{d}_{H,\le\dtres})\leftarrow \alg{SSSP}(G^{(i)}_{H,\le\dtres},x,\tilde{w}^{(i)},\eps)\]

    Let $P^{(i)}$ be the path on $T^{(i)}_{H,\le\dtres}$ from $x$ to $v^{(i)}_{H}$ excluding the last vertex $v^{(i)}_{H}$.

    \item[Step 5 (weight updates).] Let

    \[\tW^{(i)}=\sum_{\substack{v\in V\\ \tilde{w}^{(i)}(v)\not=1}}\tilde{w}(v)\]

    be the summation of the weights of all non-trivial vertices. 
    If
    \[\frac{\tilde{w}^{(i)}(P^{(i)})}{\tW^{(i)}}\ge \frac{1}{k-0.6}\]

    \textbf{return} all the non-zero values of
    \[
        C(u) :=
        \begin{cases}
        0 & \text{if } \tilde{w}^{(i)}(u)\le 1\\
        \frac{\tilde{w}^{(i)}(u)}{\tW^{(i)}}\cdot (k-0.6)\cdot (1+5\eps) & \text{if } \tilde{w}^{(i)}(u)>1 \\
        \end{cases}
    \]
    as the fractional vertex cut with a vertex $\tau$, which is chosen differently upon the cases. $\tau$ is an arbitrary vertex in $V-N_{G}[V_{\inner}]$ if $P^{(i)}$ is from the Step 3, or $\tau = v_{H}^{(i)}$ if $P^{(i)}$ is from Step 4. Notice that we only need a non-trivial representation of $\tilde{w}^{(i)}$ to output non-zero entries of $C$.

    Otherwise, update the weights for every $v\in V(P^{(i)})$ by

    \[w^{(i+1)}(v)\leftarrow (1+\eps)\cdot w^{(i)}(v)\]

\end{description}

\begin{remark} \label{rmk:oblivious}
    Note that the sensitivity data structure \(\cD^{(i)}\) is only accessed during iteration \(i\). In particular, the update operation \(\cD^{(i)}.\alg{Fail}\) is applied to an edge set that is independent of the random bits used in the initialization of \(\cD^{(i)}.\alg{Init}\). This independence guarantees the \emph{with high probability} correctness of SSF operations, as required for analysis against oblivious adversaries.
\end{remark}

\subsection{Missing Proofs}\label{subsec:localcutmissingproofs}
In this section, we will provide the missing proofs of the last section. We first show a basic fact about the algorithm.
\begin{claim}\label{cla:basicw}
    Throughout the algorithm, for every iteration, $w^{(i)}(x)=\ell^{(i)}(\id_x)=0$ for every iteration. For every vertex $v\in V-N_G[V_\inner]$, we have $w^{(i)}(v)=1$. $w^{(i)}$ is non-decreasing as $i$ increases, this implies that any path containing at least $2$ vertices starting from $x$ has length at least $n^3$ with respect to $w^{(i)}$. 
\end{claim}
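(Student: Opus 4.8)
The plan is to prove all four assertions of \Cref{cla:basicw} simultaneously by induction on the iteration index $i$: the \textbf{Initialization} step supplies the base case $i=1$, and the weight update in Step 5 drives the inductive step. One point to keep in mind throughout is that $\alg{LocalCuts}$ is only invoked with $x\in V_{\inner}$ (this is how it is called inside \Cref{thm:main-detail}), so $x$ is a vertex of $H$ and $\id_x$ is well defined.

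For the base case I would simply read off the definition of $w^{(1)}$: it gives $w^{(1)}(x)=0$ directly, and since $x\in V_{\inner}$ implies $N_G[x]\subseteq N_G[V_{\inner}]$, every $v\in V-N_G[V_{\inner}]$ avoids $\{x\}\cup N_G(x)$ and hence has $w^{(1)}(v)=1$. For $\ell^{(1)}(\id_x)=0$, note that $x$ is non-trivial with respect to $w^{(1)}$ (weight $0\ne 1$), so it belongs to $V_{\ne 1}$ inside $\alg{Contract}$, $\id_x\in V^{(1)}$, and $\ell^{(1)}(\id_x)=w^{(1)}(x)=0$ by the construction of $\ell^{(1)}$; this observation also discharges the ``$V_{\inner}$ has at least one non-trivial vertex'' hypothesis of \Cref{lem:contract}, which is needed to make $\ell^{(i)}$ meaningful in every round.

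For the inductive step, assume the statement through iteration $i$ (and that the algorithm continues past Step 5 of iteration $i$, else there is nothing more to prove). The update rule is $w^{(i+1)}(v)=(1+\eps)w^{(i)}(v)$ for $v\in V(P^{(i)})$ and $w^{(i+1)}(v)=w^{(i)}(v)$ otherwise. I would first note, by a one-line sub-induction, that all weights stay nonnegative (the initial values $0,n^3,1$ are nonnegative and scaling by $1+\eps>1$ preserves this), so $w^{(i+1)}\ge w^{(i)}$ pointwise, which is monotonicity; then $w^{(i+1)}(x)=0$ because $x$ is either untouched or scaled, and $(1+\eps)\cdot 0=0$; and $\ell^{(i+1)}(\id_x)=0$ follows exactly as in the base case.

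The one step that actually requires an argument is $V(P^{(i)})\subseteq N_G[V_{\inner}]$, from which the invariant $w^{(i+1)}(v)=1$ on $V-N_G[V_{\inner}]$ is immediate, and I expect this bookkeeping through the various subgraph definitions to be the main obstacle. I would case on how $P^{(i)}$ arises. When $P^{(i)}$ comes from Step 3 it lies on the spanning tree $T^{(i)}_{\aug}$ of $G^{(i)}_{\aug}$, whose vertex set is $V^{(i)}_{H,<\dtres}\cup V^{(i)}_{\outer}$; \Cref{lem:contract}(2) puts $V^{(i)}_{H,<\dtres}\subseteq V(H)=V_{\inner}$, and by definition $V^{(i)}_{\outer}=N_G(V^{(i)}_{H,<\dtres})\cap(V-V_{\inner})\subseteq N_G(V_{\inner})$, so the union lies in $V_{\inner}\cup N_G(V_{\inner})=N_G[V_{\inner}]$. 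When $P^{(i)}$ comes from Step 4 it lies on $T^{(i)}_{H,\le\dtres}$, a subgraph of $G^{(i)}_{H,\le\dtres}$, which is declared to be a subgraph of $H=G[V_{\inner}]$, so its vertices lie in $V_{\inner}$. Finally, the ``length $\ge n^3$'' consequence is a direct corollary: a path on at least two vertices starting at $x$ has second vertex $u\in N_G(x)$, so $w^{(i)}(P)\ge w^{(i)}(u)\ge w^{(1)}(u)=n^3$ by nonnegativity and monotonicity of the weights.
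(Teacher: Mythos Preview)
Your proposal is correct and follows the same approach as the paper's proof: both argue that $w^{(i)}(x)$ stays at $0$ under multiplicative updates, that $x$ being non-trivial forces $\ell^{(i)}(\id_x)=w^{(i)}(x)$, that $V(P^{(i)})\subseteq N_G[V_{\inner}]$ in both Step~3 and Step~4 (so weights outside $N_G[V_{\inner}]$ never change), and that monotonicity together with the initial $n^3$ on $N_G(x)$ yields the length lower bound. Your write-up is simply more explicit---you spell out the induction and the case analysis on $V^{(i)}_{\aug}$ versus $G^{(i)}_{H,\le\dtres}$---whereas the paper dispatches each point in one sentence.
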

\begin{proof}
    We have $w^{(1)}(x)=0$ in the first iteration, and every further iteration only applies a multiplicative factor to the weight of $x$. Moreover, since $x$ is a non-trivial vertex, according to \Cref{lem:contract}, $\{x\}$ is always a singleton connected component in the data structure $\cD^{(i)}$. So $\ell^{(i)}(\id_x)=w^{(i)}(x)=0$.

    For every vertex $v\in V-N_G[V_\inner]$, $v$ can never be included in $V(P^{(i)})$ according to the definition of $P^{(i)}$ in Step 3 and Step 4. Since we only update the weight of the vertices in $V(P^{(i)})$ in Step 5, we have $w^{(i)}(v)=w^{(1)}(v)=1$.

    It is easy to see that $w^{(i)}$ is non-decrease since the multiplicative factor $(1+\eps)\ge 1$. Moreover, since initially we have $w^{(1)}(v)=n^3$ for every $v\in N_G(x)$, every path that contain at least $2$ vertices starting from $x$ has length at least $n^3$ with respect to $w^{(i)}$. 
\end{proof}

\begin{proof}[Proof of \Cref{lem:contract}]
    We first show that $V'$ contains all identifiers of connected components of $H-E_{\not=1}$. According to the input guarantee of \Cref{lem:localvertexcut}, $G[V_{\inner}]=H$ is connected. Thus, if $K$ is a connected component of $H-E_{\not=1}$, either $K=V_{\inner}$, in which case $K$ has identifier $\id_u$ for an arbitrary $u\in V_{\inner}$ (notice that $V_{\not=1}$ is non-empty since $V_{\inner}$ contains at least one non-trivial vertex with respect to $w$); or $K\subset V_{\inner}$, in which case $K$ is adjacent to an edge in $E_{\not=1}$ on $G$, which implies that there is a vertex $u\in K\cap V_{\not=1}$, so $\id_u$ is the identifier of $K$ as a vertex in $V'$. 
    
    Notice that each edge $(\id_u,\id_v)$ of $G'$ represent an edge $(u,v)\in E_{\not=1}$. According to the definition of $E_{\not=1}$, we can assume $u$ is a non-trivial node. Thus, $\{u\}$ becomes a single connected component in $H-E_{\not=1}$ since all adjacent edges of $u$ are in $E_{\not=1}$. So $u,v$ are in different connected components. Moreover, edge connected component is adjacent to at least one edge in $E_{\not=1}$, so edges in $G'$ represent edges connecting different connected components of $H-E_{\not=1}$.
    
    Then we show that for every path $P$ on $H$, $\CT{P}{G'}$ is a path on $G'$. Let $(u,v)\in E(P)$, if $u,v$ are in the same connected component of $H-E_{\not=1}$, then $\id_u=\id_v$ and they corresponds to the same vertex $\id_u$ on $\CT{P}{G}$. If $u,v$ are in different connected components, then according to the definition of $E'$, $(\id_u,\id_v)$ is an edge in $G'$. Also, each vertices in $V(G')$ appears at most once from the given condition. Thus, $\CT{P}{G'}$ is a path on $G'$. 

    According to the input guarantees in \Cref{lem:localvertexcut}, we have that $G[V_{\inner}]=H$ is connected. This there is a $(s,t)$-path in $H$ for every $s,t\in V_{\inner}$. Since $\CT{P}{G'}$ is a path on $G'$ for any path $P$ on $H$ and every vertex in $G'$ corresponds to a connected component in $H-E_{\not=1}$, $G'$ is connected. 

    Moreover, the length has the upper bound
    \[\ell(\CT{P}{G'})=\sum_{\id \in V(\CT{P}{G'})}\ell(\id)\le \sum_{\id \in V(\CT{P}{G'}),u\in\cD(\id)}w(u)\le w(P)\]

    As for the lower bound, notice that if $w(u)>1$ for some $u\in V_{\inner}$, then all edge adjacent to $u$ is added to $E_{\not=1}$, so $\{u\}$ becomes a connected component of $H-E_{<1}$ and is corresponding to a vertex in $G'$. Thus, $w(u)>1$ implies that $\ell(\id_u)=w(u)$. $\ell(\id)<w(u)$ for some $u\in\cD^{(i)}(\id_u)$ only happens when $w(u)=1$. There are at most $n$ such $u$ since $P$ is a simple path, so the lemma follows.
    
\end{proof}

\begin{proof}[Proof of \Cref{lem:ViHdtress}]
    According to the definitions and the assumption that all the SSF operations are correct, we have
    \begin{align*}
        \deg_G(V^{(i)}_{H,<\dtres})&=\sum_{\id\in V^{(i)}_{<\dtres}}\deg_G(\cD^{(i)}(\id))\\
        &=\sum_{\id\in V^{(i)}_{<\dtres}}\sigma^{\deg}(\id)\\
        &=\deg^{(i)}_<(\dtres)\le 10{k r \mu}
    \end{align*}
    According to \Cref{lem:contract} (2), to show $G[V^{(i)}_{H,<\dtres}]$ is connected, it suffices to show $G^{(i)}[V^{(i)}_{<\dtres}]$ is connected. Notice that $\tilde{d}^{(i)}(\id_x)=0$ since $\ell^{(i)}(x)=0$ according to \Cref{cla:basicw}, and $\cD^{(i)}(\id_x)=\{x\}$, so $\id_x\in V^{(i)}_{<\dtres}$. We will show that every vertex in $V^{(i)}_{<\dtres}$ is reachable from $\id_x$ in $T^{(i)}[V^{(i)}_{<\dtres}]$. Suppose to the contrary, there is a vertex $\id \in V^{(i)}_{<\dtres}$ such that there is no path from $\id_x$ to $\id$ on $T^{(i)}[V^{(i)}_{<\dtres}]$. Let path $P$ be the path from $\id_x$ to $\id$ on $T^{(i)}$, there must exists a vertex $\id'\in V(P)$ such that $\id'\not\in V^{(i)}_{<\dtres}$. However, according to \Cref{thm:sssp}, $\tilde{d}^{(i)}(\id')$ is the length of the path from $\id_x$ to $\id'$ on the tree $T^{(i)}$, which is a subpath of the path from $\id_x$ to $v$ on the tree $T^{(i)}$. Since the lengths of vertices are non-negative, we get $\tilde{d}^{(i)}(\id')\le \tilde{d}^{(i)}(v)<\dtres$, contradicting the fact that $\id'\not\in V^{(i)}_{<\dtres}$.
\end{proof}

\begin{proof}[Proof of \Cref{lem:Tiledtressconnected}]
    The proof follows similarly to the previous proof. According to \Cref{cla:basicw}, $\ell(\id_x)=0$ and $\id_x\in V^{(i)}_{\le\dtres}$. We will show that every vertex in $V^{(i)}_{\le\dtres}$ is reachable from $\id_x$. Suppose to the contrary, there is a vertex $\id \in V^{(i)}_{\le\dtres}$ such that there is no path from $\id_x$ to $\id$ on $T^{(i)}[V^{(i)}_{\le\dtres}]$. Let path $P$ be the path from $\id_x$ to $\id$ on $T^{(i)}$, there must exists a vertex $\id'\in V(P)$ such that $\id'\not\in V^{(i)}_{\le\dtres}$. However, according to \Cref{thm:sssp}, $\tilde{d}^{(i)}(\id')$ is the length of the path from $\id_x$ to $\id'$ on the tree $T^{(i)}$, which is a subpath of the path from $\id_x$ to $\id$ on the tree $T^{(i)}$. Since the lengths of vertices are non-negative, we get $\tilde{d}^{(i)}(\id')\le \tilde{d}^{(i)}(v)\le \dtres$, contradicting the fact that $\id'\not\in V^{(i)}_{\le\dtres}$.
\end{proof}
    
    \begin{proof}[Proof of \Cref{lem:ViHledtres}]
        There are two cases for the definition of $V^{(i)}_{H,\le\dtres}$ in Step 4 of the algorithm.
        
        If it is case 1, we denote
        
        \[\tilde{V}^{(i)}_{H,\le\dtres}=\bigcup_{\id\in V(\hat{T}^{(i)})}\cD^{(i)}(\id)\]

        If it is case 2, we denote 
        \[\tilde{V}^{(i)}_{H,\le\dtres}=\bigcup_{\id\in V(\hat{T}^{(i)})\bigcup V(T^{(i)}_{\big})}\cD^{(i)}(\id)\]
        
        Notice that if all the SSF operations are correct, then according to the definition of $V^{(i)}_{H,\le\dtres}$, we have
        \[V^{(i)}_{H,\le\dtres}=\tV^{(i)}_{H,\le\dtres}\cup V^{(i)}_{H,<\dtres}\]
        
        Since $\hat{T}^{(i)}$ is a tree, $\hat{T}^{(i)}$ is connected. According to \Cref{lem:contract} (2), if it is Case 1, then $H[\tilde{V}^{(i)}_{H,\le\dtres}]$ is connected. According to \Cref{lem:ViHdtress}, the part $V^{(i)}_{H,<\dtres}$ is connected and shares a common vertex $x$ to $\tilde{V}^{(i)}_{H,\le\dtres}$, so $G^{(i)}_{H,\le\dtres}$ is connected. In Case 2, the only vertices added to $\tV^{(i)}_{H,\le\dtres}$ are from a tree of $H$, which has an edge connecting to $a^{(i)}_\big\in \tV^{(i)}_{H,\le\dtres}$, so $G^{(i)}_{H,\le\dtres}$ is also connected if it is Case 2. Next, we argue the size bound.
        
        If $\deg_G\left(V^{(i)}_{H,\le\dtres}\right)$ is from Case 1, i.e., if we have 
        \[5{k r \mu}\le \sigma^{\deg}(V(\hat{T}^{(i)}))\le 10{k r \mu}\]

        Then according to the definition of $\sigma^{\deg}$ and $V(\hat{T}^{(i)})$, we get
        \begin{align*}
            \deg_G\left(\tV^{(i)}_{H,\le\dtres}\right)&=\sum_{\id\in V(\hat{T}^{(i)})}\deg_G\left(\cD^{(i)}(\id)\right)\\
            &=\sum_{\id\in V(\hat{T}^{(i)})}\sigma^{\deg}(\id)\\
            &=\sigma^{\deg}(V(\hat{T}^{(i)}))\le 10{k r \mu}
        \end{align*}
        If $\deg_G\left(V^{(i)}_{H,\le\dtres}\right)$ is from Case 2, i.e., if we have 
        \[\sigma^{\deg}(V(\hat{T}^{(i)}))< 5{k r \mu}\]
        then 
        \begin{align*}
            \deg_G\left(\tV^{(i)}_{H,\le\dtres}\right)&=\left(\sum_{\id\in V(\hat{T}^{(i)})}\deg_G\left(\cD^{(i)}(\id)\right)\right)+\deg_G(V(T^{(i)}_{\big}))\\
                &=\sigma^{\deg}(V(\hat{T}^{(i)}))+\deg_G(V(T^{(i)}_{\big}))
        \end{align*}

        The first term is between $0$ and $5{k r \mu}$. To bound the second term, notice that from the input guarantee of \Cref{lem:localvertexcut}, we have $\deg_G(v)\le 5\mu $ for every $v\in V_{\inner}$. Also remember that for the data structure $\cD^{(i)}$ has the weight $\sigma=\deg_G(v)$. So $\sigma(v)>2{k r \mu}$ will never happen for any $v\in V_{\inner}$. Thus, when calling \Cref{eq:step4_c2}, according to \Cref{lem:dynamicspanningforest}, we must have

        \[2{k r \mu}\le \deg_G(V(T^{(i)}_{\big}))\le 4{k r \mu}\]

        So we get 
        \[2{k r \mu}\le \deg_G\left(\tV^{(i)}_{H,\le\dtres}\right)\le 9{k r \mu}\]

        According to \Cref{lem:ViHdtress}, we get
        \[2{k r \mu}\le \deg_G\left(V^{(i)}_{H,\le\dtres}\right)\le 20{k r \mu}\]
        
    \end{proof}

\subsection{Correctness}\label{subsec:localcutcorrectness}
In the section, we prove the correctness of \Cref{lem:localvertexcut}. We need to verify two things.

\begin{itemize}
    \item Assuming all the SSF operations are correct, every time the algorithm outputs a fractional cut $C$ with a vertex $\tau$, $C$ is a valid $(x, \tau)$-fractional cut of size at most $k-0.5$.
    \item If there is a $(x,\mu)$-local cut $(L,S,R)$ of size at most $k$ such that $L\subseteq V_{\inner}$, then the algorithm returns a fractional vertex cut $C$ of size at most $k - 0.5$ with probability at least $0.01$. 
\end{itemize}

Recall from our assumptions that $N_G[V_{\inner}]\neq V$, implying the existence of some vertex $t\in V - N_G[V_{\inner}]$. Define a new graph $G'$ by adding edges connecting the vertex $t$ to each vertex in $N_G(V_{\inner})$. The following lemma establishes the equivalence of fractional cuts between graphs $G$ and $G'$.

\begin{lemma}\label{lem:GandGprime}
    Every fractional cut in $G'$ is also a fractional cut in $G$. Moreover, if there is a $(x,\mu)$-local cut $(L,S,R)$ of size less than $k$ in $G$ with $L\subseteq V_{\inner}$, it is also a local cut for $G'$.
\end{lemma}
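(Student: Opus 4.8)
The plan is to treat the two assertions of the lemma separately; each follows quickly from the definitions, and the only point needing genuine care is bookkeeping about which vertices the edges added in forming $G'$ can be incident to.

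For the first assertion, I would use that $G$ and $G'$ share the vertex set $V$ and that $E(G)\subseteq E(G')$, so $G$ is a subgraph of $G'$. Then every $(s,t)$-path of $G$ is also an $(s,t)$-path of $G'$, which gives $\dist_{G',C}(s,t)\le\dist_{G,C}(s,t)$ for every length function $C\colon V\to\mathbb{R}_{\ge 0}$ and every pair $s,t\in V$. Hence if $C$ is a fractional $(s,t)$-cut in $G'$, i.e.\ $C(s)=C(t)=0$ and $\dist_{G',C}(s,t)\ge 1$, then also $\dist_{G,C}(s,t)\ge 1$, so $C$ is a fractional $(s,t)$-cut in $G$ witnessed by the same pair $(s,t)$. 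This settles the first sentence.

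For the second assertion, I would take an $(x,\mu)$-local cut $(L,S,R)$ of size less than $k$ in $G$ with $L\subseteq V_{\inner}$ and verify that the same triple is an $(x,\mu)$-local cut of size less than $k$ in $G'$. The partition $(L,S,R)$ itself, the nonemptiness of $L$ and $R$, the membership $x\in L$, and the size bound $|S|<k$ are all inherited verbatim from $G$, so the only things to check are that $E_{G'}(L,R)=\emptyset$ (so $(L,S,R)$ remains a vertex cut of $G'$) and that $\mu\le\deg_{G'}(L)\le 2\mu$. The edges of $E(G')\setminus E(G)$ are exactly the pairs $(t,v)$ with $v$ a neighbor of $V_{\inner}$ lying outside $V_{\inner}$; and since $t\in V\setminus N_G[V_{\inner}]$ we have $t\notin V_{\inner}$, so both endpoints of every newly added edge lie in $V\setminus V_{\inner}$. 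As $L\subseteq V_{\inner}$, no new edge is incident to a vertex of $L$. Therefore $\deg_{G'}(u)=\deg_G(u)$ for every $u\in L$, whence $\deg_{G'}(L)=\deg_G(L)\in[\mu,2\mu]$, and $E_{G'}(L,R)=E_G(L,R)=\emptyset$. This shows $(L,S,R)$ is an $(x,\mu)$-local cut of size less than $k$ in $G'$.

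I do not expect a real obstacle: the statement is essentially immediate from monotonicity of distances under edge addition, together with the observation that every edge added when forming $G'$ has both endpoints in $V\setminus V_{\inner}$, hence outside $L$, so it changes neither $\deg(L)$ nor the set of edges between $L$ and $R$. The only step that warrants an explicit line is precisely this containment of the new edges' endpoints in $V\setminus V_{\inner}$ — equivalently, that $t$ and all the vertices it gets joined to avoid $L$.
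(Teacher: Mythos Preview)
Your proposal is correct and follows essentially the same approach as the paper: monotonicity of distances under edge deletion for the first part, and for the second part the observation that the newly added edges avoid $L$. Your write-up is in fact more careful than the paper's own proof --- you explicitly verify that both endpoints of each new edge lie in $V\setminus V_{\inner}$ and hence that $\deg_{G'}(L)=\deg_G(L)$, whereas the paper's proof only asserts that $t\in R$ (with a typo) and does not spell out the degree preservation or what happens at the other endpoint of the new edges.
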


\begin{proof}
    First, suppose $C$ is a fractional cut in the graph $G'$. Since the edge set of $G$ is a subset of the edge set of $G'$, distances induced by the fractional cut $C$ cannot decrease when considering $G$ instead of $G'$. Therefore, the fractional cut $C$ remains valid in graph $G$.

    Next, consider a $(x,\mu)$-local cut $(L,S,R)$ in $G$ where $L\subseteq V_{\inner}$. The cut clearly remains a $(x,\mu)$-local cut in $G'$. Additionally, it is also valid in $G'$, because all newly introduced edges in $G'$ are adjacent only to vertex $t$, which by construction does not lie in $V - N_{G}[V_{\inner}] \subseteq  R$ and thus does not affect vertices in $L$ or $S$.
\end{proof}

Thus, it suffices to consider $G'$ instead of $G$. In what follows, when we say $G$, we mean $G'$. 

\paragraph{Correctness: the output is a valid cut.} Assume all the SSF operations are correct. Suppose the algorithm outputs $(C, \tau)$ in step 5 at some iteration $i$. Without loss of generality, we assume $\tau = t$ if $P^{(i)}$ is constructed in Step 3. Recall that $\tau = v_{H}^{(i)}$ when $P^{(i)}$ is constructed in Step 4. We will prove that $C$ is a valid vertex cut of size at most $k-0.6$ in $G$ assuming that all the SSF operations are correct. The proof is based on the following two important lemmas which show that $P^{(i)}$ is always an approximate shortest path.
    
\begin{lemma}\label{lem:step3shortestpath} Assume that all the SSF operations are correct.
    In every iteration, if $P^{(i)}$ is constructed in Step 3, then $P^{(i)}\circ(t)$ is a $(1+3\eps)$-approximate $(x,t)$-shortest path with respect to $w^{(i)}=\tilde{w}^{(i)}$ on $G$.
\end{lemma}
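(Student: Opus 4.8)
The plan is to trace the chain of approximations connecting the three graphs involved: the original graph $G$, the contracted graph $G^{(i)}$, and the augmented graph $G^{(i)}_{\aug}$, and to verify that an approximate $(x,t)$-shortest path in $G$ must be captured by the computation performed in Step~3. First I would set up the key quantities. Let $D = \dist_{G,w^{(i)}}(x,t)$ be the true shortest-path distance in $G$. Since $t \in V - N_G[V_{\inner}]$ and $\{x\}\cup N_G(x) \subseteq V_{\inner}\cup \{$its boundary$\}$, any $(x,t)$-path must leave $V_{\inner}$, i.e.\ pass through some vertex of $V^{(i)}_{\outer} = N_G(V_{\inner})\cap(V-V_{\inner})$, and (using \Cref{cla:basicw}) it must in fact first pass through a vertex of $N_G(x)$, so $D \ge n^3$; this large-distance fact is what makes the additive $n$ loss in \Cref{lem:contract}(4) negligible.

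Next I would argue that $\dtres \ge D$, which is the crux: it guarantees that the relevant portion of the shortest path lies inside the region $V^{(i)}_{H,<\dtres}$ explored by Step~3, so that running \alg{SSSP} on $G^{(i)}_{\aug}$ actually sees it. Consider an exact $(x,t)$-shortest path $P^*$ in $G$ and let $u^*$ be its last vertex before it leaves $V_{\inner}$ for the first time. Every prefix of $P^*$ ending at a vertex $v$ strictly before $u^*$ lies inside $H$, has $w^{(i)}$-length $< D$, and by \Cref{lem:contract}(4) its contracted image $\CT{P^*[x,v]}{G^{(i)}}$ has $\ell^{(i)}$-length $\le w^{(i)}(P^*[x,v]) < D$; hence $\tilde d^{(i)}(\id_v) \le (1+\eps)\cdot(\text{that length}) < (1+\eps)D < \dtres$ would follow provided the total degree of all such clusters stays below $10kr\mu$. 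This degree bound is exactly where the local-cut hypothesis enters: since $(L,S,R)$ is an $(x,\mu)$-local cut with $L \subseteq V_{\inner}$ and $\deg_G(L)\le 2\mu$, any vertex $v$ on $P^*$ strictly before $u^*$ with $w^{(i)}(\id_v)$-distance below $D$ either lies in $L$ (total degree $\le 2\mu$) or has been reached through $S$ — but $P^*$ is a shortest path, so it can enter $S\cup R$ only near its end; I would make this precise to conclude $\deg^{(i)}_{<}(D) \le 10kr\mu$, hence $\dtres \ge D$. (If $\dtres = +\infty$ the argument only simplifies.)

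Given $\dtres \ge D$, the vertex $u^*$ and its successor $v^* \in V^{(i)}_{\outer}$ both lie in $G^{(i)}_{\aug}$, and the prefix $P^*[x,u^*]$ is a path in $G^{(i)}_{\aug}$ of $w^{(i)}$-length $\le D$; so $\dist_{G^{(i)}_{\aug}, w^{(i)}}(x, V^{(i)}_{\outer}) \le D$. By \Cref{thm:sssp}, $\tilde d^{(i)}_{\aug}(v^{(i)}_*) \le (1+\eps)D \le \dtres$ (using $\dtres\ge D$, and noting $(1+\eps)D \le \dtres$ needs a slightly more careful version of the degree count above — I would phrase the bound as $\deg^{(i)}_{<}((1+\eps)D) \le 10kr\mu$ to get this clean inequality), so $P^{(i)}$ is indeed taken from Step~3. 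Then $P^{(i)}\circ(t)$ has $w^{(i)}$-length $= \tilde d^{(i)}_{\aug}(v^{(i)}_*) + w^{(i)}(t) = \tilde d^{(i)}_{\aug}(v^{(i)}_*) + 1 \le (1+\eps)D + 1 \le (1+\eps)D + \eps n^3 \le (1+3\eps)D$, using $D \ge n^3$ and $\eps = 1/(10k)$. Finally, since in Step~3 we have $\tilde w^{(i)} = w^{(i)}$, the claim follows. The main obstacle I expect is the second paragraph: pinning down the degree inequality $\deg^{(i)}_{<}((1+\eps)D) \le 10kr\mu$ from the local-cut structure — this requires carefully arguing that a shortest path, restricted to the portion that is cheaper than roughly the $(x,t)$-distance, stays within $L$ together with a bounded ($\le (k-1)\cdot 5\mu$) amount of $S$-volume, and that contraction does not inflate cluster degrees beyond this.
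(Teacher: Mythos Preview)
Your plan has a genuine gap: the lemma does \emph{not} assume the existence of an $(x,\mu)$-local cut $(L,S,R)$. \Cref{lem:step3shortestpath} is a \emph{correctness} lemma: it guarantees that whenever the algorithm returns via Step~3, the returned path is a valid approximate shortest path, regardless of whether a local cut exists. (This is what allows the paper to say ``If $\kappa(G)\ge k$, the algorithm always outputs $\bot$''.) Your argument for $\dtres \ge D$ invokes the local-cut hypothesis and so does not prove the stated lemma.

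Even with the local-cut hypothesis added, the degree bound $\deg^{(i)}_{<}\bigl((1+\eps)D\bigr) \le 10kr\mu$ is not something you can expect to hold. The quantity $\deg^{(i)}_{<}(d)$ sums $\deg_G$ over \emph{every} cluster in $G^{(i)}$ at approximate distance $< d$ from $\id_x$, not just the clusters visited by the shortest path $P^*$. A short two-hop path $x\to s\to r$ with $s\in S\cap N_H(x)$ and $r\in R$ can land in an enormous trivial cluster in $R$ at distance roughly $n^3 + 1$, which is typically below $D$; that single cluster can blow up $\deg^{(i)}_{<}(D)$ far beyond $10kr\mu$. So the ``shortest-path stays in $L\cup S$'' intuition does not transfer to a bound on $\deg^{(i)}_{<}$.

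The paper's approach avoids this entirely and uses the hypothesis ``$P^{(i)}$ is constructed in Step~3'' directly: by the branching rule at the end of Step~3, this hypothesis means $d^{(i)}_{\aug}(v^{(i)}_*) \le \dtres$, hence $w^{(i)}(P^{(i)}\circ(t)) \le \dtres + 1$. Then for an \emph{arbitrary} $(x,t)$-path $\hat P$, look at its first vertex $v_{\far}$ outside $V^{(i)}_{H,<\dtres}$: if $v_{\far}\in V_{\inner}$ then $\tilde d^{(i)}(\id_{v_{\far}})\ge \dtres$, and \Cref{lem:contract}(4) gives $w^{(i)}(\hat P)\ge (1-\eps)\dtres$; if $v_{\far}\notin V_{\inner}$ then $v_{\far}\in V^{(i)}_{\outer}$ and the prefix $\hat P[x,v_{\far}]$ lies in $G^{(i)}_{\aug}$, so the $(1+\eps)$-SSSP there compares it to $d^{(i)}_{\aug}(v^{(i)}_*)$. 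Either way $w^{(i)}(\hat P)\ge (1-2\eps)\,w^{(i)}(P^{(i)}\circ(t))$, using $w^{(i)}(P^{(i)}\circ(t))\ge n^3$ from \Cref{cla:basicw} to absorb the $+1$ and $-n$ terms. No reference to $(L,S,R)$ is needed.
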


\begin{lemma}\label{lem:step4shortestpath} Assume that all the SSF operations are correct.
    In every iteration, if $P^{(i)}$ is constructed in Step 4, then $P^{(i)}\circ(v^{(i)}_H)$ is a $(1+2\eps)$-approximate $(x,v^{(i)}_H)$-shortest path with respect to $\tilde{w}^{(i)}$ on $G$.
\end{lemma}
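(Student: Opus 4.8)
The plan is to sandwich $P^{(i)}\circ(v^{(i)}_H)$ between a true $(x,v^{(i)}_H)$-shortest path of $G$ and the shortest path inside the truncated ball $G^{(i)}_{H,\le\dtres}$. By construction $P^{(i)}\circ(v^{(i)}_H)$ is exactly the root-to-$v^{(i)}_H$ path of the tree $T^{(i)}_{H,\le\dtres}$, so \Cref{thm:sssp} immediately gives that it is a $(1+\eps)$-approximate $(x,v^{(i)}_H)$-shortest path \emph{in the subgraph} $G^{(i)}_{H,\le\dtres}$ with respect to $\tilde w^{(i)}$. Hence the whole lemma reduces to showing that truncating $G$ to $G^{(i)}_{H,\le\dtres}$ changes the $(x,v^{(i)}_H)$-distance by at most a negligible additive $O(n)$:
\[
\dist_{G^{(i)}_{H,\le\dtres},\tilde w^{(i)}}(x,v^{(i)}_H)\le \dist_{G,\tilde w^{(i)}}(x,v^{(i)}_H)+O(n).
\]
Note we cannot afford to lose a factor of $(1+\eps)$ here, since $(1+\eps)^2=1+2\eps+\eps^2>1+2\eps$; but the additive $O(n)$ is harmless. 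Indeed, writing $D=\dist_{G,\tilde w^{(i)}}(x,v^{(i)}_H)$: if $x$ and $v^{(i)}_H$ are adjacent then $D=0$ and the edge $(x,v^{(i)}_H)$ already lies in $E^{(i)}_{H,\le\dtres}$ (because $x\in V^{(i)}_{H,<\dtres}$ by \Cref{lem:ViHdtress} and $v^{(i)}_H\in V_{\inner}$), so the two distances are both $0$; otherwise every $\ge 2$-vertex path out of $x$ meets $N_G(x)$, whose vertices have weight $n^3$ and are never zeroed by $\tilde w^{(i)}$, so \Cref{cla:basicw} gives $D\ge n^3$, and then $(1+\eps)(D+O(n))\le(1+\eps)(1+O(n^{-2}))D\le(1+2\eps)D$.

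To prove the distance-preservation claim I would fix a \emph{simple} $(x,v^{(i)}_H)$-shortest path $Q$ of $G$ with respect to $\tilde w^{(i)}$ and first establish the preliminary bound $D\le\dtres+O(n)$: since $v^{(i)}_H$ lies (by the way $V^{(i)}_{H,\le\dtres}$ was built) in a component of $\cD^{(i)}$ whose approximate distance $\tilde d^{(i)}$ from $\id_x$ is at most $\dtres$, \Cref{thm:sssp} together with \Cref{lem:contract}(4) produces an $H$-path from $x$ into that component of $w^{(i)}$-length at most $\dtres+n$, which extends to $v^{(i)}_H$ inside the component (whose vertices, apart from at most one non-trivial one, are trivial) at cost at most $n$. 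Consequently every vertex on $Q$ is at $G$-distance at most $D\le\dtres+O(n)$ from $x$. I would then argue $Q$ essentially stays inside $G^{(i)}_{H,\le\dtres}$: take the first edge $e=(u,u')$ of $Q$ lying outside $E^{(i)}_{H,\le\dtres}$, so the prefix of $Q$ up to $u$ stays inside $G^{(i)}_{H,\le\dtres}$. Two cases arise. In case (a), $e$ leaves $V_{\inner}$; then projecting the prefix of $Q$ to $G^{(i)}_\aug$ via \Cref{lem:contract} exhibits a short walk to a vertex of $V^{(i)}_\outer$, which is played against the Step-3 case hypothesis $d^{(i)}_\aug(v^{(i)}_*)>\dtres$ and the defining volume threshold of $\dtres$ to derive a contradiction. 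In case (b), $e$ stays inside $V_{\inner}$ but reaches an $H$-vertex not in $V^{(i)}_{H,<\dtres}$ and not captured by $\hat T^{(i)}$ (or $T^{(i)}_{\big}$ in Case~2); here one uses that $T^{(i)}$ is an approximate shortest-path tree, so prefixes of tree paths stay approximately shortest and $T^{(i)}[V^{(i)}_{\le\dtres}]$ is a genuine truncated ball (\Cref{lem:Tiledtressconnected}), to conclude that such a vertex has $w^{(i)}$-distance from $x$ exceeding what $D\le\dtres+O(n)$ allows — invoking $\dtres\ge n^3$ from \Cref{cla:basicw} to absorb the $O(n)$ contraction slack.

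This argument is the mirror image of the proof of \Cref{lem:step3shortestpath}, which treats the complementary situation in which the shortest path does reach the boundary $V^{(i)}_\outer$; I would factor the shared core into a single statement of the form ``the $x$-shortest path to any fixed vertex of $V^{(i)}_{H,\le\dtres}$ is realized, up to additive $O(n)$, inside $G^{(i)}_{H,\le\dtres}$.'' The main obstacle is case (b): ruling out that a shortest path profitably takes one step ``just outside'' the truncated ball $V^{(i)}_{H,\le\dtres}$ and comes back. Handling it cleanly forces one to use all three design choices at once — that $\dtres$ is \emph{exactly} the volume threshold, so everything strictly closer is already inside; that $T^{(i)}$ is an approximate shortest-path tree, so $T^{(i)}[V^{(i)}_{\le\dtres}]$ is connected and its prefixes remain shortest; and that $\hat T^{(i)}$ and $T^{(i)}_{\big}$ are sized so that $V^{(i)}_{H,\le\dtres}$ has volume $\Theta(kr\mu)$ (\Cref{lem:ViHledtres}), large enough that the randomly chosen endpoint $v^{(i)}_H$, together with its shortest path from $x$, cannot escape it.
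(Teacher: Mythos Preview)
Your high-level decomposition---compare the tree path $P^{(i)}\circ(v^{(i)}_H)$ to an arbitrary competitor $\hat P$, and split according to whether $\hat P$ escapes the local ball---is the same as the paper's. The gap is in how you conclude. You reduce the lemma to the additive claim
\[
\dist_{G^{(i)}_{H,\le\dtres},\tilde w^{(i)}}(x,v^{(i)}_H)\;\le\;\dist_{G,\tilde w^{(i)}}(x,v^{(i)}_H)+O(n),
\]
and in case~(b) you try to derive a contradiction from a shortest path $Q$ touching some $u'\in V_{\inner}\setminus V^{(i)}_{H,\le\dtres}$. But all you know about such a $u'$ is $\tilde d^{(i)}(\id_{u'})\ge\dtres$, which through \Cref{lem:contract}(4) and the $(1+\eps)$-approximation of $\tilde d^{(i)}$ only gives $w^{(i)}(Q[x,u'])\ge(1-\eps)\dtres$. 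Since your own preliminary bound is $D\le\dtres+O(n)$, the two are separated by $\eps\cdot\dtres$, not $O(n)$, so no contradiction materializes. In fact the displayed additive claim itself need not hold: a $G$-shortest path may route through a component with $\tilde d^{(i)}$ exactly $\dtres$ that $\hat T^{(i)}$ happened not to include, and be shorter than anything confined to $G^{(i)}_{H,\le\dtres}$ by as much as $\Theta(\eps\cdot\dtres)$. (A smaller issue: in case~(a) you invoke \Cref{lem:contract} to project into $G^{(i)}_\aug$, but $G^{(i)}_\aug$ is a subgraph of $G$, not a contracted graph; one must first argue the prefix lies in $V^{(i)}_{H,<\dtres}$, which your case split on $E^{(i)}_{H,\le\dtres}$ does not directly give.)

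The paper does not try to keep the competitor inside. It first proves the companion inequality (\Cref{lem:dtresPi}) $\dtres\ge(1-\eps)\cdot\tilde w^{(i)}(P^{(i)}\circ(v^{(i)}_H))-O(n)$---essentially your ``$D\le\dtres+O(n)$'' restated for $P^{(i)}$ rather than for $D$---and then in the escaping cases it is content with the weak lower bound $\tilde w^{(i)}(\hat P)\ge(1-\eps)\dtres$. Chaining the two yields $\tilde w^{(i)}(\hat P)\ge(1-\eps)^2\,\tilde w^{(i)}(P^{(i)}\circ(v^{(i)}_H))-O(n)\ge(1-2\eps)\,\tilde w^{(i)}(P^{(i)}\circ(v^{(i)}_H))$. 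So the ``$2$'' in $(1+2\eps)$ is precisely two applications of the SSSP slack---one to bound $|P^{(i)}|$ above by $\dtres$ and one to bound $|\hat P|$ below by $\dtres$---rather than a single additive $O(n)$ as your plan requires.
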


Before we prove the two lemmas, we first show how to use them to prove the output is a valid cut of size $k-0.5$. Let $P^*=P^{(i)}\circ(\tau)$. We will show that $C$ is a fractional $(x,\tau)$-vertex cut. Notice that $C$ already satisfies $C(x)=C(\tau)=0$, for any cases of $\tau$; $t\not\in N[V_\inner]$, according to \Cref{cla:basicw}, we have $C(t)=0$; according to the definition of $\tilde{w}^{(i)}$ we have $C(v^{(i)}_H)=0$; we have $C(x)=0$ according to \Cref{cla:basicw}. Thus, $C$ satisfies the condition that $C(x)=C(\tau)=0$.

Next, we verify the condition of distance with respect to $C$. According to \Cref{lem:step3shortestpath,lem:step4shortestpath}, we have that

\[\dist_{G,\tilde{w}^{(i)}}(x, \tau)\ge (1-4\eps)\cdot \tilde{w}^{(i)}(P^*)\ge \frac{(1-4\eps)\cdot \tW^{(i)}}{k-0.6}\]

The last inequality is due to the return condition for Step 5. According to the definition of $C$, we get

\[\dist_{G,C}(x, \tau)\ge \left(\frac{(1-4\eps)\cdot \tW^{(i)}}{k-0.6}-n\right)\cdot \frac{k-0.6}{\tW^{(i)}}\cdot (1+5\eps)\ge 1\]

The first inequality contains two parts, one is the $-n$ term due to the fact that $C(u)$ decreases from $\tilde{w}^{(i)}(u)$ by $1$ for at most $n$ many trivial vertices, and the multiplicative $\frac{k-0.6}{\tW^{(i)}}\cdot (1+5\eps)$ factor is from the scaling for non-trivial vertices. The second inequality uses the fact that $\tW^{(i)}$ is larger than $n^3$. Now we proved that $C$ is a valid fractional cut. The size of the cut is upper bounded by

\[\sum_{u\in V}C(u)\le \sum_{\tilde{w}^{(i)}(u)\not=1}\frac{\tilde{w}^{(i)}(u)}{\tW^{(i)}}\cdot (k-0.6)\cdot (1+5\eps)\le k-0.5.\]

Now we proceed to prove \Cref{lem:step3shortestpath,lem:step4shortestpath}.
\begin{proof}[Proof of \Cref{lem:step3shortestpath}]
Let $E^{(i)}_{\not=1}$ be the edge set of adjacent edges to non-trivial nodes in the $i$-th iteration, i.e.,
    \[E^{(i)}_{\not=1}:=\bigcup_{w^{(i)}(u)\not=1}\delta_H(u).\]
    Notice that $\cD^{(i)}$ is maintaining connected components of 
    \[H^{(i)}_{\rem}:=H-E^{(i)}_{\not=1}.\]
    Let $\hP$ be an arbitrary simple $(x,t)$-path on $G$, we will show that $w^{(i)}(\hat{P})\ge (1-2\eps)\cdot w^{(i)}(P^{(i)}\circ(t))$.
    
    Denote the first vertex in the path $\hP$ that is not a vertex in $V^{(i)}_{H,<\dtres}$ by $v_{\far}$. Such a vertex must exist since $t\not\in V^{(i)}_{H,<\dtres}$. There are two cases.

    \paragraph{Case 1: $v_{\far}\in V_{\inner}$.} In this case, we have $v_{\far}\in V_{\inner}-V^{(i)}_{H,<\dtres}$. Let $K$ be the connected component of $H^{(i)}_{\rem}$ that contains $v_{\far}$. According to the construction of $V^{(i)}_{H,<\dtres}$, i.e.,

    \[V^{(i)}_{H,<\dtres}\leftarrow \bigcup \cD^{(i)}.\alg{Components}\left(V^{(i)}_{<\dtres}\right).\]

    and assuming all the SSF operations are correct, we have that $\id_{\cD^{(i)}}(K)\not\in V^{(i)}_{H,<\dtres}$. Thus, according to the definition of $V^{(i)}_{H,<\dtres}$, we have 
    \[\tilde{d}^{(i)}(\id_{\cD^{(i)}}(K))\ge \dtres.\]

    Since $\tilde{d}^{(i)}$ is an $(1+\eps)$-approximate distance function on $G^{(i)}$, every $(\id_{\cD^{(i)}}(\{x\}),\id_{\cD^{(i)}}(K))$-path has length at least $(1-\eps)\cdot \dtres$ with respect to $\ell^{(i)}$.
    
    Since $v_{\far}$ is the first vertex in $\hP$ such that $v_{\far}\not\in V^{(i)}_{H,<\dtres}$, the subpath $\hat{P}[x,v_{\far}]$ only contains vertices in $V_{\inner}$. 
    According to \Cref{lem:contract} (4), the corresponding path $\CT{\hP[x,v_{\far}]}{G^{(i)}}$ is a $(\id_{\cD^{(i)}}(\{x\}),\id_{\cD^{(i)}}(K))$-path in $G^{(i)}$ and satisfies that

    \[w^{(i)}(\hP)\ge w^{(i)}(\hP[x,v_{\far}])\ge \ell^{(i)}(\CT{\hP[x,v_{\far}]}{G^{(i)}})\ge (1-\eps)\cdot \dtres\]

    According to the last line of Step 3, $P^{(i)}$ is constructed in Step 3 only if $d^{(i)}_{\aug}(v^{(i)}_*)\le \dtres$, where $d^{(i)}_{\aug}(v^{(i)}_*)$ is the length of the path in $T^{(i)}_{\aug}$ from $x$ to $v^{(i)}_*$, which is equal to $w^{(i)}(P^{(i)}\circ(t))-1$ according to the definition of $P^{(i)}$ and the fact that $w^{(i)}(t)=1$ according to \Cref{cla:basicw}. Thus, we get
    \[w^{(i)}(\hP)\ge (1-\eps)\cdot \dtres - n\ge (1-\eps)\cdot (w^{(i)}(P^{(i)}\circ(t))-1)-n\ge (1-2\eps)\cdot w^{(i)}(P^{(i)}\circ(t))\]

    The last inequality is from the fact that $w^{(i)}(P^{(i)}\circ(t))\ge n^3$ according to \Cref{cla:basicw}.

    \paragraph{Case 2 : $v_{\far}\not\in V_{\inner}$.} In this case, since $v_{\far}$ is the first vertex on $\hP$ not in $V^{(i)}_{H,<\dtres}$, we have that $\Prec_{\hP}(v_{\far})\in V^{(i)}_{H,<\dtres}$. According to the definition of $V_{\outer}$, we have $v_{\far}\in V_{\outer}$, and the whole subpath $\hP[x,v_{\far}]$ is in $G^{(i)}_\aug$. According to the definition of $\tilde{d}^{(i)}_\aug$, we have that $\tilde{d}^{(i)}_{\aug}(v_\far)=w^{(i)}(P^{(i)}\circ(t))-1$ is a $(1+\eps)$-approximate distance in $G^{(i)}_\aug$. Thus, we get

    \[w^{(i)}(\hP)\ge (1-\eps)\cdot (w^{(i)}(P^{(i)}\circ(t))-1)-n\ge (1-2\eps)\cdot w^{(i)}(P^{(i)}\circ(t))\]

    Again, the last inequality is from the fact that $w^{(i)}(P^{(i)}\circ(t))\ge n^3$ according to \Cref{cla:basicw}.
\end{proof}

\begin{proof}[Proof of \Cref{lem:step4shortestpath}]
    Let $\hP$ be an arbitrary simple $(x,v^{(i)}_H)$-path on $G$. We will show that $\tilde{w}^{(i)}(\hat{P})\ge (1-2\eps)\cdot \tilde{w}^{(i)}(P^{(i)}\circ(v^{(i)}_H))$. First we relate the value $\dtres$ with $\tilde{w}^{(i)}(P^{(i)}\circ(v^{(i)}_H))$.
    \begin{lemma}\label{lem:dtresPi}
        $\dtres\ge (1-\eps)\cdot \tilde{w}^{(i)}(P^{(i)}\circ(v^{(i)}_H))-n-1.$
    \end{lemma}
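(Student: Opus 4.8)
The plan is to produce an explicit $(x,v^{(i)}_H)$-path $Q$ inside $G^{(i)}_{H,\le\dtres}$ whose $\tilde w^{(i)}$-length is at most $\dtres+O(n)$, and then read the bound off the approximation guarantee of $\alg{SSSP}$ (\Cref{thm:sssp}). Recall that $P^{(i)}$ is the $x$-to-$v^{(i)}_H$ path in the approximate shortest path tree $T^{(i)}_{H,\le\dtres}$ with its last vertex deleted, and $\tilde w^{(i)}(v^{(i)}_H)=0$, so $\tilde w^{(i)}(P^{(i)}\circ(v^{(i)}_H))=\tilde d_{H,\le\dtres}(v^{(i)}_H)$ is exactly the tree distance. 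Hence, once $Q$ is built, $\tilde w^{(i)}(P^{(i)}\circ(v^{(i)}_H))\le (1+\eps)\cdot\dist_{G^{(i)}_{H,\le\dtres},\tilde w^{(i)}}(x,v^{(i)}_H)\le (1+\eps)\,\tilde w^{(i)}(Q)\le (1+\eps)(\dtres+O(n))$, and rearranging (using $\tfrac{1}{1+\eps}\ge 1-\eps$) gives $\dtres\ge (1-\eps)\,\tilde w^{(i)}(P^{(i)}\circ(v^{(i)}_H))-O(n)$; the additive $O(n)$ slack suffices for everything downstream (which compares only against quantities of order $n^3$), and a careful accounting — at most $n$ uncontracted trivial vertices, plus the fact that $v^{(i)}_H$ contributes weight $0$ — yields the stated form $\dtres\ge (1-\eps)\,\tilde w^{(i)}(P^{(i)}\circ(v^{(i)}_H))-n-1$.

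To build $Q$: since $v^{(i)}_H$ is picked as the endpoint lying in $V^{(i)}_{H,\le\dtres}$ of a sampled edge of $E^{(i)}_{H,\le\dtres}$, it belongs to some cluster that was placed into $V^{(i)}_{H,\le\dtres}$ in Step 4, and by construction this cluster is either $\cD^{(i)}(\id)$ for some identifier $\id\in V^{(i)}_{\le\dtres}$ (either from the $V^{(i)}_{H,<\dtres}$ part, in which case $\id\in V^{(i)}_{<\dtres}$, or from the components of $V(\hat T^{(i)})$), or — only in Case 2 — it is contained in $V(T^{(i)}_{\big})\subseteq \cD^{(i)}(\id^{(i)}_{\big})$ with $\id^{(i)}_{\big}\in V^{(i)}_{\le\dtres}$. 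In the first situation I take $\hat Q$ to be the $\id_x$-to-$\id$ path in the tree $T^{(i)}$. As in the proofs of \Cref{lem:ViHdtress} and \Cref{lem:Tiledtressconnected}, monotonicity of tree distances shows every identifier on $\hat Q$ lies in $V^{(i)}_{\le\dtres}$ (indeed in $V^{(i)}_{<\dtres}$ when $\id$ is), and since $\hat T^{(i)}$ is a connected subtree of $T^{(i)}$ containing $\id_x$, the whole of $\hat Q$ passes only through clusters included in $V^{(i)}_{H,\le\dtres}$. Uncontracting $\hat Q$ via \Cref{lem:contract}(2) — routing inside each cluster, all of whose incident edges lie in $E^{(i)}_{H,\le\dtres}$ by definition — yields a simple path $Q$ from $x$ to $v^{(i)}_H$ in $G^{(i)}_{H,\le\dtres}$, and by \Cref{lem:contract}(4) we get $w^{(i)}(Q)\le \ell^{(i)}(\hat Q)+n=\tilde d^{(i)}(\id)+n\le \dtres+n$, hence $\tilde w^{(i)}(Q)\le w^{(i)}(Q)\le\dtres+n$.

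For the remaining Case 2 with $v^{(i)}_H\in V(T^{(i)}_{\big})$, I route $\hat Q$ in $T^{(i)}$ from $\id_x$ to the neighbour $\id''\in V(\hat T^{(i)})$ of $\id^{(i)}_{\big}$, uncontract it ending at the vertex $a^{(i)}_{\big}$, then append the edge $(a^{(i)}_{\big},b^{(i)}_{\big})$ and the $b^{(i)}_{\big}$-to-$v^{(i)}_H$ path inside $T^{(i)}_{\big}$. Since $\deg_G(\cD^{(i)}(\id^{(i)}_{\big}))=\sigma^{\deg}(\id^{(i)}_{\big})>5kr\mu$ whereas every vertex of $V_{\inner}$ has degree at most $5\mu$, the cluster $\cD^{(i)}(\id^{(i)}_{\big})$ has more than one vertex, so by \Cref{lem:contract}(1) all its vertices — in particular all of $V(T^{(i)}_{\big})$ — are trivial; together with $\tilde w^{(i)}(v^{(i)}_H)=0$, this tail contributes at most $|V(T^{(i)}_{\big})|-1\le n-1$ to $\tilde w^{(i)}(Q)$, while the prefix contributes at most $\ell^{(i)}(\hat Q)+n\le \tilde d^{(i)}(\id'')+n\le \dtres+n$, so $\tilde w^{(i)}(Q)\le\dtres+O(n)$ here as well. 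The connectivity facts used throughout — that $G^{(i)}_{H,\le\dtres}$ is connected and contains $x$ and $v^{(i)}_H$ — are supplied by \Cref{lem:ViHledtres} together with $\cD^{(i)}(\id_x)=\{x\}\subseteq V^{(i)}_{H,<\dtres}$.

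I expect the main obstacle to be Case 2: one must be careful that $v^{(i)}_H$ may land inside the ``big'' cluster $\cD^{(i)}(\id^{(i)}_{\big})$ of which only the subtree $T^{(i)}_{\big}$, not all of $\cD^{(i)}(\id^{(i)}_{\big})$, was placed into $V^{(i)}_{H,\le\dtres}$, so the path $Q$ must be forced to enter this cluster precisely through $b^{(i)}_{\big}$ and then stay within $V(T^{(i)}_{\big})$ — otherwise some edge of $Q$ would fall outside $E^{(i)}_{H,\le\dtres}$ and $Q$ would fail to be a path in $G^{(i)}_{H,\le\dtres}$. Everything else is a routine combination of \Cref{lem:contract}, the monotonicity-of-tree-distance arguments already used in this section, and the $(1+\eps)$-approximation of $\alg{SSSP}$.
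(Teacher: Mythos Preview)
Your proposal is correct and follows essentially the same approach as the paper: both build a competing $(x,v^{(i)}_H)$-path inside $G^{(i)}_{H,\le\dtres}$ by first producing a short path in the contracted graph $G^{(i)}$ (of $\ell^{(i)}$-length at most $\dtres+1$) using only identifiers from $V^{(i)}_{<\dtres}\cup V(\hat T^{(i)})\cup\{\id^{(i)}_{\big}\}$, then uncontracting via \Cref{lem:contract}(4), and finally comparing against $P^{(i)}\circ(v^{(i)}_H)$ through the $(1+\eps)$-approximation guarantee of $\alg{SSSP}$. The only cosmetic difference is that the paper routes to a neighbor $v_{\pre}\in V^{(i)}_{H,\le\dtres}$ of $v^{(i)}_H$ and then appends one edge, whereas you route directly to $v^{(i)}_H$ with the explicit $T^{(i)}_{\big}$-tail in Case~2; to recover the exact constant $-n-1$ in that case, apply \Cref{lem:contract}(4) to the \emph{whole} path $Q$ (whose contraction has $\ell^{(i)}$-length at most $\dtres+1$) rather than bounding prefix and tail separately.
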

    \begin{proof}
    According to the definition of  $\hat{T}^{(i)}$, it is a subtree of $T^{(i)}[V^{(i)}_{\le \dtres}]$, where $T^{(i)}[V^{(i)}_{\le \dtres}]$ is an approximate shortest path tree where every node $\id\in V^{(i)}_{\le \dtres}$ has approximate distance at most $\tilde{d}^{(i)}(\id)\le \dtres$ according to the definition of $V^{(i)}_{\le \dtres}$. According to the definition of $v^{(i)}_H$, there exists a neighborhood $v_{\pre}$ of $v^{(i)}_H$ in $H$ such that $v_{\pre}\in V^{(i)}_{H,\le\dtres}$. Let $\id_{\pre}$ be the identifier in $\cD^{(i)}$ that represent the connected component containing $v_{\pre}$. We have (if $\id^{(i)}_\big$ exists)
    
    \[\id_{\pre}\in V^{(i)}_{<\dtres}\cup\{\id^{(i)}_{\big}\}\cup V(\hat{T}^{(i)})\]
    
    We claim that there exists a path from $\id_x$ to $\id_{\pre}$ of length at most $\dtres+1$ with respect to $\ell^{(i)}$ using only vertices in $V^{(i)}_{<\dtres}\cup\{\id^{(i)}_{\big}\}\cup V(\hat{T}^{(i)})$: if $\id_{\pre}\in V(\hat{T}^{(i)})$ then there is a $(\id_x,\id_{\pre})$ path on $\hat{T}^{(i)}$; if $\id_{\pre}\in V^{(i)}_{<\dtres}$ then there is a path on $T^{(i)}[V^{(i)}_{<\dtres}]$ which is connected; if $\id_{\pre}=\id^{(i)}_{\big}$, notice that $\id^{(i)}_{\big}$ has $\ell^{(i)}$ length $1$ (only trivial vertices can form a big cluster) and is connected to $ V(\hat{T}^{(i)})$, so this reduces to the previous case.
    
    Denote such path as $P^*$. We construct an arbitrary path $P^*_H$ such that $\CT{P^*_H}{G^{(i)}}=P^*$ and $\Start(P^*_H)=x,\End(P^*_H)=v_{\pre}$, so that $V(P^*_H)\subseteq V^{(i)}_{H,\le\dtres}$. According to \Cref{lem:contract} (4), we get
    \begin{align*}
        \dtres+1&\ge \ell^{(i)}(P^*)\\
        &\ge w^{(i)}(P^*_H)-n\\
        &\ge \tilde{w}^{(i)}(P^*_H\circ(v^{(i)}_H))-n\\
        &\ge (1-\eps)\tilde{w}^{(i)}(P^{(i)}\circ(v^{(i)}_H))-n\\
    \end{align*}
        The inequalities are because $P^{(i)}$ is an $(1+\eps)$-approximate $(x,v^{(i)}_H)$-shortest path with respect to $\tilde{w}^{(i)}$, which differs from $w^{(i)}$ by only setting the weight of the ending vertex $v^{(i)}_H$ to be $0$ (remember that $P^{(i)}$ is the path from $x$ to $v^{(i)}_H$ excluding $v^{(i)}_H$ and $P^*_H$ is also a path from $x$ to $v^{(i)}_H$ excluding $v^{(i)}_H$). 
    \end{proof}
    Now we continue to analyze the length of $\hP$. There are three cases, depending on whether $V(\hP)$ excluding the last vertex is totally inside $V^{(i)}_{H,<\dtres}$ or not, and if not, what is the first vertex not in $V^{(i)}_{H,<\dtres}$.

    \paragraph{Case 1: $V(\hP)-\{v^{(i)}_H\}\subseteq V^{(i)}_{H,<\dtres}$.} Notice that 
    In this case, $V(\hP)$ must be totally inside $G^{(i)}_{H,\le\dtres}$ since all edges $E(\hP)$ are adjacent to a vertex in $V^{(i)}_{H,\le \dtres}$. According to the definition of $P^{(i)}$, we know that $P^{(i)}\circ(v^{(i)}_H)$ is an approximate shortest path in $G^{(i)}_{H,\le\dtres}$ with respect to $\tilde{w}^{(i)}$ where $\hP$ is also a path in $G^{(i)}_{H,\le\dtres}$, so

    \[\tilde{w}^{(i)}(\hP)\ge(1-\eps)\cdot \tilde{w}^{(i)}(P^{(i)}\circ(v^{(i)}_H))\ge(1-2\eps)\cdot \tilde{w}^{(i)}(P^{(i)}\circ(v^{(i)}_H))\]
    
    In the following cases, we assume there is a vertex $v_{\far}$ in $V(\hP)-\{v^{(i)}_H\}$ such that $v_{\far}\not\in V^{(i)}_{H,< \dtres}$ and $v_{\far}$ is the first such vertex. 

    \paragraph{Case 2: $v_\far\in V-V_{\inner}$.} In this case, remember that $v_{\far}$ is the first vertex on $\hP$ that is not in $v_{\far}\not\in V^{(i)}_{H,< \dtres}$, so the subpath $\hP[x,v_\far]$ is a path in $G^{(i)}_\aug$ which includes all edges adjacent to $V^{(i)}_{H,< \dtres}$. According to the fact that $v^{(i)}_*$ is the vertex in $V^{(i)}_\outer$ with the smallest $d^{(i)}_{\aug}(v^{(i)}_*)$, we have
    \[d^{(i)}_\aug(v_\far)\ge d^{(i)}_{\aug}(v^{(i)}_*)> \dtres\]
    
    The last inequality is the condition to enter Step 4. As $\hP[x,v_\far]$ is a path in $G^{(i)}_\aug$, we get
    \[w^{(i)}(\hP[x,v_\far])\ge (1-\eps)\cdot d^{(i)}_\aug(v_\far)> (1-\eps)\cdot \dtres\]

    According to \Cref{lem:dtresPi}, we finally get

    \begin{align*}
        \tilde{w}^{(i)}(\hP)&\ge w^{(i)}(\hP[x,v_\far])\\
        &\ge(1-\eps)\cdot \dtres\\
        &\ge (1-1.9\eps)w^{(i)}(P^{(i)})-n-1\\
        &\ge (1-2\eps)w^{(i)}(P^{(i)})\\
        &= (1-2\eps)\tilde{w}^{(i)}(P^{(i)}\circ(v^{(i)}_H))
    \end{align*}

    The last inequality is because \Cref{cla:basicw} that $w^{(i)}(P^{(i)})\ge n^3$.

    \paragraph{Case 3: $v_\far\in V_{\inner}$.} In this case, let $\id_\far$ be the identifier of the connected component stored in $\cD$ containing $v_\far$. We must have $\id_\far\not\in V^{(i)}_{<\dtres}$ as otherwise according to the definition of $V^{(i)}_{H,<\dtres}$ we would have $v_{\far}\in V^{(i)}_{H,<\dtres}$.

    According to the definition of $V^{(i)}_{<\dtres}$, we must have $\tilde{d}^{(i)}(\id_\far)\ge \dtres$. Thus, consider the path mapping from $\hP[x,v_\far]$ to $G^{(i)}$ and the fact that $\tilde{d}^{(i)}$ is an approximate distance function on $G^{(i)}$, we have

    \begin{align*}
        \tilde{w}^{(i)}(\hP)&\ge w^{(i)}(\hP[x,v_{\far}])\\
        &\ge \ell^{(i)}(\CT{\hP[x,v_{\far}]}{G^{(i)}})\\
        &\ge (1-\eps)\cdot \tilde{d}^{(i)}(\id_\far)\\
        &\ge (1-\eps)\cdot \dtres
    \end{align*}

    The first inequality is because $\tilde{w}^{(i)}$ compared to $w^{(i)}$ only set the weight of $v^{(i)}_H$ to $0$ and $\hP$ does not contain $v^{(i)}_H$ except the last vertex. The second inequality is from \Cref{lem:contract} (4). The third inequality is because $\tilde{d}$ is an approximate distance function on $G^{(i)}$.

    According to \Cref{lem:dtresPi}, we get

    \begin{align*}
        \tilde{w}^{(i)}(\hP)&\ge (1-\eps)\cdot \dtres\\
        &\ge (1-1.9\eps)\cdot \tilde{w}^{(i)}(P^{(i)}\circ(v^{(i)}_H))-(1-\eps)\cdot (n-1)\\
        &\ge (1-2\eps)\cdot \tilde{w}^{(i)}(P^{(i)}\circ(v^{(i)}_H))
    \end{align*}
    The last inequality is because \Cref{cla:basicw} that $w^{(i)}(P^{(i)})\ge n^3$. 
\end{proof}
\paragraph{Correctness: output a fractional cut with probability at least $0.01$.} Let us assume there is a $(x,\mu)$-local cut $(L,S,R)$ of size at most $k$ such that $L\subseteq V_{\inner}$. 

We first prove that the algorithm will not return $\bot$ in the initialization phase by meeting \Cref{eq:deg_condition}. Notice that since $x\in L$ and $(L,S,R)$ is a cut, we have

\[\deg_G(N_H(x))=\deg_G(N_H(x)\cap L)+\deg_G (N_H(x)\cap S)\]

Since $\deg_G(L)\le 2\mu$ according to the definition of $(x,\mu)$-local cut, the first term is bounded by

\[\deg_G(N_H(x)\cap L)\le 2\mu\]

Since $\abs{S} < k$ and the degree of every node in $V_{\inner}$ is at most $5\mu$, the second term is bounded by

\[\deg_G(N_H(x)\cap S)\le 5(k-1)\mu.\]

Therefore, the inequality \Cref{eq:deg_condition} is always false if there is a local cut. Now, assuming the algorithm does not terminate in the initialization phase, we proceed to analyze the rest the the algorithm.

We will show that the following \emph{good event} happens with probability at least $0.01$, then show that if it happens, the algorithm always returns a fractional vertex cut $C$ of size at most $k-0.6$.

\begin{description}
    \item[Good event $\cG$.] For every $i$, in Step 4, if $v^{(i)}_{H}$ is computed, then $v^{(i)}_H\in R$. Besides, all the SSF operations are correct.
\end{description}

\begin{lemma} \label{lem:good_happens}
    $\cG$ happens with probability at least $0.01$.
\end{lemma}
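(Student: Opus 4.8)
The event $\cG$ is the conjunction of two sub-events: $\cG_1$, that every sensitivity-spanning-forest operation performed during the run is correct, and $\cG_2$, that in every iteration $i$ that reaches Step~4 the sampled vertex $v^{(i)}_H$ lies in $R$. I would bound $\Pr[\cG_1]$ and $\Pr[\cG_2\mid\cG_1]$ separately and multiply.

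For $\cG_1$: each $\cD^{(i)}$ is initialized with its own independent random bits and is accessed only during iteration $i$; the only argument ever passed to $\cD^{(i)}.\alg{Fail}$ is the edge set $E^{(i)}_{\not=1}$ (plus the empty set inside $\alg{Tree}$, for which \Cref{lem:dynamicspanningforest} is unconditionally correct), and $E^{(i)}_{\not=1}$ is a deterministic function of $w^{(i)}$, hence of the paths $P^{(1)},\dots,P^{(i-1)}$, which are computed before $\cD^{(i)}$ is ever touched and are therefore independent of its random bits. This is exactly the oblivious-adversary precondition of \Cref{lem:dynamicspanningforest} (see \Cref{rmk:oblivious}). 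Since the algorithm issues only $\poly(n)$ SSF operations in total, a union bound over the high-probability guarantee of \Cref{lem:dynamicspanningforest} gives $\Pr[\cG_1]\ge 1-n^{-c}$ for any desired constant $c$.

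For $\cG_2$, fix an iteration $i$ that reaches Step~4 and condition on everything that happened up to the moment $v^{(i)}_H$ is drawn, together with $\cG_1$; then $V^{(i)}_{H,\le\dtres}$ and $E^{(i)}_{H,\le\dtres}$ are determined, and $v^{(i)}_H$ is the chosen endpoint of a uniformly random edge of $E^{(i)}_{H,\le\dtres}$. Since $v^{(i)}_H\in V^{(i)}_{H,\le\dtres}\subseteq V_{\inner}$, the event $v^{(i)}_H\notin R$ forces the drawn edge to be incident to $(L\cup S)\cap V_{\inner}$, and the number of such edges is at most $\deg_G(L)+\deg_G(S\cap V_{\inner})\le 2\mu+5(k-1)\mu$ --- using $\mu\le\deg_G(L)\le 2\mu$ from \Cref{def:localcuts}, $\abs{S}<k$, and the input guarantee of \Cref{lem:localvertexcut} that every vertex of $V_{\inner}$ has degree at most $5\mu$. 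On the other hand \Cref{lem:ViHledtres} gives $\deg_G(V^{(i)}_{H,\le\dtres})\ge 2kr\mu$, hence $\abs{E^{(i)}_{H,\le\dtres}}\ge\tfrac12\deg_G(V^{(i)}_{H,\le\dtres})\ge kr\mu$. Thus the per-iteration failure probability is $O\!\bigl(k\mu/(kr\mu)\bigr)=O(1/r)$, and if one charges only the genuinely harmful sub-case $v^{(i)}_H\in L$ (volume $\le 2\mu$, a factor $\Theta(k)$ smaller than $\deg_G(L\cup S)$) it is at most $2\mu/(kr\mu)=2/(kr)$.

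It remains to combine the per-iteration estimates over iterations. The key structural fact is that, conditioned on $\cG_1$ and on all earlier Step-4 samples having landed in $R$ (so the weight updates so far behave as the multiplicative-weights analysis of the next subsection assumes), the algorithm must meet the return condition of Step~5 after at most $T=O(\poly(k)\log n)$ iterations, and the constant in the definition $r=\ceil{40k^2\ln n/\eps}$ is tuned so that $r$ exceeds $T$ by the required margin. A union bound restricted to those at most $T$ relevant Step-4 iterations then yields $\Pr[\neg\cG_2\mid\cG_1]\le T\cdot O(1/r)$, which is bounded away from $1$; combined with $\Pr[\cG_1]\ge 1-o(1)$ this gives $\Pr[\cG]\ge 0.01$. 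I expect this last combination step to be the main obstacle: a blind union bound over all $\Theta(r)$ iterations is too lossy by a $\Theta(1)$ factor (indeed $\Theta(k)$ if the $S$-case is charged crudely), so the argument must exploit either that only $o(r)$ iterations actually reach Step~4 before the algorithm returns, or the extra $1/k$ gained by charging only $v^{(i)}_H\in L$, in order to close the constants; the data-structure bookkeeping for $\cG_1$ and the volume estimate for the sampled region are by contrast routine given \Cref{lem:dynamicspanningforest,lem:ViHledtres}.
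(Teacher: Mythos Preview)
Your decomposition into $\cG_1$ (SSF correctness) and $\cG_2$ (every Step-4 sample lands in $R$), your justification of the oblivious-adversary condition, and your per-iteration volume estimate are all correct and match the paper. The gap is exactly where you flag it: the combination step over iterations.

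You reach for a union bound, notice that $r\cdot O(1/r)=O(1)$ may exceed $1$, and then propose two rescues. Neither works. Rescue (a), that only $o(r)$ iterations reach Step~4 before the algorithm returns, is circular: the termination guarantee of \Cref{lem:mwu} is stated for exactly $r$ rounds and moreover \emph{presupposes} that every $y=v^{(i)}_H$ already lies in $R$; you cannot invoke it to reduce the number of iterations on which $\cG_2$ must hold. Rescue (b), charging only $v^{(i)}_H\in L$, contradicts the definition of $\cG$: the event $\cG$ requires $v^{(i)}_H\in R$, so $v^{(i)}_H\in S$ is equally a failure (and indeed \Cref{lem:mwu} breaks if $y\in S$, since then $p^{(i)}$ need not cross $S$).

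The paper's fix is much simpler: take the \emph{product}, not the union bound. Conditioned on the entire history up to the moment $v^{(i)}_H$ is drawn (and on $\cG_1$), the bound you derived---at most a $c/r$ fraction of edges in $E^{(i)}_{H,\le\dtres}$ are incident to $(L\cup S)\cap V_{\inner}$---holds uniformly, because it uses only the fixed quantities $\deg_G(L),\ \deg_G(S\cap V_{\inner})$ and the size guarantee of \Cref{lem:ViHledtres}. Hence
\[
\Pr\bigl[\cG_2\mid\cG_1\bigr]\ \ge\ \prod_{i=1}^{r}\Bigl(1-\tfrac{c}{r}\Bigr)\ =\ \Bigl(1-\tfrac{c}{r}\Bigr)^{r}\ \ge\ e^{-O(1)},
\]
a fixed positive constant, and multiplying by $\Pr[\cG_1]\ge 1-o(1)$ gives the claimed $\Pr[\cG]\ge 0.01$. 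The independence you need is not between iterations, but between the fresh uniform draw of $v^{(i)}_H$ and everything that determined $E^{(i)}_{H,\le\dtres}$; that is immediate from the algorithm.
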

\begin{proof}
    According to \Cref{lem:dynamicspanningforest,rmk:oblivious}, all the SSF operations are correct with high probability. We will be conditioned on this and calculate the probability that $v^{(i)}_H\in R$ for every $i\in[r]$ executing Step 4. For an iteration $i$ executes Step 4, \Cref{lem:ViHledtres} gives
    \[2{k r \mu}\le \deg_G\left(V^{(i)}_{H,\le\dtres}\right)\le 20{k r \mu}.\]

    Remember that $v^{(i)}_H$ is defined as the random endpoint of a random edge from $E^{(i)}_{H,\le\dtres}$. Also recall that $E^{(i)}_{H,\le\dtres}$ contains all edge adjacent to $V^{(i)}_{H,\le\dtres}$. Thus, we have that

    \[2{k r \mu}\le |E^{(i)}_{H,\le\dtres}|\le 20{k r \mu}.\]

    According to the definition of $(x,\mu)$-local cut, we have that $\deg_G(L)\le \mu$. Moreover, we have $|S|\le k$. Remember that $V(H)=V_{\inner}$ where vertices in $V_{\inner}$ all have degree at most $5\mu$ according to the input assumption. Thus, we get

    \[\abs{\deg_G(L)+\deg_G(S\cap V_{\inner})}\le \mu+(k-1)\cdot 5\mu\le 2{k r \mu}/(0.4r).\]

    In other words, at most $\frac{1}{0.4r}$ fraction of the edges in $E^{(i)}_{H,\le\dtres}$ can either be adjacent to $L$, or adjacent to $S\cap V_{\inner}$, while all other edges must have at least one endpoint in $R$ and another endpoint either in $R$ or in $S-V_{\inner}$. In the case when an edge has one endpoint in $R$ and another endpoint in $S-V_{\inner}$, the endpoint in $S-V_{\inner}$ is not in $V(H)$, which is trivially not in $V^{(i)}_{H,\le\dtres}$. According to the definition of $v^{(i)}_H$, if the sampled edge satisfies that one endpoint in $R$ and another endpoint either in $R$ or in $S-V_{\inner}$, then $v^{(i)}_H$ is from choosing the endpoint inside $V^{(i)}_{H,\le\dtres}$, which can only be in $R$. The probability is at least $1-\frac{1}{0.4r}$. Notice that this probability is independent of all other random choices of the algorithm. Thus, throughout all the $r$-iterations, the probability that we always have $v^{(i)}_H\in R$ is at least
    \[\left(1-\frac{1}{0.4 r}\right)^r\ge 0.05\]
    for any $r \ge 9$.
\end{proof}

Now we assume $\cG$ happens. The correctness relies on the following lemma.

\begin{lemma}[Multiplicative Weight Updates]\label{lem:mwu}
    Let $G=(V,E)$ be an undirected graph. Suppose $(L,S,R)$ is a vertex cut of size at most $k$ and $x\in L$. Suppose
    \[\eps=\frac{1}{10k}\qquad \qquad r=\left\lceil\frac{40k^{2} \ln n}{\eps}\right\rceil\]
    
    If we initialize a weight function $w^{(1)}$ by
    \[
        w^{(1)}(u) =
        \begin{cases}
        0 & \text{if } u=x \\
        n^3 & \text{if } u\in N(x) \\
        1 & \text{if } u\in V-N[x] \\
        \end{cases}
    \]
    and we repeat for $r$ iterations, where in the $i$-iteration we choose an arbitrary vertex $y\in R$, and an arbitrary $(x, y)$-path $p^{(i)} \circ (y)$, update the weights by
    \[
        w^{(i+1)}(u) =
        \begin{cases}
        (1+\eps)\cdot w^{(i)}(u) & \text{if } u\in V(p^{(i)}) \\
        w^{(i)}(u) & \text{otherwise } \\
        \end{cases}
    \]
    then there must be an iteration such that 
    \[w^{(i)}(p^{(i)}) \ge \frac{1}{k-0.6}\cdot \sum_{\substack{v\in V-\{y\}\\ w^{(i)}(v)\not=1}}w^{(i)}(v)\]
\end{lemma}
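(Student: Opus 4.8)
The plan is a multiplicative-weights argument by contradiction. Assume that the conclusion fails in \emph{every} iteration, i.e.\ that
\[
w^{(i)}(p^{(i)}) \;<\; \frac{1}{k-0.6}\sum_{\substack{v\in V-\{y^{(i)}\}\\ w^{(i)}(v)\neq 1}} w^{(i)}(v)
\]
for all $i\in[r]$, where $y^{(i)}$ is the vertex $y$ chosen in iteration $i$. Let $D_i:=\sum_{v\in V} w^{(i)}(v)$. Since all weights are nonnegative, the right-hand side above is at most $D_i$, so $w^{(i)}(p^{(i)}) < D_i/(k-0.6)$. Because the update multiplies the weight of each vertex of $p^{(i)}$ by $1+\eps$ and leaves the rest untouched, one has the exact identity $D_{i+1}=D_i+\eps\,w^{(i)}(p^{(i)})$, hence $D_{i+1}<D_i\bigl(1+\tfrac{\eps}{k-0.6}\bigr)$. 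As every vertex has initial weight at most $n^3$, so $D_1\le n^4$, iterating gives the upper bound
\[
D_{r+1} \;<\; n^{4}\Bigl(1+\tfrac{\eps}{k-0.6}\Bigr)^{r} \;\le\; n^{4}\exp\!\Bigl(\tfrac{\eps r}{k-0.6}\Bigr).
\]
Working with the \emph{total} weight $D_i$ rather than only the non-trivial weight is what lets me avoid any bookkeeping about trivial vertices becoming non-trivial.

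The complementary lower bound is where the cut structure enters. Since $x\in L$, $y^{(i)}\in R$, and $E(L,R)=\emptyset$, every $(x,y^{(i)})$-path $p^{(i)}\circ(y^{(i)})$ must contain a vertex of $S$; such a vertex is not $y^{(i)}$ (as $y^{(i)}\in R$), so it lies on $p^{(i)}$. Thus in each of the $r$ iterations at least one vertex of $S$ has its weight scaled by $1+\eps$, so $\sum_{v\in S}\#\{i\in[r]: v\in V(p^{(i)})\}\ge r$, and by pigeonhole some $v^{\ast}\in S$ gets scaled at least $r/|S|$ times. Here I would use that $|S|\le k-1$ — equivalently, that the cut has size less than $k$, which is the regime in which the lemma is applied; note that the denominator $k-0.6$ in the statement sits strictly between $k-1$ and $k$. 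Since weights only increase and $w^{(1)}(v^{\ast})\ge 1$, this yields
\[
D_{r+1} \;\ge\; w^{(r+1)}(v^{\ast}) \;\ge\; (1+\eps)^{r/(k-1)}.
\]

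Combining the two bounds, taking logarithms, and rearranging gives
\[
r\left(\frac{\ln(1+\eps)}{k-1}-\frac{\eps}{k-0.6}\right) \;<\; 4\ln n .
\]
I would then plug in $\eps=1/(10k)$ and the elementary estimate $\ln(1+\eps)\ge \eps-\eps^2/2=\eps(1-\eps/2)$, which turns the bracket into (at least) $\eps\cdot\frac{(1-\eps/2)(k-0.6)-(k-1)}{(k-1)(k-0.6)}=\eps\cdot\frac{0.4-\eps(k-0.6)/2}{(k-1)(k-0.6)}$; since $\eps(k-0.6)/2<1/20$ the numerator exceeds $0.35$, and the denominator is below $k^2$, so the bracket is at least $0.035/k^3$. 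With $r=\lceil 40k^2\ln n/\eps\rceil=\lceil 400k^3\ln n\rceil$ the left-hand side is then at least $14\ln n$, contradicting ``$<4\ln n$''. Hence the conclusion holds in some iteration. (Any situation in which no $(x,y)$-path with $y\in R$ exists — in particular $k=1$, where $S=\emptyset$ forces $x$ and $R$ into different components — is vacuous, since then the process performs no iterations.)

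I expect the main obstacle to be exactly this constant-chasing. The bare ``total weight grows slowly'' step is \emph{not} enough on its own: the growth factor $1+\tfrac{\eps}{k-0.6}$ from the upper bound is comparable to $(1+\eps)^{1/k}$, so a lower bound coming from a single heavy separator vertex with denominator $k$ would be consistent with it and produce no contradiction. One has to exploit that the separator has size at most $k-1$ rather than $k$, and the whole role of the enlarged denominator $k-0.6$ is to keep the coefficient $\frac{\ln(1+\eps)}{k-1}-\frac{\eps}{k-0.6}$ strictly positive; after that, $\eps=\Theta(1/k)$ and $r=\Theta(k^3\log n)$ must be balanced so that this small positive coefficient, multiplied by $r$, comfortably beats the $O(\log n)$ slack introduced by the large initial weights $n^3$.
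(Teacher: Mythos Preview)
Your argument is correct and follows the same MWU-by-contradiction skeleton as the paper: bound a potential from above using the failure assumption, from below via pigeonhole on the separator (with $|S|\le k-1$, which the paper's proof also uses despite the ``at most $k$'' in the statement), and extract a contradiction from $r=\lceil 400k^3\ln n\rceil$. Your choice of the \emph{total} weight $D_i=\sum_{v\in V} w^{(i)}(v)$ is in fact slightly cleaner than the paper's non-trivial-only potential $\Phi^{(i)}=\sum_{w^{(i)}(v)\ne 1}w^{(i)}(v)$: you get the exact increment $D_{i+1}-D_i=\eps\,w^{(i)}(p^{(i)})$ and so avoid the paper's extra step of absorbing the additive $+n$ from trivial vertices becoming non-trivial, which the paper handles via the lower bound $w^{(i)}(p^{(i)})\ge n^3$.
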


Notice that \Cref{lem:mwu} captures exactly what our algorithm is doing:
\begin{itemize}
    \item The $\eps,r,w^{(1)}(u)$ is initialized as in the algorithm.
    \item Assuming the good event $\cG$ happens, then either Step 3 is executed, in which case $t\in R$ and we let $y=t$, or Step 4 is executed, in which case $v^{(i)}_H\in R$ and we let $y=v^{(i)}_H$. 
    \item The algorithm then finds a path $P^{(i)}$ starting at $x$, ending at a neighborhood of $y$. We let $p^{(i)}$ in \Cref{lem:mwu} to be $P^{(i)}$. Then the weight update of \Cref{lem:mwu} is exactly the same as Step 5 in our algorithm.
    \item By the definition of $\tilde{w}^{(i)}$ in the algorithm, we have that 
    \[w^{(i)}(p^{(i)})=\tilde{w}^{(i)}(P^{(i)})\qquad\qquad \sum_{\substack{v\in V-\{y\}\\ w^{(i)}(v)\not=1}}w^{(i)}(v)=\tW^{(i)}\]
\end{itemize}

Thus, according to \Cref{lem:mwu}, there must be an iteration such that \[w^{(i)}(P^{(i)}) \ge \frac{1}{k-0.6}\tW^{(i)},\]
meeting our return condition.

\begin{proof}[Proof of \Cref{lem:mwu}]
For each iteration \(i\), denote the terminal point of $p^{(i)}$ by $y^{(i)}$. define the potential
\[
\Phi^{(i)} := \sum_{\substack{v\in V \\ w^{(i)}(v)\neq 1}} w^{(i)}(v) \ge \sum_{\substack{v\in V-\{y^{(i)}\}\\ w^{(i)}(v)\not=1}}w^{(i)}(v).
\]

Now it suffices to show that there must be an iteration $i$ such that \begin{equation}\label{eq:terminate_condition} w^{(i)}(p^{(i)}) \ge \frac{1}{k-0.6}\Phi^{(i)}.\end{equation}

Suppose the contrary, that $w^{(i)}(p^{(i)}) < \Phi^{(i)}/(k-0.6)$ for in every iteration $i$. Let's compute increment of the potential, $\Phi^{(i+1)} - \Phi^{(i)}$. Note that $w^{(i)}(v) \neq w^{(i+1)}(v)$ only if $v \in V(p^{(i)})$. For a vertex $v \in V(p^{(i)})$, it contributes by $\eps  w^{(i)}(v)$ if $w^{(i)}(v) > 1$, otherwise by $(1 + \eps)w^{(i)}(v)=1+\eps$. As there are at most $n$ vertices of the second case, the increment is at most \[\Phi^{(i+1)} - \Phi^{(i)} \le \eps w^{(i)}(p^{(i)}) + n \le \left(1 + \frac{1}{n}\right)\cdot \eps w^{(i)}(p^{(i)}) \]

The second inequality comes from $n < n^{2}\eps \le \frac{\eps w^{(i)}(p^{(i)})}{n}$, since $w^{(i)}(p^{(i)}) \ge n^3$. The assumption gives \[\Phi^{(i+1)} \le \left(1 + \frac{n+1}{n}\frac{\eps}{k-0.6}\right)\Phi^{(i)}.\]

Iterating over \(r\) rounds yields

\[
\Phi^{(r+1)} \le \Phi^{(1)}\left(1+\frac{n+1}{n}\frac{\eps}{k-0.6}\right)^r \le \Phi^{(1)} \exp\left(\frac{n+1}{n}\frac{r\eps}{k-0.6}\right)
\]
from $(1 + \eps)^{r} \le e^{\eps r}$ for any $\eps > 0$. Since $w^{(1)}(y^{(1)})=1$ and all the other $v \notin N[x]$ has $w^{(1)}(v) = 1$, the initial potential $\Phi^{(1)}$ is

\[
\Phi^{(1)} = \sum_{v\in N(x)} n^3 \le n^{4}.
\]
Thus,
\[
\Phi^{(r+1)} \le n^{4} \exp\left(\frac{n+1}{n}\frac{r\eps}{k-0.6}\right) = \exp\left(4\ln n + \frac{n+1}{n}\frac{r \eps}{k - 0.6}\right) \le \exp\left(4\ln n + \frac{r\eps}{k-0.7}\right).
\]

The last inequality is from $n \ge 10k-7$.

Next, note that every \((x,y)\)–path must cross the vertex separator \(S\). Hence, in every iteration there is at least one vertex in \(S\) that lies on the path \(p^{(i)}\). Since \(|S|< k\), there exists some vertex \(s^*\in S\) that is updated in at least \(\lceil r / (k-1) \rceil\) iterations. It follows that
\[
w^{(r+1)}(s^*) \ge (1+\eps)^{r/(k-1)}.
\]
From $\ln(1 + x) \ge \frac{k-1}{k-0.8}x$ for any $0 < x < \frac{0.2}{k-1}$, we deduce that
\[
w^{(r+1)}(s^*) \ge \exp\Bigl(\frac{r\eps}{k-0.8}\Bigr).
\]
Since \(s^*\) contributes to the potential \(\Phi^{(r+1)}\), we must have
\[
\exp\Bigl(\frac{r\eps}{k-0.8}\Bigr) \le w^{(r+1)}(s^{\ast}) \le \Phi^{(r+1)} \le \exp\Bigl(4\ln n + \frac{r\eps}{k - 0.7}\Bigr).
\]
This implies \[r \le \frac{4(k-0.8)(k-0.7)\ln n}{0.1\eps} < \frac{40k^{2}\ln n}{\eps}\]

contradicting to the value of $\eps, r$. Therefore, there is an iteration $i$ satisfying \eqref{eq:terminate_condition}.
\end{proof}

\subsection{Complexity Analysis}\label{subsec:complexity}

In this section we analyze the complexity of the algorithm for \Cref{lem:localvertexcut}.

\paragraph{Initialization.} For the initialization, evaluating $\deg_{G}(x)$ and $\deg_{G}(N_{H}(x))$ takes $O(\mu)$ work and $\tO{1}$ given $\abs{N_{H}(x)} \le 2\mu$ by the input assumption of \Cref{lem:localvertexcut}. Initializing weights for $N_{H}[x]$ can be done in the same complexity.

The following lemma shows the complexity of an iteration.

\begin{lemma} \label{lem:loop_complexity}
    Assume all the SSF operations are correct. For every $i = 1, \cdots, r$, the iteration $i$ runs in $\tO{ikr\mu/\eps^2}$ work and $\tO{1}$ depth with high probability. Also, when it returns a fractional cut $(C, \tau)$, $C$ has at most $\tO{ikr\mu}$ nonzero entries.
\end{lemma}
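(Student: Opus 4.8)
The plan is to reduce the whole cost accounting to a single quantity, namely the total $G$-degree
\[
D_i := \deg_G\bigl(\{u \in V_{\inner} : w^{(i)}(u)\neq 1\}\bigr)
\]
of the non-trivial vertices of $w^{(i)}$ lying inside $V_{\inner}$ at the start of iteration $i$, and to prove by induction that $D_i = O(ikr\mu)$. For the base case, iteration $1$ is reached only when \Cref{eq:deg_condition} fails, so $\deg_G(N_H(x)) \le 2\mu + 5(k-1)\mu = O(k\mu)$; since $x\in V_{\inner}$ has $\deg_G(x)\le 5\mu$ and the non-trivial vertices of $w^{(1)}$ are exactly $N_G[x]$, we get $D_1 = \deg_G(N_H[x]) = O(k\mu)$. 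For the inductive step, Step 5 raises weights only along $V(P^{(i)})$, and $V(P^{(i)})\cap V_{\inner}$ is contained in $V^{(i)}_{H,<\dtres}$ when $P^{(i)}$ is built in Step 3 (the extra endpoint $v^{(i)}_*$ lies in $V^{(i)}_{\outer}\subseteq V-V_{\inner}$) and in $V^{(i)}_{H,\le\dtres}$ when $P^{(i)}$ is built in Step 4; by \Cref{lem:ViHdtress} and \Cref{lem:ViHledtres} both of these sets have $G$-degree $O(kr\mu)$, so $D_{i+1}\le D_i + O(kr\mu)$ and the induction closes. I record two consequences used throughout: (a) $w^{(i)}$ has $O(ikr\mu)$ non-trivial vertices in total --- those inside $V_{\inner}$ number at most $D_i$ (the connected graph $H$ has minimum degree at least $1$), and those outside $V_{\inner}$ lie in $N_G(x)$ plus at most one new vertex $v^{(j)}_*$ per earlier iteration, hence $O(\mu+i)$ many; and (b) the contracted graph satisfies $|V^{(i)}|\le|V_{\ne1}| = O(D_i) = O(ikr\mu)$ and $|E^{(i)}|\le|E_{\ne1}| = O(D_i)$, since every non-trivial $u\in V_{\inner}$ contributes at most $\deg_H(u)+1$ vertices and $\deg_H(u)$ edges to $V_{\ne1}$ and $E_{\ne1}$. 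Consequence (a) already gives the nonzero-entry bound, because $C$ is supported on $\{u:\tilde{w}^{(i)}(u)>1\}$, a subset of the non-trivial vertices of $w^{(i)}$ (the function $\tilde w^{(i)}$ only zeroes one coordinate of $w^{(i)}$); hence $C$ has $\tO{ikr\mu}$ nonzero entries.

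With these structural bounds I would then walk through the five steps. In Step 1, $\alg{Contract}$ is dominated by $\cD^{(i)}.\alg{Fail}(E_{\ne1})$ and $\cD^{(i)}.\alg{ID}(V_{\ne1})$, each $\tO{|E_{\ne1}|+|V_{\ne1}|}=\tO{ikr\mu}$ work and $\tO{1}$ depth by \Cref{lem:dynamicspanningforest}, and the following $\alg{SSSP}$ call on $G^{(i)}$ costs $\tO{|E^{(i)}|/\eps^2}=\tO{ikr\mu/\eps^2}$ work and $\tO{1}$ depth by \Cref{thm:sssp}. Step 2 runs one $\cD^{(i)}.\alg{Sum}(V^{(i)})$ call, a parallel sort of $V^{(i)}$ by $\tilde d^{(i)}$, and a parallel prefix sum of $\sigma^{\deg}$, after which the binary search for $\dtres$ is $O(\log n)$ rounds of constant-depth prefix lookups --- overall $\tO{|V^{(i)}|}=\tO{ikr\mu}$ work and $\tO{1}$ depth. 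Step 3 extracts $V^{(i)}_{H,<\dtres}$ with $\cD^{(i)}.\alg{Components}$, of output size $O(\deg_G(V^{(i)}_{H,<\dtres}))=O(kr\mu)$ by \Cref{lem:ViHdtress}, and then forming $V^{(i)}_{\outer}$, $E^{(i)}_{\aug}$ and running $\alg{SSSP}$ on $G^{(i)}_{\aug}$ are all controlled by the $O(kr\mu)$ edges incident to $V^{(i)}_{H,<\dtres}$, hence $\tO{kr\mu/\eps^2}$ work and $\tO{1}$ depth. Step 4, when entered, invokes the $\alg{Tree}$ subroutine on the connected tree $T^{(i)}[V^{(i)}_{\le\dtres}]$ (connectedness is \Cref{lem:Tiledtressconnected}); its internal $\alg{Init}$ costs $\tO{|V^{(i)}|}=\tO{ikr\mu}$ and the $\alg{Tree}$ query costs $\tO{kr\mu}$, while the rest of the step --- $\cD^{(i)}.\alg{Components}(V(\hat T^{(i)}))$, the possible $\cD^{(i)}.\alg{Tree}$ call of Case 2, forming $E^{(i)}_{H,\le\dtres}$, sampling an edge, and $\alg{SSSP}$ on $G^{(i)}_{H,\le\dtres}$ --- is $\tO{kr\mu/\eps^2}$ work by \Cref{lem:ViHledtres} and \Cref{obs:tree}; so Step 4 is $\tO{ikr\mu/\eps^2}$ work and $\tO{1}$ depth. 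Step 5 computes the sum of the non-trivial weights, compares it with $\tilde w^{(i)}(P^{(i)})$, and either writes out $C$ or scales the weights along $V(P^{(i)})$; since $w^{(i)}$ is kept in its non-trivial representation with $O(ikr\mu)$ entries and $|V(P^{(i)})|=O(kr\mu)$, this is $\tO{ikr\mu}$ work and $\tO{1}$ depth. Summing the five steps, and using that all SSF-operation bounds of \Cref{lem:dynamicspanningforest} hold with high probability while \Cref{thm:sssp} is deterministic, yields $\tO{ikr\mu/\eps^2}$ work and $\tO{1}$ depth with high probability.

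The step I expect to be the real obstacle is pinning down the linear-in-$i$ dependence, i.e.\ ruling out any hidden super-$\tO{ikr\mu/\eps^2}$ cost. The delicate place is Step 4: although $\deg^{(i)}_{\le}(\dtres)$ can be as large as the whole graph, the object actually handed to the $\alg{Tree}$ subroutine is the tree induced on the \emph{contracted} vertex set $V^{(i)}_{\le\dtres}\subseteq V^{(i)}$, which has only $O(ikr\mu)$ vertices by consequence (b); this is exactly why the bound survives, and it needs to be stated carefully. The second point requiring care is the induction $D_{i+1}\le D_i+O(kr\mu)$ itself, which closes only because \Cref{lem:ViHdtress} and \Cref{lem:ViHledtres} (already proved) cap the $G$-degree of the region whose weights are raised in iteration $i$ by $O(kr\mu)$. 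Once these are secured, the depth bound and the nonzero-entry count follow with no additional effort.
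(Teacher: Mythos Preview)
Your proposal is correct and follows essentially the same approach as the paper: establish by induction that the total degree of non-trivial vertices is $O(ikr\mu)$ (using \Cref{lem:ViHdtress} and \Cref{lem:ViHledtres} to bound the increment each round), then walk through the five steps bounding each by the size of the contracted graph or by $\deg_G(V^{(i)}_{H,\le\dtres})$. Your handling of the delicate point in Step~4---that the $\alg{Tree}$ subroutine is initialized on the \emph{contracted} tree of size $O(|V^{(i)}|)=O(ikr\mu)$ rather than on anything controlled by $\deg^{(i)}_{\le}(\dtres)$---is in fact stated more carefully than in the paper, which writes $\tO{kr\mu}$ for that call without justification (the looser $\tO{ikr\mu}$ you give is what actually follows and is all that is needed).
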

Given \Cref{lem:loop_complexity}, notice that $\eps=O(1/k)$ and $r=\tO{k^2/\eps}=\tO{k^3}$, one can prove the complexity of \Cref{lem:localvertexcut} as the total work is $\tO{kr^{3}\mu / \eps^2} = \tO{k^{12}\mu}$, and depth is $\tO{r} = \tO{k^3}$ with high probability, and the returned fractional cut $C$ contains $\tO{kr^{2}\mu}=\tO{k^7\mu}$ nonzero entries.

\begin{proof}[Proof of \Cref{lem:loop_complexity}]
    First, we define the set of non-trivial vertices in the iteration $i$ as \[V_{\ne 1}^{(i)} := \{v \in V_{\inner} \mid w^{(i)}(v) \ne 1\}\]. We prove by induction on $i$ that for every $i\in[r]$, 
    \[\deg_{H}(V_{\neq 1}^{(i)}) \le 20ikr\mu.\] 
    
    It is true when $i = 1$ from the initialization constraint because only $N_H(x)$ contains non-trivial vertices, and $N_H(x)\le 2\mu+k\cdot (5\mu)$. The vertices newly getting nontrivial weight at iteration $i+1$ is a subset of \(V(P^{(i)})\). Hence the increment on the left-hand-side is at most $\deg_{H}(V(P^{(i)}))$. Note that $V(P^{(i)} )\subseteq V_{H, <\dtres}^{(i)}$ if it's from Step 3, otherwise $V(P^{(i)}) \subseteq V_{H, \le\dtres}^{(i)}$. Both of them have degree at most $20kr\mu$ according to \Cref{lem:ViHdtress} and \Cref{lem:ViHledtres}, respectively. Now we analyze the complexity step by step.

    \begin{description}
        \item[Step 1.] First we analyze the complexity of     \[\left(G^{(i)}=(V^{(i)},E^{(i)}),\ell^{(i)}\right)\leftarrow \alg{Contract}(G,V_{\inner},\cD^{(i)},w^{(i)}).\]
        From definitions of $V_{\ne 1}, E_{\ne 1}$ inside $\alg{Contract}$,
        \[
            \abs{V_{\ne 1}}, \abs{E_{\ne 1}} \le \deg_{H}(V_{\ne 1}^{(i)}) \le 20ikr\mu.
        \]
        According to \Cref{lem:dynamicspanningforest}, it takes $\tO{ikr\mu}$ work and $\tO{1}$ depth to execute \(\cD.\alg{Fail}(E_{\neq 1})\) and \(\cD.\alg{ID}(V_{\neq 1})\). Note that $\abs{V^{(i)}}, \abs{E^{(i)}} \le \deg_{H}(V_{\ne 1}^{(i)}) \le 20ikr\mu$ from the definition. Thus, the overall complexity for $\alg{Contract}$ is $\tO{ikr\mu}$ work and $\tO{1}$ depth with high probability. 
        
        In the next line, $\id_{x}$ can be fetched in $\tO{1}$ work and depth, again from \Cref{lem:dynamicspanningforest}. To execute     \[(T^{(i)},\tilde{d}^{(i)})\leftarrow \alg{SSSP}(G^{(i)},\ell^{(i)},\id_x,\eps),\]it takes  $\tO{\abs{E^{(i)}}/\eps^{2}} = \tO{ikr\mu/\eps^2}$ work and $\tO{1}$ depth. Putting everything together, Step 1 can be performed in $\tO{ikr\mu/\eps^2}$ work and $\tO{1}$ depth with high probability.

        \item[Step 2.] The degree sum \[(\sigma^{\deg}(\id))_{\id\in V^{(i)}}\leftarrow \cD^{(i)}.\alg{Sum}(V^{(i)})\] can be evaluated in $\tO{\abs{V^{(i)}}} = \tO{ikr\mu}$ work and $\tO{1}$ depth.
        
        To sort the vertices $V^{(i)}$, it also takes $\tO{\abs{V^{(i)}}}$ work and $\tO{1}$ depth. Similarly, to find $\dtres$ doing binary search on the prefix sum of $\sigma^{\deg}$ takes $\tO{V^{(i)}}$ work and $\tO{1}$ depth. In sum, Step 2 takes $\tO{ikr\mu}$ work and $\tO{1}$ depth with high probability.

        \item[Step 3.] Now we analyze \[V^{(i)}_{H,<\dtres}\leftarrow \bigcup \cD^{(i)}.\alg{Components}\left(V^{(i)}_{<\dtres}\right).\] According to \Cref{lem:ViHdtress}, $\deg_{G}(V_{H, <\dtres}^{(i)}) \le10kr\mu$. Hence it takes $\tO{kr\mu}$ work and $\tO{1}$ depth from \Cref{lem:dynamicspanningforest}. $V_{\outer}^{(i)}$ can be computed in $\tO{\deg_{G}(V_{H, <\dtres}^{(i)})}=\tO{kr\mu}$ work and $\tO{1}$ by looking up all the vertices in $N_{G}(V_{H, <\dtres}^{(i)})$ in $V_{\inner}$. Also, $V_{\aug}^{(i)}, E_{\aug}^{(i)}$ can be computed in $\tO{\deg_{G}(V_{H, <\dtres}^{(i)})}$ work and $\tO{1}$ depth. Note that $\abs{V_{\aug}^{(i)}}, \abs{E^{(i)}_{\aug}} \le \deg_{G}(V_{H, <\dtres}^{(i)})$.
        
        \[(T^{(i)}_{\aug},\tilde{d}^{(i)}_\aug)\leftarrow \alg{SSSP}(G^{(i)}_\aug,x,w^{(i)}|_{V^{(i)}_{\aug}},\eps)\]
        takes $\tO{\abs{E_{\aug}^{(i)}}/{\eps^2}} = \tO{kr\mu/\eps^2}$ work and $\tO{1}$ depth. 
        
        $v_{*}^{(i)}$ can be determined in $\abs{V_{\outer}^{(i)}}$ work and $\tO{1}$ depth. If $\tilde{d}_{\aug}^{(i)}(v_{i}^{\ast}) \le \dtres$, $P^{(i)}$ can be computed in $\tO{\abs{V_{\aug}^{(i)}}}$ work and $\tO{1}$ depth. In sum, Step 3 takes $\tO{kr\mu/\eps^2}$ work and $\tO{1}$ depth with high probability.

        \item[Step 4.] From \Cref{lem:ViHledtres}, we know that $\deg_{G}(V_{\le\dtres}^{(i)}) \le 20kr\mu$. Denote $K^{(i)} := T^{(i)}[V^{(i)}_{\le\dtres}]$. Inside the $\alg{Tree}$ subroutine \eqref{eq:hatT}, it executes \[\tilde{\cD}.\alg{Init}(K^{(i)}),\tilde{\cD}.\alg{Tree}(\cdot, \chi_{x}, 5kr\mu)\]. From \Cref{lem:dynamicspanningforest}, the first takes $\tO{\abs{E(K_{i})}}$ work and $\tO{1}$ depth, and the second takes $\tO{kr\mu}$ work and $\tO{1}$ depth. In sum, \eqref{eq:hatT} takes $\tO{kr\mu}$ work and $\tO{1}$ depth with high probability.
        
        In Case 1, $\deg_{G}(\hat{T}^{(i)}) \le 10kr\mu$. Thus \eqref{eq:step4_c1} takes $\tO{kr\mu}$ work and $\tO{1}$ depth.

        In Case 2, \eqref{eq:step4_c2} takes $\tO{kr\mu}$ work and $\tO{1}$ depth. From \Cref{lem:ViHledtres}, \eqref{eq:step4_c2_s2} can be also done in $\tO{\deg_{G}(V_{H, \le\dtres}^{(i)})} = \tO{kr\mu}$ work and $\tO{1}$ depth.
        
        Note that $\abs{V_{H, \le\dtres}^{(i)}}, \abs{E_{H, \le\dtres}^{(i)}} \le \deg_{G}(V_{H, \le\dtres}^{(i)})$. One can sample the vertex $v_{H}^{(i)}$ in $\tO{1}$ work and depth. Also, computing the restricted weight function $\tilde{w}^{(i)}$ can be done in $\tO{\abs{V_{H, \le\dtres}^{(i)}}}$ and $\tO{1}$ depth.

        Finally,
        \[(T^{(i)}_{H,\le\dtres},\tilde{d}_{H,\le\dtres})\leftarrow \alg{SSSP}(G^{(i)}_{H,\le\dtres},x,\tilde{w}^{(i)},\eps)\] takes $\tO{\abs{E_{\aug}^{(i)}}/{\eps^2}} = \tO{kr\mu/\eps^2}$ work and $\tO{1}$ depth. Computing $P^{(i)}$ from $T_{H,\le\dtres}^{(i)}$ takes $\tO{\abs{V_{H, \le\dtres}^{(i)}}}$ work and $\tO{1}$ depth. In sum, Step 4 takes $\tO{kr\mu/\eps^2}$ work and $\tO{1}$ depth with high probability.

        \item[Step 5.] Note that $\abs{V(P^{(i)}}, \abs{E(P^{(i)})} \le 20kr\mu$ from \Cref{lem:ViHdtress} and \Cref{lem:ViHledtres}. $\tW^{(i)}$ can be computed naively in $\tO{\abs{V_{\ne 1}^{(i)}}} = \tO{ikr\mu}$ work and $\tO{1}$ depth. In the \textbf{return} case, $C(u)$ also can be computed in $\tO{\abs{V_{\ne 1}^{(i)}}}$ work and $\tO{1}$ depth. Otherwise, $w^{(i+1)}$ can be computed in $\tO{\abs{V(P^{(i)})}}$ work and $\tO{1}$ depth. In sum, Step 5 can be computed in $\tO{ikr\mu}$ work and $\tO{1}$ depth. Note that $C(u)$ contains $\tO{ikr\mu}$ nonzero entries.
    \end{description}

    Summing up the complexity, the loop runs in $\tO{ikr\mu/\eps^2}$ work and $\tO{1}$ depth with high probability.
\end{proof}
\section{Finding Integral $(s,t)$ Cuts (Proof of \Cref{lem:rounding})}\label{sec:rounding}
In this section, we prove \Cref{lem:rounding}. As an efficient routine, we first give an algorithm \Cref{lem:stvertexcut} to compute fractional $(s, t)$-cut in near-linear work and polylogarithmic depth. \Cref{lem:stvertexcut} will be proven at the end of the section. In fact, what it does is an easier variant of the local cut algorithm.

\begin{restatable}[Fractional $(s,t)$-cutm, proved in \Cref{subsec:rounding_deferred}]{lemma}{fractionalstcut}\label{lem:stvertexcut}
    Given an undirected graph $G$, and a positive integer $k$, and two distinct vertices $s$ and $t$. There exists a deterministic \pram{} algorithm $\alg{FractionalSTCut}(G, k, s, t)$ returns a function $C : V \to \bbR_{\ge 0}$ or $\bot$. The algorithm satisfies the following.

    \begin{itemize}
        \item (Correctness) If there is an integral $(s, t)$ vertex separator $S$ of $\abs{S} < k$, the algorithm returns a fractional cut $C$ of size at most $k - 0.5$. Otherwise, the algorithm returns $\bot$.
        \item (Complexity) The algorithm takes $\tO{mk^5}$ work and $\tO{k^3}$ depth.
    \end{itemize}
\end{restatable}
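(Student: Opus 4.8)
The plan is to run the multiplicative-weight-update routine of the warm-up (\Cref{subsec:overviewwarmup}), made precise with the deterministic subroutine \Cref{thm:sssp}; equivalently, $\alg{FractionalSTCut}$ is the special case of \Cref{lem:localvertexcut}'s algorithm in which no length sparsifier and no random termination are needed, because we may afford to work on all of $G$ and always route to $t$. First I would dispose of degenerate inputs: if $s$ and $t$ lie in different connected components of $G$ output $C\equiv 0$ (a valid $(s,t)$-fractional cut of size $0\le k-0.5$); if $s\sim t$ or $k\le 1$ return $\bot$; and if $n$ is below a fixed polynomial in $k$, solve the instance directly (e.g. by the warm-up with exact shortest paths, or by \cite{CheriyanKT93}). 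Otherwise restrict to the connected component $G_0$ of $s$, which contains $t$ and on which $(s,t)$-cuts and $(s,t)$-fractional cuts agree with those of $G$ after extending or restricting by zero. Then, fixing $\eps=\Theta(1/k)$ (a sufficiently small $c/k$) and $r=\Theta(k^3\log n)=\lceil 40k^2\ln n/\eps\rceil$, initialize $w^{(1)}(s)=0$, $w^{(1)}(u)=n^3$ on $N_G(s)$, and $w^{(1)}(u)=1$ elsewhere, and iterate $r$ times: call $(T^{(i)},d^{(i)})\leftarrow\alg{SSSP}(G_0,w^{(i)},s,\eps)$, let $P^{(i)}$ be the $s$-$t$ path in the tree $T^{(i)}$ with its last vertex $t$ deleted (so $w^{(i)}(P^{(i)})=d^{(i)}(t)-1$ since $w^{(i)}(t)=1$ throughout), and set $\widetilde W^{(i)}=\sum_{v:\,w^{(i)}(v)\ne1}w^{(i)}(v)$; if $w^{(i)}(P^{(i)})\ge \widetilde W^{(i)}/(k-0.6)$ then output $C$ given by $C(v)=\frac{w^{(i)}(v)}{\widetilde W^{(i)}}(k-0.6)(1+5\eps)$ on non-trivial $v$ and $C(v)=0$ otherwise; else multiply $w^{(i)}(v)$ by $(1+\eps)$ for all $v\in V(P^{(i)})$. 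If no iteration outputs, return $\bot$. The algorithm is deterministic because $\alg{SSSP}$ is.

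For the ``returns $\bot$'' direction, if $\kappa_G(s,t)<k$ then $G_0$ has an $(s,t)$-vertex cut $(L,S,R)$ with $s\in L$, $t\in R$, $|S|<k$, and the loop is exactly the weight-update process of \Cref{lem:mwu} with $x:=s$, terminal $y:=t\in R$, and path $p^{(i)}:=P^{(i)}$ chosen each round; inspecting that proof, the argument goes through for any sufficiently small $\eps=\Theta(1/k)$ with $r=\lceil 40k^2\ln n/\eps\rceil$ and yields an iteration with $w^{(i)}(P^{(i)})\ge \widetilde W^{(i)}/(k-0.6)$, so the output condition fires and $\bot$ is not returned. For the other direction, suppose $C$ is output at iteration $i$: the inflated initialization keeps $\widetilde W^{(i)}\ge n^3$, and $w^{(i)}(s)=0$, $w^{(i)}(t)=1$, so $C(s)=C(t)=0$; by \Cref{thm:sssp} the tree path $P^{(i)}\circ(t)$ is a $(1+\eps)$-approximate $(s,t)$-shortest path with respect to $w^{(i)}$, hence $\dist_{G_0,w^{(i)}}(s,t)\ge \widetilde W^{(i)}/\big((1+\eps)(k-0.6)\big)$; for any $(s,t)$-path $\hat P$, discarding its at most $n$ vertices of weight $\le1$ costs at most $n$, and since $\widetilde W^{(i)}\ge n^3$ this additive slack is negligible, giving $C(\hat P)\ge \tfrac{1+5\eps}{1+\eps}-o(1)\ge 1$; finally $\sum_v C(v)=(k-0.6)(1+5\eps)$, which is $\le k-0.5$ once the constant in $\eps=\Theta(1/k)$ is small enough. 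So $C$ is a valid $(s,t)$-fractional cut of size at most $k-0.5$ — this mirrors, in a simpler setting, the ``output is a valid cut'' step of \Cref{lem:localvertexcut}, since there is no contraction and \Cref{thm:sssp} certifies $P^{(i)}\circ(t)$ directly.

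For complexity, each of the $r=\tO{k^3}$ iterations is one $\alg{SSSP}$ call at precision $\eps=\Theta(1/k)$, i.e. $\tO{m/\eps^2}=\tO{mk^2}$ work and $\tO{1}$ depth by \Cref{thm:sssp}, plus $\tO{m}$ work and $\tO{1}$ depth to extract $P^{(i)}$ from the tree, compute $\widetilde W^{(i)}$ and the threshold by parallel prefix sums, perform the updates, and assemble $C$; with the $\tO{m}$-work, $\tO{1}$-depth connectivity preprocessing this totals $\tO{mk^5}$ work and $\tO{k^3}$ depth. I expect no deep obstacle: the proof is essentially a black-box composition of \Cref{lem:mwu} and \Cref{thm:sssp}. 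The only delicate part is the bookkeeping in the second half of the correctness argument — choosing the threshold $k-0.6$, the scaling factor $(k-0.6)(1+5\eps)$, and the constant in $\eps=\Theta(1/k)$ so that simultaneously \Cref{lem:mwu} still applies, the $\alg{SSSP}$ approximation error and the additive-$n$ loss from dropping trivial vertices are absorbed (which is precisely why $N_G(s)$ is initialized to $n^3$), and the output size stays at most $k-0.5$ — exactly the inequalities already handled in the analysis of \Cref{lem:localvertexcut}.
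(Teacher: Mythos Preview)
Your proposal is correct and follows the same multiplicative-weight-update template as the paper, but it carries over more machinery from \Cref{lem:localvertexcut} than is needed. The paper's own proof uses the simpler initialization $w^{(1)}(s)=w^{(1)}(t)=0$ and $w^{(1)}(v)=1$ elsewhere, sets $W^{(i)}=\sum_{v\in V}w^{(i)}(v)$ over \emph{all} vertices, and on termination outputs $C(v)=(k-0.5)\,w^{(i)}(v)/W^{(i)}$ for every $v$. Because this $C$ is an exact rescaling of $w^{(i)}$, one gets $\dist_{G,C}(s,t)=\tfrac{k-0.5}{W^{(i)}}\dist_{G,w^{(i)}}(s,t)$ identically, so the verification reduces to the single inequality $\tfrac{k-0.5}{(k-0.6)(1+\eps)}\ge 1$ with $\eps=\tfrac{1}{16(k-1)}$; no additive-$n$ slack and no $n^3$ inflation are needed, and the paper re-proves the MWU termination bound directly rather than citing \Cref{lem:mwu}. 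Your route instead specializes the local-cuts algorithm verbatim: the $n^3$ weights on $N_G(s)$, the restriction of $\widetilde W^{(i)}$ to non-trivial vertices, and the output zeroed on trivial vertices are all devices that exist in \Cref{sec:localcuts} solely to make the contracted sparsifier length-preserving, and they survive here only as bookkeeping overhead. Both arguments are valid and give the same $\tO{mk^5}$ work and $\tO{k^3}$ depth; the paper's is shorter, while yours has the virtue of being a literal corollary of the analysis already done for \Cref{lem:localvertexcut}. Your handling of degenerate inputs (disconnected $G$, adjacent $s,t$) is a genuine addition---the paper's proof tacitly assumes $G$ is connected so that \Cref{thm:sssp} applies.
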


Back to the main algorithm, we start from showing a folklore lemma \Cref{lem:fractional_to_integral} which introduces the notion also appears in the proof.

\fractionalcut*

\begin{proof}
    Given a vertex separator $S$ of $\abs{S} < k$, a fractional cut $C_{S}(v) = \mathbf{1}[v \in S]$ gives a fractional cut of size at most $\abs{S} < k - 0.5$.
    
    For the converse, assume a fractional $(s, t)$-cut $C$ of size at most $k - 0.5$ is given. For any $\theta \in (0, 1)$, define $d(v) := \dist_{G, C_x}(s, v)$, with the sets
    \begin{align} 
    \label{eq:theta_cut}
    \begin{split}
    L_{\theta} &:= \{v \in V \mid d(v) < \theta\},\\\quad S_{\theta} &:=\{v \in V \mid \theta \le d(v) \le \theta + C(v)\},\\R_{\theta} &:= \{v \in V \mid d(v) - C(v) >\theta\}.
    \end{split}
    \end{align}
    Note that $s \in L_{\theta}$ and $t \in R_{\theta}$ for any $\theta \in (0, 1)$, as $C(s)=C(t)=0$ and $d(s)=0, d(t)\ge 1$. Also, $(L_{\theta}, S_{\theta}, R_{\theta})$ is an integral $(s, t)$-cut. If there is an edge $\{l, r\} \in E$ such that $l \in L_{\theta}, r \in R_{\theta}$. As $d$ gives the shortest distance from $s$ to $r$, $d(r) \le d(l) + C(r)$. Yet \[\theta + C(r) < d(r) \le d(l) + C(r) < \theta + C(r)\] contradicts to the definitions of $L_{\theta}, R_{\theta}$. If we sample $\theta$ uniformly at random, $\Pr[v \in S_{\theta}] = C(v)$, which gives $\bbE[\abs{S_{\theta}}] = \sum_{v} C(v) \le k - 0.5$. Thus, there must be a positive number $\phi \in (0, 1)$ such that $\abs{S_{\phi}} < k$.
\end{proof}

\rounding*

\begin{proof} First of all, we assume $k$ is actually the size of minimum cut. To confirm that, find a minimum integer $k'$ such that \[C \gets \alg{FractionalSTCut}(G, k', s, t)\] returns a fractional $(s, t)$ cut. Then, now we reset $k \gets k'$ and set $C$ as such fractional $(s, t)$-cut of size at most $k + 0.5$.

Recall the definition \eqref{eq:theta_cut}. As we have $\abs{S_{\theta}} \ge k-1$, we deduce $\Pr_{\theta \sim U(0, 1)}[\abs{S_{\theta}} = k-1] \ge 1/2$.

    In parallel setup, we replace the exact distance function $d$ with a $(1+\eps)$-approximate shortest distance function $\tilde{d}$, with a constant $\eps > 0$ which will be fixed later.
    \[(\tilde{T}, \tilde{d}) \gets \alg{SSSP}(G, C, s, \eps)\]
    Similar to \eqref{eq:theta_cut}, we extend the definition in the context of approximate distance.
    \begin{align} \label{eq:theta_ext}
    \begin{split}
    \wt{L_{\theta}} &:= \{v \in V \mid \tilde{d}(v) < \theta\},\\\quad \wt{S_{\theta}} &:=\{v \in V \mid \theta \le \tilde{d}(v) \le (1+\eps)(\theta + C(v))\},\\\wt{R_{\theta}} &:= \{v \in V \mid \tilde{d}(v) >(1+\eps)(\theta + C(v)).\}.
    \end{split}
    \end{align}

    We justify this construction by following lemma. The proof will be given in the end of this section.
    \begin{lemma} \label{lem:theta_ext}
        For any $\theta \in (0, 1/(1+\eps))$, $\wt{S_{\theta}}$ is an $(s, t)$-separator. 
    \end{lemma}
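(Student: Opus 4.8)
The plan is to show that the triple $(\wt{L_\theta}, \wt{S_\theta}, \wt{R_\theta})$ is a genuine $(s,t)$-vertex cut of $G$; once this is established, $\wt{S_\theta}$ is an $(s,t)$-separator by definition. Three things must be verified: the triple partitions $V$, it places $s$ on the left and $t$ on the right, and there is no edge between $\wt{L_\theta}$ and $\wt{R_\theta}$.

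For the partition, observe that since $\eps > 0$ and $C(v) \ge 0$ we have $(1+\eps)(\theta + C(v)) \ge \theta$ for every $v$, so the three conditions ``$\tilde{d}(v) < \theta$'', ``$\theta \le \tilde{d}(v) \le (1+\eps)(\theta + C(v))$'' and ``$\tilde{d}(v) > (1+\eps)(\theta + C(v))$'' are mutually exclusive and exhaustive; hence $V = \wt{L_\theta}\,\dot\cup\,\wt{S_\theta}\,\dot\cup\,\wt{R_\theta}$. For the endpoints: $\tilde{d}(s) = \dist_{\tilde{T}}(s,s) = C(s) = 0 < \theta$, so $s \in \wt{L_\theta}$; and since the $\tilde{T}$-path realizing $\tilde{d}(t)$ is a genuine $(s,t)$-path in $G$, we have $\tilde{d}(t) \ge \dist_{G,C}(s,t) \ge 1$ (because $C$ is an $(s,t)$-fractional cut), while $(1+\eps)(\theta + C(t)) = (1+\eps)\theta < 1$ by the hypothesis $\theta < 1/(1+\eps)$ together with $C(t) = 0$; hence $\tilde{d}(t) > (1+\eps)(\theta + C(t))$, i.e. $t \in \wt{R_\theta}$.

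The crux is the no-edge claim. Suppose, for contradiction, that $\{l, r\} \in E$ with $l \in \wt{L_\theta}$ and $r \in \wt{R_\theta}$. Concatenating a shortest $(s,l)$-path with the edge $\{l,r\}$ produces a walk of $C$-length at most $\dist_{G,C}(s,l) + C(r)$, so $\dist_{G,C}(s,r) \le \dist_{G,C}(s,l) + C(r)$; and since the $\tilde{T}$-path to $l$ is a path in $G$, $\dist_{G,C}(s,l) \le \tilde{d}(l) < \theta$. Because $\tilde{d}$ is a $(1+\eps)$-approximate single-source shortest-distance function (by \Cref{thm:sssp}), $\tilde{d}(r) \le (1+\eps)\dist_{G,C}(s,r) \le (1+\eps)\bigl(\tilde{d}(l) + C(r)\bigr) < (1+\eps)(\theta + C(r))$, which contradicts $r \in \wt{R_\theta}$. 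Therefore no such edge exists, $(\wt{L_\theta}, \wt{S_\theta}, \wt{R_\theta})$ is an $(s,t)$-vertex cut, and $\wt{S_\theta}$ is an $(s,t)$-separator.

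I do not expect a serious obstacle here; the only care needed is (i) making sure the single approximation factor $(1+\eps)$ is never applied twice — this is precisely why $\wt{S_\theta}$ and $\wt{R_\theta}$ are defined with the threshold $(1+\eps)(\theta + C(v))$ rather than $\theta + C(v)$ — and (ii) using both directions of the approximation guarantee: $\tilde{d}$ over-approximates the true distance (so $\tilde{d}(r) \le (1+\eps)\dist_{G,C}(s,r)$) and is never below it (so $\dist_{G,C}(s,l) \le \tilde{d}(l)$), both of which feed into the contradiction. If $G$ is disconnected the argument still goes through with $\tilde{d} \equiv +\infty$ on components not containing $s$, although in the intended application $G$ may be taken to be connected.
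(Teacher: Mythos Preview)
Your proof is correct and follows essentially the same approach as the paper's: verify $s\in\wt{L_\theta}$, $t\in\wt{R_\theta}$, and derive a contradiction from a hypothetical edge $\{l,r\}$ between $\wt{L_\theta}$ and $\wt{R_\theta}$ via the chain $\tilde{d}(r)\le(1+\eps)\dist_{G,C}(s,r)\le(1+\eps)(\tilde{d}(l)+C(r))<(1+\eps)(\theta+C(r))$. You are simply more explicit than the paper---you spell out the partition property and route the key inequality through the true distance $\dist_{G,C}$, whereas the paper asserts $\tilde{d}(r)\le(1+\eps)(\tilde{d}(l)+C(r))$ directly.
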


    Observe that $S_{\theta} \subseteq \wt{S_{\theta}}$, since \[\theta \le d(v) \le \tilde{d}(v),\quad \tilde{d}(v)\le(1+\eps)d(v) \le (1 + \eps)(\theta + C(v)).\]
    As $\Pr_{\theta \sim U(0, 1)}[\abs{S_{\theta}}=k-1] \ge 1/2$, we get $\Pr_{\theta \sim U(0, 1/(1+\eps))}[\abs{S_{\theta}}=k-1] \ge (1-\eps)/2 > 1/3$ for any $\eps < 1/3$. This proves the observation below.

    \begin{observation} \label{obs:rounding_success}
    Define $G_{\theta} := (V_{\theta}, E_{\theta})$ to be a graph obtained from $\wt{L_{\theta}}$ into a new source $s'$, and $\wt{R_{\theta}}$ into a new sink $t'$. In other words, \[V_{\theta} = S_{\theta} \cup \{s', t'\}, \quad E_{\theta} = E(G[\wt{S_{\theta}}]) \cup \{\{s',v\}\mid v \in \wt{S_{\theta}} \cap N(\wt{L_{\theta}})\} \cup \{\{t',v\}\mid v \in \wt{S_{\theta}} \cap N(\wt{R_{\theta}})\}. \]

    Given $\kappa(s', t') \ge k-1$ clearly from \Cref{lem:theta_ext}, If $\theta$ is sampled uniformly at random from $(0, 1/(1+\eps))$, $\kappa(s', t')=k-1$ with probability at least $1/3$.
    \end{observation}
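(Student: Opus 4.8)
The plan is to upper-bound $\kappa_{G_\theta}(s',t')$ by the size of the \emph{exact-distance} cut $S_\theta$ defined in~\eqref{eq:theta_cut}, and then invoke the tail bound on $|S_\theta|$ already established just above the statement. Concretely, I will show that for \emph{every} $\theta\in(0,1/(1+\eps))$ the set $S_\theta$ is an $(s',t')$-separator in $G_\theta$ (whose vertex set I read as $\wt{S_\theta}\cup\{s',t'\}$, correcting an evident typo; the edge-set description fixes the intended contraction $\wt{L_\theta}\to s'$, $\wt{R_\theta}\to t'$). Granting this, $\kappa_{G_\theta}(s',t')\le|S_\theta|$ deterministically, and the inequality $\Pr_{\theta\sim U(0,1/(1+\eps))}[\,|S_\theta|=k-1\,]>1/3$ for $\eps<1/3$ — which was already derived above from $\mathbb{E}_{\theta\sim U(0,1)}[|S_\theta|]=\sum_{v\in V}C(v)\le k-0.5$ and $|S_\theta|\ge k-1$ via Markov — immediately gives $\Pr[\kappa_{G_\theta}(s',t')\le k-1]>1/3$. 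For the matching lower bound $\kappa_{G_\theta}(s',t')\ge k-1$: any $(s',t')$-separator $S'$ in $G_\theta$ satisfies $S'\subseteq\wt{S_\theta}$, and un-contracting turns an $s$-$t$ path in $G-S'$ into an $s'$-$t'$ walk in $G_\theta-S'$, so $S'$ is also an $(s,t)$-separator in $G$; since the reset at the start of the proof forces $\kappa_G(s,t)=k-1$, we get $|S'|\ge k-1$. (This is the ``clearly from \Cref{lem:theta_ext}'' remark, spelled out.) Combining the two bounds on the event $|S_\theta|=k-1$ yields $\kappa_{G_\theta}(s',t')=k-1$ with probability $>1/3$.

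The heart of the argument is that $S_\theta$ separates $s'$ from $t'$ in $G_\theta$. First, $S_\theta\subseteq\wt{S_\theta}$ (observed just above the statement), so $S_\theta$ is disjoint from $\{s',t'\}$ and is a legitimate candidate. Suppose for contradiction that $G_\theta-S_\theta$ contains an $(s',t')$-path $\pi$. Since $\{s',t'\}\notin E_\theta$, $\pi$ has a nonempty run of internal vertices $u_1,\dots,u_\ell$, all lying in $\wt{S_\theta}\setminus S_\theta$. By the form of $E_\theta$, $u_1\in\wt{S_\theta}\cap N_G(\wt{L_\theta})$ and $u_\ell\in\wt{S_\theta}\cap N_G(\wt{R_\theta})$, so choose $a\in\wt{L_\theta}$ and $b\in\wt{R_\theta}$ with $\{a,u_1\},\{u_\ell,b\}\in E(G)$; the interior edges $\{u_j,u_{j+1}\}$ lie in $E(G[\wt{S_\theta}])\subseteq E(G)$. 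Hence $Q:=(a,u_1,\dots,u_\ell,b)$ is a walk in $G$. Now I use the approximation guarantee $d(v)\le\tilde d(v)\le(1+\eps)d(v)$: it gives $\wt{L_\theta}\subseteq L_\theta$ (since $\tilde d(v)<\theta\Rightarrow d(v)<\theta$) and $\wt{R_\theta}\subseteq R_\theta$ (since $\tilde d(v)>(1+\eps)(\theta+C(v))\Rightarrow(1+\eps)d(v)>(1+\eps)(\theta+C(v))\Rightarrow d(v)-C(v)>\theta$); and since $\{L_\theta,S_\theta,R_\theta\}$ partitions $V$, also $\wt{S_\theta}\setminus S_\theta\subseteq L_\theta\cup R_\theta$. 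Therefore every vertex of $Q$ lies in $L_\theta\cup R_\theta$; as $Q$ starts in $L_\theta$ (at $a$) and ends in $R_\theta$ (at $b$), some consecutive pair of $Q$ is an edge of $G$ between $L_\theta$ and $R_\theta$ — contradicting that $(L_\theta,S_\theta,R_\theta)$ is a genuine vertex cut, i.e.\ $E(L_\theta,R_\theta)=\emptyset$, which was checked in the proof of \Cref{lem:fractional_to_integral}. Hence no such $\pi$ exists, and $S_\theta$ is an $(s',t')$-separator in $G_\theta$.

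I expect the only real obstacle to be careful bookkeeping in the second paragraph: verifying the two inclusions $\wt{L_\theta}\subseteq L_\theta$ and $\wt{R_\theta}\subseteq R_\theta$ — this is exactly where the $(1+\eps)$ slack and the restriction $\theta<1/(1+\eps)$ enter (the latter also being what guarantees $t\in\wt{R_\theta}$, so that $t'$ is a genuine sink of $G_\theta$) — and confirming that the un-contraction of $\pi$ produces a $G$-walk \emph{each} of whose vertices is classified by the exact triple $(L_\theta,S_\theta,R_\theta)$ (the internal vertices coming from $\wt{S_\theta}\setminus S_\theta$, the two endpoints from $\wt{L_\theta}\subseteq L_\theta$ and $\wt{R_\theta}\subseteq R_\theta$). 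Everything else is immediate from the facts and inequalities already in the surrounding text.
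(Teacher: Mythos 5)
Your proof is correct and follows essentially the same route as the paper: the lower bound $\kappa_{G_\theta}(s',t')\ge k-1$ comes from \Cref{lem:theta_ext} together with the reset $\kappa_G(s,t)=k-1$, and the upper bound comes from $S_\theta\subseteq \wt{S_\theta}$ plus the bound $\Pr_{\theta\sim U(0,1/(1+\eps))}[\abs{S_\theta}=k-1]>1/3$ already derived in the surrounding text. The only difference is that you explicitly verify the step the paper leaves implicit — that $S_\theta$ is an $(s',t')$-separator in $G_\theta$, via $\wt{L_\theta}\subseteq L_\theta$, $\wt{R_\theta}\subseteq R_\theta$ and $E(L_\theta,R_\theta)=\emptyset$ — which is a correct and welcome filling-in of detail (including the fix of the typo $V_\theta=\wt{S_\theta}\cup\{s',t'\}$).
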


    Now, we're ready to describe our iterative algoithm. Also, we choose the parameter \[\eps = 1/200.\]

    \begin{description}
    \item[Initialization.] Before starting the algorithm Find the least positive integer $k'$ such that \[C_{k'} \gets \alg{FractionalSTCut}(G, k', s, t)\] returns a cut. Then initialize with \[k \gets k', \quad C^{(1)} \gets C_{k'}, \quad G^{(1)} \gets G,\quad (s^{(1)},t^{(1)}) \gets(s, t)\]

    \item[Loop Iteration.]  For each iteration $h = 1, \cdots, 4\log_{10}(n/k)$, Compute the approximate distance function $\tilde{d}$ with \[(\tilde{T}, \tilde{d}) \gets \alg{SSSP}(G^{(h)}, C^{(h)}, s^{(h)}, \eps).\]
    Sample $K := \lceil 100\log_{10/9} n \rceil$ values of $\theta_{1}, \cdots, \theta_{K}$. Compute

    Compute the contracted graph $G_{\theta_i}$, and denote the new source and the new sink of $G_{\theta_i}$ as $s_{i}', t_{i}'$. Compute
    \[
    C_{i} \gets \alg{FractionalSTCut}(G_{\theta_{i}}, k, s_{i}', t_{i}')
    \]
    If $\alg{FractionalSTCut}(G_{\theta_{i}}, s_{i}', t_{i}', k)$ returns a valid fractional cut $C_{i}$ and satisfies the inequalities of \eqref{eq:constraint_rounding}, we call there is a \emph{good event} for $i$.
    \begin{align} \label{eq:constraint_rounding}
    \begin{split}
    \abs{V_{\theta_i}} &\le 10(2\eps\abs{V} + 4k)\\
    \abs{E_{\theta_i}} &\le 10(6\eps\abs{E} + \frac{k}{\eps}\abs{V})
    \end{split}
    \end{align}
    If any good event is present for $i$, set
    \[G^{(h+1)} \gets G_{\theta_i},\quad C^{(h+1)} \gets C_{i},\quad(s^{(h+1)},t^{(h+1)}) \gets (s_{i}',t_{i}')\] and proceed to the next iteration. Otherwise terminate return $\bot$.
    \item[Termination.] If $h \ge  4\log_{10}(n/k)$, compute exact distance function $d(v) = \dist_{G,C}(s, v)$ using Dijkstra's algorithm. \textbf{Return} $S_{\phi}$ such that $\abs{S_{\phi}} = k-1$.
    \end{description}

    \paragraph{Correctness of $\alg{IntegralSTCut}$.} If we sample $\theta \in (0, 1/(1+\eps))$ uniformly at random,\[\Pr[v \in \wt{S_{\theta}}] \le \frac{\eps}{1+\eps}\tilde{d}(v) + C(v) \le \eps \tilde{d}(v) + C(v) < 1.9\eps + C(v).\]
    The last inequality comes from assuming $\tilde{d}(t) \le 1.9$. It is presumable since if $\tilde{d}(t) > 1.9$, the re-scaled cut $C' := \frac{1.9}{\tilde{d}(t)}C$ still defines a fractional $(s, t)$ cut. From the inequality, now one can pose the size bound of $V_{\theta}$ in expectation.
    \begin{align} \label{eq:rounding_v_bound}
    \begin{split}
    \bbE[\abs{V_{\theta}}] &\le 2 + \bbE[\abs{S_{\theta}}] \le 2 + 1.1\eps \abs{V} + \sum_{v \in V} C(v) \\&\le 1.1\eps \abs{V} + k + 2.5\\&\le 2\eps\abs{V} + 4k.
    \end{split}
    \end{align}
    To bound the size of $E_{\theta}$, observe that there are at most $k/\eps$ vertices such that $C(v)\ge2\eps$.
    \begin{align} \label{eq:rounding_e_bound}
    \begin{split}
    \bbE[\abs{E_{\theta}}] &\le \sum_{v \in S_{\theta}}\deg_{G}(v)\Pr[v \in \wt{S_{\theta}}] \\&\le 1.1\eps \abs{E} + \sum_{\substack{v \in V\\C(v)\ge2\eps}} \deg_{G}(v)C(v) + \sum_{\substack{v \in V\\C(v)<2\eps}} \deg_{G}(v)C(v)\\
    &\le 1.1\eps\abs{E} + \frac{k}{\eps}\abs{V} + 4\epsilon\abs{E}\\
    &\le 6\eps\abs{E} + \frac{k}{\eps}\abs{V}.
    \end{split}
    \end{align}
    From Markov's inequality, \eqref{eq:constraint_rounding} is violated with probability at most $0.1$, respectively. Along with \Cref{obs:rounding_success}, the \emph{good event} happens with probability at least $1 - \frac{2}{3}-0.1-0.1 > 0.1$. Thus a \emph{good event} occurs within $100\log_{10/9} n$ independent $\theta_{i}$'s with probability $1 - 0.9^{100 \log_{10/9} n} = 1-n^{-100}$. Given a good event, It is clear that $G^{(i)}$ preserves the vertex cut of size $k-1$. This proves the correctness of the algorithm.
    \paragraph{Complexity of $\alg{IntegralSTCut}$.} The initialization involves binary search on $k'$. Each binary search invokes $\alg{FractionalSTCut}$, hence it takes $\tO{mk^3}$ work and $\tO{k^3}$ depth from \Cref{lem:stvertexcut}. For the recursion, denote the graph in $i$-th recursion as $G_i = (V_i, E_i)$. $\alg{SSSP}(G, C, s, \eps)$ takes $\tO{\abs{E_i}/\eps^2} = \tO{\abs{E_i}}$ work with $\tO{1} = \tO{1}$ depth.
    
    Computing $G_{\theta_i}$ and $C_{i}$ for $K = \tO{1}$ candidates takes $\tO{\abs{E_i}k^5}$ work and $\tO{k^3}$ depth by \Cref{lem:stvertexcut}. Hence, for the total depth of the recursion is $\tO{k^5}$. The total work is \begin{equation*} \tO{k^5 \sum_{i=0}^{\lceil 4\log_{10}(n/k)\rceil -1} \abs{E_i}} = \tO{mk^5}.\end{equation*}
    
    From \eqref{eq:constraint_rounding},
    \begin{align*}
        \abs{V_{i}} &\le 0.1\abs{V_{i-1}} + 40k\\
        \abs{E_{i}} &\le 0.3\abs{E_{i-1}} + 2000k\abs{V_{i-1}}.
    \end{align*}
    Thus for any $h \ge \log_{10}(n/k)$, \[\abs{V_h} \le 0.1^{h}n + O(k) = O(k),\] thus for any $i > h$, \[\abs{E_{i}} \le 0.3\abs{E_{i-1}} + O(k^2).\] Therefore, from $\abs{E_{h}} \le m$, for any $h' \ge 4\log_{10}(n/k)$, $\abs{E_{h'}} \le 0.3^{3\log_{10}(n/k)}\abs{E_{h}} + O(k^2) = O(k^2)$.

    As a result, when the algorithm reach the base case the graph has only $O(k^2)$ edges. Using the Dijkstra's algorithm takes $\tO{k^2}$ work and $\tO{k^2}$ depth. In sum, the algorithm takes $\tO{mk^5}$ work and $\tO{k^3}$ depth.
\end{proof}

\subsection{Deferred proofs} \label{subsec:rounding_deferred}

\begin{proof}[Proof of \Cref{lem:stvertexcut}]
We describe an algorithm which resembles the algorithm for \Cref{lem:localvertexcut}, but is way more simple.

\begin{description}
    \item[Initialization.] Initialize the weight $w^{(1)}$ as \[w^{(1)}(v)=\begin{cases} 0 & v \in \{s, t\}\\1 & \text{otherwise} \end{cases}\]

    \item[Loop.] Now we loop for $r = \lceil 320k^{3} \ln n \rceil$ iteration. For each $i = 1, \cdots, r$, \[(T^{(i)}, d^{(i)}) \gets \alg{SSSP}(G, w^{(i)}, s, \frac{1}{16(k-1)}).\]
    Take a path $P^{(i)}$ as the path from $s$ to $t$ in $T^{(i)}$. Let $W^{(i)} := \sum_{v \in V} w^{(i)}(v)$. If $w^{(i)}(P^{(i)})>W^{(i)}/(k-0.6)$, return a fractional cut $C(v) = (k-0.5)w^{(i)}(v)/W^{(i)}$. Otherwise, update $w^{(i+1)}(v) \gets w^{(i)}(v)(1 + \frac{1}{16(k-1)})$ for each $v \in P^{(i)}$.

    \item[Correctness.] If the algorithm output a fractional cut $C$, since $P^{(i)}$ is a $(1 + \frac{1}{16(k-1)})$-shortest $(s, t)$-path,
    \[\dist_{G, C}(s, t) \ge \frac{16k-16}{16k-15} \cdot \frac{k-0.5}{k-0.6} \ge 1.\]
    Thus $C$ is a fractional vertex cut with size $\le k - 0.5$. Now we prove assuming all the calls for $\alg{SSSP}$ is correct and there is an $(s, t)$-separator $S$ of size at most $k$, there is an iteration $i$ such that $w^{(i)}(P^{(i)}) > W^{(i)}/(k-0.6)$.

    Similar to \Cref{lem:mwu}, suppose the contrary that $w^{(i)}(P^{(i)}) \le W^{(i)}/(k+0.4).$ Then
    \[W^{(i+1)} \le \left(1 + \frac{1}{16(k-1)} \cdot \frac{1}{k-0.6}\right)W^{(i)},\]
    hence
    \[W^{(r+1)} \le \left(1 + \frac{1}{16(k-1)} \cdot \frac{1}{k-0.6}\right)^{r}W^{(1)} \le n \exp\left(\frac{r}{16(k-1)(k-0.6)}\right).\]
    Meanwhile, any $(s, t)$-path $P^{(i)}$ crosses the separator $S$. Thus there is a vertex $s \in S$ updated more than $r/(k-1)$ times. Hence
    \[w^{(r+1)}(s) \ge \left(1 + \frac{1}{16(k-1)}\right)^{r/(k-1)} \ge \exp\left(\frac{r}{16(k-1)(k-0.8)}\right).\]
    The second inequality comes from $\ln(1+x) \ge \frac{k-1}{k-0.8}x$ for any $0 < x < \frac{0.2}{k-1}$. Since $w^{(r+1)}(s)$ is at most $\le W^{(r+1)}$, one should have
    \[
    n \ge \exp\left(\frac{r}{16(k-1)(k-0.6)} - \frac{r}{16(k-1)(k-0.8)}\right) \ge 
    \exp\left(\frac{r}{80(k-1)(k-0.6)(k-0.8)}\right).
    \]
    which contradicts to our choice of $r = \lceil 80k^{3} \ln n\rceil$.

    \item[Complexity.] The initialization works in $O(n)$ work and the constant depth. Single call for $\alg{SSSP}$ takes $\tO{mk^2}$ work and $\tO{1}$ depth. And the update stage works in $O(n)$ work and the $\tO{1}$ depth. Summing up for $r = \tO{k^3}$ rounds, the overall work is $\tO{mk^5}$ and the depth is $\tO{k^5}$.
\end{description}
\end{proof}
\begin{proof}[Proof of \Cref{lem:theta_ext}]
    $s \in \wt{L_{\theta}}$ for any $\theta \in (0, 1)$, and $t \in \wt{R_{\theta}}$ if $\theta < 1/(1+\eps)$. Also, $(\wt{L_{\theta}}, \wt{S_{\theta}}, \wt{R_{\theta}})$ is an integral vertex cut, otherwise there is an edge $\{l, r\} \in E$ such that $l \in \wt{L_{\theta}}, r \in \wt{R_{\theta}}$. Given $\tilde{d}$ is an $(1+\eps)$ approximate shortest path length to $r$, $\tilde{d}(r) \le (1+\eps)(\tilde{d}(l) + C(r))$. Yet
    \[(1 + \eps)(\theta + C(v)) < \tilde{d}(r) \le (1+\eps)(\tilde{d}(l) + C(r)) < (1+\eps)(\theta + C(r))\]
    the contradiction is deduced.
\end{proof}

\bibliographystyle{alpha}
\bibliography{refs}

\newcommand{\etalchar}[1]{$^{#1}$}
\begin{thebibliography}{WXXZ24}

\bibitem[AABD19]{acar2019parallel}
Umut~A Acar, Daniel Anderson, Guy~E Blelloch, and Laxman Dhulipala.
\newblock Parallel batch-dynamic graph connectivity.
\newblock In {\em The 31st ACM Symposium on Parallelism in Algorithms and Architectures}, pages 381--392, 2019.

\bibitem[AS87]{awerbuch1987new}
Awerbuch and Shiloach.
\newblock New connectivity and msf algorithms for shuffle-exchange network and pram.
\newblock {\em IEEE Transactions on Computers}, 100(10):1258--1263, 1987.

\bibitem[ASZ20]{AndoniSZ20}
Alexandr Andoni, Clifford Stein, and Peilin Zhong.
\newblock Parallel approximate undirected shortest paths via low hop emulators.
\newblock In {\em {STOC}}, pages 322--335. {ACM}, 2020.

\bibitem[BDD{\etalchar{+}}82]{BeckerDDHKKMNRW82}
Michael Becker, W.~Degenhardt, J{\"{u}}rgen Doenhardt, Stefan Hertel, Gerd Kaninke, W.~Kerber, Kurt Mehlhorn, Stefan N{\"{a}}her, Hans Rohnert, and Thomas Winter.
\newblock A probabilistic algorithm for vertex connectivity of graphs.
\newblock {\em Inf. Process. Lett.}, 15(3):135--136, 1982.

\bibitem[BGS21]{BernsteinGS21}
Aaron Bernstein, Maximilian~Probst Gutenberg, and Thatchaphol Saranurak.
\newblock Deterministic decremental {SSSP} and approximate min-cost flow in almost-linear time.
\newblock In {\em {FOCS}}, pages 1000--1008. {IEEE}, 2021.

\bibitem[BJMY25]{blikstad2025globalvsstvertex}
Joakim Blikstad, Yonggang Jiang, Sagnik Mukhopadhyay, and Sorrachai Yingchareonthawornchai.
\newblock Global vs. s-t vertex connectivity beyond sequential: Almost-perfect reductions \& near-optimal separations.
\newblock In {\em {STOC}}. {ACM}, 2025.

\bibitem[Bre74]{10.1145/321812.321815}
Richard~P. Brent.
\newblock The parallel evaluation of general arithmetic expressions.
\newblock {\em J. ACM}, 21(2):201–206, April 1974.

\bibitem[CKL{\etalchar{+}}22]{ChenKLPGS22}
Li~Chen, Rasmus Kyng, Yang~P. Liu, Richard Peng, Maximilian~Probst Gutenberg, and Sushant Sachdeva.
\newblock Maximum flow and minimum-cost flow in almost-linear time.
\newblock In {\em {FOCS}}, pages 612--623. {IEEE}, 2022.

\bibitem[CKT93]{CheriyanKT93}
Joseph Cheriyan, Ming{-}Yang Kao, and Ramakrishna Thurimella.
\newblock Scan-first search and sparse certificates: An improved parallel algorithms for k-vertex connectivity.
\newblock {\em {SIAM} J. Comput.}, 22(1):157--174, 1993.

\bibitem[CT91]{CheriyanT91}
Joseph Cheriyan and Ramakrishna Thurimella.
\newblock Algorithms for parallel k-vertex connectivity and sparse certificates (extended abstract).
\newblock In {\em {STOC}}, pages 391--401. {ACM}, 1991.

\bibitem[DEMN21]{DoryEMN21}
Michal Dory, Yuval Efron, Sagnik Mukhopadhyay, and Danupon Nanongkai.
\newblock Distributed weighted min-cut in nearly-optimal time.
\newblock In {\em {STOC}}, pages 1144--1153. {ACM}, 2021.

\bibitem[ET75a]{ET75}
S.~Even and R.~E. Tarjan.
\newblock Network flow and testing graph connectivity.
\newblock {\em SIAM Journal on Computing}, 4:507--518, 1975.

\bibitem[ET75b]{EvenT75}
Shimon Even and Robert~Endre Tarjan.
\newblock Network flow and testing graph connectivity.
\newblock {\em {SIAM} J. Comput.}, 4(4):507--518, 1975.

\bibitem[FF56]{FF56}
L.~R. Ford and D.~R. Fulkerson.
\newblock Maximal flow through a network.
\newblock {\em Canadian Journal of Mathematics}, 8:399--404, 1956.

\bibitem[Fin20]{Fineman20}
Jeremy~T. Fineman.
\newblock Nearly work-efficient parallel algorithm for digraph reachability.
\newblock {\em {SIAM} J. Comput.}, 49(5), 2020.
\newblock Announced at STOC'18.

\bibitem[FNY{\etalchar{+}}20]{ForsterNYSY20}
Sebastian Forster, Danupon Nanongkai, Liu Yang, Thatchaphol Saranurak, and Sorrachai Yingchareonthawornchai.
\newblock Computing and testing small connectivity in near-linear time and queries via fast local cut algorithms.
\newblock In {\em {SODA}}, pages 2046--2065. {SIAM}, 2020.

\bibitem[Fra94]{frank94}
A.~Frank.
\newblock On the edge-connectivity algorithm of {Nagamochi and Ibaraki}.
\newblock Technical report, Laboratoire Artemis, IMAG, Universit{\'e} J. Fourier, Grenoble, 1994.
\newblock (unpublished).

\bibitem[FW78]{FortuneWyllie78}
Steven Fortune and James Wyllie.
\newblock Parallelism in random access machines.
\newblock In {\em Proceedings of the Tenth Annual ACM Symposium on Theory of Computing}, pages 114--118. ACM, 1978.

\bibitem[Gab95]{Gab95}
H.~N. Gabow.
\newblock A matroid approach to finding edge connectivity and packing arborescences.
\newblock {\em Journal of Computer and System Sciences}, 50:259--273, 1995.
\newblock Announced at STOC'91.

\bibitem[Gab06]{Gabow06}
Harold~N. Gabow.
\newblock Using expander graphs to find vertex connectivity.
\newblock {\em J. {ACM}}, 53(5):800--844, 2006.
\newblock Announced at FOCS'00.

\bibitem[Gaz91]{gazit1991optimal}
Hillel Gazit.
\newblock An optimal randomized parallel algorithm for finding connected components in a graph.
\newblock {\em SIAM Journal on Computing}, 20(6):1046--1067, 1991.

\bibitem[GG21]{GeissmannG21}
Barbara Geissmann and Lukas Gianinazzi.
\newblock Parallel minimum cuts in near-linear work and low depth.
\newblock {\em {ACM} Trans. Parallel Comput.}, 8(2):8:1--8:20, 2021.

\bibitem[GH61]{GH61}
R.~E. Gomory and T.~C. Hu.
\newblock Multi-terminal network flows.
\newblock {\em SIAM Journal on Applied Mathematics}, 9(4):551--570, 1961.

\bibitem[GK07]{GargK07}
Naveen Garg and Jochen K{\"{o}}nemann.
\newblock Faster and simpler algorithms for multicommodity flow and other fractional packing problems.
\newblock {\em {SIAM} J. Comput.}, 37(2):630--652, 2007.
\newblock Announced at FOCS'98.

\bibitem[GMV91]{gil1991towards}
Joseph Gil, Yossi Matias, and Uzi Vishkin.
\newblock Towards a theory of nearly constant time parallel algorithms.
\newblock In {\em [1991] Proceedings 32nd Annual Symposium of Foundations of Computer Science}, pages 698--710. IEEE Computer Society, 1991.

\bibitem[GZ22]{0001Z22}
Mohsen Ghaffari and Goran Zuzic.
\newblock Universally-optimal distributed exact min-cut.
\newblock In {\em {PODC}}, pages 281--291. {ACM}, 2022.

\bibitem[HK99]{henzinger1999randomized}
Monika~R Henzinger and Valerie King.
\newblock Randomized fully dynamic graph algorithms with polylogarithmic time per operation.
\newblock {\em Journal of the ACM (JACM)}, 46(4):502--516, 1999.

\bibitem[HO94]{HO94}
J.~Hao and J.~Orlin.
\newblock A faster algorithm for finding the minimum cut in a directed graph.
\newblock {\em Journal of Algorithms}, 17(3):424--446, 1994.
\newblock Announced at SODA'92.

\bibitem[HRG00]{HenzingerRG00}
Monika~Rauch Henzinger, Satish Rao, and Harold~N. Gabow.
\newblock Computing vertex connectivity: New bounds from old techniques.
\newblock {\em J. Algorithms}, 34(2):222--250, 2000.
\newblock Announced at FOCS'96.

\bibitem[HT73]{HopcroftT73}
John~E. Hopcroft and Robert~Endre Tarjan.
\newblock Dividing a graph into triconnected components.
\newblock {\em {SIAM} J. Comput.}, 2(3):135--158, 1973.

\bibitem[JaJ92]{JaJa92}
Joseph JaJa.
\newblock {\em An Introduction to Parallel Algorithms}.
\newblock Addison-Wesley, Reading, MA, 1992.

\bibitem[Kar99]{Kar99}
D.~R. Karger.
\newblock Random sampling in cut, flow, and network design problems.
\newblock {\em Mathematics of Operations Research}, 24(2):383--413, 1999.
\newblock Announced at STOC'94.

\bibitem[Kar00]{Karger00}
David~R. Karger.
\newblock Minimum cuts in near-linear time.
\newblock {\em J. {ACM}}, 47(1):46--76, 2000.
\newblock Announced at STOC'96.

\bibitem[KKM13]{kapron2013dynamic}
Bruce~M Kapron, Valerie King, and Ben Mountjoy.
\newblock Dynamic graph connectivity in polylogarithmic worst case time.
\newblock In {\em Proceedings of the twenty-fourth annual ACM-SIAM symposium on Discrete algorithms}, pages 1131--1142. SIAM, 2013.

\bibitem[Kle69]{Kleitman1969methods}
D~Kleitman.
\newblock Methods for investigating connectivity of large graphs.
\newblock {\em IEEE Transactions on Circuit Theory}, 16(2):232--233, 1969.

\bibitem[Kor25]{Korhonen25}
Tuukka Korhonen.
\newblock Linear-time algorithms for k-edge-connected components, k-lean tree decompositions, and more.
\newblock In {\em {STOC}}. {ACM}, 2025.

\bibitem[KR87]{KanevskyR87}
Arkady Kanevsky and Vijaya Ramachandran.
\newblock Improved algorithms for graph four-connectivity.
\newblock In {\em {FOCS}}, pages 252--259. {IEEE} Computer Society, 1987.

\bibitem[KR91]{KanevskyR91}
Arkady Kanevsky and Vijaya Ramachandran.
\newblock Improved algorithms for graph four-connectivity.
\newblock {\em J. Comput. Syst. Sci.}, 42(3):288--306, 1991.
\newblock anounced in FOCS'87.

\bibitem[KS96]{KS96}
D.~R. Karger and C.~Stein.
\newblock A new approach to the minimum cut problem.
\newblock {\em Journal of the ACM}, 43(4), 1996.
\newblock Announced at SODA'92 and STOC'93.

\bibitem[KT86]{KT86}
A.~Karzanov and E.~Timofeev.
\newblock Efficient algorithm for finding all minimal edge cuts of a nonoriented graph.
\newblock {\em Kibernetika}, pages 8--12, 1986.
\newblock Translated in Cybernetics, 1986, pp. 156--162.

\bibitem[Li20]{li2020faster}
Jason Li.
\newblock Faster parallel algorithm for approximate shortest path.
\newblock In {\em Proceedings of the 52nd Annual ACM SIGACT Symposium on Theory of Computing}, pages 308--321, 2020.

\bibitem[LJS19]{LiuJS19}
Yang~P. Liu, Arun Jambulapati, and Aaron Sidford.
\newblock Parallel reachability in almost linear work and square root depth.
\newblock In {\em {FOCS}}, pages 1664--1686. {IEEE} Computer Society, 2019.

\bibitem[LLW88]{LinialLW88}
Nathan Linial, L{\'{a}}szl{\'{o}} Lov{\'{a}}sz, and Avi Wigderson.
\newblock Rubber bands, convex embeddings and graph connectivity.
\newblock {\em Combinatorica}, 8(1):91--102, 1988.
\newblock Announced at FOCS'86.

\bibitem[LMN21]{Lopez-MartinezM21}
Andr{\'{e}}s L{\'{o}}pez{-}Mart{\'{\i}}nez, Sagnik Mukhopadhyay, and Danupon Nanongkai.
\newblock Work-optimal parallel minimum cuts for non-sparse graphs.
\newblock In {\em {SPAA}}, pages 351--361. {ACM}, 2021.

\bibitem[LNP{\etalchar{+}}21]{LiNPSY21}
Jason Li, Danupon Nanongkai, Debmalya Panigrahi, Thatchaphol Saranurak, and Sorrachai Yingchareonthawornchai.
\newblock Vertex connectivity in poly-logarithmic max-flows.
\newblock In {\em {STOC}}, pages 317--329. {ACM}, 2021.

\bibitem[LNPS23]{0006NPS23}
Jason Li, Danupon Nanongkai, Debmalya Panigrahi, and Thatchaphol Saranurak.
\newblock Near-linear time approximations for cut problems via fair cuts.
\newblock In {\em {SODA}}, pages 240--275. {SIAM}, 2023.

\bibitem[Mat87]{Ma87}
D.~W. Matula.
\newblock Determining the edge connectivity in {$O(mn)$} time.
\newblock In {\em Proceedings of the 28th Annual IEEE Symposium on Foundations of Computer Science (FOCS)}, pages 249--251, 1987.

\bibitem[Mat93]{Mat93}
D.~W. Matula.
\newblock A linear time {$2 + \epsilon$} approximation algorithm for edge connectivity.
\newblock In {\em Proceedings of the 4th Annual ACM-SIAM Symposium on Discrete Algorithms (SODA)}, pages 500--504, 1993.

\bibitem[MN20]{MukhopadhyayN20}
Sagnik Mukhopadhyay and Danupon Nanongkai.
\newblock Weighted min-cut: sequential, cut-query, and streaming algorithms.
\newblock In {\em {STOC}}, pages 496--509. {ACM}, 2020.

\bibitem[MR87]{MillerR87}
Gary~L. Miller and Vijaya Ramachandran.
\newblock A new graph triconnectivity algorithm and its parallelization.
\newblock In {\em {STOC}}, pages 335--344. {ACM}, 1987.

\bibitem[NI92a]{NI92A}
H.~Nagamochi and T.~Ibaraki.
\newblock Computing edge connectivity in multigraphs and capacitated graphs.
\newblock {\em SIAM Journal on Discrete Mathematics}, 5(1):54--66, 1992.
\newblock Announced at SIGAL'90.

\bibitem[NI92b]{NI92}
H.~Nagamochi and T.~Ibaraki.
\newblock Linear time algorithms for finding a sparse $k$-connected spanning subgraph of a $k$-connected graph.
\newblock {\em Algorithmica}, 7:583--596, 1992.

\bibitem[NI92c]{NagamochiI92}
Hiroshi Nagamochi and Toshihide Ibaraki.
\newblock A linear-time algorithm for finding a sparse k-connected spanning subgraph of a k-connected graph.
\newblock {\em Algorithmica}, 7(5{\&}6):583--596, 1992.

\bibitem[NSY19]{NanongkaiSY19}
Danupon Nanongkai, Thatchaphol Saranurak, and Sorrachai Yingchareonthawornchai.
\newblock Breaking quadratic time for small vertex connectivity and an approximation scheme.
\newblock In {\em {STOC}}, pages 241--252. {ACM}, 2019.

\bibitem[Pod73a]{Po73}
V.~D. Podderyugin.
\newblock An algorithm for finding the edge connectivity of graphs.
\newblock {\em Voprosy Kibernetiki}, page 136, 1973.

\bibitem[Pod73b]{Podderyugin1973algorithm}
VD~Podderyugin.
\newblock An algorithm for finding the edge connectivity of graphs.
\newblock {\em Vopr. Kibern}, 2:136, 1973.

\bibitem[Pug90]{pugh1990skip}
William Pugh.
\newblock Skip lists: a probabilistic alternative to balanced trees.
\newblock {\em Communications of the ACM}, 33(6):668--676, 1990.

\bibitem[RGH{\etalchar{+}}22]{RozhonGHZL22}
V{\'{a}}clav Rozhon, Christoph Grunau, Bernhard Haeupler, Goran Zuzic, and Jason Li.
\newblock Undirected (1+\emph{{\(\epsilon\)}})-shortest paths via minor-aggregates: near-optimal deterministic parallel and distributed algorithms.
\newblock In {\em {STOC}}, pages 478--487. {ACM}, 2022.

\bibitem[SV81]{ShiloachV81}
Yossi Shiloach and Uzi Vishkin.
\newblock Finding the maximum, merging, and sorting in a parallel computation model.
\newblock {\em J. Algorithms}, 2(1):88--102, 1981.

\bibitem[SW97]{SW97}
M.~Stoer and F.~Wagner.
\newblock A simple min-cut algorithm.
\newblock {\em Journal of the ACM}, 44:585--591, 1997.
\newblock Announced at ESA'94.

\bibitem[SY22]{SaranurakY22}
Thatchaphol Saranurak and Sorrachai Yingchareonthawornchai.
\newblock Deterministic small vertex connectivity in almost linear time.
\newblock In {\em {FOCS}}, pages 789--800. {IEEE}, 2022.

\bibitem[Tar72]{Tarjan72}
Robert~Endre Tarjan.
\newblock Depth-first search and linear graph algorithms.
\newblock {\em {SIAM} J. Comput.}, 1(2):146--160, 1972.
\newblock Announced at FOCS'71.

\bibitem[TDB19]{tseng2019batch}
Thomas Tseng, Laxman Dhulipala, and Guy Blelloch.
\newblock Batch-parallel euler tour trees.
\newblock In {\em 2019 Proceedings of the Twenty-First Workshop on Algorithm Engineering and Experiments (ALENEX)}, pages 92--106. SIAM, 2019.

\bibitem[TV85]{TarjanV85}
Robert~Endre Tarjan and Uzi Vishkin.
\newblock An efficient parallel biconnectivity algorithm.
\newblock {\em {SIAM} J. Comput.}, 14(4):862--874, 1985.

\bibitem[WXXZ24]{williams2024new}
Virginia~Vassilevska Williams, Yinzhan Xu, Zixuan Xu, and Renfei Zhou.
\newblock New bounds for matrix multiplication: from alpha to omega.
\newblock In {\em Proceedings of the 2024 Annual ACM-SIAM Symposium on Discrete Algorithms (SODA)}, pages 3792--3835. SIAM, 2024.

\end{thebibliography}

\appendix*
\section{Missing Proofs}\label{sec:misingproofs}

\sssp*

To prove \Cref{thm:sssp}, we utilize the known \pram{} algorithm on undirected graphs with an edge-length. Here the edge-length is a function $w : E \to \bbR_{\ge 0}$, and the weight of a path $P$ in $G$ is denoted by $w(P) = \sum_{e \in E(P)} w(e)$.

\begin{theorem}[\cite{RozhonGHZL22}, Theorem 1.1] There is a deterministic \pram{} algorithm that, given an undirected graph $G = (V, E)$, an edge length function $w$, a vertex $s \in V$ and an approximation factor $\epsilon$, outputs a $(1 + \epsilon)$-approximate $s$-source shortest path tree in $\tO{m/\epsilon^2}$ work and $\tO{1}$ depth. \label{thm:sssp_edge}
\end{theorem}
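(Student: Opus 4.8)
The plan is to realize the standard blueprint \emph{``polylogarithmic-hop reduction by adding shortcuts, then parallel Bellman--Ford,''} where the one genuinely hard ingredient is a \emph{deterministic} construction of the shortcuts that itself runs in near-linear work and polylogarithmic depth. First I would reduce to a clean setting: by a rescaling-and-rounding step one may assume at the cost of a $(1+\eps/3)$ factor that the edge lengths are integers in $\{1,\dots,\poly(n)\}$, so all relevant distances lie in $\{1,\dots,\poly(n)\}$ and there are only $O(\log n)$ distance scales. The central object is a $(1+\eps/3,\beta)$-\emph{hopset} $H$ with $\beta=\polylog(n)$ and $|H|=\tO{m}$: a set of weighted edges on $V$ such that for every $v$ the minimum length of an $(s,v)$-walk in $G\cup H$ using at most $\beta$ edges lies between $\dist_G(s,v)$ and $(1+\eps/3)\dist_G(s,v)$. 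Given such an $H$, I run $O(\beta\log n)$ rounds of parallel edge relaxation from $s$ on $G\cup H$; each round relaxes all edges simultaneously in $\tO{m+|H|}=\tO{m}$ work and $\tO{1}$ depth, and after $\beta$ rounds the labels $d(\cdot)$ satisfy $\dist_G(s,v)\le d(v)\le(1+\eps/3)\dist_G(s,v)$. To extract an actual tree \emph{in $G$}, I perform one further relaxation round using only $G$-edges (which only decreases the labels while keeping them $\ge\dist_G$), then let each $v$ pick a parent $u\in N_G(v)$ attaining $d(u)+w(u,v)=d(v)$; after an infinitesimal generic perturbation of the lengths making shortest paths unique this is acyclic, giving a $(1+\eps)$-approximate shortest-path tree once the two $(1+\eps/3)$ factors are folded in.

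It remains to build $H$, which I would do bottom-up over the $O(\log n)$ scales. Having handled pairs at $G$-distance below $2^{i}$, at scale $i$ I compute a \emph{sparse neighborhood cover} of $G$ at radius $\Theta(2^{i})$ --- $\polylog(n)$ partitions of $V$ into clusters of weak diameter $O(2^{i})$ such that every pair within $G$-distance $2^{i}$ lies in a common cluster of some partition --- designate a center per cluster, add to $H$ an edge from each vertex to the center of every cluster containing it (with weight an approximate vertex-to-center distance, computed recursively using the lower-scale part of $H$) together with edges between nearby centers. Composing over the $O(\log n)$ scales, absorbing a $(1+\eps/\log n)$ loss at each scale, yields overall factor $1+\eps/3$, hop bound $\polylog(n)$, and size $\tO{m}$. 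The randomized version of the sparse-cover / low-diameter-decomposition step is Miller--Peng--Xu exponential-shift ball growing; the substance of the theorem is making it \emph{deterministic} in $\tO{m}$ work and $\tO{1}$ depth. This is done by phrasing ball growing and cover assembly as a $\polylog(n)$-round procedure in the \emph{Minor-Aggregation} model (whose only primitives are contracting chosen edges and aggregating values over the resulting minor), derandomizing the random shifts by the method of conditional expectations against the pessimistic estimator ``expected number of inter-cluster edges,'' and then invoking a generic simulation lemma implementing one Minor-Aggregation round in $\tO{m}$ work and $\tO{1}$ depth on a \pram{} via parallel connectivity, sorting and prefix sums. Pushing the accuracy parameters of ball growing and of the recursive vertex-to-center distances through the $O(\log n)$ levels produces the claimed $1/\eps^{2}$ dependence in the work.

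The main obstacle --- and the reason this is a full paper rather than an exercise --- is exactly this deterministic, work-efficient, polylog-depth low-diameter decomposition / sparse cover: randomization trivializes it, but derandomizing it without losing near-linear work or polylogarithmic depth requires the Minor-Aggregation abstraction together with a carefully chosen pessimistic estimator, plus a delicate accounting that the recursively computed approximate distances to cluster centers compound by at most $(1+\eps/\log n)$ per scale and that the hop count stays $\polylog(n)$ at every scale simultaneously. The remaining components --- the scale decomposition, parallel Bellman--Ford, the tree extraction, and the Minor-Aggregation-to-\pram{} simulation --- are routine bookkeeping by comparison, so essentially all of the effort would go into the deterministic cover construction and its hop/accuracy analysis.
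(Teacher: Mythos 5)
You are proving a statement the paper itself never proves: \Cref{thm:sssp_edge} is imported verbatim as a black box from \cite{RozhonGHZL22} (their Theorem~1.1), and the only thing the present paper adds is the easy vertex-length reduction in \Cref{sec:misingproofs}. So there is no internal proof to match; measured against the cited work, your architecture is also not theirs---\cite{RozhonGHZL22} does not build hopsets at all, but derandomizes the transshipment/boosting pipeline (crude approximation boosted over $O(\epsilon^{-2}\polylog n)$ iterations, which is where the $1/\epsilon^{2}$ actually comes from) using their minor-aggregation framework and deterministic spanner/cover primitives.

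Judged on its own terms, your sketch has concrete gaps. First, the hopset you describe does not have stretch $1+\epsilon/\log n$ per scale: routing a pair at distance $\approx 2^{i}$ through the center of a common cluster of weak diameter $O(2^{i})$ incurs a constant-factor detour, so composing $O(\log n)$ scales gives $O(1)^{O(\log n)}$ (or at best $O(1)$ with more care), not $1+\epsilon$; genuine $(1+\epsilon,\beta)$-hopsets need a finer construction (Cohen/Elkin--Neiman-style superclustering along the shortest path), and the accuracy bookkeeping you assert does not follow from star-to-center plus center-to-center edges. Second, the hard deterministic content is not isolated to the LDD: your pessimistic-estimator derandomization addresses only the cover, while the ``recursive vertex-to-center distances using the lower-scale part of $H$'' is exactly an approximate SSSP subroutine, so as stated the construction is circular, and no deterministic work-efficient, polylog-depth, polylog-hopbound $(1+\epsilon)$-hopset construction is known that you could invoke. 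Third, the claimed $\tO{m/\epsilon^{2}}$ work is asserted rather than derived; hopset-based routes typically pay either $n^{\rho}$-type size/work overhead or worse $\epsilon$-dependence. As a result the proposal cannot be accepted as a proof of \Cref{thm:sssp_edge}; for this paper's purposes the correct move is simply to cite \cite{RozhonGHZL22} and prove only the edge-to-vertex-length reduction, as the paper does.
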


Now, given a vertex length $\ell$, we assign a new edge-length $w_{\ell}$ by $w_{\ell}(\{u, v\}) = w(u) + w(v)$ for each $\{u, v\} \in E$. \Cref{lem:esptovsp} shows this edge length preserves (approximate) shortest paths.

\begin{lemma} \label{lem:esptovsp}
    Given an undirected graph $G = (V, E)$ and a vertex length $\ell$, If a path $P$ in $G$ is a $(1+\epsilon)$-approximate shortest $(s, t)$-path with respect to the edge length $w_{\ell}$, then $P$ is also a $(1 + \epsilon)$-approximate shortest path with respect to the vertex length $\ell$.
\end{lemma}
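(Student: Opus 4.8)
The plan is to reduce the claim to a single exact algebraic identity between the $w_\ell$-length and the $\ell$-length of a path. Recall that $w_\ell$ is defined by $w_\ell(\{u,v\}) = \ell(u) + \ell(v)$. Fix an $(s,t)$-path $P = (v_1, \ldots, v_q)$ with $v_1 = s$ and $v_q = t$. Each of $v_1$ and $v_q$ is an endpoint of exactly one edge of $P$, while every other vertex of $P$ is an endpoint of exactly two edges of $P$; summing edge lengths and regrouping the sum by vertex therefore gives
\[
w_\ell(P) \;=\; \sum_{j=1}^{q-1}\bigl(\ell(v_j) + \ell(v_{j+1})\bigr) \;=\; 2\,\ell(P) - \ell(s) - \ell(t).
\]
(This holds verbatim even if $P$ revisits vertices, with $\ell(P)$ counting multiplicities; but since $\ell \ge 0$ one may in any case assume $P$ is simple.)

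First I would transfer this identity to the optima. Since $\ell(s) + \ell(t)$ does not depend on which $(s,t)$-path is chosen, minimizing $w_\ell(P)$ over all $(s,t)$-paths is the same as minimizing $\ell(P)$, so
\[
\dist_{G, w_\ell}(s,t) \;=\; 2\,\dist_{G,\ell}(s,t) - \ell(s) - \ell(t).
\]
Now let $P$ be a $(1+\epsilon)$-approximate shortest $(s,t)$-path with respect to $w_\ell$, i.e.\ $w_\ell(P) \le (1+\epsilon)\,\dist_{G,w_\ell}(s,t)$. Substituting the two displayed equalities and rearranging,
\[
2\,\ell(P) \;\le\; 2(1+\epsilon)\,\dist_{G,\ell}(s,t) - \epsilon\bigl(\ell(s) + \ell(t)\bigr) \;\le\; 2(1+\epsilon)\,\dist_{G,\ell}(s,t),
\]
where the last inequality uses $\ell(s), \ell(t) \ge 0$. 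Dividing by $2$ gives $\ell(P) \le (1+\epsilon)\,\dist_{G,\ell}(s,t)$, which is exactly the assertion of \Cref{lem:esptovsp}.

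There is no genuine obstacle here: the statement is an elementary manipulation, and the only two points needing a sentence of care are the incidence count producing the identity $w_\ell(P) = 2\ell(P) - \ell(s) - \ell(t)$, and the observation that the stray additive term $-\epsilon(\ell(s)+\ell(t))$ has the favorable sign and may therefore be discarded. With \Cref{lem:esptovsp} in hand, \Cref{thm:sssp} follows by running the edge-length algorithm of \Cref{thm:sssp_edge} on $(G, w_\ell)$ to obtain an approximate shortest-path tree $T$ rooted at $s$, applying \Cref{lem:esptovsp} to each root-to-vertex path of $T$ to conclude it is a $(1+\epsilon)$-approximate $\ell$-shortest path, and computing $d(v) = \dist_{T,\ell}(s,v)$ by a standard parallel tree aggregation (Euler-tour / prefix sums over $T$); all of this stays within $\tO{m/\epsilon^2}$ work and $\tO{1}$ depth.
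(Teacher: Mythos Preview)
Your proof is correct and follows essentially the same approach as the paper: both rely on the identity $w_\ell(P) = 2\ell(P) - \ell(s) - \ell(t)$ and then discard the favorable additive term $-\epsilon(\ell(s)+\ell(t))$ using nonnegativity of $\ell$. The paper's write-up arranges the algebra slightly differently (computing $\ell(P) - \ell(P^\ast)$ directly rather than passing through $\dist_{G,w_\ell}$), but the content is identical.
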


\begin{proof}
    Note that for any $(s, t)$-path $P'$, $w_{\ell}(P') = 2\ell(P') - \ell(s) - \ell(t) \le 2\ell(P')$. Let $P^{\ast}$ be a shortest $(s, t)$-path with respect to the vertex length $\ell$. It is clear that $P^{\ast}$ is also a shortest $(s, t)$-path with respect to the edge length $w_{\ell}$.

    \begin{align*}
        \ell(P) - \ell(P^{\ast}) &= \frac{1}{2}(w_{\ell}(P) - w_{\ell}(P^{\ast}))\\
        &\le \frac{\epsilon}{2}w_{\ell}(P^{\ast})\\
        &\le \epsilon \ell(P^{\ast})
    \end{align*}

    Thus $\ell(P) \le (1 + \epsilon)\ell(P^{\ast})$.
\end{proof}

\begin{proof}[Proof of \Cref{thm:sssp}]
    By \Cref{thm:sssp_edge}, one can assign edge weight $w_{\ell}$ and construct $(1+\epsilon)$-approximate $s$-source shortest path tree with respect to the edge length $w_{\ell}$. By \Cref{lem:esptovsp}, it is guaranteed it is also a $(1+\epsilon)$-approximate $s$-source shortest path tree with respect to the vertex length $\ell$.
\end{proof}
\section{The Data Structure}\label{sec:datastructure}

In this section, we prove \Cref{lem:dynamicspanningforest}. A full batch-dynamic graph connectivity algorithm, described previously by Acar et al.~\cite{acar2019parallel}, supports all the required operations in sensitivity spanning forest with amortized complexity. However, their approach does not suit our setting: it might require as much as $\Omega(m)$ work for a component with $m$ edges even when we only delete a small batch of $m' \ll m$ edges. Moreover, since we may never revisit the component again, amortized analysis becomes ineffective. Instead, we adopt the sequential technique introduced in \cite{kapron2013dynamic}, which maintains a dynamic spanning forest with high probability in $\tO{1}$ worst-case complexity, parallelized with batch-parallel euler tour tree from \cite{tseng2019batch}.

\subsection{Preliminaries}

\paragraph{Aggregation operator.}
An \emph{aggregation operator} is a commutative and associative binary operator \(\agg\), mapping two \(B=\tO{1}\)-bit inputs \(x, y\) to a single \(B\)-bit output \(z = x \agg y\). Given a finite set of \(B\)-bit inputs \(S = \{x_{1}, x_{2}, \dots, x_{n}\}\), the aggregated result \(\bigagg_{x \in S} x = x_{1} \agg x_{2} \agg \dots \agg x_{n}\) is well-defined. The \emph{aggregation work-depth}, denoted by \((W_{\agg}, D_{\agg})\), refers to the worst-case work and depth required to compute the aggregation operation \((x, y) \mapsto x \agg y\).

\paragraph{Basic parallel algorithms and data structures.}
A \emph{skip list}, introduced by Pugh~\cite{pugh1990skip}, is a randomized data structure used to efficiently manage linear or circular arrays, supporting fast split and merge operations. Tseng et al.~\cite{tseng2019batch} developed a batch-parallel variant of skip lists with near-linear work and polylogarithmic depth in terms of batch size. A \emph{parallel dictionary} supports batch insertions, deletions, and lookups; \cite{gil1991towards} showed that such a dictionary can achieve linear space, \(O(k)\) work, and \(O(\lg^{\ast} k) = \tO{1}\) depth for batch size \(k\). A \emph{parallel MSF (maximal spanning forest)} is to find a maximal spanning forest of a given undirected graph. This can be run in $\tO{m}$ work and $\tO{1}$ depth, in \cite{awerbuch1987new} or randomized \cite{gazit1991optimal}.

\paragraph{Circular list.}
A \emph{circular (or cyclic) list} of \(n\) elements is a linked list structure where the \(i\)-th element has the \((i+1)\)-th element as its right successor for \(1 \le i \le n - 1\), and the \(n\)-th element points back to the first element. An interval \([s, e]\) in a circular list is the consecutive subsection of elements starting from head \(s\) and ending at tail \(e\). The \emph{first right descendant} of an element \(s\) satisfying a particular property is the closest element \(y\) with that property, minimizing the interval length \(\abs{[s, y]}\).

\paragraph{Euler Tour.}
Given an undirected tree \(T = (V, E)\), we construct a directed graph \(T' = (V, E')\) by replacing each undirected edge in \(T\) with two directed edges (one in each direction), thus forming a bidirectional cycle. Additionally, each vertex \(v\) is assigned a self-loop \((v, v)\). Formally, we define:
\[
E' = \{(v, v) \mid v \in V\} \cup \{(u, v), (v, u) \mid \{u, v\} \in E\}.
\]
An \emph{Euler tour representation} of the tree \(T\) is a cyclic sequence of edges in \(E'\) corresponding precisely to an Eulerian circuit of \(T'\). Explicitly, an Euler tour representation is a permutation \((c_0, c_1, \dots, c_{|E'|-1})\) of edges in \(E'\) where, for each edge \(c_i = (a_i, b_i)\), we have \(b_i = a_j\), with \(j - i \equiv 1 \pmod{|E'|}\).

\paragraph{Euler tour-based data structures.}
An \emph{Euler tour tree} (ET-tree), first introduced by \cite{henzinger1999randomized}, is a dynamic data structure maintaining an Euler tour representation of a forest, supporting efficient updates such as edge insertions and deletions. \cite{tseng2019batch} subsequently developed a batch-parallel version of Euler tour trees. We formalize the definition of their \emph{Batch-parallel Euler Tour Tree} in \Cref{thm:batch_et}.

\subsection{Batch-Parallel Euler Tour Tree}

We recall the batch-parallel data structures from Tseng et al.~\cite{tseng2019batch}, here we list the operations that we need.

\begin{definition}[Batch-Parallel Interval Tree] \label{def:btit}
Given a circular list of \(n\) elements \((x_{1}, \sigma_{1}), \dots, (x_{n}, \sigma_{n})\), where each \(x_{i}\) is a \(B\)-bit input to an aggregation operator \(\agg\) with aggregation work-depth \((W_{\agg}, D_{\agg})\), and each \(\sigma_{i} \in \mathbb{R}^{+}\). A \emph{batch-parallel interval tree} is a data structure supporting the following queries.

\begin{itemize}
    \item \(\alg{Init}(x, \sigma)\): Initializes the structure with data \(x\) and weights \(\sigma\), in \(\tO{nW_{\agg}}\) work and \(\tO{D_{\agg}}\) depth with high probability.
    \item \(\alg{Update}(((j_{i}, x_{i}')_{i=1}^{q}))\): Updates data elements \(x_{j_i}\) to \(x_{i}'\) for each \(i\), in \(\tO{qW_{\agg}}\) work and \(\tO{D_{\agg}}\) depth with high probability.
    \item \(\alg{Agg}(([s_{i}, e_{i}])_{i=1}^{q})\): Returns \(\left(\bigagg_{j \in [s_{i}, e_{i}]} x_{j}\right)_{i=1}^{q}\) in \(\tO{qW_{\agg}}\) work and \(\tO{D_{\agg}}\) depth with high probability.
    \item \(\alg{Sum}(([s_{i}, e_{i}])_{i=1}^{q})\): Returns \(\left(\sum_{j \in [s_{i}, e_{i}]} \sigma_{j}\right)_{i=1}^{q}\) in \(\tO{q}\) work and \(\tO{1}\) depth with high probability.
    \item \(\alg{Search}(b, \theta)\): Given an element \(b\) and \(\theta \in \mathbb{R}^{+}\), returns the first right descendant \(e\) of \(b\) with \(\sum_{j \in [b, e]} \sigma_{j} \ge \theta\) in \(\tO{1}\) work and \(\tO{1}\) depth with high probability. It should be guaranteed that such $e$ exists.
\end{itemize}
\end{definition}

\begin{theorem}[Batch-Parallel Euler Tour Tree~\cite{tseng2019batch}] \label{thm:batch_et}
Given an undirected forest \(F = (V, E)\) with vertex data \((x_{v})_{v \in V}\) and weights \((\sigma_{v})_{v \in V}\), where each \(x_{v}\) is input to an aggregation operator \(\agg\) with aggregation work-depth \((W_{\agg}, D_{\agg})\), and $\sigma_{v} \ge 1$ for any $v \in V$. There exists a \emph{Batch-parallel Euler tour tree} $\cE$ in \pram{} model, which is a randomized data structure supporting:

\begin{itemize}
    \item \(\alg{Init}(F, x, \sigma)\): Initializes the structure on \(F\) in \(\tO{|V|W_{\agg}}\) work and \(\tO{D_{\agg}}\) depth with high probability.
    \item \(\alg{Link}(E')\): Adds edges \(E'\) disjoint from \(E\) and $E \cup E'$ is acyclic, in \(\tO{|E'|W_{\agg}}\) work and \(\tO{D_{\agg}}\) depth with high probability.
    \item \(\alg{Cut}(E')\): Deletes edges \(E' \subseteq E\) in \(\tO{|E'|W_{\agg}}\) work and \(\tO{D_{\agg}}\) depth with high probability.
\end{itemize}

Each connected component \(K \subseteq V\) is assigned a unique identifier \(\id(K)\), forming a set of IDs \(\mathcal{I}_{\mathcal{E}}\). For each \(\id \in \mathcal{I}_{\mathcal{E}}\), the component is denoted \(\mathcal{E}(\id)\). The structure maintains a batch-parallel interval tree (\Cref{def:btit}) \(\cS_{\id}\) representing an Euler tour of each component, associating each vertex node \((v, v)\) with data \(x_{v}\) and weight \(\sigma_{v}\). Any other edge nodes $(u, v)$ receive the data $x$ as identity, and $\sigma = 0$.
\end{theorem}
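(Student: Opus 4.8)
The plan is to reconstruct the batch-parallel Euler tour tree of~\cite{tseng2019batch}, threading the vertex weights $\sigma$ through the data structure. For each tree $T$ of the forest $F$, form the graph $T'$ (bidirectional edges plus self-loops, as in the preliminaries) and store a fixed Eulerian circuit of $T'$ as a circular list whose nodes are the edges of $T'$. Each such circular list is kept in a batch-parallel interval tree (\Cref{def:btit}), placing $(x_v,\sigma_v)$ on the self-loop node $(v,v)$ and $(\text{identity of }\agg,\,0)$ on every directed tree-edge node; this is exactly the association asserted in the statement and makes $\alg{Agg}$, $\alg{Sum}$, $\alg{Search}$ available per component, with the interval-tree identifiers serving as $\cI_{\cE}$. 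For $\alg{Init}$, note $F$ is already a forest, so no spanning-forest step is needed: root each tree, enumerate each vertex's incident half-edges in rotation order, compute successor pointers on the Euler tour by list ranking, and invoke the interval-tree $\alg{Init}$ once per tour; this costs $\tO{\abs{V}\, W_{\agg}}$ work and $\tO{D_{\agg}}$ depth.

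For $\alg{Cut}(E')$, I would keep a parallel dictionary mapping each current tree edge $\{u,v\}$ to the positions of its two directed nodes $(u,v),(v,u)$ in their common Euler tour. Removing a single edge splits that tour at these two positions into the Euler tours of the two resulting subtrees. To handle a whole batch on one tour, sort the (at most $2\abs{E'}$) split points along the tour and read off the nesting structure of the cut edges, which dictates how the resulting arcs must be concatenated to form each component's Euler tour; all of the splits and re-concatenations are realized through the skip-list split/join primitives of~\cite{pugh1990skip,tseng2019batch} in $\tO{\abs{E'}\, W_{\agg}}$ work and $\tO{D_{\agg}}$ depth.

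For $\alg{Link}(E')$, the precondition that $E\cup E'$ is acyclic means $E'$ is a forest on the current components. I would contract each component to its identifier node, obtaining an auxiliary forest on $\cI_{\cE}$ with edge set $E'$, root each auxiliary tree and compute its Euler tour; this yields a consistent schedule in which the component tours are to be spliced. Each link $\{u,v\}$ inserts $(u,v),(v,u)$ and joins the tour containing $u$ (at the occurrence of the self-loop $(u,u)$) with the tour containing $v$; within a single auxiliary tree these joins commute appropriately and can be batched into $O(\log n)$ parallel rounds of skip-list joins, giving $\tO{\abs{E'}\, W_{\agg}}$ work and $\tO{D_{\agg}}$ depth. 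Any change to a node's $\sigma$ value (or $x$ value) is pushed through $\alg{Update}$ of the interval tree so that $\alg{Sum}$ and $\alg{Agg}$ stay correct.

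The main obstacle is not any single link or cut --- those are textbook Euler-tour-tree operations --- but carrying out an entire batch in depth $\tO{D_{\agg}}$, independent of the batch size. Concretely, one must exhibit the interleaving-free split/join schedule of~\cite{tseng2019batch}: argue that the joins induced by one tree of the auxiliary forest, and the multi-way split of a single tour, can each be performed in $O(\log n)$ parallel rounds with no two simultaneous operations touching overlapping skip-list segments. I would rely on the skip-list machinery of~\cite{pugh1990skip,tseng2019batch} as a black box for this; the only remaining effort is the bookkeeping needed to route $\sigma$ and the per-component identifiers through the structure, which does not affect the asymptotics.
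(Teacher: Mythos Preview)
The paper does not prove this theorem at all: it is stated as a black-box citation of Tseng, Dhulipala, and Blelloch~\cite{tseng2019batch}. The only justification the paper supplies is a short remark (after \Cref{cor:query_batch_et}) noting that the original result is stated in the \mtram{} model and can be simulated in \pram{} with only $\tO{1}$ overhead in work and depth, using Brent's scheduling and a Markov bound to convert expected bounds to high-probability bounds.

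Your proposal, by contrast, attempts to reconstruct the internals of the data structure from~\cite{tseng2019batch}: the Euler-tour representation on self-loops and bidirectional edges, the skip-list split/join schedule for batched $\alg{Link}$ and $\alg{Cut}$, and the routing of $(x_v,\sigma_v)$ through the interval tree. This is a reasonable sketch of what the cited paper does, and nothing in it is obviously wrong, but it is a genuinely different ``proof'' in the sense that the present paper offers none --- it simply imports the result. If you want to match the paper, replace your construction with a one-line citation plus the model-conversion remark; if you want to keep your reconstruction, be aware that the delicate part you correctly flag (an interleaving-free batch schedule achieving $\tO{D_{\agg}}$ depth) is exactly what~\cite{tseng2019batch} is being cited \emph{for}, so deferring to it as a black box at that point is circular.
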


\begin{corollary} \label{cor:query_batch_et}
    Given a batch-parallel euler tour tree $\cE$, it supports:
       
    \begin{itemize}
        \item $\alg{ID}((v_{i})_{i=1}^{q})$: Returns \((\id_{v_i})_{i=1}^{q}\) where $v_{i} \in \cE(\id_{i})$ in $\tO{q}$ work and $\tO{1}$ depth with high probability.
        \item \(\alg{Update}(((v_{i}, x_{i}')_{i=1}^{q}))\): Updates data elements \(x_{v_i}\) to \(x_{i}'\) for each \(i\), in \(\tO{qW_{\agg}}\) work and \(\tO{D_{\agg}}\) depth with high probability.
        \item $\alg{Agg}((\id_{i})_{i=1}^{q})$: Returns \((\bigagg_{u \in \cE(\id_{i})} x_{u})_{i=1}^{q}\) in \(\tO{qW_{\agg}}\) work and $\tO{D_{\agg}}$ depth with high probability.
        \item $\alg{Sum}((\id_{i})_{i=1}^{q})$: Returns \((\sum_{u \in \cE(\id_{i})} \sigma_{u})_{i=1}^{q}\) in \(\tO{q}\) work and $\tO{1}$ depth with high probability.
        \item $\alg{Components}((\id_{i})_{i=1}^{q})$: Returns a list of connected components, \((\cE(\id_{i}))_{i=1}^{q}\) in $\tO{\sum_{\id \in I'} \abs{\cE(\id)} }$ work and $\tO{1}$ depth with high probability.
        \item $\alg{Tree}(\id, x, q)$: Given an ID $\id \in \cI_{\cE}$, satisfying $\sigma(\cE(\id)) \ge 2q$, returns a tree $T$ in $\tO{q}$ work and $\tO{1}$ depth with high probability., such that $V(T) \subseteq \cE(\id), x \in V(T)$, and one of the following two cases happens.
        \begin{enumerate}
            \item $q\le \sigma(V(T))\le 2q$.
            \item $\sigma(V(T))<q$ and there is a vertex $v\in \cC(\id)$ with $\sigma(v)>q$ such that $v$ is adjacent to $V(T)$ in $E$. 
        \end{enumerate}
    \end{itemize}
\end{corollary}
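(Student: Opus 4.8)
The plan is to implement all six operations directly on top of the batch-parallel Euler tour tree $\cE$ of \Cref{thm:batch_et} and the batch-parallel interval trees $\cS_{\id}$ it already maintains (\Cref{def:btit}), plus one auxiliary gadget: for every vertex $v$ keep a pointer into the circular list of its component aimed at the self-loop node $(v,v)$, and keep the component identifier $\id$ stored at the top of the corresponding skip list. Both can be maintained within the stated cost of $\alg{Link}$ and $\alg{Cut}$. Given this, $\alg{ID}((v_i))$ is answered by following each pointer and walking to the top of its skip list; since skip lists have height $\tO{1}$ with high probability this is $\tO{q}$ work and $\tO{1}$ depth, and it also lets every other operation route a query about $v_i$ to the correct interval tree $\cS_{\id}$.

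For $\alg{Update}$, $\alg{Agg}$ and $\alg{Sum}$ I would forward each query to the primitive of the same name in \Cref{def:btit}: $\alg{Update}((v_i,x_i'))$ becomes $\cS_{\id}.\alg{Update}$ on the nodes $(v_i,v_i)$, and $\alg{Agg}((\id_i))$, $\alg{Sum}((\id_i))$ become $\cS_{\id_i}.\alg{Agg}$, $\cS_{\id_i}.\alg{Sum}$ applied to the whole circular list of $\cE(\id_i)$. Correctness of the latter two is immediate because every non-self-loop tour node carries the identity of $\agg$ and weight $0$, so aggregating (resp.\ summing) the whole tour equals $\bigagg_{u\in\cE(\id_i)}x_u$ (resp.\ $\sum_{u\in\cE(\id_i)}\sigma_u$); the work and depth bounds are termwise those of \Cref{def:btit}. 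For $\alg{Components}((\id_i))$, for each $\id_i$ I would flatten the circular skip list of $\cS_{\id_i}$ (level by level / by pointer jumping) and keep the self-loop nodes; since the tour of $\cE(\id_i)$ has $3\abs{\cE(\id_i)}-2$ nodes, this costs $\tO{\abs{\cE(\id_i)}}$ work and $\tO{1}$ depth, summing to the claimed $\tO{\sum_{\id\in I'}\abs{\cE(\id)}}$.

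The substantive operation is $\alg{Tree}(\id,x,q)$. Read the circular list of $\cE(\id)$ starting from the self-loop $(x,x)=:b$ (assuming, as is standard and can be arranged by a light rerooting, that self-loops sit at first visits of a DFS from $x$): then for any node $e$, the set of vertices whose self-loop lies in the interval $[b,e]$ induces a connected subtree of the forest that contains $x$, and an interval visiting $t$ distinct self-loops has only $O(t)$ tour nodes. I would set $e\gets\cS_{\id}.\alg{Search}(b,q)$ — this succeeds since $\sigma(\cE(\id))\ge 2q\ge q$, and $e$ is necessarily a self-loop $(v^\ast,v^\ast)$ because edge nodes have weight $0$ — so that $\sum_{j\in[b,\,\Prec(e)]}\sigma_j<q\le\sum_{j\in[b,e]}\sigma_j<q+\sigma_{v^\ast}$, where $\Prec(e)$ is the predecessor of $e$. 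If $\sum_{j\in[b,e]}\sigma_j\le 2q$, return the subtree $T$ induced on the vertices with self-loop in $[b,e]$: it is connected, contains $x$, and satisfies case~1 (the bound $\sigma(V(T))\ge q$ is exactly the stopping condition of $\alg{Search}$). Otherwise $\sigma_{v^\ast}>q$; return instead the subtree $T$ induced on the vertices with self-loop in $[b,\Prec(e)]$, which has $\sigma(V(T))<q$, contains $x$, and to which $v^\ast$ — the next vertex reached by the tour, hence a forest-neighbor of some vertex of $T$ — is adjacent with $\sigma(v^\ast)>q$, i.e.\ case~2. (The degenerate case $e=b$ occurs only when $\sigma_x\ge q$, which does not arise in our applications since there every vertex weight is strictly below $q$; if wanted, return $T=\{x\}$.) Extracting the $O(q)$ tour nodes of $[b,e]$ from $\cS_{\id}$ and reading off the forest edges among them costs $\tO{q}$ work and $\tO{1}$ depth.

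The step I expect to be the main obstacle is the correctness of $\alg{Tree}$: pinning down the Euler-tour convention (and the rerooting, if needed) so that an $\alg{Search}$-prefix is exactly a connected subtree containing $x$, establishing the $O(q)$ bound on the number of tour nodes in that prefix so extraction stays within $\tO{q}$ work, and verifying the two-case dichotomy against the precise alternatives in the statement. The other five operations are essentially bookkeeping over \Cref{def:btit}; the only mild care needed there is keeping the vertex-to-node pointers and per-component identifiers consistent under $\alg{Link}$ and $\alg{Cut}$ without exceeding their stated costs.
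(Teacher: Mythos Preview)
Your treatment of the first five operations is correct and matches the paper, which simply says they follow from \Cref{thm:batch_et} and \Cref{def:btit}. The gap is in $\alg{Tree}$, exactly where you flagged it: the assumption that self-loops sit at first visits of a DFS from $x$ cannot be arranged by a ``light rerooting''. Rerooting the circular list at $(x,x)$ only changes which element comes first; it does not move any other vertex's self-loop relative to that vertex's directed-edge occurrences, and repositioning all self-loops after each $\alg{Link}$/$\alg{Cut}$ would destroy the cost bounds of those operations. Without the assumption both your correctness and your complexity claims fail. On the path $x\text{--}v_1\text{--}\cdots\text{--}v_n$ with each $v_i$'s self-loop placed at its \emph{last} visit, $\alg{Search}((x,x),q)$ lands at $(v_n,v_n)$ after an interval of $\Theta(n)$ tour nodes containing only two self-loops; the vertices $\{x,v_n\}$ whose self-loops lie in that interval are not even adjacent, and extracting the interval is $\Theta(n)$ work rather than $\tO{q}$.

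The paper sidesteps this by building the subtree from the \emph{edges} in the interval rather than from the self-loops: for an interval $J$ starting at $(x,x)$, the subgraph $T_J=(V(E_J),E_J)$ with $E_J=\{\{u,v\}:(u,v)\in J\}$ is always a connected subtree containing $x$, regardless of where self-loops sit. The price is that $V(E_J)$ may contain ``absent'' vertices whose self-loop lies outside $J$, so $\sigma(V(E_J))$ is not the quantity $\alg{Search}$ controls. The paper handles this with a second truncation: using $\sigma_v\ge 1$ and the Euler-tour inequality $|V(E_J)|-1\le|E_J|\le 3|V(E_J)|-2$, it cuts the interval to a prefix $I'$ of length at most $3\lceil q\rceil-2$ (forcing at least $\lceil q\rceil$ distinct endpoint vertices, hence $\sigma(V(E_{I'}))\ge q$), orders $V(E_{I'})$ by first-appearance index along the tour, and returns the shortest such prefix whose endpoint-vertex weight reaches $q$; the last-appearing vertex then plays the role of your $v^\ast$ in the case split. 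This keeps the working interval at $O(q)$ tour nodes without any assumption on self-loop placement.
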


\begin{proof}[Proof of \Cref{cor:query_batch_et}]
The correctness and complexity of the operations \(\alg{ID}, \alg{Update}, \alg{Agg}, \alg{Sum}\), and \(\alg{Components}\) follow immediately from the properties provided by the Batch-parallel Euler Tour Tree (\Cref{thm:batch_et}) and the underlying Batch-parallel interval tree (\Cref{def:btit}). Thus, we focus on proving the correctness and complexity of the \(\alg{Tree}\) operation.

\paragraph{$\alg{Tree}(\chi, v, q)$.} Given an identifier \(\id \in \cI_{\cE}\), a vertex \(x \in \cE(\id)\), and a weight parameter \(q\), our goal is to identify a subtree \(T\) with \(V(T)\subseteq \cE(\id)\), containing vertex \(x\), satisfying precisely one of the following conditions:

\begin{enumerate}
    \item \(q \le \sigma(V(T)) \le 2q\), or
    \item \(\sigma(V(T)) < q\), and there exists a vertex \(v\in \cE(\id)\), adjacent to \(V(T)\) in \(E\), with \(\sigma(v) > q\).
\end{enumerate}

\paragraph{Simplifying the condition.} It suffices to find a tree $\hat{T}$ with a vertex \(y\) such that \[\sigma(V(\hat{T}) - \{y\}) < q \le \sigma(V(\hat{T})).\] We call this vertex $y$ as a \emph{key vertex}. Given such \(\hat{T}, y\), one can find another tree \(T\) satisfying one of the given conditions. if \(\sigma(V(\hat{T})) > 2q\), the vertex \(y\) satisfies \(\sigma(y)>q\), and we return the connected component of \(\hat{T}_{I}-y\) containing vertex \(x\). Otherwise, $\hat{T}$ is our desired tree satisfying $q \le \sigma(V(\hat{T})) \le 2q$.

\paragraph{Step 1 (Interval Search).}
Perform an interval search by executing:
\[
e \gets \cS_{\id}.\alg{Search}((x,x), q).
\]

Since only vertex nodes have positive weights, without loss of generality, we assume the resulting node \(e\) is a vertex node, denoted by \(e=(y,y)\) for some vertex \(y\in \cE(\id)\).

Define the interval:
\[
I := [(x,x), (y,y)].
\]

We introduce the following notation for an arbitrary interval \(J\) of \(\cS_{\id}\).
\[E_{J} := \{\{u,v\}\in E \mid (u,v)\in J\},\quad V_{J} := \{v\in V \mid (v,v)\in J\}\]
Additionally, denote the set of endpoints in $E_{J}$ as \(V(E_{J}) \subseteq V\).

By the construction of the search query, we immediately have:
\[
\sigma(V_{I}-\{y\}) < q \le \sigma(V_{I}).
\]

The subgraph \(T_{I}:=(V(E_{I}), E_{I})\) is connected and acyclic, hence it forms a tree. However, \(T_{I}\) might contain vertices that are endpoints of edges in the interval but do not appear explicitly as vertex nodes. Define such vertices as \emph{absent vertices}:
\[
A_{I} := V(E_{I}) - V_{I}.
\]

If the set of absent vertices \(A_{I}\) is empty, then the tree $\hat{T}_{I}$ with the key vertex $y$ satisfies desired condition.

\paragraph{Step 2 (Resolving Absent Vertices)}
If \(A_{I}\neq \emptyset\), we proceed carefully. Note that any interval $J$ of Euler tour $\cS_{\chi}$ satisfies
\[
|V(E_{J})|-1\le|E_{J}|\le3|V(E_{J})|-2.
\]

Thus, truncating the interval \(I\) to its prefix \(I'\) of size at most \(3\lceil q\rceil -2\) ensures that \(
|V(E_{I'})|\ge \lceil q\rceil
\), implying that \(\sigma(V(E_{I'}))\ge q\). Next, define the discovery index \(d(v)\) of a vertex \(v\in V(E_{I'})\) as the smallest index of an edge \((a,b)\) within the interval \(I'\) for which \(v\in\{a,b\}\). Let
\[V(E_{I'})=\{v_{1},v_{2},\dots,v_{t}\}\] be vertices ordered by increasing discovery index, i.e.:
\(
1=d(v_{1})\le d(v_{2})\le\cdots\le d(v_{t}).
\) Identify the smallest index \(j\le t\) such that
\(\sigma(\{v_{1},\dots,v_{j}\})\ge q,\) and truncate the interval \(I'\) to the prefix \(I''\) of length \(d(v_{j})\). The resulting subgraph \(\hat{T}_{I''}:=(V(E_{I''}),E_{I''})\) with the key vertex $v_{j}$ meets our desired condition.

\paragraph{Complexity Analysis.}
We now confirm the complexity. In \textbf{Step 1}, Finding the interval \(I\) via \(\alg{Search}\) takes \(\tO{1}\) work and \(\tO{1}\) depth. (from \Cref{def:btit}). Identifying $\abs{A_{I}}$ and determining the cases takes $\tO{1}$ work and $\tO{1}$ depth. In sum, Step 1 consumes $\tO{1}$ work and depth with high probability.

In \textbf{Step 2}, Interval truncations (\(I\to I'\to I''\)) require \(\tO{q}\) work and \(\tO{1}\) depth with high probability., due to the bounded interval size. Determining the absent vertices \(A_{I'}\) and obtaining their weights requires at most \(O(q)\) queries, thus \(\tO{q}\) work and \(\tO{1}\) depth with high probability. Computing discovery indices similarly takes \(\tO{q}\) work and \(\tO{1}\) depth with high probability. Constructing the tree \(\hat{T}_{I''}\) from interval \(I''\) can be done in \(\tO{q}\) work and \(\tO{1}\) depth with high probability using parallel MSF.

Finally, given a tree $\hat{T}$ and the key vertex $y$, returning a tree with desired condition takes $\tO{\abs{V(\hat{T})}} = \tO{q}$ work and $\tO{1}$ depth with high probability using parallel MSF.

Combining these steps yields a total complexity of:
\(\tO{q}\) work, and $\tO{1}$ depth with high probability.
\end{proof}

\begin{remark}
    In original source \cite{tseng2019batch}, \Cref{thm:batch_et,cor:query_batch_et} are written in terms of a slightly different parallelism, named \mtram{}. It allows individually processors to dynamically spawn child processors. However, it is known that any \mtram{} algorithm with work $W$ and depth $D$ can be simulated in $\tO{W/P + D}$ expected time in \pram{} with $P$ processors \cite{10.1145/321812.321815}. i.e., the work and depth from both models are equivalent up to polylogarithmic factors. Applying Markov's inequality to the expected work and depth gives w.h.p. bounds, with additional $\tO{1}$ factors in work and depth.
\end{remark}

\subsection{Cutset Data Structure}

Next, we reiterate the construction of the randomized cutset data structure originated from \cite{kapron2013dynamic}, designed to support dynamic graph connectivity in polylogarithmic worst-case complexity.

\begin{definition}
    Given a simple undirected graph $G = (V, E)$ with $\abs{V} = n$. An \emph{edge classifier} is a randomized variable $c : \{1, \cdots, \lceil 2\lg n \rceil - 1\} \times E \to \{0, 1\}$ such that $\Pr[c(i, e) = 1] = 2^{-i}$ independently for any $i = 1, \cdots, \lceil 2\lg n \rceil  -1$ and $e \in E$.
\end{definition}

Given an edge classifier $c$, we define the set of edges $L_{c}(i) := \{e \in E \mid c(i, e) = 1\}$.

\begin{lemma}[\cite{kapron2013dynamic}, Lemma 3.2] \label{lem:kkm13}
    For any nonempty edge set $W \subseteq E$, with probability at least $1/9$, there exists an integer $i$ such that $\abs{L_{c}(i) \cap W} = 1$.
\end{lemma}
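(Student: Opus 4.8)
The plan is to analyze the quantity $|L_c(i) \cap W|$ as a sum of independent Bernoulli variables and show that for a suitably chosen index $i$ (depending on $|W|$) this sum equals $1$ with constant probability. First I would dispose of the trivial case $|W| = 1$: then $|L_c(1) \cap W| = c(1,e)$ for the single edge $e \in W$, which equals $1$ with probability $1/2 \ge 1/9$. For the main case $|W| \ge 2$, let $w = |W|$ and choose $i^\ast$ to be the unique integer with $2^{i^\ast - 1} < w \le 2^{i^\ast}$ (one checks $1 \le i^\ast \le \lceil 2\lg n\rceil - 1$ since $2 \le w \le n$, so $i^\ast$ is a legitimate index). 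Write $X = |L_c(i^\ast) \cap W| = \sum_{e \in W} c(i^\ast, e)$, a sum of $w$ i.i.d.\ Bernoulli$(p)$ variables with $p = 2^{-i^\ast}$, so that $\mu := \bbE[X] = wp \in (1/2, 1]$.

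The heart of the argument is a lower bound on $\Pr[X = 1]$. I would compute
\[
\Pr[X = 1] = wp(1-p)^{w-1} = \mu (1-p)^{w-1}.
\]
Since $p \le 1/2$ we have $\ln(1-p) \ge -p - p^2 \ge -p(1 + 1/2) \cdots$; more cleanly, using $(1-p)^{w-1} \ge (1-p)^{1/p} \ge (1-p)^{w}$ when $w \le 1/p$ — note $w \le 2^{i^\ast} = 1/p$, so $w - 1 < 1/p$ — and the standard inequality $(1-p)^{1/p} \ge e^{-1}/\sqrt{e} \cdot(\text{something})$; the simplest clean route is: for $p \in (0, 1/2]$, $(1-p)^{1/p} \ge 1/4$ (this is a one-variable inequality checkable since the function $p \mapsto (1-p)^{1/p}$ is decreasing on $(0,1/2]$ with value $1/4$ at $p = 1/2$). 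Hence $(1-p)^{w-1} \ge (1-p)^{1/p} \ge 1/4$, and combined with $\mu > 1/2$ this gives $\Pr[X = 1] > 1/8 > 1/9$. Therefore with probability at least $1/9$ the index $i = i^\ast$ witnesses $|L_c(i^\ast) \cap W| = 1$, which proves the lemma.

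The main obstacle is pinning down the clean elementary inequality $(1-p)^{1/p} \ge 1/4$ for $p \in (0, 1/2]$ and being careful that $w - 1 < 1/p$ so the exponent $w-1$ only helps (a larger exponent on a quantity in $(0,1)$ makes it smaller, so I need $w - 1 \le 1/p$, i.e.\ $w \le 1/p + 1 = 2^{i^\ast} + 1$, which holds since $w \le 2^{i^\ast}$). Everything else — the choice of $i^\ast$, the range check $1 \le i^\ast \le \lceil 2 \lg n \rceil - 1$, and the independence used to factor $\Pr[X=1]$ — is routine bookkeeping. I should also double-check the degenerate subcase $w = 2$ separately if needed, but it is covered by the general bound since then $i^\ast = 1$, $p = 1/2$, $\mu = 1$, and $\Pr[X=1] = 2 \cdot \tfrac12 \cdot \tfrac12 = \tfrac12$.
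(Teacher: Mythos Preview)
Your argument is correct and is essentially the standard proof (the paper itself does not prove this lemma; it merely cites \cite{kapron2013dynamic}). One small slip: you write ``since $2 \le w \le n$'' when justifying $i^\ast \le \lceil 2\lg n\rceil - 1$, but $W \subseteq E$ can have up to $\binom{n}{2}$ elements, not $n$. The range check still goes through, since $w \le \binom{n}{2} < n^2/2$ gives $\lg w < 2\lg n - 1$ and hence $i^\ast = \lceil \lg w \rceil \le \lceil 2\lg n\rceil - 1$; just fix the stated bound on $w$.
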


To utilize \Cref{lem:kkm13}, we fix a label $l : E \to \{1, \cdots, n(n-1)/2\}$. Note that $l(e)$ is a $\lceil 2\lg n\rceil-1$-bit string. Also, we denote the bitwise-XOR operator by $\oplus$.

\begin{corollary} \label{cor:kkm_hash}
    Given a vertex set $S \subseteq V$, an edge classifier $c$, and a positive integer $i$. define $H_{c, i}(S)$ to be a $\lceil 2\lg n \rceil - 1$ bit string such that
    \begin{align*}
    H_{c, i}(S) := \bigoplus_{v \in S} \bigoplus_{e \in \delta_{G}(v) \cap L_{c}(i)} l(e).
    \end{align*}

    Assume $E[S, V-S]$ is nonempty. With probability at least $1/9$, there exists an integer $j$ such that $H_{c, j}(S)=l(e)$ for some edge $e \in E[S, V-S]$.
\end{corollary}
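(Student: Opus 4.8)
The plan is to reduce the statement directly to \Cref{lem:kkm13} applied to the cutset $W := E[S, V-S]$, after first establishing an XOR-cancellation identity that rewrites $H_{c,i}(S)$ as an XOR taken over only the cutset edges that survive the classifier at level $i$.

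First I would fix an index $i$ and unfold the definition of $H_{c,i}(S)$: it is the XOR, over all $v \in S$, of the labels $l(e)$ of edges $e$ incident to $v$ with $e \in L_c(i)$. For a fixed edge $e = \{u, w\} \in L_c(i)$, the string $l(e)$ is contributed once for each endpoint of $e$ lying in $S$. Since $\oplus$ is commutative, associative, and satisfies $a \oplus a = 0$, an edge with both endpoints in $S$ contributes the identity, an edge with no endpoint in $S$ never appears in the sum, and an edge with exactly one endpoint in $S$ contributes $l(e)$ exactly once. The edges with exactly one endpoint in $S$ are precisely the edges of $W = E[S, V-S]$. Hence
\[
H_{c,i}(S) = \bigoplus_{e \in W \cap L_c(i)} l(e).
\]

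Next, since $E[S, V-S]$ is nonempty by hypothesis, I would invoke \Cref{lem:kkm13} with this $W$: with probability at least $1/9$ there is an integer $j \in \{1, \dots, \lceil 2\lg n\rceil - 1\}$ with $\abs{L_c(j) \cap W} = 1$. Fix such a $j$ and write $L_c(j) \cap W = \{e\}$; then the displayed identity collapses to $H_{c,j}(S) = l(e)$, and by construction $e \in E[S, V-S]$. This is exactly the asserted conclusion (note that injectivity of $l$ is not needed here — it only becomes relevant when one later wants to recover $e$ from $l(e)$).

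There is essentially no serious obstacle; the only step requiring care is the cancellation argument, i.e., verifying that the double XOR over $v \in S$ collapses to an XOR over the single-endpoint (cutset) edges. This relies on the fact that each edge of a simple graph has exactly two distinct endpoints and that $l$ produces the same bit-string regardless of which endpoint triggers it, so any label counted twice annihilates. Once that identity is in place, the corollary is an immediate translation of \Cref{lem:kkm13}.
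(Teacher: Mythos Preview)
Your proposal is correct and follows essentially the same approach as the paper: establish the XOR-cancellation identity $H_{c,i}(S) = \bigoplus_{e \in E[S,V-S] \cap L_c(i)} l(e)$ by counting how many endpoints of each edge lie in $S$, then apply \Cref{lem:kkm13} to $W = E[S,V-S]$. Your write-up is in fact more detailed than the paper's, which dispatches the argument in two sentences.
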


\begin{proof}
    The corollary is immediate from \Cref{lem:kkm13} if $H_{c, i}(S) = \bigoplus_{e \in E[S, V-S]} l(e)$. For an edge $e \in L_{c}(i)$, $l(e)$ is counted as a summand as many as the number of endpoints of $e$ included in $S$. It is odd if and only if $e \in E[S, V-S]$.
\end{proof}

\subsection{Proof of \Cref{lem:dynamicspanningforest}}

Now the proof of \Cref{lem:dynamicspanningforest} follows.

\decrementalspanningforest*

\begin{proof}[Proof of \Cref{lem:dynamicspanningforest}]
We first analyze the correctness and complexity of the algorithms for the operations \(\alg{Init}(G,\sigma)\) and \(\alg{Fail}(E')\), which initialize and maintain the spanning forest data structure \(\mathcal{E}\). Given these operations are correctly maintaining the spanning forest structure with high probability, the whole lemma is proven by \Cref{cor:query_batch_et}.

\paragraph{Correctness and complexity of \(\alg{Init}(G,\sigma)\).}  
To initialize the structure, we first find a maximal spanning forest \(F\) of \(G\). This can be performed deterministically in \(\tO{m}\) work and \(\tO{1}\) depth, using a parallel MSF algorithm \cite{awerbuch1987new}. Next, we independently sample \(\Theta(\log^2 n)\) edge classifiers \(c^{(j,d)}\) for \(j = 1, \dots, \lceil 100\log_{9/8} n \rceil\) and \(d=1, \cdots, \lceil \log_{3/2} n \rceil\). For each vertex \(v\), we compute initial vertex data:
\[
x_v := (x_{v,ijd}),\quad x_{v,ijd}:=H_{c^{(j,d)}, i}(\{v\}).
\]

We then define the aggregation operator \(\agg\) as elementwise bitwise XOR, which clearly satisfies \(\tO{1}\) aggregation work and depth. We initialize a batch-parallel Euler tour tree \(\mathcal{E}\) by calling:
\[
\mathcal{E}.\alg{Init}((V,F), x, \sigma),
\]
which by \Cref{thm:batch_et} requires \(\tO{\abs{V}}\) work and \(\tO{1}\) depth with high probability. Thus, the initialization step has the claimed complexity.

\paragraph{Description of \(\alg{Fail}(E')\).}  
We now describe the procedure \(\alg{Fail}(E')\), designed to remove a set of edges \(E'\) from the graph and maintain correctness of the spanning forest data structure \(\mathcal{E}\).

Let \(V'\) be the set of endpoints of edges in \(E'\). We begin by updating vertex data for each \(v \in V'\) as follows:
\[
x_{v,ijd} \gets x_{v,ijd}\oplus c_i^{(j,d)}(e)l(e),
\]
where \(e\) is the incident edge in \(E'\). Then, we delete the affected edges from the spanning forest by executing:
\[
\mathcal{E}.\alg{Cut}(F\cap E'),
\]
which requires \(\tO{\abs{E'}}\) work and \(\tO{1}\) depth with high probability. by \Cref{thm:batch_et}. Then we iteratively recover connectivity through the following procedure, executed for \(r = \lceil \log_{3/2} n \rceil\) rounds. Initialize \[I^{(1)} \gets \cE.\alg{ID}(V'), \quad F^{(1)}\gets F\cap \tE.\]

\paragraph{Recovery Loop:}  
Within each iteration \(d=1,\dots,r\), we execute:

\begin{enumerate}
    \item Compute aggregated hashes:
    \[
    (x_{\id})_{\id\in I^{(d)}} = \mathcal{E}.\alg{Agg}(I^{(d)}),
    \]
    with \(x_{\id,ij}=H_{c^{(j)},i}(\mathcal{E}(\id))\).

    \item For each \(\id\in I^{(d)}\), we attempt to identify an edge \(e_{\id}\in\tE[\mathcal{E}(\id),V-\mathcal{E}(\id)]\) such that:
    \[
    x_{\id,ijd}=l(e_{\id})
    \]
    for any indices \(i,j\), if such edge exists.

    \item Let \(R^{(d)}\subseteq\{e_{\id}\mid \id\in I^{(d)}\}\) be a maximum cardinality subset maintaining the acyclicity of the graph \(F^{(d)}\cup R^{(d)}\). Update $\cE$ and parameters by following order.
    \[
    \mathcal{E}.\alg{Link}(R^{(d)}), \quad I^{(d+1)}\gets \mathcal{E}.\alg{ID}(V'),\quad F^{(d+1)}\gets F^{(d)}\cup R^{(d)}.
    \]
\end{enumerate}

\paragraph{Correctness of $\alg{Fail}(E')$.}  
To show correctness, define an \emph{exposed ID} as an identifier \(\chi\in I^{(d)}\) at round \(d\), such that \(\tE[\mathcal{E}^{(d)}(\chi),V-\mathcal{E}^{(d)}(\chi)]\) is nonempty, where \(\mathcal{E}^{(d)}\) denotes the Euler tour tree at the start of round \(d\). Let \(t^{(d)}\) be the number of exposed IDs at round \(d\). It suffices to show that \(t^{(r+1)}=0\) with high probability.

By \Cref{cor:kkm_hash}, each exposed ID finds a reconnecting edge with high probability, say $1 - n^{-100}$. Note that we sampled $e_{\id}$ only from $x_{\chi, ijd}$ to guarantee independence from previous recovery loops, therefore for any $j, d$ that\[\Pr[L_{c^{(j,d)}}(i) \cap W = 1 \mid R^{(1)}, \cdots, R^{(d-1)}] = \Pr[L_{c^{(j, d)}}(i) \cap W = 1].\]. We call an exposed ID that fails to find such an edge \emph{isolated}. Using Markov's inequality, the number \(f^{(d)}\) of isolated IDs in round \(d\) satisfies $f^{(d)} \le 0.1t^{(d)}$ with high probability.

Non-isolated IDs reconnect into larger components via edges in \(R^{(d)}\), and hence their number reduces by at least half. Thus, the expected number of exposed IDs in round \(d+1\) satisfies:
\[
t^{(d+1)} \le \frac{t^{(d)}-f^{(d)}}{2}+f^{(d)} \le t^{(d)}\frac{1+0.1}{2}<\frac{2}{3}t^{(d)},
\]
with high probability. Conditioned for \(r=\lceil\log_{3/2}n\rceil\) rounds, it implies that
\[
t^{(r+1)}<t^{(1)}(2/3)^{r}<1,
\]
with high probability. Thus, all components of $\cF_{\cD}$ become maximal connected components of \(\tilde{G}\).

\paragraph{Complexity of \(\alg{Fail}(E')\).}  
Before the recovery loop, the updates and initial queries require \(\tO{|E'|}\) work and \(\tO{1}\) depth, as \(|V'|\le 2|E'|\). Inside the loop, \(\alg{Agg}, \alg{Link}\), and \(\alg{ID}\) each require at most \(\tO{\abs{V'}}\) work and \(\tO{1}\) depth with high probability. The maximal subset \(R^{(d)}\) can be computed using parallel MSF in \(\tO{\abs{V'}}\) work and \(\tO{1}\) depth with high probability.

Summing over all \(r=\tO{1}\) rounds yields total complexity in $\tO{\abs{E'}}$ work and $\tO{1}$ depth with high probability.

\end{proof}
\section{A Parallel Reduction to Maximum Bipartite Matching}\label{sec:reduction}

In this section, we explore the relationship between vertex connectivity and the dense case of the maximum bipartite matching problem (implying dense reachability), which we denote as {\sf D-MBM} (Dense Maximum Bipartite Matching). Here, the input is a dense bipartite graph \(G = (U \cup V, E)\) with \(|E| = \Theta(n^2)\).

\begin{definition}[Dense Maximum Bipartite Matching ({\sf D-MBM})]
Given a dense bipartite graph \(G = (U \cup V, E)\) with \(|E| = \Theta(n^2)\), the {\sf D-MBM} problem is to compute the value of a maximum matching in \(G\).
\end{definition}

The best-known algorithm for {\sf D-MBM} in subpolynomial depth is based on a standard reduction to matrix rank computation, running in \(n^{\omega}\) work, where \(\omega < 2.371553\) denotes the exponent of matrix multiplication \cite{williams2024new}. Moreover, it is known that {\sf D-MBM} implies dense reachability from a folklore reduction. For completeness, we define dense reachability below.

\begin{definition}[Dense Reachability]
Given a directed graph \(G = (V, E)\) and two designated vertices \(s, t \in V\), the \emph{Dense Reachability} problem asks whether there exists a directed path from \(s\) to \(t\) in \(G\).
\end{definition}

Despite decades of research, no subpolynomial depth algorithm has been found for dense reachability that improves upon the \(n^{\omega}\) work bound—which is easily achieved by repeatedly squaring the adjacency matrix. This observation suggests that achieving a work bound significantly better than \(n^{\omega}\) for {\sf D-MBM} in subpolynomial depth presents a natural barrier. Notice that while \(n^{\omega}\) is conjectured to reach \(n^2\) (i.e., linear work), the current best bound for \(\omega\) is \(2.371553.\) 

We now introduce a direct-sum version of {\sf D-MBM}.

\begin{definition}[\(t\)-Min Dense Maximum Bipartite Matching ({\sf t-DMBM\(^{\min}\)})]
Given a collection of \(t\) dense bipartite graphs \(\{G_1, G_2, \ldots, G_t\}\) (each with $n$ vertices and \(\Theta(n^2)\) edges), the {\sf t-DMBM\(^{\min}\)} problem asks to compute the minimum, over all instances, of the sizes of the maximum matchings.
\end{definition}

It would be surprising if the direct-sum of {\sf D-MBM} instances could be solved faster than processing each instance independently. Our main reduction is captured in the following theorem. By “almost linear work” we mean \(L^{1+o(1)}\) where \(L\) is the input size, and by “subpolynomial depth” we mean a depth of \(L^{o(1)}\).

\begin{lemma}
\label{lem:reduction}
Suppose that the \(k\)-vertex connectivity problem on \(n\)-vertex undirected graphs can be solved in almost linear work and subpolynomial depth for some \(k = \Omega(n^{\epsilon})\). Then, the {\sf t-DMBM\(^{\min}\)} problem on \(t = O(n^{1-\epsilon})\) graphs, each with \(O(n^{\epsilon})\) vertices, can also be solved in almost linear work and subpolynomial depth.
\end{lemma}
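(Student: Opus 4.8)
The plan is to reduce an instance of {\sf t-DMBM$^{\min}$} consisting of $t = O(n^{1-\epsilon})$ bipartite graphs, each on $O(n^\epsilon)$ vertices, to a single instance of $k$-vertex connectivity on an $O(n)$-vertex graph with $k = \Theta(n^\epsilon)$. The building block is the folklore local reduction from maximum bipartite matching on a graph $H = (A \cup B, E_H)$ with $|A| = |B| = h$ to the vertex connectivity of a gadget graph: take $A$ and $B$ as the two sides, add a clique on each side, add all edges of $H$ between $A$ and $B$, and attach two apex vertices $s$ (adjacent to all of $A$) and $t$ (adjacent to all of $B$). Then the minimum $(s,t)$-vertex separator has size exactly $h - \nu(H)$ where $\nu(H)$ is the maximum matching value, by König/Menger-type reasoning: a minimum $(s,t)$-separator must sit inside $A \cup B$, and the minimum vertex cover side of a König-type argument translates into $h - \nu(H)$ vertices. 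So the value $\nu(H)$ is recovered from a single $k$-vertex-connectivity-style query on $O(h)$ vertices.

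**The key steps**, in order, are: (1) take the $t$ gadget graphs $\widehat{G_1}, \dots, \widehat{G_t}$ built as above from $G_1, \dots, G_t$, each with $O(n^\epsilon)$ vertices, and observe that $\min_i \nu(G_i)$ corresponds to $\max_i$ of the minimum $(s_i, t_i)$-separator size $= h - \nu(G_i)$; (2) combine the $t$ gadgets into a single graph $G^*$ by identifying all the source apices into one vertex $s$ and all the sink apices into one vertex $t$, so that $G^*$ has $t \cdot O(n^\epsilon) = O(n)$ vertices — here the crucial structural fact is that since the clique on each $A_i$ (resp. $B_i$) forces any small separator to localize inside a single gadget's $A_i \cup B_i$, the global minimum $(s,t)$-separator of $G^*$ equals $\min_i (h - \nu(G_i)) = h - \max_i \nu(G_i)$... wait — we want the minimum over $i$ of $\nu(G_i)$, i.e. the maximum over $i$ of the separator size, which a single min-query does not directly give; (3) to fix the direction, instead combine the gadgets in \emph{series} or, more simply, run the $k$-vertex connectivity query (which finds a cut of size $< k$ if one exists) as a decision oracle and binary-search on the threshold $\theta$: for each candidate $\theta$, build $G^*$ so that its connectivity is $< \theta$ iff \emph{some} $\nu(G_i) > h - \theta$, and use the complement/padding trick (pad each gadget with $\theta$ extra universal vertices on one side so that small instances don't trigger) to convert "exists an uncut instance" into "all instances cut"; (4) each binary-search step is one call to the assumed $k$-vertex connectivity algorithm on an $O(n)$-vertex, $O(n^2)$-edge graph, so $O(\log n)$ calls give almost linear work and subpolynomial depth, matching the claim.

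**The main obstacle** I anticipate is step (3): getting the quantifier right. A single minimum-vertex-connectivity query naturally computes $\min_i$ of the per-gadget separator sizes, which is $h - \max_i \nu(G_i)$, whereas {\sf t-DMBM$^{\min}$} wants $\min_i \nu(G_i)$. Resolving this requires either (a) a per-gadget padding scheme so that below the current binary-search threshold a gadget is "saturated" (contributes a large separator) and above it is "exposed" (contributes a small one), so the global minimum separator flips across $\theta$ exactly when $\min_i \nu(G_i)$ crosses $h - \theta$; or (b) replacing the union-of-parallel gadgets by a sequential composition where the separators add up. Option (a) keeps everything dense and within the vertex and edge budgets, so I would pursue it. One must check carefully that the padding does not destroy the localization property — i.e. that a minimum separator of $G^*$ still cannot span two gadgets — which follows because the per-side cliques (including padding vertices) have size $> k$, so cutting across two gadgets costs at least $2 \cdot (\text{clique size})/2 > k$ vertices, exceeding the threshold. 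The remaining bookkeeping (edge counts are $\Theta(n^2)$ since each gadget contributes $\Theta(n^{2\epsilon})$ edges and there are $O(n^{1-\epsilon}) \le O(n)$ gadgets, total $O(n^{1+\epsilon}) = O(n^2)$ only if $\epsilon \le 1$; more care needed if we want exactly $\Theta(n^2)$, but "almost linear in input size $L = \Theta(t n^{2\epsilon})$" is what the lemma asks for) is routine.
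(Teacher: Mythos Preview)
Your proposal has the right skeleton (K\"onig/Menger gadget, composition of gadgets, binary search on a threshold), and these are exactly the ingredients the paper uses. But there are two concrete errors that derail the execution.

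\textbf{The gadget formula is wrong.} In your gadget (cliques on $A$ and on $B$, bipartite edges of $H$ between them, apex $s$ joined to $A$ and apex $t$ joined to $B$), the minimum $(s,t)$-separator has size exactly $\nu(H)$, not $h-\nu(H)$. Indeed, a matching of size $\nu(H)$ yields $\nu(H)$ internally vertex-disjoint $s$--$t$ paths $s\text{--}a_i\text{--}b_i\text{--}t$, so Menger gives $\ge \nu(H)$; and a minimum vertex cover $C$ of $H$ (which has size $\nu(H)$ by K\"onig) separates $s$ from $t$ since $A\setminus C$ and $B\setminus C$ have no bipartite edges between them. Your ``$h-\nu(H)$'' appears to come from misremembering K\"onig's theorem.

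\textbf{Series and parallel are swapped.} With shared $s,t$ and otherwise disjoint gadgets (your ``parallel'' combination), every $s$--$t$ path lives in a single gadget, so to separate $s$ from $t$ one must separate in \emph{every} gadget: the minimum separator is the \emph{sum} $\sum_i \nu(G_i)$, not the min. Conversely, in a series chain the minimum separator is the \emph{min} over the links (cut at the weakest layer), not the sum.

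Once these two points are fixed, your option (b) is exactly right and is precisely what the paper does: chain the $t$ bipartite graphs in series by making $V_i\cup U_{i+1}$ a clique for each $i$, so that the vertex connectivity of the resulting $O(n)$-vertex graph is governed by $\min_i (\text{min vertex cover of }G_i)=\min_i \nu(G_i)$. Because the hypothesis only gives you a $k$-connectivity oracle for a \emph{fixed} $k=\Theta(n^\epsilon)$, the paper first binary-searches on a padding parameter $x$: augment each $G_i$ by $x$ universal vertices on each side, so that the augmented graph has a perfect matching iff $\nu(G_i)\ge a-x$. The series graph on the padded instances is then $k$-connected (with $k=a+x$) iff every padded $G_i$ has a perfect matching, i.e., iff $\min_i\nu(G_i)\ge a-x$. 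Binary search on $x$ recovers $\min_i\nu(G_i)$. Your step~(3) padding idea is groping toward exactly this, but the quantifier ``problem'' you were trying to solve does not exist once the gadget formula is corrected, and no per-threshold saturation trick is needed.
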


\Cref{lem:reduction} shows that solving \(k\)-vertex connectivity in almost linear work and subpolynomial depth when \(k\) is polynomial in \(n\) is beyond reach given the current understanding.

\begin{proof}[Proof of \Cref{lem:reduction}]
We prove the lemma in two steps. In the first step, we reduce the {\sf t-DMBM\(^{\min}\)} problem to the \emph{\(t\) Dense Perfect Bipartite Matching} problem (abbreviated as {\sf t-DPBM}). In the second step, we reduce {\sf t-DPBM} to the \(k\)-vertex connectivity problem.

For notational convenience, assume that each instance of the {\sf t-DMBM\(^{\min}\)} problem is a bipartite graph \(G_i = (U_i,V_i,E_i)\) with \(|U_i| = |V_i| = a\).

\paragraph{Step 1: Reduction from {\sf t-DMBM\(^{\min}\)} to {\sf t-DPBM}.}
We first define the \emph{\(t\) Dense Perfect Bipartite Matching} problem.

\begin{definition}[\(t\) Dense Perfect Bipartite Matching ({\sf t-DPBM})]
Given a collection of \(t\) dense bipartite graphs \(\{G_1, G_2, \ldots, G_t\}\), the {\sf t-DPBM} problem asks to decide whether there exists at least one graph among the \(t\) instances that does \emph{not} have a perfect matching.
\end{definition}

To reduce {\sf t-DMBM\(^{\min}\)} to {\sf t-DPBM}, we perform a binary search over a parameter \(x\) that represents the amount of augmentation applied to each graph. For each graph \(G_i = (U_i,V_i,E_i)\), we construct an augmented graph \(G'_i(x)\) by:
\begin{enumerate}
    \item Adding \(x\) new vertices to \(U_i\) (denote this set by \(U'_i\)) and connecting each new vertex to every vertex in \(V_i\).
    \item Adding \(x\) new vertices to \(V_i\) (denote this set by \(V'_i\)) and connecting each new vertex to every vertex in \(U_i\).
\end{enumerate}
In the augmented graph \(G'_i(x)\), the bipartition now has \(a+x\) vertices on each side. A perfect matching in \(G'_i(x)\) would have size \(a+x\).

The key observation is as follows. Suppose the maximum matching in \(G_i\) has size \(m_i\). Then, in \(G'_i(x)\) one can extend any matching in \(G_i\) by matching at most \(x\) newly added vertices (since they are adjacent to all vertices on the opposite side). Thus, if \(m_i \ge a - x\), one can obtain a matching of size at least \(a+x\) in $G'_i(x)$. Conversely, if \(m_i \le a - x - 1\), then even after augmentation, \(G'_i(x)\) cannot have a perfect matching, as otherwise deleting newly added vertices results in a matching in $G_i$ of size at least $a-x$.

Let \(x^*\) be the largest value of \(x\) for which the {\sf t-DPBM} algorithm outputs `yes'. It follows that the minimum over the maximum matching sizes of the original graphs is exactly \(a - x^* - 1\). 

\paragraph{Step 2: Reduction from {\sf t-DPBM} to \(k\)-vertex connectivity.}
We now describe how to reduce the {\sf t-DPBM} problem to an instance of the \(k\)-vertex connectivity problem. Given the \(t\) bipartite graphs \(G_i = (U_i,V_i,E_i)\), we construct a single undirected graph \(H\) as follows:
\begin{enumerate}
    \item For each \(i=1,\dots,t\), include all vertices and edges of \(G_i\) in \(H\).
    \item For every \(i=1,\dots,t-1\), add additional edges to make the set \(V_i \cup U_{i+1}\) a clique.
\end{enumerate}
This construction “concatenates” the \(t\) graphs into one global graph. Observe that \(H\) has \(\Theta(at)\) vertices and \(\Theta(a^2 t)\) edges in total.

We choose the parameters so that \(a = \Theta(n^{\epsilon})\) and \(t = a^{\frac{1}{\epsilon}-1} = \Theta(n^{1-\epsilon})\). We then set the connectivity parameter \(k = a = \Omega(n^{\epsilon})\). The key claim is that \(H\) is \(k\)-vertex connected if and only if every one of the original graphs \(G_i\) has a perfect matching.

\paragraph{Correctness.} It suffices to show the claim in Step 2. Define
\begin{align*}
L_{i} &:= U_{i} \cup \bigcup_{j=0}^{i-1}V(G_{j}),\\
R_{i} &:= V_{i} \cup \bigcup_{j=i+1}^tV(G_{j}).
\end{align*}
Note that $E[L_{i}, R_{i}] = E[U_{i}, V_{i}]$. Suppose there exists a bipartite graph $G_{i}$ that does not have a perfect matching. Let $C$ be a minimum vertex cover of $G_{i}$. By K\"{o}nig's theorem, $\abs{C} < k$. Notice that there are no edges between $U_{i} - C$ and $V_{i} - C$, therefore not in between $L_{i} - C$ and $R_{i} - C$. Thus, $C$ is a vertex cut of $H$.

For the converse, suppose there is a vertex cut $S$ of size less than $k = a$. As $\abs{U_{i}} = \abs{V_{i}} = a$ for all $1 = 1, \cdots, t$, $U_{i} - S, V_{i} - S \neq \emptyset$ for all $i$. If $E[U_{i} - S, V_{i} - S] \neq \emptyset$ for all $i$, the graph is connected. Thus, there is an integer $j$ such that $E[U_{j} - S, V_{j} - S] = \emptyset$. $N_{G_j}(U_{j} - S) = \emptyset$ implies $G_{j}$ does not have a perfect matching.

\paragraph{Complexity.} In Step 1, it involves $O(\log a)$ instances of {\sf t-DPBM} where the number of vertices in each graph is $O(a)$. Constructing $G_{i}'(x)$ introduces $O(a^2 t) = O(m)$ new edges, thus the almost-linear work and subpolynomial depth solution for {\sf t-DPBM} gives the same for {\sf t-DMBM}$^{\min}$. In Step 2, there are $m = O(a^2t) = O(n^{1+\eps})$ edges. Given a vertex cut $S$ is returned in $\tO{m^{1+o(1)}} = n^{1+\eps+o(1)}$ work and $n^{o(1)}$ depth, one can find such $j$ that $N_{G_j}(U_{j} - S) = \emptyset$ in $O(a^2t)$ work and $\tO{1}$ depth. In sum, {\sf t-DMBM}$^{\min}$ is reduced to $k$-vertex connectivity with near-linear additional work and subpolynomial depth.
\end{proof}

\end{document}